\newcolumntype{P}[1]{>{\centering\arraybackslash}p{#1}}
\newcolumntype{M}[1]{>{\centering\arraybackslash}m{#1}}
\newcolumntype{R}[1]{>{\arraybackslash}m{#1}}
\definecolor{thelightblue}{RGB}{0,191,255}
\newcommand{\mychoose}[2]{\left( \begin{smallmatrix} #1 \\ #2 \end{smallmatrix} \right)} 
\global\let\tikz@ensure@dollar@catcode=\relax
\definecolor{thelightblue}{RGB}{0,191,255}
\definecolor{theblue}{RGB}{0,0,180}
\renewcommand*\env@matrix[1][*\c@MaxMatrixCols c]{
\hskip -\arraycolsep
\let\@ifnextchar\new@ifnextchar
\array{#1}}
\definecolor{mydarkblue}{RGB}{0, 20, 159} 
\definecolor{mydarkblue}{rgb}{0,0.08,0.45}
\definecolor{gray}{RGB}{150,150,150}
\definecolor{theblue}{RGB}{0, 20, 159} 
\definecolor{myyellow}{RGB}{255,255,204}
\definecolor{myred}{RGB}{255,204,204}
\definecolor{myblue}{RGB}{0,200,255}
\definecolor{mygreen}{RGB}{80,220,80}
\newcommand{\eg}{\emph{e.g.}}
\newcommand{\ie}{\emph{i.e.}}
\newcommand{\wrt}{\emph{w.r.t.}\ }
\newtheorem{thm}{Theorem}
\newtheorem{cor}{Corollary}
\newtheorem{Property}{Property}
\newtheorem{Proposition}{Proposition}
\newtheorem{lemma}{Lemma}
\newtheorem{Definition}{Definition}
\newtheorem{Problem}{Problem}
\newcolumntype{H}{>{\setbox0=\hbox\bgroup}c<{\egroup}@{}}
\newcommand{\abs}[1]{\left|#1\right|}
\algrenewcommand{\alglinenumber}[1]{\fontsize{6.5}{7}\selectfont#1}
\algrenewcommand{\alglinenumber}[1]{\scriptsize#1:}
\newcommand{\setAlgFontSize}{\fontsize{9pt}{10pt}\selectfont} 
\newcommand{\multilinenospace}[1]{\State \parbox[t]{\dimexpr\linewidth-\algorithmicindent}{\begin{spacing}{1.1}\setAlgFontSize#1\strut \end{spacing}}}
\newcommand{\multilinenospaceD}[1]{\State \parbox[t]{\dimexpr0.96 \linewidth-\algorithmicindent}{\begin{spacing}{1.1}\setAlgFontSize#1\strut \end{spacing}}}
\newcommand\TTT{\rule{0pt}{3.2ex}}
\newcommand\BBB{\rule[-1.4ex]{0pt}{0pt}}
\newcommand{\hash}{\ensuremath{c}} 
\DeclareFixedFont{\tensy}{OMS}{cmsy}{m}{n}{10pt}
\definecolor{blue}{RGB}{0,0,0}
\definecolor{magenta}{RGB}{0,0,0} 
\definecolor{originalBlue}{RGB}{0,0,255}
\begin{document}
\title{Heterogeneous Graphlets} 

\author{Ryan A. Rossi}
\orcid{1234-5678-9012-3456}
\affiliation{
\institution{Adobe Research}
\streetaddress{345 Park Ave}
\city{San Jose}
\state{CA}
\country{USA}
}
\email{rrossi@adobe.com}
\author{Nesreen K. Ahmed}
\affiliation{
\institution{Intel Labs}
\streetaddress{3065 Bowers Avenue}
\city{Santa Clara}
\state{CA}
\country{USA}
}
\email{nesreen.k.ahmed@intel.com}
\author{Aldo Carranza}
\affiliation{
\institution{Stanford University}
\streetaddress{Huang Building 475 Via Ortega}
\city{Stanford}
\state{CA}
\country{USA}
}
\email{aldogael@stanford.edu}
\author{David Arbour}
\affiliation{
\institution{Adobe Research}
\streetaddress{345 Park Ave}
\city{San Jose}
\state{CA}
\country{USA}
}
\email{arbour@adobe.com}
\author{Anup Rao}
\affiliation{
\institution{Adobe Research}
\streetaddress{345 Park Ave}
\city{San Jose}
\state{CA}
\country{USA}
}
\email{anuprao@adobe.com}
\author{Sungchul Kim}
\affiliation{
\institution{Adobe Research}
\streetaddress{345 Park Ave}
\city{San Jose}
\state{CA}
\country{USA}
}
\email{sukim@adobe.com}
\author{Eunyee Koh}
\affiliation{
\institution{Adobe Research}
\streetaddress{345 Park Ave}
\city{San Jose}
\state{CA}
\country{USA}
}
\email{eunyee@adobe.com}

\renewcommand{\shortauthors}{R.~A.~Rossi et al.}

\begin{abstract}
In this paper, we introduce a generalization of graphlets to heterogeneous networks called \emph{typed graphlets}.
Informally, typed graphlets are small typed induced subgraphs.
Typed graphlets generalize graphlets to rich heterogeneous networks as they explicitly capture the higher-order typed connectivity patterns in such networks. 
To address this problem, we describe a general framework for counting the occurrences of such typed graphlets.
The proposed algorithms leverage a number of combinatorial relationships for different typed graphlets.
For each edge, we count a few typed graphlets, and with these counts along with the combinatorial relationships, we obtain the exact counts of the other typed graphlets in $o(1)$ constant time.
Notably, the worst-case time complexity of the proposed approach matches the
time complexity of the best known untyped algorithm.
In addition, the approach lends itself to an efficient lock-free and asynchronous parallel implementation.
While there are no existing methods for typed graphlets, there has been some work that focused on computing a different and much simpler notion called colored graphlet.
The experiments confirm that our proposed approach is orders of magnitude faster \emph{and} more space-efficient than 
methods for computing the simpler notion of colored graphlet.
Unlike these methods that take hours on small networks, the proposed approach takes only seconds on large networks with millions of edges.
Notably, since typed graphlet is more general than colored graphlet (and untyped graphlets), the counts of various typed graphlets can be combined to obtain the counts of the much simpler notion of colored graphlets.
The proposed methods give rise to new opportunities and applications for typed graphlets.
\end{abstract}

\begin{CCSXML}
<ccs2012>
<concept>
<concept_id>10002950.10003624.10003633.10010917</concept_id>
<concept_desc>Mathematics of computing~Graph algorithms</concept_desc>
<concept_significance>500</concept_significance>
</concept>
<concept>
<concept_id>10002950.10003624.10003625</concept_id>
<concept_desc>Mathematics of computing~Combinatorics</concept_desc>
<concept_significance>500</concept_significance>
</concept>
<concept>
<concept_id>10002950.10003624.10003633</concept_id>
<concept_desc>Mathematics of computing~Graph theory</concept_desc>
<concept_significance>500</concept_significance>
</concept>
<concept>
<concept_id>10002951.10003227.10003351</concept_id>
<concept_desc>Information systems~Data mining</concept_desc>
<concept_significance>500</concept_significance>
</concept>
<concept>
<concept_id>10003752.10003809.10003635</concept_id>
<concept_desc>Theory of computation~Graph algorithms analysis</concept_desc>
<concept_significance>500</concept_significance>
</concept>
<concept>
<concept_id>10003752.10003809.10010170</concept_id>
<concept_desc>Theory of computation~Parallel algorithms</concept_desc>
<concept_significance>500</concept_significance>
</concept>
<concept>
<concept_id>10010147.10010178</concept_id>
<concept_desc>Computing methodologies~Artificial intelligence</concept_desc>
<concept_significance>500</concept_significance>
</concept>
<concept>
<concept_id>10010147.10010257</concept_id>
<concept_desc>Computing methodologies~Machine learning</concept_desc>
<concept_significance>500</concept_significance>
</concept>
</ccs2012>
\end{CCSXML}

\ccsdesc[500]{Mathematics of computing~Graph algorithms}
\ccsdesc[500]{Mathematics of computing~Combinatorics}
\ccsdesc[500]{Mathematics of computing~Graph theory}
\ccsdesc[500]{Information systems~Data mining}
\ccsdesc[500]{Theory of computation~Graph algorithms analysis}
\ccsdesc[500]{Theory of computation~Parallel algorithms}
\ccsdesc[500]{Computing methodologies~Artificial intelligence}
\ccsdesc[500]{Computing methodologies~Machine learning}

\keywords{
Heterogeneous graphlets,
typed graphlets, 
position-aware typed graphlets, 
labeled graphlets,
heterogeneous network motifs,
heterogeneous networks, 
attributed graphs, 
large networks
}

\maketitle

\section{Introduction} \label{sec:intro}
\noindent
Higher-order connectivity patterns such as small induced subgraphs called graphlets\footnote{The terms graphlet and induced subgraph are used interchangeably.} are known to be the fundamental building blocks of homogeneous networks~\cite{Milo2002} and are essential for modeling and understanding the fundamental components of these networks~\cite{pgd,pgd-kais,benson2016higher}.
Furthermore, graphlets are also important for many predictive and descriptive modeling application tasks~\cite{zhang-image-categorization-via-graphlets,vishwanathan2010graph,shervashidze2009efficient,Milo2002,prvzulj2004modeling,milenkovic2008uncovering,hayes2013graphlet,ahmed17streams,lichtenwalter2012vertex} such as 
image processing and computer vision~\cite{zhang-image-categorization-via-graphlets,zhang2013probabilistic}, 
network alignment~\cite{koyuturk2006pairwise,prvzulj2007biological,milenkovic2008uncovering,crawford2015great}, 
classification~\cite{vishwanathan2010graph,shervashidze2009efficient}, 
visualization and sensemaking~\cite{pgd,pgd-kais}, 
dynamic network analysis~\cite{kovanen2011temporal,hulovatyy2015exploring}, 
community detection~\cite{radicchi2004defining,palla2005uncovering,solava2012graphlet,benson2016higher}, 
role discovery~\cite{ahmed17aaai,role2vec}, 
anomaly detection~\cite{noble2003graph,akoglu2015graph}, 
and link prediction~\cite{Rossi2018a}. 

However, such (untyped) graphlets are \emph{unable} to capture the rich typed connectivity patterns in more complex networks such as those that are heterogeneous, which includes  bipartite, k-partite, k-star, and attributed graphs as special cases, among others.
In heterogeneous networks, nodes and edges can be of different types and explicitly modeling such types is crucial~\cite{banerjee2007multi,acar2011all,gu2018heterAlignment,higher-order-clustering-heter}.
Such heterogeneous networks arise ubiquitously in the natural world where nodes and edges of multiple types are observed, \eg, 
between humans~\cite{kong2013inferring-heterSoc}, 
neurons~\cite{bullmore2009complex,bassett2006small}, 
routers and autonomous systems (ASes)~\cite{rossi2013topology}, 
web pages~\cite{yin2009exploring}, 
devices \& sensors~\cite{eagle2006reality}, 
infrastructure (roads, airports, power stations)~\cite{powergrid}, 
vehicles (cars, satelites, UAVs)~\cite{hung2008mobility-heter}, 
and information in general~\cite{yu2014personalized,sun2011pathsim,rossi16collective-factor}.

\definecolor{typeOneColor}{RGB}{8,81,156} 
\definecolor{typeTwoColor}{RGB}{222,45,38} 
\definecolor{typeThreeColor}{RGB}{49,163,84} 
\definecolor{typeFourColor}{RGB}{117,107,177} 

\makeatletter
\global\let\tikz@ensure@dollar@catcode=\relax
\makeatother
\tikzstyle{every node}=[font=\large,line width=1.5pt]
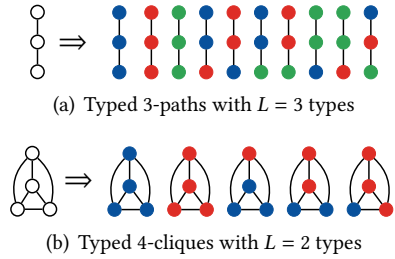
\begin{figure}[h!]

\subfigure[\vspace{-2mm}Typed 3-paths with $L=3$ types]{
\scalebox{0.70}{
\scalebox{0.28}{

\begin{tikzpicture}[-,>=latex,auto,node distance=2.0cm,thick,
main node/.style={circle,draw=black,fill=white,draw,font=\sffamily\Huge\bfseries,text=black,minimum width=0.9cm, line width=1mm},
]

\node[main node] (1) {}; 
\node[main node] (2) [below of=1] {}; 
\node[main node] (3) [below of=2] {}; 

\tikzstyle{LabelStyle}=[below=3pt]
\path[every node/.style={font=\sffamily}] 
(1) edge [line width=1.0mm, left] node [above left] {} (2) 
(2) edge [line width=1.0mm, left] node[below left] {} (3);
\end{tikzpicture}
}
\scalebox{0.28}{

\begin{tikzpicture}[-,>=latex,auto,node distance=2.0cm,thick,
main node/.style={circle,draw=black,fill=white,draw,font=\sffamily\Huge\bfseries,text=black,minimum width=0.9cm},
white node/.style={draw=white,draw,font=\sffamily\Huge\bfseries,text=black,minimum width=0.9cm}
]
\node[white node] (1) {};
\node[white node] (2) [below of=1] {};
\node[white node] (3) [below of=2] {};
\node[white node] (4) [right of=2, left=30pt, below=10pt, above=0.05pt] {\vspace{2mm}\fontsize{56}{56}\selectfont $\Rightarrow$};
\end{tikzpicture}
}
\hspace{2mm}
\scalebox{0.28}{

\begin{tikzpicture}[-,>=latex,auto,node distance=2.0cm,thick,
typeOne node/.style={circle,draw=typeOneColor,fill=typeOneColor,draw,font=\sffamily\Huge\bfseries,text=white,minimum width=0.9cm},
typeTwo node/.style={circle,draw=typeTwoColor,fill=typeTwoColor,draw,font=\sffamily\Huge\bfseries,text=white,minimum width=0.9cm},
typeThree node/.style={circle,draw=typeThreeColor,fill=typeThreeColor,draw,font=\sffamily\Huge\bfseries,text=white,minimum width=0.9cm},
]

\node[typeOne node] (1) {}; 
\node[typeOne node] (2) [below of=1] {}; 
\node[typeOne node] (3) [below of=2] {}; 

\tikzstyle{LabelStyle}=[below=3pt]
\path[every node/.style={font=\sffamily}] 
(1) edge [line width=1.0mm, left] node [above left] {} (2) 
(2) edge [line width=1.0mm, left] node[below left] {} (3);
\end{tikzpicture}
}
\label{fig:typed-3-path-homo-typeOne-3colors}
\hspace{0.3mm}
\scalebox{0.28}{
\begin{tikzpicture}[-,>=latex,auto,node distance=2.0cm,thick,
typeOne node/.style={circle,draw=typeOneColor,fill=typeOneColor,draw,font=\sffamily\Huge\bfseries,text=white,minimum width=0.9cm},
typeTwo node/.style={circle,draw=typeTwoColor,fill=typeTwoColor,draw,font=\sffamily\Huge\bfseries,text=white,minimum width=0.9cm},
typeThree node/.style={circle,draw=typeThreeColor,fill=typeThreeColor,draw,font=\sffamily\Huge\bfseries,text=white,minimum width=0.9cm},
]

\node[typeTwo node] (1) {}; 
\node[typeTwo node] (2) [below of=1] {}; 
\node[typeTwo node] (3) [below of=2] {}; 

\tikzstyle{LabelStyle}=[below=3pt]
\path[every node/.style={font=\sffamily}] 
(1) edge [line width=1.0mm, left] node [above left] {} (2) 
(2) edge [line width=1.0mm, left] node[below left] {} (3);
\end{tikzpicture}
}
\hspace{0.3mm}
\scalebox{0.28}{
\begin{tikzpicture}[-,>=latex,auto,node distance=2.0cm,thick,
typeOne node/.style={circle,draw=typeOneColor,fill=typeOneColor,draw,font=\sffamily\Huge\bfseries,text=white,minimum width=0.9cm},
typeTwo node/.style={circle,draw=typeTwoColor,fill=typeTwoColor,draw,font=\sffamily\Huge\bfseries,text=white,minimum width=0.9cm},
typeThree node/.style={circle,draw=typeThreeColor,fill=typeThreeColor,draw,font=\sffamily\Huge\bfseries,text=white,minimum width=0.9cm},
]

\node[typeThree node] (1) {}; 
\node[typeThree node] (2) [below of=1] {}; 
\node[typeThree node] (3) [below of=2] {}; 

\tikzstyle{LabelStyle}=[below=3pt]
\path[every node/.style={font=\sffamily}] 
(1) edge [line width=1.0mm, left] node [above left] {} (2) 
(2) edge [line width=1.0mm, left] node[below left] {} (3);
\end{tikzpicture}
}
\hspace{0.3mm}
\scalebox{0.28}{

\begin{tikzpicture}[-,>=latex,auto,node distance=2.0cm,thick,
typeOne node/.style={circle,draw=typeOneColor,fill=typeOneColor,draw,font=\sffamily\Huge\bfseries,text=white,minimum width=0.9cm},
typeTwo node/.style={circle,draw=typeTwoColor,fill=typeTwoColor,draw,font=\sffamily\Huge\bfseries,text=white,minimum width=0.9cm},
typeThree node/.style={circle,draw=typeThreeColor,fill=typeThreeColor,draw,font=\sffamily\Huge\bfseries,text=white,minimum width=0.9cm},
]

\node[typeOne node] (1) {}; 
\node[typeOne node] (2) [below of=1] {}; 
\node[typeTwo node] (3) [below of=2] {}; 

\tikzstyle{LabelStyle}=[below=3pt]
\path[every node/.style={font=\sffamily}] 
(1) edge [line width=1.0mm, left] node [above left] {} (2) 
(2) edge [line width=1.0mm, left] node[below left] {} (3);
\end{tikzpicture}
}
\hspace{0.3mm}
\scalebox{0.28}{

\begin{tikzpicture}[-,>=latex,auto,node distance=2.0cm,thick,
typeOne node/.style={circle,draw=typeOneColor,fill=typeOneColor,draw,font=\sffamily\Huge\bfseries,text=white,minimum width=0.9cm},
typeTwo node/.style={circle,draw=typeTwoColor,fill=typeTwoColor,draw,font=\sffamily\Huge\bfseries,text=white,minimum width=0.9cm},
typeThree node/.style={circle,draw=typeThreeColor,fill=typeThreeColor,draw,font=\sffamily\Huge\bfseries,text=white,minimum width=0.9cm},
]

\node[typeTwo node] (1) {}; 
\node[typeTwo node] (2) [below of=1] {}; 
\node[typeOne node] (3)  [below of=2] {}; 

\tikzstyle{LabelStyle}=[below=3pt]
\path[every node/.style={font=\sffamily}] 
(1) edge [line width=1.0mm, left] node [above left] {} (2) 
(2) edge [line width=1.0mm, left] node[below left] {} (3);
\end{tikzpicture}
}
\hspace{0.3mm}
\scalebox{0.28}{

\begin{tikzpicture}[-,>=latex,auto,node distance=2.0cm,thick,
typeOne node/.style={circle,draw=typeOneColor,fill=typeOneColor,draw,font=\sffamily\Huge\bfseries,text=white,minimum width=0.9cm},
typeTwo node/.style={circle,draw=typeTwoColor,fill=typeTwoColor,draw,font=\sffamily\Huge\bfseries,text=white,minimum width=0.9cm},
typeThree node/.style={circle,draw=typeThreeColor,fill=typeThreeColor,draw,font=\sffamily\Huge\bfseries,text=white,minimum width=0.9cm},
]

\node[typeOne node] (1) {}; 
\node[typeOne node] (2) [below of=1] {}; 
\node[typeThree node] (3)  [below of=2] {}; 

\tikzstyle{LabelStyle}=[below=3pt]
\path[every node/.style={font=\sffamily}] 
(1) edge [line width=1.0mm, left] node [above left] {} (2) 
(2) edge [line width=1.0mm, left] node[below left] {} (3);
\end{tikzpicture}
}
\hspace{0.3mm}
\scalebox{0.28}{

\begin{tikzpicture}[-,>=latex,auto,node distance=2.0cm,thick,
typeOne node/.style={circle,draw=typeOneColor,fill=typeOneColor,draw,font=\sffamily\Huge\bfseries,text=white,minimum width=0.9cm},
typeTwo node/.style={circle,draw=typeTwoColor,fill=typeTwoColor,draw,font=\sffamily\Huge\bfseries,text=white,minimum width=0.9cm},
typeThree node/.style={circle,draw=typeThreeColor,fill=typeThreeColor,draw,font=\sffamily\Huge\bfseries,text=white,minimum width=0.9cm},
]

\node[typeTwo node] (1) {}; 
\node[typeTwo node] (2) [below of=1] {}; 
\node[typeThree node] (3)  [below of=2] {}; 

\tikzstyle{LabelStyle}=[below=3pt]
\path[every node/.style={font=\sffamily}] 
(1) edge [line width=1.0mm, left] node [above left] {} (2) 
(2) edge [line width=1.0mm, left] node[below left] {} (3);
\end{tikzpicture}
}
\hspace{0.3mm}
\scalebox{0.28}{

\begin{tikzpicture}[-,>=latex,auto,node distance=2.0cm,thick,
typeOne node/.style={circle,draw=typeOneColor,fill=typeOneColor,draw,font=\sffamily\Huge\bfseries,text=white,minimum width=0.9cm},
typeTwo node/.style={circle,draw=typeTwoColor,fill=typeTwoColor,draw,font=\sffamily\Huge\bfseries,text=white,minimum width=0.9cm},
typeThree node/.style={circle,draw=typeThreeColor,fill=typeThreeColor,draw,font=\sffamily\Huge\bfseries,text=white,minimum width=0.9cm},
]

\node[typeThree node] (1) {}; 
\node[typeThree node] (2) [below of=1] {}; 
\node[typeOne node] (3)  [below of=2] {}; 

\tikzstyle{LabelStyle}=[below=3pt]
\path[every node/.style={font=\sffamily}] 
(1) edge [line width=1.0mm, left] node [above left] {} (2) 
(2) edge [line width=1.0mm, left] node[below left] {} (3);
\end{tikzpicture}
}
\hspace{0.3mm}
\scalebox{0.28}{

\begin{tikzpicture}[-,>=latex,auto,node distance=2.0cm,thick,
typeOne node/.style={circle,draw=typeOneColor,fill=typeOneColor,draw,font=\sffamily\Huge\bfseries,text=white,minimum width=0.9cm},
typeTwo node/.style={circle,draw=typeTwoColor,fill=typeTwoColor,draw,font=\sffamily\Huge\bfseries,text=white,minimum width=0.9cm},
typeThree node/.style={circle,draw=typeThreeColor,fill=typeThreeColor,draw,font=\sffamily\Huge\bfseries,text=white,minimum width=0.9cm},
]

\node[typeThree node] (1) {}; 
\node[typeThree node] (2) [below of=1] {}; 
\node[typeTwo node] (3)  [below of=2] {}; 

\tikzstyle{LabelStyle}=[below=3pt]
\path[every node/.style={font=\sffamily}] 
(1) edge [line width=1.0mm, left] node [above left] {} (2) 
(2) edge [line width=1.0mm, left] node[below left] {} (3);
\end{tikzpicture}
}
\hspace{0.3mm}
\scalebox{0.28}{

\begin{tikzpicture}[-,>=latex,auto,node distance=2.0cm,thick,
typeOne node/.style={circle,draw=typeOneColor,fill=typeOneColor,draw,font=\sffamily\Huge\bfseries,text=white,minimum width=0.9cm},
typeTwo node/.style={circle,draw=typeTwoColor,fill=typeTwoColor,draw,font=\sffamily\Huge\bfseries,text=white,minimum width=0.9cm},
typeThree node/.style={circle,draw=typeThreeColor,fill=typeThreeColor,draw,font=\sffamily\Huge\bfseries,text=white,minimum width=0.9cm},
]

\node[typeOne node] (1) {}; 
\node[typeTwo node] (2) [below of=1] {}; 
\node[typeThree node] (3)  [below of=2] {}; 

\tikzstyle{LabelStyle}=[below=3pt]
\path[every node/.style={font=\sffamily}] 
(1) edge [line width=1.0mm, left] node [above left] {} (2) 
(2) edge [line width=1.0mm, left] node[below left] {} (3);
\end{tikzpicture}
}
\label{fig:typed-3-path-with-3-types}
}
}

\subfigure[Typed 4-cliques with $L=2$ types]{
\scalebox{0.70}{
\scalebox{0.28}{
\begin{tikzpicture}[-,>=latex,auto,node distance=2.2cm,thick,
main node/.style={circle,draw=black,fill=white,draw,font=\sffamily\Huge\bfseries,text=black,minimum width=0.9cm, line width=1mm},
]

\node[main node] (1) {}; 
\node[main node] (2) [right of=1] {}; 
\node[main node] (3) [above right of=1, left=1.5pt] {}; 
\node[main node] (4) [above of=3] {}; 

\tikzstyle{LabelStyle}=[below=3pt]
\path[every node/.style={font=\sffamily}] 
	(1) edge [bend left,line width=1.0mm] node[above right] {} (4)
	(2) edge [bend right,line width=1.0mm] node[above right] {} (4)
(1) edge [line width=1.0mm, left] node [above left] {} (2) 
	(1)  edge [line width=1.0mm, right] node[below right] {} (3)
(2) edge [line width=1.0mm, left] node[below left] {} (3)
	(3) edge [line width=1.0mm,right] node[above right] {} (4);
\end{tikzpicture}
}
\hspace{-1.4mm}
\scalebox{0.28}{
\begin{tikzpicture}[-,>=latex,auto,node distance=2.0cm,thick,
main node/.style={circle,draw=black,fill=white,draw,font=\sffamily\Huge\bfseries,text=black,minimum width=0.9cm},
white node/.style={draw=white,draw,font=\sffamily\Huge\bfseries,text=black,minimum width=0.9cm}
]
\node[white node] (1) {};
\node[white node] (2) [below of=1,left=1.5pt] {};
\node[white node] (3) [below of=2] {};
\node[white node] (4) [right of=2, left=60pt, below=10pt, above=-0.05pt] {\vspace{-2mm}\fontsize{56}{56}\selectfont $\Rightarrow$};
\end{tikzpicture}
}
\hspace{0mm}
\scalebox{0.28}{
\begin{tikzpicture}[-,>=latex,auto,node distance=2.2cm,thick,
main node/.style={circle,draw=black,fill=black,draw,font=\sffamily\Huge\bfseries,text=white,minimum width=0.9cm},
white/.style={circle,draw=white,fill=white,draw,font=\sffamily,text=white,minimum width=0.001cm},
typeOne node/.style={circle,draw=typeOneColor,fill=typeOneColor,draw,font=\sffamily\Huge\bfseries,text=white,minimum width=0.9cm},
typeTwo node/.style={circle,draw=typeTwoColor,fill=typeTwoColor,draw,font=\sffamily\Huge\bfseries,text=white,minimum width=0.9cm},
]

\node[typeOne node] (1) {}; 
\node[typeOne node] (2) [right of=1] {}; 
\node[typeOne node] (3) [above right of=1, left=1.5pt] {}; 
\node[typeOne node] (4) [above of=3] {}; 

\tikzstyle{LabelStyle}=[below=3pt]
\path[every node/.style={font=\sffamily}] 
	(1) edge [bend left,line width=1.0mm] node[above right] {} (4)
	(2) edge [bend right,line width=1.0mm] node[above right] {} (4)
(1) edge [line width=1.0mm, left] node [above left] {} (2) 
	(1)  edge [line width=1.0mm, right] node[below right] {} (3)
(2) edge [line width=1.0mm, left] node[below left] {} (3)
	(3) edge [line width=1.0mm,right] node[above right] {} (4);
\end{tikzpicture}
}
\hspace{0.3mm}
\scalebox{0.28}{
\begin{tikzpicture}[-,>=latex,auto,node distance=2.2cm,thick,
main node/.style={circle,draw=black,fill=black,draw,font=\sffamily\Huge\bfseries,text=white,minimum width=0.9cm},
white/.style={circle,draw=white,fill=white,draw,font=\sffamily,text=white,minimum width=0.001cm},
typeOne node/.style={circle,draw=typeOneColor,fill=typeOneColor,draw,font=\sffamily\Huge\bfseries,text=white,minimum width=0.9cm},
typeTwo node/.style={circle,draw=typeTwoColor,fill=typeTwoColor,draw,font=\sffamily\Huge\bfseries,text=white,minimum width=0.9cm},
]

\node[typeTwo node] (1) {}; 
\node[typeTwo node] (2) [right of=1] {}; 
\node[typeTwo node] (3) [above right of=1, left=1.5pt] {}; 
\node[typeTwo node] (4) [above of=3] {}; 

\tikzstyle{LabelStyle}=[below=3pt]
\path[every node/.style={font=\sffamily}] 
	(1) edge [bend left,line width=1.0mm] node[above right] {} (4)
	(2) edge [bend right,line width=1.0mm] node[above right] {} (4)
(1) edge [line width=1.0mm, left] node [above left] {} (2) 
	(1)  edge [line width=1.0mm, right] node[below right] {} (3)
(2) edge [line width=1.0mm, left] node[below left] {} (3)
	(3) edge [line width=1.0mm,right] node[above right] {} (4);
\end{tikzpicture}
}
\hspace{0.3mm}
\scalebox{0.28}{
\begin{tikzpicture}[-,>=latex,auto,node distance=2.2cm,thick,
main node/.style={circle,draw=black,fill=black,draw,font=\sffamily\Huge\bfseries,text=white,minimum width=0.9cm},
white/.style={circle,draw=white,fill=white,draw,font=\sffamily,text=white,minimum width=0.001cm},
typeOne node/.style={circle,draw=typeOneColor,fill=typeOneColor,draw,font=\sffamily\Huge\bfseries,text=white,minimum width=0.9cm},
typeTwo node/.style={circle,draw=typeTwoColor,fill=typeTwoColor,draw,font=\sffamily\Huge\bfseries,text=white,minimum width=0.9cm},
]

\node[typeOne node] (1) {}; 
\node[typeOne node] (2) [right of=1] {}; 
\node[typeOne node] (3) [above right of=1, left=1.5pt] {}; 
\node[typeTwo node] (4) [above of=3] {}; 

\tikzstyle{LabelStyle}=[below=3pt]
\path[every node/.style={font=\sffamily}] 
	(1) edge [bend left,line width=1.0mm] node[above right] {} (4)
	(2) edge [bend right,line width=1.0mm] node[above right] {} (4)
(1) edge [line width=1.0mm, left] node [above left] {} (2) 
	(1)  edge [line width=1.0mm, right] node[below right] {} (3)
(2) edge [line width=1.0mm, left] node[below left] {} (3)
	(3) edge [line width=1.0mm,right] node[above right] {} (4);
\end{tikzpicture}
}
\hspace{0.3mm}
\scalebox{0.28}{
\begin{tikzpicture}[-,>=latex,auto,node distance=2.2cm,thick,
main node/.style={circle,draw=black,fill=black,draw,font=\sffamily\Huge\bfseries,text=white,minimum width=0.9cm},
white/.style={circle,draw=white,fill=white,draw,font=\sffamily,text=white,minimum width=0.001cm},
typeOne node/.style={circle,draw=typeOneColor,fill=typeOneColor,draw,font=\sffamily\Huge\bfseries,text=white,minimum width=0.9cm},
typeTwo node/.style={circle,draw=typeTwoColor,fill=typeTwoColor,draw,font=\sffamily\Huge\bfseries,text=white,minimum width=0.9cm},
]

\node[typeOne node] (1) {}; 
\node[typeOne node] (2) [right of=1] {}; 
\node[typeTwo node] (3) [above right of=1, left=1.5pt] {}; 
\node[typeTwo node] (4) [above of=3] {}; 

\tikzstyle{LabelStyle}=[below=3pt]
\path[every node/.style={font=\sffamily}] 
	(1) edge [bend left,line width=1.0mm] node[above right] {} (4)
	(2) edge [bend right,line width=1.0mm] node[above right] {} (4)
(1) edge [line width=1.0mm, left] node [above left] {} (2) 
	(1)  edge [line width=1.0mm, right] node[below right] {} (3)
(2) edge [line width=1.0mm, left] node[below left] {} (3)
	(3) edge [line width=1.0mm,right] node[above right] {} (4);
\end{tikzpicture}
}
\hspace{0.3mm}
\scalebox{0.28}{
\begin{tikzpicture}[-,>=latex,auto,node distance=2.2cm,thick,
main node/.style={circle,draw=black,fill=black,draw,font=\sffamily\Huge\bfseries,text=white,minimum width=0.9cm},
white/.style={circle,draw=white,fill=white,draw,font=\sffamily,text=white,minimum width=0.001cm},
typeOne node/.style={circle,draw=typeOneColor,fill=typeOneColor,draw,font=\sffamily\Huge\bfseries,text=white,minimum width=0.9cm},
typeTwo node/.style={circle,draw=typeTwoColor,fill=typeTwoColor,draw,font=\sffamily\Huge\bfseries,text=white,minimum width=0.9cm},
]

\node[typeOne node] (1) {}; 
\node[typeTwo node] (2) [right of=1] {}; 
\node[typeTwo node] (3) [above right of=1, left=1.5pt] {}; 
\node[typeTwo node] (4) [above of=3] {}; 

\tikzstyle{LabelStyle}=[below=3pt]
\path[every node/.style={font=\sffamily}] 
	(1) edge [bend left,line width=1.0mm] node[above right] {} (4)
	(2) edge [bend right,line width=1.0mm] node[above right] {} (4)
(1) edge [line width=1.0mm, left] node [above left] {} (2) 
	(1)  edge [line width=1.0mm, right] node[below right] {} (3)
(2) edge [line width=1.0mm, left] node[below left] {} (3)
	(3) edge [line width=1.0mm,right] node[above right] {} (4);
\end{tikzpicture}
}
\vspace{-15mm}
\label{fig:typed-4-cliques-with-2-types}
}
} 

\vspace{-2mm}
\caption{
Examples of typed (heterogeneous) graphlets
}
\label{fig:typed-motifs-3colors}
\end{figure}

In this work, we introduce the notion of a \emph{typed graphlet} that naturally generalizes the notion of graphlet to heterogeneous networks.\footnote{The terms heterogeneous and typed graphlet are used interchangeably.}
Typed graphlets generalize the notion of graphlets to rich heterogeneous networks as they capture both the induced subgraph of interest and the types associated with the nodes in the induced subgraph (Figure~\ref{fig:typed-motifs-3colors}).
These small induced typed subgraphs are the fundamental \emph{building blocks of rich heterogeneous networks}.
Typed graphlets naturally capture the higher-order typed connectivity patterns in bipartite, k-partite, signed, k-star, attributed graphs, and more generally heterogeneous networks.
As such, typed graphlets are useful for a wide variety of predictive and descriptive modeling applications in these rich complex networks.
Closest work related to our own has focused on colored graphlets~\cite{ribeiro2014discovering,gu2018heterAlignment}, which is a different problem.
See Figure~\ref{fig:typed-graphlet-vs-colored-graphlet-key-differences} for an intuitive illustration of the difference between the proposed notion of typed graphlets and recent work that focuses on colored graphlets.

Despite their fundamental and practical importance, counting typed graphlets 
remains a challenging and unsolved problem.
To address this problem, we propose a fast, parallel, and space-efficient framework for counting typed graphlets in large networks.
The time complexity is provably optimal and matches the time complexity of the best known untyped graphlet counting algorithm, \ie, PGD~\cite{pgd} and variants based on it~\cite{dave2017clog,PinarWWW17}.
Using non-trivial combinatorial relationships between lower-order ($k\!-\!1$)-node typed graphlets, 
we derive equations that allow us to compute many of the $k$-node typed graphlet counts in $o(1)$ constant time.
Thus, we avoid explicit enumeration of many typed graphlets by simply computing the exact count directly in constant time using the discovered combinatorial relationships.
For every edge, we count a few typed graphlets and obtain the exact counts of the remaining typed graphlets in $o(1)$ constant time.
Furthermore, we store only the nonzero typed graphlet counts for every edge.
To better handle large-scale heterogeneous networks with an arbitrary number of types, we propose an efficient parallel algorithm for typed graphlets
that scales almost linearly as the number of processing units increases.
As an aside, this paper focuses on counting typed graphlets with up to four nodes.
Typed graphlets of a larger size are outside the scope of this paper and left for future work.
However, the ideas and theoretical foundations formalized in this work naturally extend to typed graphlets of larger sizes (See Section~\ref{sec:framework-discussion} for further discussion).

Theoretically, we show that typed graphlets are more powerful and encode more information than untyped graphlets.
In addition, we theoretically show the worst-case time and space complexity of the proposed framework.
Notably, the time complexity of the proposed approach is shown to be equivalent to the best untyped graphlet counting algorithm.
Furthermore, we derive many of the typed graphlets directly in $o(1)$ constant time using counts of lower-order $(k\!-\!1)$-node typed graphlets.

\begin{table}[t!]
\caption{Summary of notation. Matrices are bold upright letters; vectors are bold lowercase letters.}
\vspace{-3mm}
\renewcommand{\arraystretch}{1.15} 
\scalebox{1.0}{
\centering 
\fontsize{8}{9.5}\selectfont
\setlength{\tabcolsep}{6pt} 
\label{table:notation}
\hspace*{-2.5mm}
\begin{tabularx}{0.9\linewidth}{@{}r X@{}} 
\toprule
$G$ & graph \\ 
$H,F$ & graphlet of $G$ \\

$N, M$ & number of nodes $N = |V|$ and edges $M = |E|$ in the graph \\
$K$ & size of a graphlet ($\#$ nodes) \\
$L$ & number of types (\ie, labels) \\

$\mathcal{H}$ & set of all untyped graphlets in $G$ \\
$\mathcal{H}_T$ & set of all typed graphlets in $G$ \\

$T$ & \# of typed graphlets $T = |\mathcal{H}_T|$ observed in $G$ with $L$ types \\ 
$T_{\max}$ & \# of possible typed graphlets with $L$ types, hence $T \leq T_{\max}$ \\ 
$T_H$ & \# of different typed graphlets for a particular graphlet $H \in \mathcal{H}$ \\

$\mathcal{T}_V$ & set of node types in $G$ \\
$\mathcal{T}_E$ & set of edge types in $G$ \\
$\phi$ & type function $\phi : V \rightarrow \mathcal{T}_V$ \\
$\xi$ & type function $\xi : E \rightarrow \mathcal{T}_E$ \\

$\vt$ & $K$-dimensional type vector $\vt = \big[\,\! \phi_{w_1} \;\! \cdots \; \phi_{w_K} \,\!\big]$ \\

$f_{ij}(H, \vt)$ & \# of instances of graphlet $H$ that contain nodes $i$ and $j$ with type vector $\vt$ \\

$\mathbb{F}$ & an arbitrary typed graphlet hash function (Section~\ref{sec:typed-motif-hash-function}) \\

$\Delta$ & maximum degree of a node in $G$ \\
$\Gamma_{i}^{t}$ & set of neighbors of node $i$ with type $t$ \\
$d_{i}^{t}$ & degree of node $i$ with type $t$, $d_{i}^{t} = |\Gamma_{i}^{t}|$ \\
$T_{ij}^{t}$ & set of nodes of type $t$ that form typed triangles with $i$ and $j$ \\
$S_{i}^{t}$, $S_{j}^{t}$ & set of nodes of type $t$ that form typed 3-node stars centered at $i$ (or $j$) \\

$\mathcal{M}_{ij}$ & set of typed graphlets for a given pair of nodes $(i,j)$ \\
$\mathcal{X}_{ij}$ & nonzero (typed-graphlet, count) pairs for edge $(i,j) \in E$ \\

$\Psi$ & hash table for checking whether a node is connected to $i$ or $j$ and its ``relationship'' (\eg, $\lambda_{1}$, $\lambda_{2}$, $\lambda_{3}$) in constant time \\

\bottomrule
\end{tabularx}}
\end{table}

\definecolor{typeOneColor}{RGB}{8,81,156} 
\definecolor{typeTwoColor}{RGB}{222,45,38} 
\definecolor{typeThreeColor}{RGB}{49,163,84} 
\definecolor{typeFourColor}{RGB}{117,107,177} 

\makeatletter
\global\let\tikz@ensure@dollar@catcode=\relax
\makeatother
\tikzstyle{every node}=[font=\large,line width=1.5pt]
\begin{figure}[h!]

\tikzstyle{background-page}=[rectangle,
fill=gray!30,
fill=white,
outer ysep=0.1cm,
outer xsep=0.0cm,
inner xsep=0.2cm,
inner ysep=0.5cm,
rounded corners=0mm]  
\tikzstyle{background-white}=[rectangle,
fill=white,
inner ysep=0.5cm,
rounded corners=0mm]
\scalebox{1.0}{

\tikzstyle{background-page}=[rectangle,
fill=gray!30,
outer ysep=0.1cm,
inner sep=0.5cm,
rounded corners=0mm]  
\tikzstyle{background-white}=[rectangle,
fill=white,
inner ysep=0.5cm,
rounded corners=0mm]
\hspace{-5.0mm}
\subfigure[\vspace{-2mm}Typed Graphlets (Ours)]{
\scalebox{0.96}{

\scalebox{0.28}{
\begin{tikzpicture}[-,>=latex,auto,node distance=2.0cm,thick,
main node/.style={circle,draw=black,fill=white,draw,font=\sffamily\Huge\bfseries,text=black,minimum width=0.9cm, line width=1mm},
]

\node[main node] (1) {}; 
\node[main node] (2) [below of=1] {}; 
\node[main node] (3) [below of=2] {}; 

\tikzstyle{LabelStyle}=[below=3pt]
\path[every node/.style={font=\sffamily}] 
(1) edge [line width=1.0mm, left] node [above left] {} (2) 
(2) edge [line width=1.0mm, left] node[below left] {} (3);

\begin{pgfonlayer}{background}
		\node [background-white, 
fit=(1) (2) (3),
label=below:\fontsize{32}{32}\selectfont 
\textcolor{white}{$H_1$}
] {};
\end{pgfonlayer} 

\end{tikzpicture}
}
\hspace{-2.7mm}
\scalebox{0.28}{
\begin{tikzpicture}[-,>=latex,auto,node distance=2.0cm,thick,fill=none,
main node/.style={circle,draw=black,fill=none,draw,font=\sffamily\Huge\bfseries,text=black,minimum width=0.9cm},
white node/.style={draw=white,fill=none,draw,font=\sffamily\Huge\bfseries,text=black,minimum width=0.9cm}
]
\node[white node] (1) {};
\node[white node] (2) [below of=1] {};
\node[white node] (3) [below of=2] {};
\node[white node] (4) [right of=2, left=30pt, below=10pt, above=0.05pt] {\vspace{2mm}\fontsize{56}{56}\selectfont $\Rightarrow$};

\begin{pgfonlayer}{background}
		\node [background-white, 
fit=(1) (2) (3) (4),
label=below:\fontsize{32}{32}\selectfont 
\textcolor{white}{$H_1$}
] {};
\end{pgfonlayer} 

\end{tikzpicture}
}
\hspace{-2mm}
\scalebox{0.28}{
\begin{tikzpicture}[-,>=latex,auto,node distance=2.0cm,thick,
typeOne node/.style={circle,draw=typeOneColor,fill=typeOneColor,draw,font=\sffamily\Huge\bfseries,text=white,minimum width=0.9cm},
typeTwo node/.style={circle,draw=typeTwoColor,fill=typeTwoColor,draw,font=\sffamily\Huge\bfseries,text=white,minimum width=0.9cm},
typeThree node/.style={circle,draw=typeThreeColor,fill=typeThreeColor,draw,font=\sffamily\Huge\bfseries,text=white,minimum width=0.9cm},
]

\node[typeOne node] (1) {}; 
\node[typeOne node] (2) [below of=1] {}; 
\node[typeOne node] (3) [below of=2] {}; 

\tikzstyle{LabelStyle}=[below=3pt]
\path[every node/.style={font=\sffamily}] 
(1) edge [line width=1.0mm, left] node [above left] {} (2) 
(2) edge [line width=1.0mm, left] node[below left] {} (3);

\begin{pgfonlayer}{background}
\node [background-page, 
fit=(1) (2) (3),
label=below:\fontsize{32}{32}\selectfont $H_1$
] {};
\end{pgfonlayer} 
\end{tikzpicture}
}
\label{fig:typed-3-path-homo-typeOne-3colors}
\hspace{0.3mm}
\scalebox{0.28}{
\begin{tikzpicture}[-,>=latex,auto,node distance=2.0cm,thick,
typeOne node/.style={circle,draw=typeOneColor,fill=typeOneColor,draw,font=\sffamily\Huge\bfseries,text=white,minimum width=0.9cm},
typeTwo node/.style={circle,draw=typeTwoColor,fill=typeTwoColor,draw,font=\sffamily\Huge\bfseries,text=white,minimum width=0.9cm},
typeThree node/.style={circle,draw=typeThreeColor,fill=typeThreeColor,draw,font=\sffamily\Huge\bfseries,text=white,minimum width=0.9cm},
]

\node[typeTwo node] (1) {}; 
\node[typeTwo node] (2) [below of=1] {}; 
\node[typeTwo node] (3) [below of=2] {}; 

\tikzstyle{LabelStyle}=[below=3pt]
\path[every node/.style={font=\sffamily}] 
(1) edge [line width=1.0mm, left] node [above left] {} (2) 
(2) edge [line width=1.0mm, left] node[below left] {} (3);

\begin{pgfonlayer}{background}
\node [background-page, 
fit=(1) (2) (3),
label=below:\fontsize{32}{32}\selectfont $H_2$
] {};
\end{pgfonlayer} 
\end{tikzpicture}
}
\hspace{0.3mm}
\scalebox{0.28}{
\begin{tikzpicture}[-,>=latex,auto,node distance=2.0cm,thick,
typeOne node/.style={circle,draw=typeOneColor,fill=typeOneColor,draw,font=\sffamily\Huge\bfseries,text=white,minimum width=0.9cm},
typeTwo node/.style={circle,draw=typeTwoColor,fill=typeTwoColor,draw,font=\sffamily\Huge\bfseries,text=white,minimum width=0.9cm},
typeThree node/.style={circle,draw=typeThreeColor,fill=typeThreeColor,draw,font=\sffamily\Huge\bfseries,text=white,minimum width=0.9cm},
]

\node[typeOne node] (1) {}; 
\node[typeOne node] (2) [below of=1] {}; 
\node[typeTwo node] (3) [below of=2] {}; 

\node[typeTwo node] (4) [right of=1] {}; 
\node[typeOne node] (5) [below of=4] {}; 
\node[typeOne node] (6) [below of=5] {}; 

\node[typeOne node] (7) [right of=4] {}; 
\node[typeTwo node] (8) [below of=7] {}; 
\node[typeOne node] (9) [below of=8] {}; 

\tikzstyle{LabelStyle}=[below=3pt]
\path[every node/.style={font=\sffamily}] 
(1) edge [line width=1.0mm, left] node [above left] {} (2) 
(2) edge [line width=1.0mm, left] node[below left] {} (3)

(4) edge [line width=1.0mm, left] node [above left] {} (5) 
(5) edge [line width=1.0mm, left] node[below left] {} (6)

(7) edge [line width=1.0mm, left] node [above left] {} (8) 
(8) edge [line width=1.0mm, left] node[below left] {} (9);

\begin{pgfonlayer}{background}
\node [background-page, 
fit=(1) (2) (3) (4) (5) (6) (7) (8) (9),
label=below:\fontsize{32}{32}\selectfont $H_3$
] {};
\end{pgfonlayer} 
\end{tikzpicture}
}
\hspace{0.3mm}
\scalebox{0.28}{
\begin{tikzpicture}[-,>=latex,auto,node distance=2.0cm,thick,
typeOne node/.style={circle,draw=typeOneColor,fill=typeOneColor,draw,font=\sffamily\Huge\bfseries,text=white,minimum width=0.9cm},
typeTwo node/.style={circle,draw=typeTwoColor,fill=typeTwoColor,draw,font=\sffamily\Huge\bfseries,text=white,minimum width=0.9cm},
typeThree node/.style={circle,draw=typeThreeColor,fill=typeThreeColor,draw,font=\sffamily\Huge\bfseries,text=white,minimum width=0.9cm},
]

\node[typeTwo node] (4) {}; 
\node[typeTwo node] (5) [below of=4] {}; 
\node[typeOne node] (6) [below of=5] {}; 

\node[typeOne node] (7) [right of=4] {}; 
\node[typeTwo node] (8) [below of=7] {}; 
\node[typeTwo node] (9) [below of=8] {}; 

\node[typeTwo node] (10) [right of=7] {}; 
\node[typeOne node] (11) [below of=10] {}; 
\node[typeTwo node] (12) [below of=11] {}; 

\tikzstyle{LabelStyle}=[below=3pt]
\path[every node/.style={font=\sffamily}] 

(4) edge [line width=1.0mm, left] node [above left] {} (5) 
(5) edge [line width=1.0mm, left] node[below left] {} (6)

(7) edge [line width=1.0mm, left] node [above left] {} (8) 
(8) edge [line width=1.0mm, left] node[below left] {} (9)

(10) edge [line width=1.0mm, left] node [above left] {} (11) 
(11) edge [line width=1.0mm, left] node[below left] {} (12);

\begin{pgfonlayer}{background}
\node [background-page, 
fit=(4) (5) (6) (7) (8) (9) (10) (11) (12),
label=below:\fontsize{32}{32}\selectfont $H_4$
] {};
\end{pgfonlayer} 
\end{tikzpicture}
}

\hspace{6.0mm}
\label{fig:typed-3-path-with-2-types}
}
}
\hfill
\subfigure[\vspace{-0.0mm}Colored Graphlets~\protect\cite{gu2018heterAlignment}]{
\scalebox{0.96}{

\scalebox{0.28}{
\begin{tikzpicture}[-,>=latex,auto,node distance=2.0cm,thick,
main node/.style={circle,draw=black,fill=white,draw,font=\sffamily\Huge\bfseries,text=black,minimum width=0.9cm, line width=1mm},
]

\node[main node] (1) {}; 
\node[main node] (2) [below of=1] {}; 
\node[main node] (3) [below of=2] {}; 

\tikzstyle{LabelStyle}=[below=3pt]
\path[every node/.style={font=\sffamily}] 
(1) edge [line width=1.0mm, left] node [above left] {} (2) 
(2) edge [line width=1.0mm, left] node[below left] {} (3);

\begin{pgfonlayer}{background}
		\node [background-white, 
fit=(1) (2) (3),
label=below:\fontsize{32}{32}\selectfont 
\textcolor{white}{$H_1$}
] {};
\end{pgfonlayer} 

\end{tikzpicture}
}
\hspace{-2.7mm}
\scalebox{0.28}{
\begin{tikzpicture}[-,>=latex,auto,node distance=2.0cm,thick,
main node/.style={circle,draw=black,fill=white,draw,font=\sffamily\Huge\bfseries,text=black,minimum width=0.9cm},
white node/.style={draw=white,draw,font=\sffamily\Huge\bfseries,text=black,minimum width=0.9cm}
]
\node[white node] (1) {};
\node[white node] (2) [below of=1] {};
\node[white node] (3) [below of=2] {};
\node[white node] (4) [right of=2, left=30pt, below=10pt, above=0.05pt] {\vspace{2mm}\fontsize{56}{56}\selectfont $\Rightarrow$};

\begin{pgfonlayer}{background}
		\node [background-white, 
fit=(1) (2) (3) (4),
label=below:\fontsize{32}{32}\selectfont 
\textcolor{white}{$H_1$}
] {};
\end{pgfonlayer} 

\end{tikzpicture}
}
\hspace{-2mm}
\scalebox{0.28}{
\begin{tikzpicture}[-,>=latex,auto,node distance=2.0cm,thick,
typeOne node/.style={circle,draw=typeOneColor,fill=typeOneColor,draw,font=\sffamily\Huge\bfseries,text=white,minimum width=0.9cm},
typeTwo node/.style={circle,draw=typeTwoColor,fill=typeTwoColor,draw,font=\sffamily\Huge\bfseries,text=white,minimum width=0.9cm},
typeThree node/.style={circle,draw=typeThreeColor,fill=typeThreeColor,draw,font=\sffamily\Huge\bfseries,text=white,minimum width=0.9cm},
]

\node[typeOne node] (1) {}; 
\node[typeOne node] (2) [below of=1] {}; 
\node[typeOne node] (3) [below of=2] {}; 

\tikzstyle{LabelStyle}=[below=3pt]
\path[every node/.style={font=\sffamily}] 
(1) edge [line width=1.0mm, left] node [above left] {} (2) 
(2) edge [line width=1.0mm, left] node[below left] {} (3);

\begin{pgfonlayer}{background}
\node [background-page, 
fit=(1) (2) (3),
label=below:\fontsize{32}{32}\selectfont $C_1$
] {};
\end{pgfonlayer} 
\end{tikzpicture}
}
\label{fig:typed-3-path-homo-typeOne-3colors}
\hspace{0.3mm}
\scalebox{0.28}{
\begin{tikzpicture}[-,>=latex,auto,node distance=2.0cm,thick,
typeOne node/.style={circle,draw=typeOneColor,fill=typeOneColor,draw,font=\sffamily\Huge\bfseries,text=white,minimum width=0.9cm},
typeTwo node/.style={circle,draw=typeTwoColor,fill=typeTwoColor,draw,font=\sffamily\Huge\bfseries,text=white,minimum width=0.9cm},
typeThree node/.style={circle,draw=typeThreeColor,fill=typeThreeColor,draw,font=\sffamily\Huge\bfseries,text=white,minimum width=0.9cm},
]

\node[typeTwo node] (1) {}; 
\node[typeTwo node] (2) [below of=1] {}; 
\node[typeTwo node] (3) [below of=2] {}; 

\tikzstyle{LabelStyle}=[below=3pt]
\path[every node/.style={font=\sffamily}] 
(1) edge [line width=1.0mm, left] node [above left] {} (2) 
(2) edge [line width=1.0mm, left] node[below left] {} (3);

\begin{pgfonlayer}{background}
\node [background-page, 
fit=(1) (2) (3),
label=below:\fontsize{32}{32}\selectfont $C_2$
] {};
\end{pgfonlayer} 
\end{tikzpicture}
}
\hspace{0.3mm}
\scalebox{0.28}{
\begin{tikzpicture}[-,>=latex,auto,node distance=2.0cm,thick,
typeOne node/.style={circle,draw=typeOneColor,fill=typeOneColor,draw,font=\sffamily\Huge\bfseries,text=white,minimum width=0.9cm},
typeTwo node/.style={circle,draw=typeTwoColor,fill=typeTwoColor,draw,font=\sffamily\Huge\bfseries,text=white,minimum width=0.9cm},
typeThree node/.style={circle,draw=typeThreeColor,fill=typeThreeColor,draw,font=\sffamily\Huge\bfseries,text=white,minimum width=0.9cm},
]

\node[typeOne node] (1) {}; 
\node[typeOne node] (2) [below of=1] {}; 
\node[typeTwo node] (3) [below of=2] {}; 

\node[typeTwo node] (4) [right of=1] {}; 
\node[typeOne node] (5) [below of=4] {}; 
\node[typeOne node] (6) [below of=5] {}; 

\node[typeOne node] (7) [right of=4] {}; 
\node[typeTwo node] (8) [below of=7] {}; 
\node[typeOne node] (9) [below of=8] {}; 

\node[typeTwo node] (10) [right of=7] {}; 
\node[typeTwo node] (11) [below of=10] {}; 
\node[typeOne node] (12) [below of=11] {}; 

\node[typeOne node] (13) [right of=10] {}; 
\node[typeTwo node] (14) [below of=13] {}; 
\node[typeTwo node] (15) [below of=14] {}; 

\node[typeTwo node] (16) [right of=13] {}; 
\node[typeOne node] (17) [below of=16] {}; 
\node[typeTwo node] (18) [below of=17] {}; 

\tikzstyle{LabelStyle}=[below=3pt]
\path[every node/.style={font=\sffamily}] 
(1) edge [line width=1.0mm, left] node [above left] {} (2) 
(2) edge [line width=1.0mm, left] node[below left] {} (3)

(4) edge [line width=1.0mm, left] node [above left] {} (5) 
(5) edge [line width=1.0mm, left] node[below left] {} (6)

(7) edge [line width=1.0mm, left] node [above left] {} (8) 
(8) edge [line width=1.0mm, left] node[below left] {} (9)

(10) edge [line width=1.0mm, left] node [above left] {} (11) 
(11) edge [line width=1.0mm, left] node[below left] {} (12)

(13) edge [line width=1.0mm, left] node [above left] {} (14) 
(14) edge [line width=1.0mm, left] node[below left] {} (15)

(16) edge [line width=1.0mm, left] node [above left] {} (17) 
(17) edge [line width=1.0mm, left] node[below left] {} (18);

\begin{pgfonlayer}{background}
\node [background-page, 
fit=(1) (2) (3) (4) (5) (6) (7) (8) (9) (10) (11) (12) (13) (14) (15) (16) (17) (18),
label=below:\fontsize{32}{32}\selectfont $C_3$
] {};
\end{pgfonlayer} 
\end{tikzpicture}
}

\label{fig:colored-3-path-with-2-colors}
}
}

} 

\vspace{-2mm}
\caption{
\textbf{Typed graphlets vs. colored graphlets.}
The intuitive example shows the difference between typed graphlets that are formally defined in this paper and colored graphlets from~\protect\cite{gu2018heterAlignment}.
In particular, (a) shows the typed 3-paths with $L=2$ types whereas (b) shows the ``colored 3-paths''.
In the above example, there are three colored graphlets, that is, the last 4 typed graphlets are considered a single colored graphlet.
Note given $L$ colors, there are $2^{L}-1$ colored graphlets.
However, for $K$ nodes and $L$ types, there are $L+K-1 \choose K$ typed graphlets.
}
\label{fig:typed-graphlet-vs-colored-graphlet-key-differences}
\end{figure}

Empirically, the proposed approach is shown to be orders of magnitude faster than state-of-the-art methods for the simpler colored graphlet counting problem. 
In particular, we observe between 89 and 10,981 times speedup in runtime performance compared to the best method.
Notably, on graphs of even moderate size (thousands of nodes/edges), these approaches fail to finish in a reasonable amount of time (24 hours).
In terms of space, the proposed approach uses between 42x and 776x less space than these methods.
We also demonstrate the parallel scaling of the parallel algorithm and observe nearly linear speedups as the number of processing units increases.
In addition to real-world graphs from a wide range of domains, we also show results on a number of synthetically generated graphs from a variety of graph models.
Finally, we demonstrate the utility of typed graphlets for exploratory network analysis using a variety of well-known networks.

Compared to the untyped/homogeneous graphlet counting problem 
(which has found many important applications~\cite{koyuturk2006pairwise,prvzulj2007biological,vishwanathan2010graph,shervashidze2009efficient,solava2012graphlet,benson2016higher,ahmed17aaai,role2vec,noble2003graph,akoglu2015graph,Rossi2018a}), 
typed graphlets are more powerful containing a significant amount of additional information.
We show this formally using information theory (Section~\ref{sec:complexity-analysis}) and demonstrate the importance of typed graphlets empirically using real-world graphs for exploratory analysis (Section~\ref{sec:exploratory-analysis}) and graph-based predictive modeling (Section~\ref{sec:exp-link-pred}).
Importantly, we find that only a handful of the possible typed graphlets actually occur in the real-world graphs studied in this work (Table~\ref{table:unique-typed-motif-occur}).
Furthermore, among the typed graphlets with nonzero counts (\ie, the typed graphlets that actually occur in $G$), we find that a few of those typed graphlets occur very frequently while the vast majority have very few occurrences (see Figure~\ref{fig:typed-tri-prob-dist-cora-citeseer} and Figure~\ref{fig:typed-4-node-prob-dist-cora-citeseer} for a few examples).
This observation indicates a power-law relationship between the counts of the different typed graphlets.
The rare typed graphlets (\ie, typed graphlets that rarely occur in the graph) also contain useful information as the appearance of these typed graphlets may indicate anomalies/outliers or simply unique structural behaviors that are fundamentally important but extremely difficult to identify using traditional methods.
Moreover, the typed graphlets found to be important are easily interpretable and provide key insights into the structure and underlying phenomena governing the formation of the complex network that would otherwise be hidden using traditional untyped methods, see Section~\ref{sec:exploratory-analysis} for further details.
Finally, we also demonstrate the effectiveness of typed graphlets in Section~\ref{sec:exp-link-pred} for improving a predictive modeling task.

This work introduces and formally defines a generalization of the notion of graphlet to heterogeneous networks called \emph{typed graphlets}.
We describe a general framework for counting the proposed formalization of typed graphlets.
The proposed framework has the following desired properties:
{
\begin{itemize}
\setlength{\itemsep}{4pt}

\item \textbf{Fast}: The approach is fast for large graphs by leveraging novel \emph{non-trivial combinatorial relationships} to derive many of the typed graphlets in $o(1)$ constant time. 
Theoretically, the worst-case time complexity is shown to match the best untyped graphlet algorithm (Section~\ref{sec:time-complexity}).
As shown in Table~\ref{table:runtime-perf}-\ref{table:runtime-speedup-perf}, the approach is orders of magnitude faster than recent methods proposed for the simpler colored graphlet problem.

\item \textbf{Space-Efficient}: The approach is space-efficient by hashing and storing only the typed graphlet counts that appear on a given edge.

\item \textbf{Scalable for Large Networks}:
The proposed approach is scalable for large heterogeneous networks.
In particular, the approach scales nearly linearly as the size of the graph increases.

\item \textbf{Parallel}: 
The typed graphlet approach
lends itself to an efficient lock-free \& asynchronous parallel implementation.
We observe near-linear parallel scaling results in Section~\ref{sec:exp-parallel-scaling}. 

\item \textbf{Effectiveness}: 
We demonstrate the utility of typed graphlets for graph mining/exploratory analysis (Section~\ref{sec:exploratory-analysis}) and predictive modeling 
(Section~\ref{sec:exp-link-pred}) where leveraging typed graphlets significantly improves predictive performance.
This work brings new opportunities to leverage typed graphlets for many other real-world applications.

\end{itemize}
}

\section{Related Work} \label{sec:related-work}
\noindent
Closest work related to our own is that of colored graphlets~\cite{gu2018heterAlignment,ribeiro2014discovering}.
However, the notion of colored graphlet is different from the notion of typed graphlets (and position-aware typed graphlets (Def.~\ref{def:position-aware-typed-graphlet-instance})) that are formally defined and investigated in this paper.
For an intuitive illustration of the difference between the proposed notion of typed graphlets and the  colored graphlet counting problem studied in prior work, see Figure~\ref{fig:typed-graphlet-vs-colored-graphlet-key-differences}.
Besides the difference in problem, all of the prior work has focused on counting colored graphlets for \emph{nodes} whereas this paper focuses on the problem of counting typed graphlets for \emph{edges} (or more generally, between a pair of nodes $i$ and $j$).
It is straightforward to see that the definition of colored graphlets from~\citet{gu2018heterAlignment} is only able to cover a subset of the typed graphlets given by our definition.
Thus, the notion of typed graphlet described in our work is more general than the notion of colored graphlet.
Besides the fundamental difference in problem as shown in Figure~\ref{fig:typed-graphlet-vs-colored-graphlet-key-differences}, that work also focused mainly
on the application to network alignment (using very small networks) and 
not on the approach for computing colored graphlets.
Nevertheless, the method GC used in that work and the other methods for colored graphlets 
are only able to handle extremely small graphs as shown in Table~\ref{table:runtime-perf}.

Despite the difference between colored and typed graphlets (Figure~\ref{fig:typed-graphlet-vs-colored-graphlet-key-differences}),
the approach proposed in this work also differs from the colored graphlet methods in three fundamental ways.
First, while we leverage new combinatorial relationships to derive a number of typed graphlets in $o(1)$ constant time, GC and other colored graphlet methods must enumerate all 
graphlets in order to obtain their color configuration.
Therefore, our approach is significantly faster (even though we count typed graphlets, a more complex and representationally powerful notion) than these methods as they require a lot of extra work to compute the colored graphlets that our approach can derive in constant time.\footnote{Notice from Figure~\ref{fig:typed-graphlet-vs-colored-graphlet-key-differences} that any method for counting typed graphlets can by definition be used to count colored graphlets, but not vice-versa.}
For instance, the small citeseer graph with only 3.3k nodes and 4.5k edges takes 46.27 seconds using the best method (for colored graphlets) whereas our approach for typed graphlets takes only a fraction of a second, notably, $\nicefrac{2}{100}$ seconds.
In addition, while the methods for colored graphlets (a simpler relaxation of typed graphlet, see Figure~\ref{fig:typed-graphlet-vs-colored-graphlet-key-differences}) are only able to handle small networks, our approach naturally scales to large networks with millions of nodes and edges (Section~\ref{sec:exp}).
Second, our approach is significantly more space-efficient and stores only the nonzero counts of the typed graphlets discovered at each edge (Section~\ref{sec:exp-space-efficiency}).
Third, our approach lends itself to an efficient, lock-free, and asynchronous parallelization.
As an aside, unlike the methods for colored graphlets, our approach enumerates only a few typed graphlets and derives the remaining typed graphlets in $o(1)$ constant time using new non-trivial combinatorial relationships that involve counts of lower-order typed graphlets.
These lower-order typed graphlet counts are used as building blocks to directly derive many of the higher-order typed graphlet counts directly without any enumeration or knowledge of the explicit node types.
Therefore, the worst-case time complexity of the proposed approach is equivalent to the best known untyped/homogeneous graphlet algorithm (as shown formally in Section~\ref{sec:complexity-analysis}).

\section{Heterogeneous Graphlets}
\label{sec:typed-network-motifs}
\noindent
This section introduces a generalization of graphlets called 
\emph{heterogeneous graphlets} 
(or simply \emph{typed graphlets}).
See Table~\ref{table:notation} for a summary of key notation.

\subsection{Heterogeneous Graph Model} \label{sec:heter-graph-model}
\noindent
We use the following heterogeneous graph formulation:
\begin{Definition}[Heterogeneous network] \label{def:heter-network}
A heterogeneous network is defined as $G=(V,E)$ consisting of a set of node objects $V$ and a set of edges $E$ connecting the nodes in $V$.
A heterogeneous network also has a \emph{node type mapping function} $\,\phi : V \rightarrow \mathcal{T}_V$ 
and an \emph{edge type mapping function} defined as $\,\xi : E \rightarrow \mathcal{T}_E$ 
where $\mathcal{T}_V$ and $\mathcal{T}_E$ denote the set of node object types and edge types, respectively.
The type of node $i$ is denoted as $\phi_i$ 
whereas the type of edge $e = (i,j) \in E$ is denoted as $\xi_{ij}=\xi_e$.
\end{Definition}\noindent
A few special cases of heterogeneous networks are shown in Figure~\ref{fig:heter-special-cases}.

\begin{figure}[h!]
\vspace{1mm}
\begin{minipage}{0.54\linewidth}
\centering
\hspace{-2mm}
\includegraphics[width=0.60\linewidth]{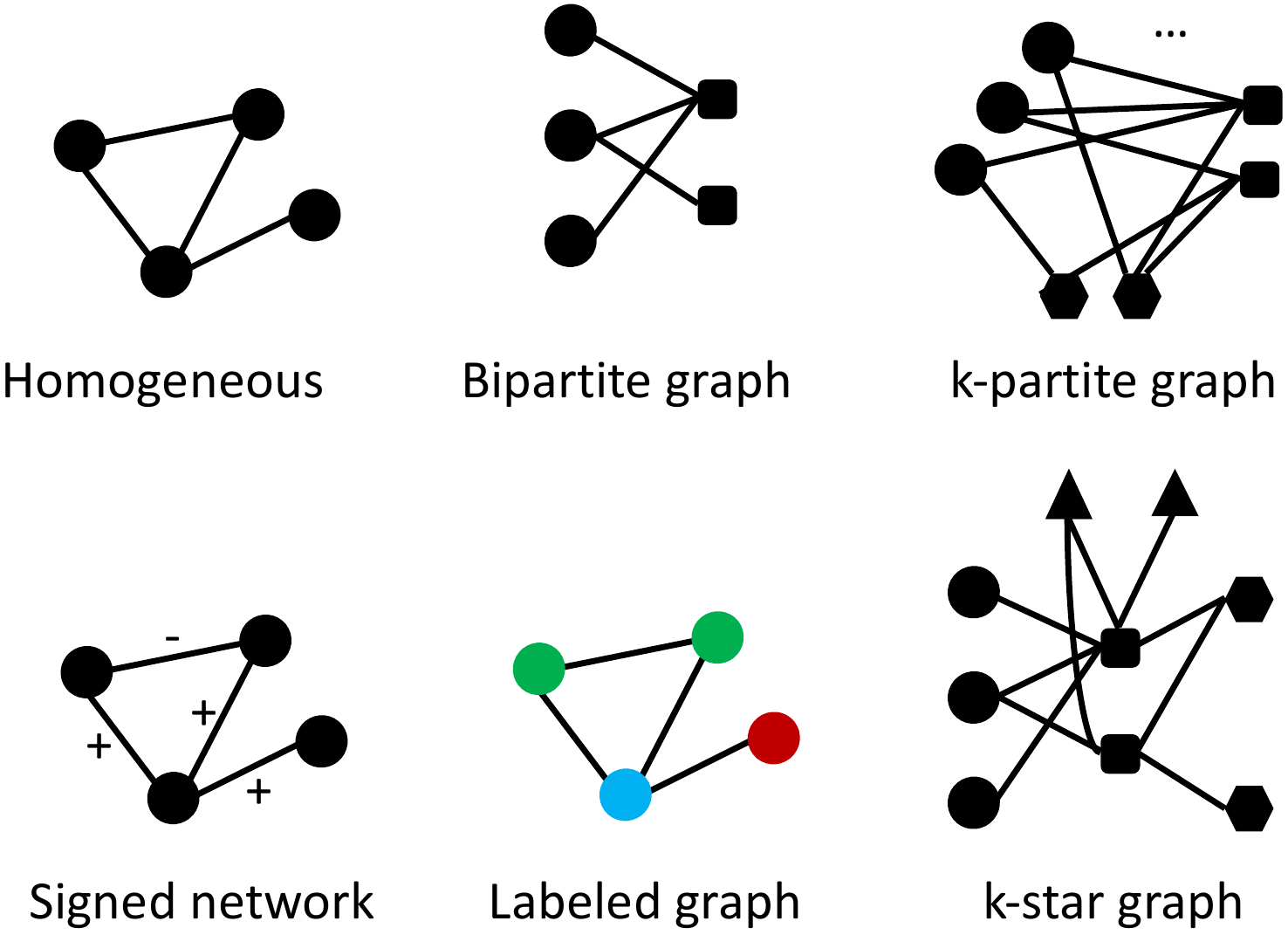} 
\end{minipage}
\hfill
\begin{minipage}{0.45\linewidth}
\centering
\scalebox{0.90}{
\small
\setlength{\tabcolsep}{9pt}
\begin{tabularx}{0.90\linewidth}{rH cc HH}
\toprule
\textbf{Graph Type} &&  $|\mathcal{T}_V|$ & $|\mathcal{T}_E|$ \\
\midrule
\textsc{Homogeneous} && $1$ & $1$ \\
\textsc{Bipartite} && $2$ & $1$ \\
\textsc{K-partite} && $k$ & $\frac{1}{2} k(k-1)$ \\ 
\textsc{Signed} && $1$ & $2$ \\
\textsc{Labeled} && $k$ & $\ell$ \\
\textsc{Star} && $k$ & $k-1$ \\
\bottomrule
\end{tabularx}
}
\hspace{3mm}
\end{minipage}
\vspace{-2mm}
\caption{
\emph{Typed graphlets} are useful for a wide variety of graphs.
These graphs are only a few examples that are naturally supported by the proposed framework.
}
\label{fig:heter-special-cases}
\end{figure}

\subsection{Graphlet Generalization} \label{sec:typed-graphlet-formulation}
\noindent
In this section, we introduce a more general notion of graphlet called \emph{typed graphlet} that naturally extends to both homogeneous and general heterogeneous networks.
We use $G$ to represent a graph and $H$ or $F$ to represent graphlets.

\subsubsection{Untyped Graphlets}
\noindent
We begin by defining untyped graphlets for graphs with a single type.

\begin{Definition}[\sc Untyped Graphlet] \label{def:graphlet}
An untyped graphlet $H$ is a connected induced subgraph of $G$.
\end{Definition} \noindent

Given a graphlet in some graph, it may be the case that we can find other topologically identical ``appearances" of this structure in that graph. 
We call these ``appearances" \textit{graphlet instances}.

\begin{Definition}[\scshape Untyped Graphlet Instance]\label{def:graphlet-instance}
An instance of an untyped graphlet $H$ in graph $G$ is an untyped graphlet $F$ in $G$ that is isomorphic to $H$.
\end{Definition}

\subsubsection{Typed Graphlets}
\noindent
In heterogeneous graphs, nodes/edges can be of many different types and so explicitly and jointly modeling such types is essential.
In this work, we introduce the notion of a \emph{typed graphlet} that explicitly captures both the connectivity pattern of interest and the types.
Notice that typed graphlets are a generalization of graphlets to heterogeneous networks.

\begin{Definition}[\scshape Typed Graphlet]\label{def:typed-graphlet}
A typed graphlet of a graph $G=(V,E,\phi,\xi)$ is a connected induced heterogeneous subgraph $H=(V',E',\phi',\xi')$ of $G$ such that
\begin{compactenum}
\item $(V',E')$ is a graphlet of $(V,E)$,
\item $\phi'=\phi|_{V'}$, that is, $\phi'$ is the restriction of $\phi$ to $V'$,
\item $\xi'=\xi|_{E'}$, that is, $\xi'$ is the restriction of $\xi$ to $E'$.
\end{compactenum}
\end{Definition}\noindent
The terms typed graphlet and heterogeneous graphlet are used interchangeably.
See Figure \ref{fig:typed-motifs-3colors} for examples of typed graphlets and untyped graphlets (in which the type structure is ignored).
We can consider the presence of topologically identical ``appearances" of a typed graphlet in a graph.
\begin{Definition}[\scshape Typed Graphlet Instance]\label{def:typed-graphlet-instance}
An instance of a typed graphlet $H=(V',E',\phi',\xi')$ of graph $G$ is a typed graphlet $F=(V'',E'',\phi'',\xi'')$ of $G$ such that
\begin{compactenum}
\item $(V'',E'')$ is isomorphic to $(V',E')$, 
\item $\mathcal{T}_{V''}=\mathcal{T}_{V'}$ and $\mathcal{T}_{E''}=\mathcal{T}_{E'}$, that is, the multisets of node and edge types are correspondingly equal.
\end{compactenum}
The set of typed graphlet instances of $H$ in $G$ is denoted as $I_G(H)$.
\end{Definition}\noindent
Comparing the above definitions of graphlet and typed graphlet, we see at first glance that typed graphlets are nontrivial extensions of their homogeneous counterparts.
The ``position'' of an edge (node) in a typed graphlet is often topologically important, \eg, an edge at the end of the 4-path vs. an edge at the center of a 4-path.
These topological differences of a typed graphlet are called (automorphism) \emph{typed orbits} since they take into account ``symmetries'' between edges (nodes) of a graphlet.
Typed graphlet orbits are a generalization of (homogeneous) graphlet orbits~\cite{prvzulj2007biological}.

\subsection{Number of Typed Graphlets}
\noindent
For a single $K$-node untyped graphlet (\eg, $K$-clique), the number of \emph{typed graphlets} with $L$ types is:
\begin{equation} \label{eq:num-typed-graphlets-for-motif}
\left( \!\binom{L}{K} \!\right) = \binom{L+K-1}{K}
\end{equation}\noindent
where $L=$ number of types and $K=$ size of the graphlet ($\#$ of nodes).
Table~\ref{table:typed-graphlets-example} shows the number of \emph{typed graphlets} that arise from a single graphlet $H \in \mathcal{H}$ of size $K\in \{2,\ldots,4\}$ nodes as the number of types varies from $L=1,2,\ldots,9$.
For instance, the total number of typed graphlet orbits with 4 nodes that arise from 7 types is $10 \cdot 210 = 2100$ since there are 10 connected 4-node (untyped) graphlet orbits.
See Figure~\ref{fig:typed-motifs-3colors} for other examples.
Unlike homogeneous graphlets, it is obviously impossible to show all the heterogeneous graphlets counted by the proposed approach since it works for general heterogeneous graphs with any arbitrary number of types $L$ and structure.

\begin{table}[h!]
\vspace{1mm}
\centering
\renewcommand{\arraystretch}{1.1} 
\caption{Number of \emph{typed graphlets} (for a single untyped graphlet) as the size $K$ (\ie, \# of nodes in the typed graphlet) and number of types $L$ varies.}
\label{table:typed-graphlets-example}
\vspace{-3mm}
\scalebox{0.85}{
\begin{tabularx}{0.80\linewidth}{Hl XXXXX XXX X}
\toprule
&& \multicolumn{9}{c}{\bf Types $L$} \\
\cmidrule(l{3pt}r{1pt}){3-11}

&& \textbf{1} & \textbf{2} & \textbf{3} & \textbf{4}  & \textbf{5} & \textbf{6}  & \textbf{7} & \textbf{8} & \textbf{9}  \\
\midrule
& \textbf{K=2} & 1  & 3  & 6  & 10  & 15  & 21  & 28 & 36  & 45 \\
\multirow{2}{*}{\bf K} &
\textbf{K=3} & 1  & 4  & 10  & 20  & 35  & 56  & 84   & 120 & 165 \\
& \textbf{K=4} & 1  & 5  & 15 & 35  & 70  & 126  & 210   & 330 & 495  \\
\bottomrule
\end{tabularx}
}
\end{table}

\subsection{Generalization to Other Graphs} \label{sec:special-cases-heter-generalization}
\noindent
The proposed notion of \emph{typed graphlets} can be used for applications on bipartite, k-partite, signed, attributed, and more generally heterogeneous networks.
A few examples of such graphs are shown in Figure~\ref{fig:heter-special-cases}.
The proposed framework naturally handles general heterogeneous networks with arbitrary structure and an arbitrary number of types.
It is straightforward to see that homogeneous, bipartite, k-partite, signed, star, and attributed networks are all special cases of heterogeneous graphs (Figure~\ref{fig:heter-special-cases}).
Therefore, the framework for deriving typed graphlets can easily support such networks.
For attributed graphs with more than one attribute/feature, the attributes of a node or edge can be mapped to types using any arbitrary approach such as role2vec~\cite{role2vec} or WL~\cite{wl}.

\section{Framework} 
\label{sec:framework}
\noindent
This section describes the general framework for counting typed graphlets.
The typed graphlet framework can be used for counting typed graphlets locally for every edge in $G$ as well as global typed graphlet counting (Problem~\ref{prob:global-typed-graphlet-counting}) that focuses on computing the total frequency of all typed graphlets.
This paper mainly focuses on the harder local typed graphlet counting problem:

\begin{Problem}[Local Typed Graphlet Counting]
\label{prob:local-typed-graphlet-counting}
Given a graph $G$ and an edge $(i, j) \in E$, 
the local typed graphlet counting problem is to find the set of all typed graphlets 
that contain nodes $i$ and $j$ and their corresponding frequencies.
This work focuses on computing all $\{2,3,4\}$-node typed graphlet counts for every edge in $G$.
\end{Problem}
Algorithm~\ref{alg:typed-motifs-exact} shows the general approach for counting all typed graphlets with up to four nodes.
Note that we do not make any restriction or assumption on the number of node or edge types. 
The algorithm naturally handles heterogeneous graphs with arbitrary number of types and structure.
See Table~\ref{table:notation} for a summary of notation.

{
\renewcommand{\setAlgFontSize}{\small}
\renewcommand{\multilinenospace}[1]{\State \parbox[t]{\dimexpr\linewidth-\algorithmicindent}{\begin{spacing}{1.0}\setAlgFontSize#1\strut \end{spacing}}}
\renewcommand{\multilinenospaceD}[1]{\State \parbox[t]{\dimexpr0.96 \linewidth-\algorithmicindent}{\begin{spacing}{1.0}\setAlgFontSize#1\strut \end{spacing}}}

\algblockdefx[parallel]{parfor}{endpar}[1][]{$\textbf{parallel for}$ #1 $\textbf{do}$}{$\textbf{end parallel}$}
\algrenewcommand{\alglinenumber}[1]{\fontsize{7.5}{8}\selectfont#1\;\;}
\begin{figure}[h!]
\vspace{-3mm}
\begin{center}
\begin{algorithm}[H]
\caption{\,Typed Graphlets}
\label{alg:typed-motifs-exact}
\begin{spacing}{1.2}
\small
\begin{algorithmic}[1]
\Require a graph $G$
\vspace{-0.5mm}
\Ensure nonzero typed graphlet counts $\mathcal{X}_{ij}$
for each edge $(i,j)\in E$

\parfor[{\bf each} $(i,j) \in E$] \label{algline:graphlet-for}

\State $T_{ij}^t = \Gamma_i^t \cap \Gamma_j^t,\quad \text{for } t=1,\ldots,L$ \Comment{typed triangles}

\State $S_{i}^t = \Gamma_i^t \setminus T_{ij}^t,\quad \text{for } t=1,\ldots,L$ \Comment{typed 3-paths centered at i}
\State $S_{j}^t = \Gamma_j^t \setminus T_{ij}^t,\quad \text{for } t=1,\ldots,L$ \Comment{typed 3-paths centered at j}

\State $|S_{ij}^t| = |S_{i}^t|+|S_{j}^t|,\quad \text{for } t=1,\ldots,L$ \Comment{typed 3-path count}

\State Store nonzero counts of the 3-node typed graphlets derived above 
\label{alg:store-nonzero-counts-3-node-typed-graphlets}

	\State Given $S_i$ and $S_j$, use Algorithm~\ref{alg:typed-path-based-motifs-exact} to derive a few typed \emph{path-based} graphlets \label{algline:main-alg-call-typed-path-based-motifs-exact}
	
\State Given $T_{ij}$, use Algorithm~\ref{alg:typed-triangle-based-motifs-exact} to derive a few typed \emph{triangle-based} graphlets
\label{algline:main-alg-call-typed-triangle-based-motifs-exact}

	\For{$t, t^{\prime} \in \{1,\ldots, L\}$ such that $t \leq t^{\prime}$} \label{algline:main-alg-for-type-pair}
		\multilinenospaceD{Derive remaining typed graphlet orbits in constant time via Eq.~\ref{eq:typed-4-path-center-orbit}-\ref{eq:typed-4-chordal-cycle-center-orbit} and update counts $\vx$ and set of typed graphlets $\mathcal{M}_{ij}$ (with nonzero count)}
		\label{algline:main-alg-derive-remaining-typed-graphlet-orbits-constant-time}
	\EndFor
	
\vspace{-0.3mm}
\For{$\hash \in \mathcal{M}_{ij}$} 
		$\mathcal{X}_{ij} = \mathcal{X}_{ij} \,\cup \{(\hash,\vx_{\hash})\}$
		\Comment{store nonzero typed graphlet counts}
			\label{algline:add-sparse-typed-motif-id-and-count-pairs}
\EndFor

\vspace{-0.7mm}
\endpar
\smallskip
\end{algorithmic}
\end{spacing}
\vspace{-1.mm}
\end{algorithm}
\end{center}
\vspace{-4.mm}
\end{figure}
}

\subsection{Counting 3-Node \emph{Typed} Graphlets} \label{sec:algorithm-3-node-typed-motifs}
\noindent
We begin by introducing the notion of a typed neighborhood \emph{and} typed degree of a node.
These are then used as a basis for deriving all typed 3-node graphlet counts in worst-case $\mathcal{O}(\Delta)$ time (Theorem~\ref{lem:time-complexity-3-node-graphlets}).
\begin{Definition}[Typed Neighborhood]\label{def:typed-neighborhood}
Given an arbitrary node $i$ in $G$, the \emph{typed neighborhood} $\Gamma_{i}^{t}$ is the set of nodes with type $t$ that are reachable by following edges originating from $i$ within $1$-hop distance.
More formally, 
\begin{equation}\label{eq:typed-neighborhood}
\Gamma_{i}^{t} = \{ j \in V \, | \, (i,j) \in E \wedge \phi_j=t \}
\end{equation}\noindent
Intuitively, a node $j \in \Gamma_{i}^{t}$ iff there exists an edge $(i,j) \in E$ between $i$ and $j$ and the type of node $j$ denoted as $\phi_j$ is $t$.
\end{Definition}\noindent
\vspace{-4mm}
\begin{Definition}[Typed Degree] \label{def:typed-degree}
The \emph{typed-degree} $d_{i}^{t}$ of node $i$ with type $t$ is defined as $d_{i}^{t} = |\Gamma_{i}^{t}|$ where $d_{i}^{t}$ is the number of nodes connected to node $i$ with type $t$.
\end{Definition}
Using these notions as a basis, we can define $S_{i}^{t}$, $S_{j}^{t}$, and $T_{ij}^{t}$ for $t=1,\ldots,L$ (Figure~\ref{fig:typed-lower-order-sets}).
Obtaining these sets is equivalent to computing all $3$-node typed graphlet counts.
These sets are all defined with respect to a given edge $(i,j) \in E$ between node $i$ and $j$ with types $\phi_i$ and $\phi_j$.
Since typed graphlets are counted for each edge $(i,j) \in E$, the types $\phi_i$ and $\phi_j$ are fixed ahead of time. 
Thus, there is only one remaining type to select for $3$-node typed graphlets.
\begin{cor}
\label{cor:typed-triangle-node}
Given an edge $(i,j) \in E$ between node $i$ and $j$ with types $\phi_i$ and $\phi_j$, 
let $T_{ij}^{t}$ denote the set of nodes of type $t$ that complete a typed triangle with node $i$ and $j$ defined as:
\begin{align} \label{eq:typed-triangle}
T_{ij}^{t} = \Gamma_{i}^{t} \cap \Gamma_j^{t}
\end{align}\noindent
where $|T_{ij}^{t}|$ denotes the number of nodes that form triangles with node $i$ and $j$ of type $t$.
\end{cor}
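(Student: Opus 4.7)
The plan is to verify Corollary~\ref{cor:typed-triangle-node} by a direct unfolding of Definitions~\ref{def:typed-graphlet}--\ref{def:typed-graphlet-instance} together with Definition~\ref{def:typed-neighborhood}, showing the set equality $T_{ij}^t = \Gamma_i^t \cap \Gamma_j^t$ by mutual inclusion. Since the edge $(i,j)$ is fixed and a 3-node typed graphlet on $\{i,j,k\}$ is determined by the single remaining type $\phi_k$, the only nontrivial content is to check that the ``completes a typed triangle with $i$ and $j$'' condition is equivalent to the conjunction $(i,k)\in E$, $(j,k)\in E$, $\phi_k=t$.

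First I would unpack what it means for $k$ to be in $T_{ij}^t$: by Definition~\ref{def:typed-graphlet-instance} applied to the triangle graphlet $H=K_3$, an instance containing $i$, $j$, and a third node $k$ of type $t$ requires the induced subgraph on $\{i,j,k\}$ to be isomorphic to $K_3$. Because $(i,j)\in E$ is assumed, this reduces to requiring $(i,k)\in E$ and $(j,k)\in E$, together with the type condition $\phi_k=t$.

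Next I would apply Definition~\ref{def:typed-neighborhood} directly: $(i,k)\in E$ and $\phi_k=t$ is exactly the statement $k\in \Gamma_i^t$, and similarly $(j,k)\in E$ and $\phi_k=t$ is exactly $k\in \Gamma_j^t$. Thus $k \in T_{ij}^t \iff k \in \Gamma_i^t \text{ and } k \in \Gamma_j^t \iff k \in \Gamma_i^t \cap \Gamma_j^t$, which is the desired equality. The count statement $|T_{ij}^t|$ then follows trivially by taking cardinalities of the equal sets.

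There is essentially no obstacle here; the result is a set-theoretic corollary of the definitions. The only subtlety worth flagging explicitly is that the type $t$ is a single type shared by both $\Gamma_i^t$ and $\Gamma_j^t$ (so that the intersection is well-defined in the typed sense), and that the types $\phi_i$, $\phi_j$ of the two endpoints do not need to be constrained in the statement because they are fixed by the edge $(i,j)$; the only freedom in a 3-node typed graphlet containing this edge is the type of the third node, which is precisely the parameter $t$ indexing $T_{ij}^t$. I would conclude with a short remark noting that this justifies line~2 of Algorithm~\ref{alg:typed-motifs-exact}, since enumerating $T_{ij}^t$ for $t=1,\ldots,L$ via the typed-neighborhood intersection simultaneously yields all typed triangle counts incident to the edge $(i,j)$.
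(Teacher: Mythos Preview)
Your proposal is correct and follows essentially the same approach as the paper's own justification: both unfold the definition of the typed neighborhood to observe that $k\in\Gamma_i^t\cap\Gamma_j^t$ is equivalent to $(i,k)\in E$, $(j,k)\in E$, $\phi_k=t$, which together with the fixed edge $(i,j)\in E$ characterizes exactly the nodes of type $t$ completing a triangle with $i$ and $j$; the paper also makes the same remark that $\phi_i,\phi_j$ are fixed so only the third type $t$ varies. Your write-up is slightly more explicit in presenting the argument as a biconditional (mutual inclusion), whereas the paper states mainly the $\supseteq$ direction and leaves the converse implicit, but the content is identical.
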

Let $\Gamma_{i}^{t}$ denote the set of neighbors of $i$ with type $t$.
If $k \in \Gamma_{i}^{t}$ and $k \in \Gamma_{j}^{t}$, then since $(i,j) \in E$, node $k$ must form a typed triangle with $i$ and $j$ (\ie, $k \in T_{ij}^{t}$).
Hence, $(i,j) \in E$ closes a triangle with node $k$ of type $t$.
This is straightforward to see since $k \in \Gamma_{i}^{t}$ implies $(i,k) \in E$, $k \in \Gamma_{j}^{t}$ implies $(j,k) \in E$, and $(i,j) \in E$.
Since typed triangles are counted for each edge $(i,j) \in E$, the types $\phi_i$ and $\phi_j$ are fixed ahead of time. 
Therefore, there is only one remaining type to select.
Let $t$ denote the remaining node type, then $T_{ij}^{t} = \Gamma_{i}^{t} \cap \Gamma_j^{t}$.
Furthermore,
since every node $k \in T_{ij}^{t}$ is of type $t$ and thus completes a typed triangle with node $i$ and $j$ consisting of types $\phi_i$, $\phi_j$, and $\phi_k=t$.
\begin{cor}
\label{cor:typed-3-star-node-centered-at-i}
Given an edge $(i,j) \in E$ between node $i$ and $j$ with types $\phi_i$ and $\phi_j$.
Let $S_{i}^{t}$ denote the set of nodes of type $t$ that form 3-node stars (or equivalently 3-node paths) centered at node $i$ (and not including $j$).
More formally, 
\begin{align} \label{eq:3-node-typed-star-centered-at-node-i}
S_{i}^{t} &= \big\lbrace k \in (\Gamma_{i}^{t} \setminus \{j\}) \; \big| \; k \notin \Gamma_{j}^{t} \big\rbrace \\
&= \Gamma_{i}^{t} \setminus \big(\Gamma_{j}^{t} \cup \{j\}\big) = \Gamma_{i}^{t} \setminus T_{ij}^{t}
\end{align}\noindent
where $|S_{i}^{t}|$ denotes the number of nodes of type $t$ that form 3-stars centered at node $i$ (not including $j$).
\end{cor}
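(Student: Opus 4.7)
The plan is to verify the two set-equalities in the corollary by elementary set algebra, and then confirm that the resulting set correctly enumerates the $3$-node stars (equivalently, $3$-paths) of type $t$ centered at $i$ that avoid closing a triangle with $j$. Throughout I would work under the simple-graph convention (no self-loops) that is implicit elsewhere in the paper, and I would lean directly on Corollary~\ref{cor:typed-triangle-node}, which already identifies $T_{ij}^t = \Gamma_i^t \cap \Gamma_j^t$.

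First I would establish $\{k \in (\Gamma_i^t\setminus\{j\}) : k \notin \Gamma_j^t\} = \Gamma_i^t \setminus (\Gamma_j^t \cup \{j\})$ by unwrapping the set-builder notation: the defining predicates ``$k \in \Gamma_i^t$'', ``$k \neq j$'', and ``$k \notin \Gamma_j^t$'' on the left are precisely the conjunction of ``$k \in \Gamma_i^t$'' and ``$k \notin \Gamma_j^t \cup \{j\}$'' on the right. This is purely a rewriting step.

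Next, for $\Gamma_i^t \setminus (\Gamma_j^t \cup \{j\}) = \Gamma_i^t \setminus T_{ij}^t$, I would invoke $T_{ij}^t = \Gamma_i^t \cap \Gamma_j^t$ together with the identity $A \setminus (A \cap B) = A \setminus B$, giving $\Gamma_i^t \setminus T_{ij}^t = \Gamma_i^t \setminus \Gamma_j^t$. To bridge this with $\Gamma_i^t \setminus (\Gamma_j^t \cup \{j\})$ I would split on the type of $j$: if $\phi_j \neq t$ then $j \notin \Gamma_i^t$ trivially, so removing $\{j\}$ is vacuous; if $\phi_j = t$ then $j$ would formally appear in $\Gamma_i^t \setminus \Gamma_j^t$ because of the no-self-loop assumption, but the corollary's standing convention (a $3$-node star centered at $i$ has the third vertex $k \neq j$) resolves this by stipulating $k \neq j$ from the outset. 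Finally, I would verify the combinatorial meaning of $S_i^t$: any $k \in S_i^t$ satisfies $(i,k) \in E$, $\phi_k = t$, $k \neq j$, and $(j,k) \notin E$, so the induced subgraph on $\{i,j,k\}$ contains exactly the edges $(i,j)$ and $(i,k)$, which is a $3$-path (star) centered at $i$; the converse is immediate by reading the same four conditions off such a configuration.

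The main obstacle, and really the only subtlety, is the reconciliation in the case $\phi_j = t$, where the $\{j\}$-exclusion is needed to keep the three expressions in agreement under the simple-graph assumption. Once that convention is made explicit, the corollary reduces to direct set algebra on top of Corollary~\ref{cor:typed-triangle-node}; no deeper combinatorial argument is required.
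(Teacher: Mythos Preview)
Your proposal is correct and follows the same approach as the paper: both arguments reduce the corollary to elementary set algebra on top of the identification $T_{ij}^{t}=\Gamma_i^{t}\cap\Gamma_j^{t}$ from Corollary~\ref{cor:typed-triangle-node}, observing that a neighbor $k\in\Gamma_i^{t}$ forms a 3-star (rather than a triangle) precisely when $k\notin\Gamma_j^{t}$. Your treatment is in fact more careful than the paper's brief justification, since you explicitly isolate the $\phi_j=t$ edge case where the $\{j\}$-exclusion matters for the final equality---a subtlety the paper passes over silently.
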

Let $\Gamma_{i}^{t}$ denote the set of neighbors of $i$ with type $t$.
Let $k \in \Gamma_{i}^{t}$ be a node that forms a typed 3-star centered at $i$ with type $t$, then $k \not\in \Gamma_{j}^{t}$.
Otherwise if $k \in \Gamma_{j}^{t}$, then $k \in T_{ij}^{t}$, which implies $k \not\in S_{i}^{t}$.
Similarly, it is straightforward to define the set $S_j^{t}$ of typed 3-star/path nodes of type $t$ centered at $j$ in a similar fashion:
\begin{align} \label{eq:3-node-typed-star-centered-at-node-j}
S_{j}^{t} &= \big\lbrace k \in (\Gamma_{j}^{t} \setminus \{i\}) \; \big| \; k \notin \Gamma_{i}^{t} \big\rbrace \\
&= \Gamma_{j}^{t} \setminus \big(\Gamma_{i}^{t} \cup \{i\}\big) = \Gamma_{j}^{t} \setminus T_{ij}^{t}
\end{align}\noindent
where $|S_{j}^{t}|$ denotes the number of nodes of type $t$ that form 3-stars centered at node $j$ (not including $i$).
This follows from Corollary~\ref{cor:typed-3-star-node-centered-at-i}.
\begin{Property} \label{prop:relationship-between-typed-sets-and-untyped-sets}
\begin{align} 
T_{ij} = \bigcup_{t=1}^{L} T_{ij}^{t},\quad\;\;\;
S_{i} = \bigcup_{t=1}^{L} S_{i}^{t},\quad\;\;\;
S_{j} = \bigcup_{t=1}^{L} S_{j}^{t}
\end{align}
\end{Property}
This property follows directly from Corollary~\ref{cor:typed-triangle-node}-\ref{cor:typed-3-star-node-centered-at-i} and is shown in Figure~\ref{fig:typed-lower-order-sets}.
These lower-order 3-node typed graphlet counts are used to derive many higher-order typed graphlet counts in $o(1)$ constant time (Section~\ref{sec:combinatorial-relationships}).
\begin{cor}[Typed 3-Stars] 
\label{cor:typed-3-star-for-edge}
Given an edge $(i,j) \in E$ between node $i$ and $j$ with types $\phi_i$ and $\phi_j$, 
the number of typed 3-node stars that contain $(i,j) \in E$ with types $\phi_i$, $\phi_j$, $t$ is: 
\begin{align} \label{eq:3-node-typed-star-for-edge}
|S_{ij}^{t}| = |S_{i}^{t}|+|S_{j}^{t}|
\end{align}\noindent
where $|S_{ij}^{t}|$ denotes the number of typed 3-stars that contain nodes $i$ and $j$ with types $\phi_i$, $\phi_j$, $t$.
\end{cor}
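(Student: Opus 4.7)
The plan is to establish the identity by showing that every typed 3-node star containing the edge $(i,j)$ with node-type multiset $\{\phi_i,\phi_j,t\}$ is uniquely determined by its third vertex $k$, and then partitioning the set of valid choices of $k$ according to whether $k$ is adjacent to $i$ or to $j$.

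First, I would observe that any 3-node induced subgraph containing the edge $(i,j)$ is fully determined by its third node $k$, together with which of the potential edges $(i,k)$ and $(j,k)$ are present. A 3-node star (equivalently, a 3-node path) on $\{i,j,k\}$ corresponds exactly to the case in which \emph{exactly one} of $(i,k)$ and $(j,k)$ belongs to $E$; the case in which both belong yields a triangle (already counted by $T_{ij}^{t}$ via Corollary~\ref{cor:typed-triangle-node}) and the case in which neither belongs does not yield a connected subgraph containing $(i,j)$ and $k$. To enforce the prescribed type triple $\{\phi_i,\phi_j,t\}$, I restrict attention to third nodes $k$ with $\phi_k=t$.

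Next, I would split the admissible third nodes into two cases. In Case~1, $(i,k)\in E$ but $(j,k)\notin E$; by Definition~\ref{def:typed-neighborhood} this means $k\in\Gamma_i^{t}$ and $k\notin\Gamma_j^{t}$, which by Corollary~\ref{cor:typed-3-star-node-centered-at-i} (Equation~\ref{eq:3-node-typed-star-centered-at-node-i}) is exactly the condition $k\in S_i^{t}$. In Case~2, $(j,k)\in E$ but $(i,k)\notin E$; by the symmetric argument and Equation~\ref{eq:3-node-typed-star-centered-at-node-j}, this is exactly $k\in S_j^{t}$. Hence the set of admissible $k$ equals $S_i^{t}\cup S_j^{t}$.

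Finally, I would verify disjointness: any $k\in S_i^{t}$ satisfies $k\in\Gamma_i^{t}$, while any $k\in S_j^{t}$ satisfies $k\notin\Gamma_i^{t}$, so $S_i^{t}\cap S_j^{t}=\emptyset$. Applying the addition rule for disjoint sets then yields
\begin{equation*}
|S_{ij}^{t}| \;=\; |S_i^{t}\cup S_j^{t}| \;=\; |S_i^{t}| + |S_j^{t}|,
\end{equation*}
which is the claim. There is no real obstacle here; the only subtle point is being careful to exclude the triangle case (so that $\Gamma_i^{t}\cap\Gamma_j^{t}=T_{ij}^{t}$ is subtracted off in the definitions of $S_i^{t}$ and $S_j^{t}$) and to exclude $i$ and $j$ themselves as candidate third nodes, which is already handled by the set-minus operations in Equations~\ref{eq:3-node-typed-star-centered-at-node-i}--\ref{eq:3-node-typed-star-centered-at-node-j}.
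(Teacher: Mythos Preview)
Your argument is correct and is essentially the approach the paper has in mind: the corollary is stated without a separate proof because it is meant to follow immediately from the definitions of $S_i^{t}$ and $S_j^{t}$ in Corollary~\ref{cor:typed-3-star-node-centered-at-i} and Eq.~\ref{eq:3-node-typed-star-centered-at-node-j}, together with their evident disjointness. You have simply made explicit the case split and the disjointness verification that the paper leaves implicit.
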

Moreover, the number of typed triangles centered at $(i,j) \in E$ with types $\phi_i$, $\phi_j$, $t$ is simply $|T_{ij}^{t}|$ (Corollary~\ref{cor:typed-triangle-node}) whereas the number of typed 3-node stars that contain $(i,j) \in E$ with types $\phi_i$, $\phi_j$, $t$ is $|S_{ij}^{t}| = |S_{i}^{t}|+|S_{j}^{t}|$ (Corollary~\ref{cor:typed-3-star-for-edge}).
We do not need to actually store the sets $S_{i}^{t}$, $S_{j}^{t}$, and $T_{ij}^{t}$ for every type $t = 1, \ldots, L$.
We only need to store the \emph{size/cardinality} of the sets (as shown in Algorithm~\ref{alg:typed-motifs-exact}) since these are the counts of all possible 3-node typed graphlets.
For convenience, we denote the size of those sets as $|S_{i}^{t}|$, $|S_{j}^{t}|$, and $|T_{ij}^{t}|$ for all $t=1,\ldots,L$, respectively.
At this point, all typed 3-node graphlets with nonzero counts have been computed for edge $(i,j) \in E$ in $\mathcal{O}(|\Gamma_i|+|\Gamma_i|)=\mathcal{O}(\Delta)$ time where $\Delta$ is max degree (See Section~\ref{sec:time-complexity} for proof).
Note $|\Gamma_i| = \sum_{t} |\Gamma_i^{t}|$.

{
\algblockdefx[parallel]{parfor}{endpar}[1][]{$\textbf{parallel for}$ #1 $\textbf{do}$}{$\textbf{end parallel}$}
\algrenewcommand{\alglinenumber}[1]{\fontsize{6.5}{7}\selectfont#1\;\;}
\algrenewcommand{\alglinenumber}[1]{\fontsize{7.5}{8}\selectfont#1\;\;}
\begin{figure}[h!]
\vspace{-4mm}
\begin{center}
\begin{algorithm}[H]
\caption{\,
Typed \emph{Path}-based Graphlets
}
\label{alg:typed-path-based-motifs-exact}
\begin{spacing}{1.10}
\small
\begin{algorithmic}[1]
\Require 
{
a graph $G=(V,E,\Phi,\xi)$, 
an edge $(i,j)$, 
sets of nodes $S_i$ and $S_j$ that form 3-paths centered at $i$ and $j$, respectively, 
a typed graphlet count vector $\vx$ for $(i,j)$, 
and 
set $\mathcal{M}_{ij}$ of unique typed graphlets for $i$ and $j$.
}
\smallskip

	\For{{\bf each} $w_k \in S_i$}
	\label{algline:Si-three-paths-centered-at-i}
	    
	    \For{$w_r \in \Gamma_{w_k} \!\setminus \{i,j\}$}

\If{$w_r \not\in (\Gamma_i \cup \Gamma_j)$} 
\Comment{\emph{typed 4-path-edge} orbit} 

\State $\langle \vx, \mathcal{M}_{ij} \rangle = \textsc{Update}(\vx,\mathcal{M}_{ij},\mathbb{F}(g_3, \Phi_i, \Phi_j, \Phi_{w_k}, \Phi_{w_r}))$

\ElsIf{\!$w_r \!\in \!S_i \wedge w_r \!\leq\! w_k$\!} 
\Comment{\emph{typed tailed-tri (tail orbit)}}  
\label{algline:tailed-tri-edge-tail-edge-orbit-Si}

		   \State $\langle \vx, \mathcal{M}_{ij} \rangle = \textsc{Update}(\vx,\mathcal{M}_{ij},\mathbb{F}(g_7, \Phi_i, \Phi_j, \Phi_{w_k}, \Phi_{w_r}))$

\EndIf
	    \EndFor 
	    
	\EndFor \label{algline:Si-three-paths-centered-at-i-endfor}
	
\vspace{-0.5mm}
	\For{{\bf each} $w_k \in S_j$}
	\label{algline:Sj-three-paths-centered-at-j}
	    
	    \For{$w_r \in \Gamma_{w_k} \!\setminus \{i,j\}$}

\If{$w_r \not\in (\Gamma_i \cup \Gamma_j)$} \Comment{\emph{typed 4-path-edge} orbit}
\State $\langle \vx, \mathcal{M}_{ij} \rangle = \textsc{Update}(\vx,\mathcal{M}_{ij},\mathbb{F}(g_3, \Phi_i, \Phi_j, \Phi_{w_k}, \Phi_{w_r}))$

\ElsIf{$w_r \in S_j \wedge w_r \!\leq\! w_k$\!}
\Comment{\emph{typed tailed-tri (tail orbit)}} 
\label{algline:tailed-tri-edge-tail-edge-orbit-Sj}
\State $\langle \vx, \mathcal{M}_{ij} \rangle = \textsc{Update}(\vx,\mathcal{M}_{ij},\mathbb{F}(g_7, \Phi_i, \Phi_j, \Phi_{w_k}, \Phi_{w_r}))$

\ElsIf{$w_r \in S_i$} \label{algline:typed-4-cycle} \Comment{\emph{typed 4-cycle}}

\State $\langle \vx, \mathcal{M}_{ij} \rangle = \textsc{Update}(\vx,\mathcal{M}_{ij},\mathbb{F}(g_6, \Phi_i, \Phi_j, \Phi_{w_k}, \Phi_{w_r}))$
\EndIf	
	    \EndFor
	    
\EndFor \label{algline:Sj-three-paths-centered-at-j-endfor}

\vspace{-0.8mm}
\State {\bf return} set of typed graphlets $\mathcal{M}_{ij}$ between $i$ and $j$ and counts $\vx$	
\medskip
\end{algorithmic}
\end{spacing}
\vspace{-1.mm}
\end{algorithm}
\end{center}
\vspace{-4.mm}
\end{figure}
}

{
\algblockdefx[parallel]{parfor}{endpar}[1][]{$\textbf{parallel for}$ #1 $\textbf{do}$}{$\textbf{end parallel}$}
\algrenewcommand{\alglinenumber}[1]{\fontsize{7.5}{8}\selectfont#1}
\begin{figure}[h!]
\vspace{-6mm}
\begin{center}
\begin{algorithm}[H]
\caption{\,
Typed \emph{Triangle}-based Graphlets
}
\label{alg:typed-triangle-based-motifs-exact}
\begin{spacing}{1.10}
\small
\begin{algorithmic}[1]
\Require 
a graph $G=(V,E,\Phi,\xi)$, 
an edge $(i,j)$, 
set of nodes $T_{ij}$ that form triangles with $i$ and $j$,
sets of nodes $S_i$ and $S_j$ that form 3-paths centered at $i$ and $j$, respectively, 
a typed graphlet count vector $\vx$ for $(i,j)$, 
and 
set $\mathcal{M}_{ij}$ of unique typed graphlets for $i$ and $j$.
\smallskip

	\For{{\bf each} $w_k \in T_{ij}$}
	\label{algline:T-triangles} 

	    \For{$w_r \in \Gamma_{w_k} \!\setminus \{i,j\}$}

\If{$w_r \!\in\! T_{ij} \wedge w_r \!\leq\! w_k$} 
\Comment{\emph{typed 4-clique}}
\label{algline:typed-4-clique}
\State $\langle \vx, \mathcal{M}_{ij} \rangle \!=\! \textsc{Update}(\vx,\mathcal{M}_{ij},\mathbb{F}(g_{12}, \Phi_i, \Phi_j, \Phi_{w_k}, \Phi_{w_r}\!))$

\ElsIf{\!$w_r \! \in (S_i \cup S_j$)} 
\Comment{\emph{typed chord-cycle-edge} orbit}
\label{algline:typed-4-chordal-cycle-edge-orbit}
\State $\langle \vx, \mathcal{M}_{ij} \rangle \!=\! \textsc{Update}(\vx,\mathcal{M}_{ij},\mathbb{F}(g_{10}, \Phi_i, \Phi_j, \Phi_{w_k}, \Phi_{w_r}))$

\ElsIf{$w_r \not\in (\Gamma_i \cup \Gamma_j)$} 
\Comment{\emph{typed tailed-tri-center} orbit}
\label{algline:typed-4-tailed-tri-center-orbit}
\State $\langle \vx, \mathcal{M}_{ij} \rangle \!=\! \textsc{Update}(\vx,\mathcal{M}_{ij},\mathbb{F}(g_{8}, \Phi_i, \Phi_j, \Phi_{w_k}, \Phi_{w_r}))$
\EndIf
\EndFor
	\EndFor
\vspace{-0.8mm}
\State {\bf return} set of typed graphlets $\mathcal{M}_{ij}$ between $i$ and $j$ and counts $\vx$
\medskip
\end{algorithmic}
\end{spacing}
\vspace{-1.mm}
\end{algorithm}
\end{center}
\vspace{-3.mm}
\end{figure}
}

\subsection{Counting 4-Node \emph{Typed} Graphlets} \label{sec:algorithm-4-node-typed-motifs}
\noindent
To derive $k$-node typed graphlets, the framework leverages the lower-order ($k\!-\!1$)-node \emph{typed graphlets}.
Therefore, $4$-node typed graphlets are derived by leveraging the \emph{typed} sets $T_{ij}^t = \Gamma_i^t \cup \Gamma_j^t$, 
$S_j^t = \Gamma_j^t \setminus T_{ij}^t$,
and $S_i^t = \Gamma_i^t \setminus T_{ij}^t$ 
(for $t \in \{1,\ldots,L\}$)
computed from the lower-order $3$-node typed graphlets along with the set $I^t$ of non-adjacent nodes of type $t$ $\wrt$ $(i,j) \in E$ defined formally as follows: 
\begin{align}\label{eq:indep-node-set}
I^t &= V^t \setminus (\Gamma_i^t \cup \Gamma_j^t) \\
&= V^t \setminus (T_{ij}^t \cup S_i^t \cup S_j^t \cup \{i,j\}). \nonumber
\end{align}\noindent
where $V^t \subseteq V$ is the set of nodes in $V$ of type $t$.
\begin{Property} \label{prop:set-I}
\begin{equation}
|V^t| = |I^t| + |\Gamma_i^t| + |\Gamma_j^t|
\end{equation}\noindent
\end{Property}\noindent
The proof is straightforward by Eq.~\ref{eq:indep-node-set} and applying the principle of inclusion-exclusion.

\definecolor{gray}{RGB}{100,100,100}
\definecolor{darkgray}{RGB}{150,150,150}

\definecolor{theblue}{RGB}{0, 20, 159} 
\definecolor{thelightblue}{RGB}{0,191,255}
\definecolor{thelightred}{RGB}{255,191,0}
\definecolor{thecrimson}{RGB}{	153, 0, 0}

\makeatletter
\global\let\tikz@ensure@dollar@catcode=\relax
\makeatother
\tikzstyle{every node}=[font=\large,line width=1.5pt]
\begin{figure}[t!]
\centering

\tikzstyle{background-type1}=[circle,
fill=gray!25, 
inner sep=0.3cm,
rounded corners=4mm]

\tikzstyle{background-type2}=[circle,
fill=gray!25,
inner sep=0.3cm,
rounded corners=4mm]

\tikzstyle{background-type3}=[rectangle,
fill=green!15,
inner sep=0.3cm,
rounded corners=4mm]

\tikzstyle{background-type4}=[rectangle,
fill=gray!25,
inner sep=0.3cm,
rounded corners=4mm]                                                

\tikzstyle{background}=[rectangle,
fill=purple!10,
inner sep=0.3cm,
rounded corners=4mm]

\tikzstyle{background-white}=[rectangle,
fill=white,
inner sep=0cm,
rounded corners=0mm]                                                

\begin{minipage}[b]{0.38\linewidth}
\centering

\subfigure[typed $T_{ij}$ sets]{
\scalebox{0.36}{
\centering
\begin{tikzpicture}[-,>=latex,auto,node distance=2.2cm,thick,
main node/.style={circle,draw=black,fill=black,draw,font=\sffamily\Huge\bfseries,text=white,minimum width=1.03cm
},
type0 node/.style={circle,draw=black,fill=white,draw,font=\sffamily\Huge\bfseries,text=black,minimum width=0.9cm,minimum height=0.9cm},
type1 node/.style={rectangle,draw=black,fill=white,draw,text=black,font=\sffamily\Huge\bfseries,minimum height=0.9cm},
type2 node/.style={diamond,draw=black,fill=white,draw,font=\sffamily\Huge\bfseries,text=black,minimum width=0.9cm,minimum height=0.9cm},
type3 node/.style={circle,draw=gray!80,fill=white,draw,font=\sffamily\Huge\bfseries,text=black},
type4 node/.style={circle,draw=green!80,fill=white,draw,font=\sffamily\Huge\bfseries,text=black},
white node/.style={circle,draw=white,fill=white,draw,font=\sffamily\Huge\bfseries,text=white,minimum width=0.00009cm},
green node/.style={circle,draw=gray!25,fill=gray!25,draw,font=\sffamily\Huge\bfseries,text=black,minimum width=0.1cm},
blue node/.style={circle,draw=gray!25,fill=gray!25,draw,font=\sffamily\Huge\bfseries,text=black,minimum width=0.1cm},
text=black,minimum width=0.9cm,font=\sffamily\Huge\bfseries]

\node[main node] (1) {$\mathbf{i}$};
\node[type1 node] (3) [above right of=1] {$\mathbf{k}$};
\node[white node] (33) [below of=3] {};
\node[main node] (2) [below right of=3] {$\mathbf{j}$};
\node[type1 node] (4) [above of=3] {$\mathbf{r}$};
\node[white node] (5) [above of=4, below=15pt] {};
\node[type0 node] (6) [above of=5, above=20pt] {$\mathbf{q}$};
\node[type0 node] (7) [above of=6] {$\mathbf{p}$};
\node[green node] (10) [above of=3,below=18pt] {$\vdots$};
\node[blue node] (11) [above of=6,below=16pt] {$\vdots$};

\tikzstyle{LabelStyle}=[below=3pt]
\path[every node/.style={font=\sffamily}] 
(1) edge [line width=1.0mm, left] node [above left] {} (2) 
	(1)  edge [line width=1.0mm, right] node[below right] {} (3)
(2) edge [line width=1.0mm, left] node[below left] {} (3)
(1) edge [line width=1.0mm, left] node[below left] {} (4)
	(2) edge [line width=1.0mm,right] node[above right] {} (4)
	    (1) edge [bend left, line width=1.0mm, left] node[below left] {} (6)
	(2) edge [bend right, line width=1.0mm,right] node[above right] {} (6)
	    (1) edge [bend left, line width=1.0mm, left] node[below left] {} (7)
	(2) edge [bend right,line width=1.0mm,right] node[above right] {} (7);
	
\begin{pgfonlayer}{background}
\node [background-type1,
fit=(3) (4), font=\sffamily\Huge\bfseries,label=left:\Huge 
\scalebox{1.1}{$T_{ij}^{1}\;\;\;$}
] {};
\node [background-type2, fit=(6) (7),font=\sffamily\Huge\bfseries,label=left:\Huge 
\scalebox{1.1}{$\Huge T_{ij}^{t}\;\;\;$}
] {};
\node [background-white, fit=(5),font=\sffamily\Huge\bfseries,label=mid:\Huge {
\fontsize{32}{32}\selectfont
\rotatebox{90}{$\mathrm{...}$}
}] {};
\end{pgfonlayer}
\end{tikzpicture}
}
\label{fig:typed-triangle-nodes}
}
\vspace{3mm}
\end{minipage}
\begin{minipage}[b]{0.50\linewidth}
\centering
\subfigure[typed $S_j$ sets]{
\scalebox{0.36}{
\centering
\begin{tikzpicture}[-,>=latex,auto,node distance=2.2cm,thick,
main node/.style={circle,draw=black,fill=black,draw,font=\sffamily\Huge\bfseries,text=white,minimum width=1.03cm},
type0 node/.style={circle,draw=black,fill=white,draw,font=\sffamily\Huge\bfseries,text=black,minimum width=0.9cm,minimum height=0.9cm},
type1 node/.style={rectangle,draw=black,fill=white,draw,text=black,font=\sffamily\Huge\bfseries,minimum height=0.9cm},
type2 node/.style={diamond,draw=black,fill=white,draw,font=\sffamily\Huge\bfseries,text=black,minimum width=0.9cm,minimum height=0.9cm},
type3 node/.style={circle,draw=gray!80,fill=white,draw,font=\sffamily\Huge\bfseries,text=black},
type4 node/.style={circle,draw=green!80,fill=white,draw,font=\sffamily\Huge\bfseries,text=black},
white node/.style={circle,draw=white,fill=white,draw,font=\sffamily\Huge\bfseries,text=white,minimum width=0.1cm},
green node/.style={circle,draw=gray!25,fill=gray!25,draw,font=\sffamily\Huge\bfseries,text=black,minimum width=0.1cm},
blue node/.style={circle,draw=gray!25,fill=gray!25,draw,font=\sffamily\Huge\bfseries,text=black,minimum width=0.1cm},
text=black,minimum width=0.9cm,font=\sffamily\Huge\bfseries]

\node[main node] (1) {$\mathbf{i}$};
\node[type1 node] (3) [above right of=1, above=40pt] {$\mathbf{k}$};
\node[main node] (2) [right of=1,right=15pt] {$\mathbf{j}$};
\node[type1 node] (4) [right of=3] {$\mathbf{r}$};
\node[white node] (5) [right of=4, left=5pt] {};
\node[type0 node] (6) [right of=5] {$\mathbf{p}$};
\node[type0 node] (7) [right of=6] {$\mathbf{q}$};
\node[green node] (10) [right of=3,left=20pt] {$...$};
\node[blue node] (11) [right of=6,left=20pt] {$...$};

\tikzstyle{LabelStyle}=[below=3pt]
\path[every node/.style={font=\sffamily}] 
(1) edge [line width=1.0mm, left] node [above left] {} (2) 
(2) edge [line width=1.0mm, left] node[below left] {} (3)
	(2) edge [line width=1.0mm,right] node[above right] {} (4)
	(2) edge [line width=1.0mm,right] node[above right] {} (6)
	(2) edge [line width=1.0mm,right] node[above right] {} (7);
	
\begin{pgfonlayer}{background}
\node [background-type1,
fit=(3) (4), font=\sffamily\Huge\bfseries,label=above:\Huge $S_{j}^{1}\;\;\;\;$] {};
\node [background-type2, fit=(6) (7),font=\sffamily\Huge\bfseries,label=above:\Huge $\Huge S_{j}^{t}\;\;\;\;$] {};
\node [background-white, fit=(5),font=\sffamily\Huge\bfseries,label=mid:\Huge {\fontsize{32}{32}\selectfont $\mathrm{...}$}] {};
\end{pgfonlayer}
\end{tikzpicture}
}
\label{fig:typed-3-path-nodes-centered-at-j}
}
\hspace{4mm}
\subfigure[typed $S_i$ sets]{
\scalebox{0.36}{
\centering
\begin{tikzpicture}[-,>=latex,auto,node distance=2.2cm,thick,
main node/.style={circle,draw=black,fill=black,draw,font=\sffamily\Huge\bfseries,text=white,minimum width=1.03cm},
type0 node/.style={circle,draw=black,fill=white,draw,font=\sffamily\Huge\bfseries,text=black,minimum width=0.9cm,minimum height=0.9cm},
type1 node/.style={rectangle,draw=black,fill=white,draw,text=black,font=\sffamily\Huge\bfseries,minimum height=0.9cm},
type2 node/.style={diamond,draw=black,fill=white,draw,font=\sffamily\Huge\bfseries,text=black,minimum width=0.9cm,minimum height=1.2cm},
type3 node/.style={circle,draw=gray!80,fill=white,draw,font=\sffamily\Huge\bfseries,text=black},
type4 node/.style={circle,draw=green!80,fill=white,draw,font=\sffamily\Huge\bfseries,text=black},
white node/.style={circle,draw=white,fill=white,draw,font=\sffamily\Huge\bfseries,text=white,minimum width=0.1cm},
green node/.style={circle,draw=gray!25,fill=gray!25,draw,font=\sffamily\Huge\bfseries,text=black,minimum width=0.1cm},
blue node/.style={circle,draw=gray!25,fill=gray!25,draw,font=\sffamily\Huge\bfseries,text=black,minimum width=0.1cm},
text=black,minimum width=0.9cm,font=\sffamily\Huge\bfseries]

\node[main node] (1) {$\mathbf{i}$};
\node[type0 node] (3) [above left of=1, above right=40pt] {$\mathbf{q}$};
\node[main node] (2) [right of=1,right=15pt] {$\mathbf{j}$};
\node[type0 node] (4) [left of=3] {$\mathbf{p}$};
\node[white node] (5) [left of=4, right=5pt] {};
\node[type1 node] (6) [left of=5] {$\mathbf{r}$};
\node[type1 node] (7) [left of=6] {$\mathbf{k}$};
\node[green node] (10) [left of=3,right=20pt] {$...$};
\node[blue node] (11) [left of=6,right=20pt] {$...$};

\tikzstyle{LabelStyle}=[below=3pt]
\path[every node/.style={font=\sffamily}] 
(1) edge [line width=1.0mm, left] node [above left] {} (2) 
(1) edge [line width=1.0mm, left] node[below left] {} (3)
	(1) edge [line width=1.0mm,right] node[above right] {} (4)
	(1) edge [line width=1.0mm,right] node[above right] {} (6)
	(1) edge [line width=1.0mm,right] node[above right] {} (7);
	
\begin{pgfonlayer}{background}
\node [background-type1,
fit=(3) (4), font=\sffamily\Huge\bfseries,label=above:\Huge $S_{i}^{t}\;\;\;\;$] {};
\node [background-type2, fit=(6) (7),font=\sffamily\Huge\bfseries,label=above:\Huge $\Huge S_{i}^{1}\;\;\;\;$] {};
\node [background-white, fit=(5),font=\sffamily\Huge\bfseries,label=mid:\Huge {\fontsize{32}{32}\selectfont $\mathrm{...}\;\;$}] {};
\end{pgfonlayer}
\end{tikzpicture}
}
\label{fig:typed-3-path-nodes-centered-at-i}
}
\hspace{4mm}
\end{minipage}
\hfill

\vspace{-3mm}
\caption{Typed lower-order sets used to derive many higher-order graphlets in constant time.
Note node $i$ and $j$ can be of arbitrary types.
}
\label{fig:typed-lower-order-sets}
\end{figure}
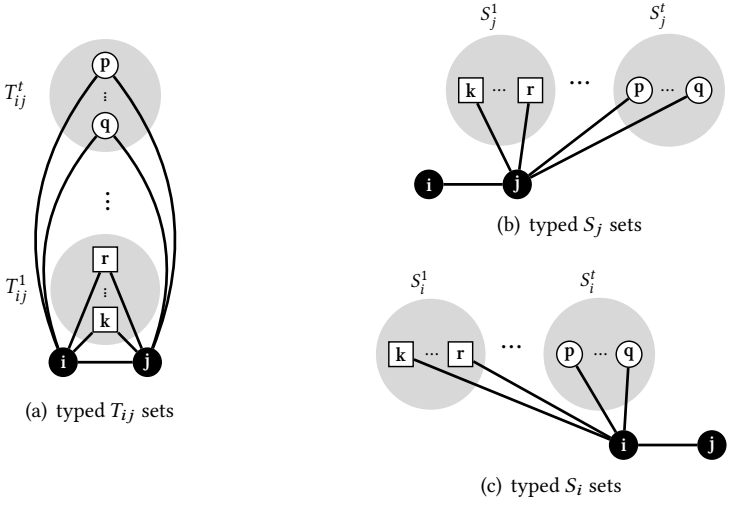

\subsubsection{A General Principle for Typed Graphlet Counting}
\label{sec:general-principle}
\noindent
We now introduce a general typed graphlet formulation.
Let $N^{e,\vt}_{P,Q}$ denote the number of distinct typed 4-node graphlets of $H$ with the type vector $\vt$ that contain edge $(i,j) \in E$ 
and have properties $P \in \{S_i^t, S_j^t, T_{ij}^t, I^t\}$ and $Q \in \{S_i^{t^{\prime}}, S_j^{t^{\prime}}, T_{ij}^{t^{\prime}}, I^{t^{\prime}}\}$ for any $t, t^{\prime} \in \{1,\ldots,L\}$ defined as:
\begin{align} \label{eq:general-typed-graphlet-formulation-without-edge-constraint} 
N^{e,\vt}_{P,Q} = \Big|\Big\{ \{i,j,w_k,w_r\} \,\big|\,
& w_k \!\in P \wedge w_r \!\in Q \wedge \\ \nonumber
& w_r \not= w_k \wedge \vt = \big[ \phi_i \;\; \phi_j \;\; \phi_{w_k} \; \phi_{w_r}\big]
\Big\}\Big|
\end{align}\noindent
Now let $e'$ denote the event $(w_k,w_r) \in E$, and let $[e']$ be the Iverson bracket that is $1$ when $(w_k,w_r) \in E$ and $0$ otherwise. 
Then, $N^{e}_{P,Q,[e']}$ denotes the number of all possible typed $4$-node graphlets conditional on $(w_k,w_r) \in E$. 
\begin{align} \label{eq:general-typed-graphlet-formulation} 
N^{e,\vt}_{P,Q,[e']} = \Big|\Big\{ \{i,j,w_k,w_r\} \,\big|\,
& w_k \!\in P \wedge w_r \!\in Q \wedge \\ \nonumber
& [e'] \wedge w_r \not= w_k \wedge \\ \nonumber
& \vt = \big[ \phi_i \;\; \phi_j \;\; \phi_{w_k} \; \phi_{w_r}\big]
\Big\}\Big|
\end{align}\noindent

\begin{thm}[General Principle for Typed Graphlet Counting]
\label{thm:general-prin-for-typed-graphlets}
Given a graph $G$, for any edge $e = (i,j)$ in $G$, for some type vector 
$\vt =\big[ \phi_i \;\; \phi_j \;\; \phi_{w_k} \; \phi_{w_r}\big]$, where $w_k \in P$, $w_r \in Q$, $\phi_{w_k} = t$, and $\phi_{w_r} = t^{\prime}$, then the number of all $4$-node typed graphlets $\{i,j,w_k,w_r\}$ satisfies the general principle,
\begin{align} \label{eq:general-typed-graphlet-formulation}
N^{e,\vt}_{P,Q,0} = N^{e,\vt}_{P,Q} - N^{e,\vt}_{P,Q,1} 
\end{align}\noindent
\end{thm}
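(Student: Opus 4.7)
The plan is to obtain the identity by partitioning the set of objects counted by $N^{e,\vt}_{P,Q}$ according to the value of the edge indicator $[e']$, where $e' = (w_k, w_r)$. Since typed graphlets of the form $\{i,j,w_k,w_r\}$ are being enumerated in both $N^{e,\vt}_{P,Q}$ and $N^{e,\vt}_{P,Q,[e']}$ under identical membership, distinctness, and type-vector constraints, the only distinction between the quantities is whether or not the pair $(w_k,w_r)$ is an edge of $G$.

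First I would unpack the definitions. By Eq.~\ref{eq:general-typed-graphlet-formulation-without-edge-constraint}, $N^{e,\vt}_{P,Q}$ counts every 4-tuple $\{i,j,w_k,w_r\}$ with $w_k \in P$, $w_r \in Q$, $w_k \neq w_r$, and type vector $\vt$, imposing \emph{no} restriction on whether $(w_k,w_r) \in E$. By Eq.~\ref{eq:general-typed-graphlet-formulation}, $N^{e,\vt}_{P,Q,[e']}$ counts the subset of those tuples for which the Iverson bracket $[e']$ takes the prescribed value. Let $\mathcal{A}$ denote the set enumerated by $N^{e,\vt}_{P,Q}$, and let $\mathcal{A}_1, \mathcal{A}_0 \subseteq \mathcal{A}$ denote the subsets corresponding to $[e']=1$ and $[e']=0$, respectively.

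Next I would observe that for any pair $(w_k, w_r)$, exactly one of the events $(w_k,w_r)\in E$ or $(w_k,w_r)\notin E$ holds, so $\mathcal{A}_1$ and $\mathcal{A}_0$ are disjoint, and their union is $\mathcal{A}$. By finite additivity of cardinality over disjoint unions,
\begin{equation*}
|\mathcal{A}| = |\mathcal{A}_1| + |\mathcal{A}_0|,
\qquad \text{i.e.,}\qquad N^{e,\vt}_{P,Q} = N^{e,\vt}_{P,Q,1} + N^{e,\vt}_{P,Q,0}.
\end{equation*}
Rearranging yields the desired identity $N^{e,\vt}_{P,Q,0} = N^{e,\vt}_{P,Q} - N^{e,\vt}_{P,Q,1}$.

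There is no real obstacle to the argument beyond making the bookkeeping precise: one must verify that the defining constraints (membership in $P,Q$, distinctness $w_k \neq w_r$, and the type-vector constraint on $\vt$) are shared verbatim between the two defining equations, so that conditioning on $[e']$ truly partitions $\mathcal{A}$ rather than alters its domain. Once this is observed, the result follows immediately from the dichotomy of the Iverson bracket. The utility of the theorem for the framework, which I would flag as a short remark at the end of the proof, is that it lets one avoid enumerating non-edges directly: whenever $N^{e,\vt}_{P,Q}$ can be computed in closed form from the typed neighborhood sizes $|S_i^t|,|S_j^t|,|T_{ij}^t|,|I^t|$, and $N^{e,\vt}_{P,Q,1}$ is obtained by the edge-enumeration routines in Algorithms~\ref{alg:typed-path-based-motifs-exact}--\ref{alg:typed-triangle-based-motifs-exact}, the complementary count $N^{e,\vt}_{P,Q,0}$ falls out in $o(1)$ time.
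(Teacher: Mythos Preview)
Your proposal is correct and follows essentially the same approach as the paper: both arguments hinge on the observation that the set counted by $N^{e,\vt}_{P,Q}$ is partitioned by the edge indicator $[e']$ into the two subsets counted by $N^{e,\vt}_{P,Q,1}$ and $N^{e,\vt}_{P,Q,0}$. The paper phrases this as an inclusion-exclusion contribution argument (checking that a subgraph $J$ with $(w_k,w_r)\in E$ contributes once to $N^{e,\vt}_{P,Q,1}$ and once to $N^{e,\vt}_{P,Q}$, hence zero to $N^{e,\vt}_{P,Q,0}$), whereas you state the disjoint-union decomposition directly; your version is, if anything, more explicit about handling both cases of $[e']$.
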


\begin{proof}
Assume there is a typed induced subgraph $J \subset V$ such that $J = \{i,j,w_k,w_r\}$ is incident to the edge of interest $e =(i,j)$ and $J$ is associated with a type vector $\vt = \big[ \phi_i \;\; \phi_j \;\; \phi_{w_k} \; \phi_{w_r}\big]$ where $\phi_{w_k} = t$ and $\phi_{w_r} = t'$.
Suppose $(w_k,w_r) \in E$.
Then by definition, we have $J$ being counted once in the term $N^{e, \vt}_{P,Q,1}$ and once in the term $N^{e, \vt}_{P,Q}$ of Eq.~\ref{eq:general-typed-graphlet-formulation}. 
By the principle of inclusion-exclusion~\cite{stanley1986enumerative}, the total contribution of the typed subgraph $J$ with type vector $\vt$ to $N^{e, \vt}_{P,Q,0}$ is zero.
\end{proof}

Notice that $N_{P,Q}^{e,\vt}$ does not indicate whether $t=t^{\prime}$ or $t\not=t^{\prime}$.
For clarity, we often use $N_{P,Q}^{e,t=t^{\prime}}$ and $N_{P,Q}^{e,t\not=t^{\prime}}$ to denote this explicitly.
Theorem~\ref{thm:general-prin-for-typed-graphlets} shows it is sufficient to compute only two of the three quantities $\{N_{P,Q}^{e,t=t^{\prime}}, N_{P,Q,0}^{e,t=t^{\prime}},  N_{P,Q,1}^{e,t=t^{\prime}}\}$.
For instance, it is enough to compute $N_{P,Q}^{e,t=t^{\prime}}$ and $N_{P,Q,1}^{e,t=t^{\prime}}$, and then $N_{P,Q,1}^{e,t=t^{\prime}}$ can be derived in $o(1)$ constant time 
\begin{equation}\label{eq:t=tprime}
N_{P,Q,0}^{e,t=t^{\prime}} = N_{P,Q}^{e,t=t^{\prime}} - N_{P,Q,1}^{e,t=t^{\prime}}
\end{equation}
It is straightforward to see that for $t\not=t^{\prime}$ it also holds: 
\begin{equation}
N_{P,Q,0}^{e,t\not=t^{\prime}} = N_{P,Q}^{e,t\not=t^{\prime}} - N_{P,Q,1}^{e,t\not=t^{\prime}}
\end{equation}
This implies that selecting the two least computationally expensive quantities offers an obvious computational advantage.

We now show two fundamental properties that simplify the theory and discussion in Section~\ref{sec:combinatorial-relationships}.
Property~\ref{prop:P-equals-Q}
applies to $N_{P,Q}^{e,t=t^{\prime}}$ only, 
whereas Property~\ref{prop:P-not-equal-Q-mutually-exclusive} applies to either $N_{P,Q}^{e,t=t^{\prime}}$ or $N_{P,Q}^{e,t\not=t^{\prime}}$.
\begin{Property} \label{prop:P-equals-Q}
Let $e=(i,j)\in E$ and $N_{P,Q}^{e,t=t^{\prime}}$ denote the number of 4-node typed graphlet orbits $\{i,j,w_k,w_r\}$ such that $w_k$ and $w_r$ satisfy property $P \in \{S_i^t, S_j^t, T_{ij}^t, I^t\}$ and $Q \in \{S_i^{t^{\prime}}, S_j^{t^{\prime}}, T_{ij}^{t^{\prime}}, I^{t^{\prime}}\}$ for any $t=t^{\prime} \in \{1,\ldots,L\}$, respectively.
We say $N_{P,Q}^{e,t=t^{\prime}}$ is an unrestricted count since $N_{P,Q}^{e,t=t^{\prime}} = N_{P,Q,0}^{e,t=t^{\prime}} + N_{P,Q,1}^{e,t=t^{\prime}}$.
If $P=Q$, then 
$t=t^{\prime}$. Therefore, 
\begin{equation} \label{eq:P-equals-Q}
N_{P,Q}^{e,t=t^{\prime}} = {|P| \choose 2} = \frac{|P|(|P|-1)}{2} = \frac{|Q|(|Q|-1)}{2}
\end{equation}
\end{Property}
Clearly, Property~\ref{prop:P-equals-Q} holds iff $t=t^{\prime}$ \emph{and} $P=Q$.
Suppose $t\not=t^{\prime}$, then $P \cap Q = \emptyset$ by definition, hence, $P\not=Q$.
In other words, $P=Q$ implies $t=t^{\prime}$.
Assuming $t=t^{\prime}$, this property is useful for deriving the count of typed 4-stars and typed chordal-cycles (center orbit) in $o(1)$ time as shown later in Section~\ref{sec:combinatorial-relationships}.
For instance, suppose $t=t^{\prime}$, then $P=S_i^{t}$ and $Q=S_j^{t}$, and therefore the number of typed 4-stars $f_{ij}(g_5,\vt)$ with type vector $\vt=[\,\phi_i \; \phi_j \; t \; t\,]$ that occur between node $i$ and $j$ is 
\[
f_{ij}(g_5,\vt) = \frac{1}{2} \Big[|S_i^{t}|(|S_i^{t}|-1) + |S_j^{t}|(|S_j^{t}|-1)\Big] - f_{ij}(g_7,\vt)
\]
\noindent
where $f_{ij}(g_7,\vt)$ is the typed tailed-triangle (tail-edge orbit) count.

\begin{Property} \label{prop:P-not-equal-Q-mutually-exclusive}
If $P\not=Q$, then $P\cap Q = \emptyset$.
Hence, $P$ and $Q$ are mutually exclusive.
This implies
\begin{equation} \label{eq:mutual-exclusion}
N_{P,Q}^{e,t=t^{\prime}} \!= |P| \cdot |Q|\;\quad \text{ and }\quad\; N_{P,Q}^{e,t\not=t^{\prime}} \!= |P| \cdot |Q|
\end{equation}
\end{Property}
Hence, since $P\not= Q$, then the above clearly holds for both $t=t^{\prime}$ and $t \not= t^{\prime}$.
Notice that in the untyped case, if $P=S_i$ and $Q=S_i$, then $P\cap Q = P$.
However, if we consider types and set $P=S_i^t$ and $Q=S_i^{t^{\prime}}\!\!$, then $P \cap Q = \emptyset$ iff $t\not= t^{\prime}$.

\begin{Property}\label{prop:t-equals-tprime-implies-P-equals-Q}
$\forall t,t^{\prime}$,\; s.t.\;
$t\not=t^{\prime} \Rightarrow P\not=Q$
\end{Property}
The above is straightforward.
The converse is not true, that is, if $P\not=Q$, then $t\not=t^{\prime}$ does not necessarily hold.
Assume $t=t^{\prime}$, and let $P=T_{ij}^{t}$ and $Q=S_{i}^{t^{\prime}}$, then clearly $P\not=Q$ despite $t=t^{\prime}$.
Suppose $P=T_{ij}^{t}$ and $Q=T_{ij}^{t^{\prime}}$. 
If $P=Q$, then $T_{ij}^{t}=T_{ij}^{t^{\prime}}$ and therefore $t=t^{\prime}$ must hold.
Otherwise, if $t\not=t^{\prime}$ then $P\not=Q$.

\begin{table*}[t!]
\vspace{-2mm}
\caption{
Typed graphlet orbit equations. 
All typed graphlet orbits with $4$-nodes are formulated with respect to the typed node sets $\{S_i^t,S_j^t,T_{ij}^t,I^t\}$ and $\{S_i^{t^{\prime}},S_j^{t^{\prime}},T_{ij}^{t^{\prime}},I^{t^{\prime}}\}$ for $t, t^{\prime}=1,\ldots,L$ derived from the typed $3$-node graphlets.
Recall $T_{ij}^t = \Gamma_i^t \cap \Gamma_j^t$, $S_j^t = \Gamma_j^t \setminus T_{ij}^t$, $S_i^t= \Gamma_i^t \setminus T_{ij}^t$, and 
$I^t = V^t \setminus (\Gamma_i^t \cup \Gamma_j^t) = V^t \setminus (T_{ij}^t \cup S_i^t \cup S_j^t \cup \{i,j\})$ 
where $V^t$ is the set of nodes in $V$ of type $t$.
In all cases, $w_r \not= w_k$. 
}
\vspace{-3.5mm}
\centering 
\fontsize{7}{7.5}\selectfont
\setlength{\tabcolsep}{2pt} 
\renewcommand{\arraystretch}{1.0} 
\label{table:typed-graphlet-equations}
\hspace*{-2.5mm}
\begin{tabularx}{1.00\linewidth}{@{}rl  H H l@{} XH@{}} 
\toprule
\textsc{Typed Graphlet} 
& 
\textsc{Orbit} & \!\!\!$|E(H)|$ &$\rho(H)$ &
$f_{ij}(H, \vt) =$ &
$\Big|\Big\{ \{i,j,w_k,w_r\} \,\big|\,
w_k \!\in P \wedge 
w_r \!\in Q \wedge 
\mathbb{I}\{(w_k, w_r) \!\in E\} \wedge
w_r \not= w_k \wedge
\vt = \big[ \phi_i \;\; \phi_j \;\; \phi_{w_k} \; \phi_{w_r}\big]
\Big\}\Big|
$
\\
\midrule

\textbf{4-path} & \text{edge} & 3 & 0.50 &
$
f_{ij}(g_3, \vt) =$ & $\Big|\Big\{ \{i,j,w_k,w_r\} \,\big|\,
\big(w_k \!\in S_i^t \wedge 
w_r \!\in I^{t^{\prime}}\big)
\vee
\big(w_k \!\in S_j^t \wedge 
w_r \!\in I^{t^{\prime}}\big)
\wedge 
(w_k, w_r) \!\in E 
\wedge
\vt = \big[ \phi_i \;\; \phi_j \; \phi_{w_k} \; \phi_{w_r}\!\big]
\Big\}\Big|
$
\\

& \text{center} & 3 & 0.50 &
$
f_{ij}(g_4, \vt) =$ &
$\Big|\Big\{ \{i,j,w_k,w_r\} \,\big|\,
w_k \!\in S_j^t \wedge 
w_r \!\in S_i^{t^{\prime}} \wedge  (w_k,\!w_r) \!\not\in\! E
\wedge
\vt = \big[ \phi_i \;\; \phi_j \;\; \phi_{w_k} \; \phi_{w_r}\big]
\Big\}\Big|
$
\\
\midrule

\textbf{4-star} &  & 3 & 0.50 &
$
f_{ij}(g_5, \vt) =$ &
$ \Big|\Big\{ \{i,j,w_k,w_r\} \,\big|\,
(w_k \!\in \!S_i^t \wedge 
w_r \!\in \!S_i^{t^{\prime}}\!) 
\vee 
(w_k \!\in \!S_j^t \wedge 
w_r \!\in \!S_j^{t^{\prime}}\!)
\wedge 
w_r \!\not= \!w_k \wedge 
(w_k, \!w_r) \!\not\in E 
\wedge
\vt = \big[ \phi_i \;\; \phi_j \; \phi_{w_k} \; \phi_{w_r}\!\big]
\Big\}\Big|
$
\\
\midrule

\textbf{4-cycle} & & 4 & 0.67 &
$
f_{ij}(g_6, \vt) =$ &
$ \Big|\Big\{ \{i,j,w_k,w_r\} \,\big|\,
w_k \!\in S_j^t \wedge 
w_r \!\in S_i^{t^{\prime}} \wedge 
(w_k, w_r) \!\in E 
\wedge
\vt = \big[ \phi_i \;\; \phi_j \;\; \phi_{w_k} \; \phi_{w_r}\big]
\Big\}\Big|
$
\\
\midrule

\textbf{tailed-triangle} 
& \text{tail-edge} 
& 4 & 0.67 &
$f_{ij}(g_7, \vt) =$ &
$\Big|\Big\{ \{i,j,w_k,w_r\} \,\big|\,
(w_k \!\in \!S_i^t \wedge 
w_r \!\in \!S_i^{t^{\prime}}\!) 
\vee 
(w_k \!\in \!S_j^t \wedge 
w_r \!\in \!S_j^{t^{\prime}}\!)
\wedge 
w_r \!\not= \!w_k \wedge 
(w_k, \!w_r) \!\in E 
\wedge
\vt = 
\big[ \phi_i \;\; \phi_j \; \phi_{w_k} \; \phi_{w_r}\!\big]
\Big\}\Big|
$
\\

& \text{center} 
& 4 & 0.67 &
$
f_{ij}(g_8, \vt) =$ & $\Big|\Big\{ \{i,j,w_k,w_r\} \,\big|\,
w_k \!\in T_{ij}^t \wedge 
w_r \!\in I^{t^{\prime}} \wedge
(w_k, w_r) \!\in E 
\wedge
\vt = \big[ \phi_i \;\; \phi_j \;\; \phi_{w_k} \; \phi_{w_r}\big]
\Big\}\Big|
$
\\

& \text{tri-edge} 
& 4 & 0.67 &
$
f_{ij}(g_9, \vt) =$ & $\Big|\Big\{ \{i,j,w_k,w_r\} \,\big|\,
w_k \!\in T_{ij}^t \wedge 
w_r \!\!\in\! \big(S_i^{t^{\prime}} \!\cup S_j^{t^{\prime}}\!\big) \wedge  
(w_k, w_r) \!\not\in E 
\wedge
\vt = \big[ \phi_i \;\; \phi_j \;\; \phi_{w_k} \; \phi_{w_r}\big]
\Big\}\Big|
$
\\
\midrule

\textbf{chordal-cycle} 
& \text{edge} & 5 & 0.83 &
$
f_{ij}(g_{10}, \vt) =$ & 
$\Big|\Big\{ \{i,j,w_k,w_r\} \,\big|\,
w_k \!\!\in\! T_{ij}^t \wedge\, 
w_r \!\!\in\! \big(S_i^{t^{\prime}} \!\cup S_j^{t^{\prime}}\!\big) \wedge 
(w_k, \!w_r) \!\in\! E 
\wedge
\vt \!=\! \big[ \phi_i \; \phi_j \; \phi_{w_k}  \phi_{w_r}\!\big]
\Big\}\Big|
$
\\

& \text{center} & 5 & 0.83 &
$
f_{ij}(g_{11}, \vt) =$ & $\Big|\Big\{ \{i,j,w_k,w_r\} \,\big|\,
w_k \!\in T_{ij}^t \wedge 
w_r \!\in T_{ij}^{t^{\prime}} \wedge 
w_r \not= w_k \wedge 
(w_k, w_r) \!\not\in E 
\wedge
\vt = \big[ \phi_i \;\; \phi_j \;\; \phi_{w_k} \; \phi_{w_r}\big]
\Big\}\Big|
$
\\
\midrule

\textbf{4-clique} 
& & 6 & 1.00 &
$
f_{ij}(g_{12}, \vt) =$ &
$\Big|\Big\{ \{i,j,w_k,w_r\} \,\big|\,
w_k \!\in T_{ij}^t \wedge 
w_r \!\in T_{ij}^{t^{\prime}} \wedge 
w_r \not= w_k \wedge 
(w_k, w_r) \!\in E 
\wedge
\vt = \big[ \phi_i \;\; \phi_j \;\; \phi_{w_k} \; \phi_{w_r}\big]
\Big\}\Big|
$
\\

\bottomrule
\end{tabularx}
\end{table*}

The equations for deriving every typed graphlet orbit of size 4 are provided in Table~\ref{table:typed-graphlet-equations}.
Notice that all typed graphlets with $k$-nodes are formulated with respect to the typed node sets $\{S_i^t,S_j^t,T_{ij}^t,I^t\}$ 
derived from the typed graphlets with ($k\!-\!1$)-nodes.
Hence, higher-order typed graphlets of order $k$ are derived from lower-order ($k\!-\!1$)-node typed graphlets.
We classify typed graphlets as path-based or triangle-based.
Typed path-based graphlets are the typed 4-node graphlets derived from the sets $S_i = \bigcup_t S_i^t$ and $S_j = \bigcup_t S_j^t$ of nodes that form 3-node typed paths centered at node $i$ and $j$, respectively (Algorithm~\ref{alg:typed-path-based-motifs-exact}).
Conversely, typed triangle-based graphlets are the typed 4-node graphlets derived from the set $T_{ij} = \bigcup_t T_{ij}^t$ of nodes that form typed triangles (typed 3-cliques) with node $i$ and $j$ (Algorithm~\ref{alg:typed-triangle-based-motifs-exact}).
Naturally, typed path-based graphlets are the least dense (graphlets with fewest edges) whereas the typed triangle-based graphlets are the most dense.

The typed graphlet equations in Table~\ref{table:typed-graphlet-equations} are mainly used to characterize the typed graphlets, and of course can be used to count them.
However, using those equations to count all typed graphlets is still expensive since some non-negligible work is required to count every typed graphlet.
Instead, we count only a few typed graphlets and use newly discovered combinatorial relationships (see Section~\ref{sec:combinatorial-relationships}) to derive the others directly in $o(1)$ constant time.
Notably, we make no assumptions about the number of types $L$, their distribution among the nodes and edges, or any other additional information.
On the contrary, the framework is extremely general for arbitrary heterogeneous graphs (see Figure~\ref{fig:heter-special-cases} for a number of popular special cases covered by the framework).
In addition, we also avoid a lot of computations by symmetry breaking techniques, and other conditions to avoid unnecessary work.

\begin{table}[h!]
\vspace{2mm}
\centering
\renewcommand{\arraystretch}{1.2} 
\caption{
\textbf{Summary of Typed Graphlets and Position-aware Typed Graphlets.}
Enumerative and combinatorial properties of typed graphlets and position-aware typed graphlets.
With repetition allowed (in making the selection), the number of $K$-node \emph{typed graphlets} and \emph{position-aware typed graphlets} (for a single untyped graphlet) from $L$ distinguishable labels/types is given below along with properties of each.
}
\label{table:typed-graphlets-combin-properties}
\vspace{-3mm}
\scalebox{1.0}{
\begin{tabularx}{0.96\linewidth}{l cc}
\toprule

\vspace{-1.6mm}
& & \bfseries{\scshape{Position-Aware}} \\
\vspace{-1.6mm}
& \bfseries{\scshape{Typed Graphlets}}
& 
\bfseries{\scshape{Typed Graphlets}}
\\
& \text{(Definition~\ref{def:typed-graphlet-instance})}
& \text{(Definition~\ref{def:position-aware-typed-graphlet-instance})}
\\
\midrule

\textbf{With Repetition  \;\;\;\;} & 
\begin{minipage}{4cm}
\begin{equation} \nonumber
\left( \!\binom{L}{K} \!\right) = \binom{L+K-1}{K}
\end{equation}\noindent
\end{minipage}
& 
\begin{minipage}{4cm}
\begin{equation} \nonumber
L^K
\end{equation}\noindent
\end{minipage}
\vspace{1mm}
\\

\BBB\TTT
\vspace{-1.6mm}
& \textsf{Unordered Selections}  
& \textsf{Ordered Selections} 
\\
& \textsf{(Combinations)}  
& \textsf{(Permutations)} 
\\

\bottomrule
\end{tabularx}
}
\end{table}

\subsection{Combinatorial Relationships for \emph{Typed Graphlets}} \label{sec:combinatorial-relationships}
\noindent
Now, we show the existence of combinatorial relationships between the different \emph{typed graphlets} and demonstrate
how they can be leveraged to derive the counts of typed graphlets efficiently.
These combinatorial relationships allow us to derive many \emph{typed graphlets} in $o(1)$ constant time and
play a significant role in the speed/efficiency of the proposed approach (see Section~\ref{sec:exp-comparison}).
Using new combinatorial relationships between lower-order \emph{typed} graphlets, 
we derive all remaining typed graphlet orbits in $o(1)$ constant time via Eq.~\ref{eq:typed-4-path-center-orbit}-\ref{eq:typed-4-chordal-cycle-center-orbit} (See Line~\ref{algline:main-alg-for-type-pair}-\ref{algline:main-alg-derive-remaining-typed-graphlet-orbits-constant-time} in Algorithm~\ref{alg:typed-motifs-exact}).
Since we derive all typed graphlet counts for a given edge $(i,j) \in E$ between node $i$ and $j$, we already have two types $\phi_i$ and $\phi_j$.
Thus, these types are fixed ahead of time.
In the case of 4-node typed graphlets, there are two remaining types that need to be selected.
Notice that for typed graphlet orbits, we must solve $\frac{L(L-1)}{2}+L$ equations in the worst-case.
The counts of all remaining typed graphlets are derived in $o(1)$ constant time using the counts of the lower-order ($k\!-\!1$)-node typed graphlets and a few other counts from the $k$-node typed graphlets.
After deriving the exact count of each remaining graphlet with types $\phi_i$, $\phi_j$, $t$, and $t^{\prime}$ for every $t, t^{\prime} \in \{1, \ldots, L\}$ such that $t\leq t^{\prime}$ (Line~\ref{algline:main-alg-for-type-pair}-\ref{algline:main-alg-derive-remaining-typed-graphlet-orbits-constant-time}), if such count is nonzero, we compute a graphlet hash $\hash = \mathbb{F}(g, \phi_i, \phi_j, t, t^{\prime})$ for graphlet orbit $g$, set $\mathcal{M}_{ij} \leftarrow \mathcal{M}_{ij} \cup \{\hash\}$, and then set the count of that typed graphlet in $\vx_{\hash}$ to the count derived in constant $o(1)$ time.

We now demonstrate the relationship between different typed graphlets and prove the correctness of the equations used to derive a number of typed graphlet counts in $o(1)$ constant time. See Figure~\ref{fig:proof-correctness-examples} for intuition.

\subsubsection{Relationship between typed 4-cycles and 4-paths (center orbit)}
\label{sec:rel-4cycles-and-4path-center-orbits}

\begin{cor}\label{cor:typed-4-cycles}
For any edge $(i,j) \in E$ in $G$ with types $\phi_i$ and $\phi_j$, the number of typed 4-cycles containing edge $(i,j)$ with type vector $\vt = \big[ \phi_i\; \phi_j\; t \; t^{\prime} \big]$ is
$N_{S_i^t\!,S_j^{t}\!,1}^{e,t=t^{\prime}}$ for $t=t^{\prime}$ and $N_{S_i^t\!,S_j^{t^{\prime}}\!,1}^{e,t\not=t^{\prime}} + N_{S_i^{t^{\prime}}\!,S_j^{t},1}^{e,t\not=t^{\prime}}$ otherwise.
\end{cor}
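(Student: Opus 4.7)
The plan is to prove the corollary directly from the 4-cycle characterization given in Table~\ref{table:typed-graphlet-equations}, namely
$f_{ij}(g_6, \vt) = \bigl|\{ \{i,j,w_k,w_r\} \,:\, w_k \in S_j^t,\; w_r \in S_i^{t'},\; (w_k,w_r)\in E \}\bigr|$
with $\vt = [\phi_i\;\phi_j\;t\;t']$, and to compare it term-by-term to the general quantity $N^{e,\vt}_{P,Q,1}$ from Eq.~\ref{eq:general-typed-graphlet-formulation}. First I would record the key disjointness fact $S_i^s \cap S_j^{s'} = \emptyset$ for all $s,s'$, which follows from the definitions $S_i^s = \Gamma_i^s \setminus T_{ij}^s$ and $S_j^{s'} = \Gamma_j^{s'} \setminus T_{ij}^{s'}$ together with Corollary~\ref{cor:typed-triangle-node}; this guarantees that whenever one of $\{w_k,w_r\}$ lies in an $S_i$-set and the other in an $S_j$-set, each unordered subset $\{i,j,w_k,w_r\}$ admits a unique role assignment, so there is no hidden double counting.

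Case $t = t'$: I would argue that $N^{e,\vt}_{S_i^t,S_j^t,1}$ counts exactly the unordered sets $\{i,j,a,b\}$ with $a \in S_i^t$, $b \in S_j^t$, and $(a,b) \in E$. Because $S_i^t \cap S_j^t = \emptyset$, relabeling $w_k \leftrightarrow w_r$ in the definition of $f_{ij}(g_6,\vt)$ is just a renaming of bound variables, yielding the equality $f_{ij}(g_6,[\phi_i\;\phi_j\;t\;t]) = N^{e,t=t'}_{S_i^t,S_j^t,1}$, which is the first half of the claim.

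Case $t \neq t'$: Here the type multiset on $\{w_k,w_r\}$ is $\{t,t'\}$, and every 4-cycle realizing it falls into exactly one of two disjoint categories depending on which of the two nodes plays the ``$i$-side'' role and which the ``$j$-side'' role. The configuration (i-side of type $t$, j-side of type $t'$) is enumerated by $N^{e,t\ne t'}_{S_i^t,S_j^{t'},1}$, while the opposite assignment is enumerated by $N^{e,t\ne t'}_{S_i^{t'},S_j^t,1}$. I would appeal to Property~\ref{prop:P-not-equal-Q-mutually-exclusive} to argue these two sets of subsets are disjoint (the $S$-memberships are incompatible across the two configurations), and to the same dummy-variable relabeling as in Case~1 to match each $N$-term to the corresponding $f_{ij}(g_6,\cdot)$ value obtained by swapping the last two entries of $\vt$. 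Summing then yields the second half of the corollary.

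The main obstacle is conceptual rather than computational: one must be careful that $\vt = [\phi_i\;\phi_j\;t\;t']$ and $[\phi_i\;\phi_j\;t'\;t]$ refer to the \emph{same} heterogeneous graphlet (since Definition~\ref{def:typed-graphlet-instance} uses type multisets), so the two $N$-terms in the $t \neq t'$ case are contributing to the same count rather than to two separate counts. The disjointness argument and the bijection between $(w_k,w_r)$-labeled tuples and unordered 4-subsets must both be verified explicitly to ensure that nothing is counted twice and nothing is missed, which is exactly where the $S_i^\cdot \cap S_j^\cdot = \emptyset$ observation does the work.
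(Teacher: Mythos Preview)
Your proposal is correct and follows the same definitional unpacking that the paper relies on; the paper states this corollary without an explicit proof, treating it as immediate from the 4-cycle row of Table~\ref{table:typed-graphlet-equations} together with the general $N^{e,\vt}_{P,Q,[e']}$ notation of Eq.~\ref{eq:general-typed-graphlet-formulation}. Your explicit handling of the disjointness $S_i^{\cdot}\cap S_j^{\cdot}=\emptyset$ and the multiset interpretation of $\vt$ (Definition~\ref{def:typed-graphlet-instance}) fills in precisely the bookkeeping the paper leaves implicit, so there is no substantive difference in approach.
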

\begin{cor}\label{cor:typed-4-paths}
For any edge $(i,j) \in E$ in $G$ with types $\phi_i$ and $\phi_j$, the number of typed 4-path center orbits containing edge $(i,j)$ with type vector $\vt = \big[ \phi_i\; \phi_j\; t \; t^{\prime} \big]$ is 
$N_{S_i^t\!,S_j^{t}\!,0}^{e,t=t^{\prime}}$ for $t=t^{\prime}$ and $N_{S_i^t\!,S_j^{t^{\prime}}\!,0}^{e,t\not=t^{\prime}} + N_{S_i^{t^{\prime}}\!,S_j^{t}\!,0}^{e,t\not=t^{\prime}}$ otherwise.
\end{cor}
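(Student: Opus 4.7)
The plan is to read off the definition of the $4$-path center orbit $g_4$ from Table~\ref{table:typed-graphlet-equations} and recast it directly in the $N^{e,\vt}_{P,Q,[e']}$ formalism of Eq.~\ref{eq:general-typed-graphlet-formulation}. From the table, a typed $4$-path center orbit incident to $(i,j)$ is a set $\{i,j,w_k,w_r\}$ satisfying $w_k \in S_j^{\phi_{w_k}}$, $w_r \in S_i^{\phi_{w_r}}$, and crucially $(w_k,w_r) \notin E$ — the non-edge being exactly what distinguishes the center-$4$-path from the $4$-cycle treated in Corollary~\ref{cor:typed-4-cycles}. This non-edge condition is precisely the Iverson bracket $[e']=0$, so the count must take the shape $N^{e,\vt}_{S_i^{\cdot},S_j^{\cdot},0}$.

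The remaining step is a case split on whether $t=t^{\prime}$. When $t=t^{\prime}$, the type vector $[\phi_i,\phi_j,t,t]$ forces $\phi_{w_k}=\phi_{w_r}=t$, so $w_k \in S_j^t$ and $w_r \in S_i^t$; the $N$-notation immediately yields $N^{e,t=t^{\prime}}_{S_i^t,S_j^t,0}$. When $t\not= t^{\prime}$, there are two mutually exclusive ways to distribute the two types across the two free positions: either $\phi_{w_k}=t$ and $\phi_{w_r}=t^{\prime}$, giving $w_k \in S_j^t$ and $w_r \in S_i^{t^{\prime}}$ with a contribution of $N^{e,t\not=t^{\prime}}_{S_i^{t^{\prime}},S_j^t,0}$; or $\phi_{w_k}=t^{\prime}$ and $\phi_{w_r}=t$, giving $w_k \in S_j^{t^{\prime}}$ and $w_r \in S_i^t$ with a contribution of $N^{e,t\not=t^{\prime}}_{S_i^t,S_j^{t^{\prime}},0}$. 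Together these partition all typed $4$-path center orbits on $(i,j)$ with type multiset $\{\phi_i,\phi_j,t,t^{\prime}\}$, so we add them.

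The key verification — more a careful definitional check than a genuine obstacle — is that the two sub-cases in the $t\not=t^{\prime}$ regime do not double-count any $4$-node set. This is exactly where Property~\ref{prop:P-not-equal-Q-mutually-exclusive} enters: when $t\not=t^{\prime}$, the typed neighbor sets $S_j^t$ and $S_j^{t^{\prime}}$ are disjoint, and likewise $S_i^t \cap S_i^{t^{\prime}} = \emptyset$. Because the type assigned to a given node is intrinsic rather than a labelling choice, no single node can play both roles simultaneously, so the two $N$ terms add without overlap. Combined with the forward direction (every such $4$-path with type vector $[\phi_i,\phi_j,t,t^{\prime}]$ falls into exactly one of the two sub-cases), this yields the claimed equality.
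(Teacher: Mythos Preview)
Your proposal is correct and follows the only natural route: the paper states this corollary without proof, treating it as an immediate consequence of the definition of $f_{ij}(g_4,\vt)$ in Table~\ref{table:typed-graphlet-equations} and the $N^{e,\vt}_{P,Q,[e']}$ formalism of Eq.~\ref{eq:general-typed-graphlet-formulation}, and your argument is exactly the unpacking of that implication. The case split on $t=t'$ versus $t\neq t'$, together with the disjointness observation (Property~\ref{prop:P-not-equal-Q-mutually-exclusive}) ruling out double-counting in the latter case, is precisely what is needed; one minor notational point is that in writing ``$w_k\in S_j^t$, $w_r\in S_i^{t'}$ contributes $N_{S_i^{t'},S_j^t,0}$'' you have silently swapped the roles of $w_k$ and $w_r$ relative to the paper's convention $N_{P,Q}\leftrightarrow (w_k\in P,\,w_r\in Q)$, but since the objects being counted are unordered sets $\{i,j,w_k,w_r\}$ and $S_i\cap S_j=\emptyset$, this relabeling is harmless.
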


To count the typed 4-path center orbits for a given edge $(i,j) \in E$ with types $\phi_i$ and $\phi_j$, 
we simply select the remaining two types denoted as $t$ and $t^{\prime}$ to obtain the 4-dimensional type vector $\vt = \big[\, \phi_i \;\, \phi_j \;\; t \;\; t^{\prime} \,\big]$ and derive the count directly using
Lemma~\ref{lem:typed-4path-center-orbit-count}.

\begin{lemma}
\label{lem:typed-4path-center-orbit-count}
For any edge $(i,j) \in E$ in $G$ with types $\phi_i$ and $\phi_j$ 
and any type vector $\vt = \big[ \phi_i\; \phi_j\; t \; t^{\prime} \big]$,
the relationship between 
the typed 4-cycle count $f_{ij}(g_{6}, \vt)$ 
and 
the typed 4-path center orbit count $f_{ij}(g_{4},\vt)$ 
with type vector $\vt$ is
\begin{equation} \label{eq:typed-4-path-center-orbit}
f_{ij}(g_4,\vt)  = 
\begin{cases}
(|S_{i}^{t}| \cdot |S_{j}^{t}|) - f_{ij}(g_{6}, \vt) 		 & \text{if } t=t^{\prime}\\[5pt]
(|S_{i}^{t}| \cdot |S_{j}^{t^{\prime}}|) + 	 & \text{otherwise} \\
(|S_{i}^{t^{\prime}}| \cdot |S_{j}^{t}|) -  f_{ij}(g_{6}, \vt) & \\[2pt]
\end{cases}
\end{equation}\noindent
where $f_{ij}(g_{6},\vt)$ is the typed 4-cycle count for edge $(i,j) \in E$ with type vector $\vt$.
\end{lemma}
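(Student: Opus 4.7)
The plan is to apply Theorem~\ref{thm:general-prin-for-typed-graphlets} to the pair of sets $P = S_i^{\cdot}$ and $Q = S_j^{\cdot}$ while carefully exploiting two structural facts. First, $S_i^t \cap S_j^{t'} = \emptyset$ for \emph{any} choice of $t, t'$, because $S_i^{\cdot}$ contains only nodes adjacent to $i$ but not $j$, whereas $S_j^{\cdot}$ contains only nodes adjacent to $j$ but not $i$ (recall $S_i^t = \Gamma_i^t \setminus T_{ij}^t$ and $S_j^t = \Gamma_j^t \setminus T_{ij}^t$). Second, for any candidate pair $\{w_k, w_r\}$ drawn from $S_j^{\cdot} \times S_i^{\cdot}$, the induced subgraph on $\{i, j, w_k, w_r\}$ already has the three edges $(i,j)$, $(j, w_k)$, $(i, w_r)$ and is missing the edges $(i, w_k)$ and $(j, w_r)$; it therefore induces exactly one of (i) a typed 4-cycle, when $(w_k, w_r) \in E$, or (ii) a typed 4-path center orbit, when $(w_k, w_r) \notin E$. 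This dichotomy lets me use Eq.~\ref{eq:t=tprime} to deduce $f_{ij}(g_4,\vt)$ from the unrestricted count and from the 4-cycle count $f_{ij}(g_6,\vt)$ given by Corollary~\ref{cor:typed-4-cycles}.

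For $t = t^{\prime}$, I would apply Property~\ref{prop:P-not-equal-Q-mutually-exclusive} to the disjoint sets $S_i^t$ and $S_j^t$ to obtain the unrestricted count $N_{S_i^t, S_j^t}^{e, t=t^{\prime}} = |S_i^t|\cdot |S_j^t|$. Invoking Theorem~\ref{thm:general-prin-for-typed-graphlets} with $N_{S_i^t, S_j^t, 1}^{e, t=t^{\prime}} = f_{ij}(g_6,\vt)$ (Corollary~\ref{cor:typed-4-cycles}) and $N_{S_i^t, S_j^t, 0}^{e, t=t^{\prime}} = f_{ij}(g_4,\vt)$ (Corollary~\ref{cor:typed-4-paths}) immediately yields the first branch of Eq.~\ref{eq:typed-4-path-center-orbit}.

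The $t \neq t^{\prime}$ branch requires an extra symmetry argument. The typed 4-path with type vector $\vt = [\phi_i,\phi_j,t,t^{\prime}]$ has an automorphism that swaps the two pendant endpoints, so a single instance $\{i,j,a,b\}$ whose endpoint types form the multiset $\{t,t^{\prime}\}$ can be realized either as $w_k = a \in S_j^t, w_r = b \in S_i^{t^{\prime}}$ or as $w_k = a \in S_j^{t^{\prime}}, w_r = b \in S_i^t$. Because a node has a unique type, the two ordered products $|S_j^t|\cdot |S_i^{t^{\prime}}|$ and $|S_j^{t^{\prime}}|\cdot |S_i^t|$ partition the set of candidate instances rather than overcount them; summing gives the total unrestricted count. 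Since $f_{ij}(g_6,\vt)$ already aggregates the 4-cycle occurrences over both orderings of the same multiset (by Corollary~\ref{cor:typed-4-cycles}), the general principle applied jointly to both disjoint sub-collections gives the second branch of Eq.~\ref{eq:typed-4-path-center-orbit}.

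The main obstacle will be justifying rigorously that the two ordered products in the $t \neq t^{\prime}$ case together form a partition of the instance set, so that neither double-counting nor omission occurs. This reduces to the elementary but easy-to-miss observations that $S_i^t \cap S_i^{t^{\prime}} = \emptyset$ and $S_j^t \cap S_j^{t^{\prime}} = \emptyset$ whenever $t \neq t^{\prime}$ (a node has a single type under $\phi$), combined with the always-true disjointness $S_i^{\cdot} \cap S_j^{\cdot} = \emptyset$ noted above. Once this partition is in place, the same inclusion-exclusion step from Theorem~\ref{thm:general-prin-for-typed-graphlets} applies termwise and the lemma follows with no further computation.
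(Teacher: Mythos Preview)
Your proposal is correct and follows essentially the same route as the paper: apply Theorem~\ref{thm:general-prin-for-typed-graphlets} with $P=S_i^{\cdot}$, $Q=S_j^{\cdot}$, use Property~\ref{prop:P-not-equal-Q-mutually-exclusive} to evaluate the unrestricted count as a product, and plug in Corollaries~\ref{cor:typed-4-cycles}--\ref{cor:typed-4-paths}; for $t\neq t'$ both you and the paper sum the two ordered $(S_i,S_j)$ products and invoke the same corollaries, which already state the aggregated form of $f_{ij}(g_6,\vt)$ and $f_{ij}(g_4,\vt)$. Your extra justification that the two ordered products partition the instance set (via $S_i^t\cap S_i^{t'}=\emptyset$ and $S_i^{\cdot}\cap S_j^{\cdot}=\emptyset$) is a welcome clarification but not a different idea.
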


\begin{proof}

Assume $t=t^{\prime}$. 
From Theorem~\ref{thm:general-prin-for-typed-graphlets}, we have 
\begin{equation}\label{eq:proof-path-cycle-relationship-t-equals-tp}
N_{S_i^t\!,S_j^{t}\!,0}^{e,t=t^{\prime}} = N_{S_i^t\!,S_j^{t}}^{e,t=t^{\prime}} - N_{S_i^t\!,S_j^{t}\!,1}^{e,t=t^{\prime}}
\end{equation}
Since $N_{S_i^t\!,S_j^{t}}^{e,t=t^{\prime}}$ is the number of typed 4-node induced subgraphs containing $e=(i,j)$ such that $w_k \in S_i^t$ and $w_r \in S_j^t$, then 
$N_{S_i^t\!,S_j^{t}}^{e,t=t^{\prime}} = |S_i^t| \cdot |S_j^t|$ by Property~\ref{prop:P-not-equal-Q-mutually-exclusive}.
From Corollary~\ref{cor:typed-4-cycles}, the number of typed 4-cycles that contain $e=(i,j)$ is $f_{ij}(g_{6}, \vt) = N_{S_i^t\!,S_j^{t}\!,1}^{e,t=t^{\prime}}$.
From Corollary~\ref{cor:typed-4-paths}, the count of 
typed 4-paths centered at edge $e=(i,j)$ is $f_{ij}(g_{4},\vt) = N_{S_i^t\!,S_j^{t}\!,0}^{e,t=t^{\prime}}$. 
Therefore, by direct substitution in Eq.~\ref{eq:proof-path-cycle-relationship-t-equals-tp}, we obtain Eq.~\ref{eq:typed-4-path-center-orbit}.

Assume $t\not=t^{\prime}$. 
From Theorem~\ref{thm:general-prin-for-typed-graphlets}, we have 
\begin{equation}
N_{S_i^t\!,S_j^{t^{\prime}}\!\!,0}^{e,t\not=t^{\prime}} = \, N_{S_i^t\!,S_j^{t^{\prime}}}^{e,t\not=t^{\prime}} - N_{S_i^t\!,S_j^{t^{\prime}}\!\!,1}^{e,t\not=t^{\prime}}
\quad\;\;and\;\;\quad
N_{S_i^{t^{\prime}}\!,S_j^{t},0}^{e,t\not=t^{\prime}} = \, N_{S_i^{t^{\prime}}\!,S_j^t}^{e,t\not=t^{\prime}} - N_{S_i^{t^{\prime}}\!,S_j^t,1}^{e,t\not=t^{\prime}}
\end{equation}
It is straightforward to rewrite this as
\begin{equation}
N_{S_i^t\!,S_j^{t^{\prime}}\!\!,0}^{e,t\not=t^{\prime}} + N_{S_i^{t^{\prime}}\!,S_j^{t},0}^{e,t\not=t^{\prime}} = \, 
\Big( N_{S_i^t\!,S_j^{t^{\prime}}}^{e,t\not=t^{\prime}} + N_{S_i^{t^{\prime}}\!,S_j^t}^{e,t\not=t^{\prime}}  \Big) - 
\Big(  N_{S_i^t\!,S_j^{t^{\prime}}\!\!,1}^{e,t\not=t^{\prime}} + N_{S_i^{t^{\prime}}\!,S_j^t,1}^{e,t\not=t^{\prime}} \Big)
\end{equation}
By Property~\ref{prop:P-not-equal-Q-mutually-exclusive}, there are $N_{S_i^t\!,S_j^{t^{\prime}}}^{e,t\not=t^{\prime}} + N_{S_i^{t^{\prime}}\!,S_j^t}^{e,t\not=t^{\prime}} = |S_{i}^{t}| \cdot |S_{j}^{t^{\prime}}| + |S_{i}^{t^{\prime}}| \cdot |S_{j}^{t}|$ typed 4-node induced subgraphs that contain edge $e=(i,j)$ such that $w_k \in S_{i}^{t}$ and $w_r \in S_j^{t^{\prime}}$ \emph{or} $w_k \in S_{i}^{t^{\prime}}$ and $w_r \in S_j^{t}$ where $N_{S_i^t\!,S_j^{t^{\prime}}}^{e,t\not=t^{\prime}} = |S_{i}^{t}| \cdot |S_{j}^{t^{\prime}}|$ and $N_{S_i^{t^{\prime}}\!,S_j^t}^{e,t\not=t^{\prime}} = |S_{i}^{t^{\prime}}| \cdot |S_{j}^{t}|$.
From Corollary~\ref{cor:typed-4-cycles}, the typed 4-cycle count for edge $e$ is $f_{ij}(g_{6}, \vt) = N_{S_i^t\!,S_j^{t^{\prime}}\!,1}^{e,t\not=t^{\prime}} + N_{S_i^{t^{\prime}}\!,S_j^{t},1}^{e,t\not=t^{\prime}}$.
From Corollary~\ref{cor:typed-4-paths}, the count of typed 4-paths centered at edge $e=(i,j)$ is $f_{ij}(g_{4},\vt) = N_{S_i^t\!,S_j^{t^{\prime}}\!,0}^{e,t\not=t^{\prime}} + N_{S_i^{t^{\prime}}\!,S_j^{t}\!,0}^{e,t\not=t^{\prime}}$. 
Therefore, by direct substitution in Eq.~\ref{eq:proof-path-cycle-relationship-t-equals-tp}, we obtain Eq.~\ref{eq:typed-4-path-center-orbit}.

\end{proof}

The only difference between a typed 4-path centered at $(i,j)$ (Corollary~\ref{cor:typed-4-paths}) and a typed 4-cycle (Corollary~\ref{cor:typed-4-cycles}) is whether $(w_k,w_r) \in E$ holds or not.
Clearly, if $(w_k,w_r) \in E$, then we have a 
typed 4-cycle, otherwise $(w_k,w_r) \not\in E$ and it is a typed 4-path centered at $(i,j)$ as shown in Figure~\ref{fig:lemma1}.

\subsubsection{Relationship between typed 4-stars and tailed-triangles (tail-edge orbit)}
\label{sec:rel-4stars-and-tailed-tri-tail-edge-orbits}

\begin{cor}\label{cor:typed-4-stars}
For any edge $(i,j) \in E$ in $G$ with types $\phi_i$ and $\phi_j$, the number of typed 4-stars containing edge $(i,j)$ with type vector $\vt = \big[ \phi_i\; \phi_j\; t \; t^{\prime} \big]$ is
$N_{S_i^t,S_i^{t},0}^{e,t=t^{\prime}} + N_{S_j^t,S_j^{t},0}^{e,t=t^{\prime}}$ for $t=t^{\prime}$ and 
$N_{S_i^t,S_i^{t^{\prime}}\!\!,0}^{e,t\not=t^{\prime}} + N_{S_j^t,S_j^{t^{\prime}}\!\!,0}^{e,t\not=t^{\prime}}$ otherwise.
\end{cor}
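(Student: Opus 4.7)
The plan is to read the claim directly off the $g_5$ row of Table~\ref{table:typed-graphlet-equations} after rewriting the defining set-builder expression in the $N_{P,Q,0}^{e,\vt}$ notation of Equation~\ref{eq:general-typed-graphlet-formulation}. The argument is almost entirely a bookkeeping one, parallel in structure to Corollaries~\ref{cor:typed-4-cycles} and~\ref{cor:typed-4-paths}; there is no new combinatorial identity to prove, only a careful case split on whether $t=t'$ or $t\neq t'$.

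First I would unpack the $g_5$ definition: a typed 4-star containing $(i,j)$ with type vector $\vt$ is a set $\{i,j,w_k,w_r\}$ of four distinct nodes such that the two leaves $w_k,w_r$ either both lie in $S_i$ or both lie in $S_j$, the non-edge condition $(w_k,w_r)\notin E$ holds, and $[\phi_i\;\phi_j\;\phi_{w_k}\;\phi_{w_r}]=\vt$. The two cases ``both leaves attached to $i$'' and ``both leaves attached to $j$'' are mutually exclusive, since $S_i^{t''}\cap S_j^{t''}=\emptyset$ for every type $t''$ by the definitions in Equations~\ref{eq:3-node-typed-star-centered-at-node-i}--\ref{eq:3-node-typed-star-centered-at-node-j} (a node common to both $\Gamma_i$ and $\Gamma_j$ would close a triangle with $(i,j)$ and would sit in $T_{ij}$ rather than in $S_i$ or $S_j$). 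The total count therefore splits cleanly into an $i$-side contribution plus a $j$-side contribution, which is the shape of both lines of the claim.

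Next I would specialize each side. For $t=t'$, both leaves carry the same type, so $P=Q=S_i^t$ (and likewise $P=Q=S_j^t$ on the other side); the set $\{\{i,j,w_k,w_r\}:w_k,w_r\in S_i^t,\;w_k\neq w_r,\;(w_k,w_r)\notin E\}$ is by definition $N_{S_i^t,S_i^t,0}^{e,t=t'}$, and the symmetric term for $j$ gives the first line of the corollary. For $t\neq t'$, Property~\ref{prop:t-equals-tprime-implies-P-equals-Q} forces $P\neq Q$, and the leaves are distinguishable by their types; hence taking $P=S_i^t$ and $Q=S_i^{t'}$ labels each unordered leaf pair exactly once, since $S_i^t\cap S_i^{t'}=\emptyset$ precludes any node from being chosen in both roles. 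The $i$-side contribution is then $N_{S_i^t,S_i^{t'},0}^{e,t\neq t'}$ and the $j$-side contribution is $N_{S_j^t,S_j^{t'},0}^{e,t\neq t'}$, yielding the second line.

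The one subtlety worth flagging as the main obstacle is the convention mismatch between the $P=Q$ and $P\neq Q$ regimes, made precise by Properties~\ref{prop:P-equals-Q} and~\ref{prop:P-not-equal-Q-mutually-exclusive}: in the $t=t'$ case the underlying enumeration behaves like an unordered choose-two over a single set, whereas in the $t\neq t'$ case it behaves like the product of two disjoint sets. As long as this distinction is respected when translating the set-builder expression for $g_5$ into $N$-notation, no instance is double-counted and none is missed, and the corollary follows from the definitions alone; unlike Lemma~\ref{lem:typed-4path-center-orbit-count}, this statement does not require Theorem~\ref{thm:general-prin-for-typed-graphlets}, because it merely rewrites a definition rather than trading one count for another.
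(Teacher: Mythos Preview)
Your proposal is correct and matches the paper's approach: the paper states this corollary without an explicit proof, treating it as an immediate restatement of the $g_5$ row of Table~\ref{table:typed-graphlet-equations} in the $N_{P,Q,[e']}^{e,\vt}$ notation, and your bookkeeping (the $S_i$/$S_j$ split via disjointness, and the $t=t'$ versus $t\neq t'$ case analysis governed by Properties~\ref{prop:P-equals-Q}--\ref{prop:P-not-equal-Q-mutually-exclusive}) fills in exactly the details the paper leaves implicit. Your observation that Theorem~\ref{thm:general-prin-for-typed-graphlets} is not needed here is also right: the corollary is purely definitional, and the theorem only enters later in the proof of Lemma~\ref{lem:typed-4star-count}.
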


\begin{cor}\label{cor:typed-4-tailed-tri-tail-edge-orbits}
For any edge $(i,j) \in E$ in $G$ with types $\phi_i$ and $\phi_j$, 
the number of typed tailed-triangles (tail-edge orbit) containing edge $(i,j)$ with type vector $\vt = \big[ \phi_i\; \phi_j\; t \; t^{\prime} \big]$ is
$N_{S_i^t,S_i^{t},1}^{e,t=t^{\prime}} + N_{S_j^t,S_j^{t},1}^{e,t=t^{\prime}}$ for $t=t^{\prime}$ and 
$N_{S_i^t,S_i^{t^{\prime}}\!\!,1}^{e,t\not=t^{\prime}} + N_{S_j^t,S_j^{t^{\prime}}\!\!,1}^{e,t\not=t^{\prime}}$ otherwise.
\end{cor}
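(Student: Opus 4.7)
The plan is to prove Corollary~\ref{cor:typed-4-tailed-tri-tail-edge-orbits} by direct translation of the definition of $f_{ij}(g_7,\vt)$ given in Table~\ref{table:typed-graphlet-equations} into the $N^{e,\vt}_{P,Q,1}$ language of Theorem~\ref{thm:general-prin-for-typed-graphlets}. Unlike Lemma~\ref{lem:typed-4path-center-orbit-count}, no inclusion--exclusion step is needed, because the defining condition for the tail-edge orbit already contains the edge indicator $(w_k,w_r)\in E$; the work is entirely in identifying which lower-order sets $P$ and $Q$ realize the pattern and in verifying that the two disjuncts in the definition do not double-count any induced subgraph.

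First I would note that a tailed-triangle with tail edge $(i,j)$ must have its triangle formed either by $\{i,w_k,w_r\}$ or by $\{j,w_k,w_r\}$, with the remaining endpoint of $(i,j)$ serving as the pendant. Using $S_i^{\,\cdot}=\Gamma_i^{\,\cdot}\setminus T_{ij}^{\,\cdot}$ from Corollary~\ref{cor:typed-3-star-node-centered-at-i}, these two cases translate exactly to the conditions $w_k,w_r\in S_i^{\,\cdot}$ with $(w_k,w_r)\in E$, or $w_k,w_r\in S_j^{\,\cdot}$ with $(w_k,w_r)\in E$. The two alternatives are mutually exclusive because nodes in $S_i^{\,\cdot}$ are by construction non-adjacent to $j$ while nodes in $S_j^{\,\cdot}$ are non-adjacent to $i$, so no 4-node induced subgraph can be counted under both disjuncts. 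For $t=t^{\prime}$, both neighbors share the common type $t$ and the admissible unordered adjacent pairs come from $S_i^t\times S_i^t$ or $S_j^t\times S_j^t$; by Property~\ref{prop:P-equals-Q} the quantity $N^{e,t=t^{\prime}}_{P,P,1}$ enumerates such pairs without ordering, so summing the two disjoint contributions yields $N^{e,t=t^{\prime}}_{S_i^t,S_i^{t}\!,1}+N^{e,t=t^{\prime}}_{S_j^t,S_j^{t}\!,1}$, the first branch of the claim. For $t\neq t^{\prime}$, the types themselves distinguish $w_k$ from $w_r$ and by Property~\ref{prop:P-not-equal-Q-mutually-exclusive} the sets $S_i^t$ and $S_i^{t^{\prime}}$ are disjoint (and likewise for $j$), so counting adjacent pairs with $w_k\in S_i^t,\,w_r\in S_i^{t^{\prime}}$ gives $N^{e,t\neq t^{\prime}}_{S_i^t,S_i^{t^{\prime}}\!,1}$ and the analogous count around $j$ gives $N^{e,t\neq t^{\prime}}_{S_j^t,S_j^{t^{\prime}}\!,1}$, delivering the second branch.

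The main obstacle I anticipate is purely bookkeeping: one must reconcile the unordered-set notation $\{i,j,w_k,w_r\}$ used in Table~\ref{table:typed-graphlet-equations} with the $\binom{|P|}{2}$ convention built into Property~\ref{prop:P-equals-Q} for the $t=t^{\prime}$ case, and separately confirm that no instance is simultaneously captured by the $S_i$-based and $S_j$-based families. Both points reduce to the observation that $S_i^{\,\cdot}\cap\Gamma_j^{\,\cdot}=\emptyset$ (and symmetrically for $j$), which is immediate from the definition of $S_i^t$. Once this disjointness is made explicit, the corollary follows by substitution and requires no further combinatorial identity beyond those already invoked for the 4-cycle and 4-path center orbit.
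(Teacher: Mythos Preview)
Your proposal is correct and follows the same approach the paper takes: the corollary is stated in the paper without a separate proof, being treated as a direct translation of the $f_{ij}(g_7,\vt)$ row in Table~\ref{table:typed-graphlet-equations} into the $N^{e,\vt}_{P,Q,1}$ notation, and your argument makes that translation explicit (including the disjointness verification the paper leaves implicit).
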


To count the typed 4-stars for a given edge $(i,j) \in E$ with types $\phi_i$ and $\phi_j$, we simply select the remaining two types denoted as $t$ and $t^{\prime}$ to obtain the 4-dimensional type vector $\vt = \big[\, \phi_i \;\, \phi_j \;\; t \;\; t^{\prime} \,\big]$.
We derive the typed 4-star counts with the type vector $\vt$ for edge $(i,j) \in E$ in constant time using Lemma~\ref{lem:typed-4star-count}.

\begin{lemma}
\label{lem:typed-4star-count}
For any edge $(i,j) \in E$ in $G$ with types $\phi_i$ and $\phi_j$ 
and any type vector $\vt = \big[ \phi_i\; \phi_j\; t \; t^{\prime} \big]$,
the relationship between 
the typed 4-star count $f_{ij}(g_{5}, \vt)$ and 
the typed tailed-triangle tail-edge orbit count $f_{ij}(g_{7},\vt)$ 
with type vector $\vt$ is
\begin{equation} \label{eq:typed-4-star}
\!\!\!\!\!\!
f_{ij}(g_5,\vt) = 
\begin{cases}
\mychoose{|S_{i}^{t}|}{2} + \mychoose{|S_{j}^{t}|}{2} - f_{ij}(g_{7}, \vt) 		& \text{if } t=t^{\prime} \\[7pt]
(|S_{i}^{t}| \cdot |S_{i}^{t^{\prime}}|) \;+ 	                                    & \text{otherwise} \\
(|S_{j}^{t}| \cdot |S_{j}^{t^{\prime}}|) - f_{ij}(g_{7}, \vt) & \\[2pt]
\end{cases}
\end{equation}\noindent
where $f_{ij}(g_{7},\vt)$ is the tailed-triangle tail-edge orbit count for edge $(i,j) \in E$ with type vector $\vt$.
\end{lemma}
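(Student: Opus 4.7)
The plan is to mirror the strategy of Lemma~\ref{lem:typed-4path-center-orbit-count}, applying Theorem~\ref{thm:general-prin-for-typed-graphlets} with an appropriate choice of $P$ and $Q$. The essential observation is that both a typed 4-star and a typed tailed-triangle (tail-edge orbit) centered on the edge $(i,j)$ arise from choosing two distinct neighbors on the \emph{same} side of the edge, i.e., both in $S_i$ or both in $S_j$; the only thing distinguishing the two graphlet classes is whether the chosen pair $(w_k,w_r)$ is itself an edge. Hence Theorem~\ref{thm:general-prin-for-typed-graphlets} with $P=Q=S_i^{t}$ (and then $P=Q=S_j^{t}$) produces exactly the decomposition we want, once we identify the $[e']=0$ side with $f_{ij}(g_5,\vt)$ and the $[e']=1$ side with $f_{ij}(g_7,\vt)$ via Corollary~\ref{cor:typed-4-stars} and Corollary~\ref{cor:typed-4-tailed-tri-tail-edge-orbits}.

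First I would handle the case $t=t^{\prime}$. By Corollary~\ref{cor:typed-4-stars}, $f_{ij}(g_5,\vt)=N_{S_i^t,S_i^{t},0}^{e,t=t^{\prime}}+N_{S_j^t,S_j^{t},0}^{e,t=t^{\prime}}$, and by Corollary~\ref{cor:typed-4-tailed-tri-tail-edge-orbits}, $f_{ij}(g_7,\vt)=N_{S_i^t,S_i^{t},1}^{e,t=t^{\prime}}+N_{S_j^t,S_j^{t},1}^{e,t=t^{\prime}}$. Summing the two instances of Theorem~\ref{thm:general-prin-for-typed-graphlets} gives
\[
f_{ij}(g_5,\vt) \;=\; N_{S_i^t,S_i^{t}}^{e,t=t^{\prime}}+N_{S_j^t,S_j^{t}}^{e,t=t^{\prime}} \;-\; f_{ij}(g_7,\vt).
\]
Here $P=Q$, so Property~\ref{prop:P-equals-Q} applies and the unrestricted counts are $\binom{|S_i^t|}{2}$ and $\binom{|S_j^t|}{2}$ respectively, which is exactly the stated formula. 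The factor $\tfrac{1}{2}$ (equivalently the binomial coefficient) is needed to avoid double counting the unordered pair $\{w_k,w_r\}$, since $w_k$ and $w_r$ play symmetric roles at the leaves of a 4-star (or at the two non-center endpoints of the tail-edge orbit).

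Next I would handle the case $t\neq t^{\prime}$. By Corollary~\ref{cor:typed-4-stars} and Corollary~\ref{cor:typed-4-tailed-tri-tail-edge-orbits},
\[
f_{ij}(g_5,\vt) = N_{S_i^t,S_i^{t^{\prime}}\!,0}^{e,t\neq t^{\prime}}+N_{S_j^t,S_j^{t^{\prime}}\!,0}^{e,t\neq t^{\prime}}, \qquad f_{ij}(g_7,\vt) = N_{S_i^t,S_i^{t^{\prime}}\!,1}^{e,t\neq t^{\prime}}+N_{S_j^t,S_j^{t^{\prime}}\!,1}^{e,t\neq t^{\prime}}.
\]
Summing both instances of Theorem~\ref{thm:general-prin-for-typed-graphlets} and invoking Property~\ref{prop:P-not-equal-Q-mutually-exclusive} (since $t\neq t^{\prime}$ forces $S_i^t\cap S_i^{t^{\prime}}=\emptyset$ and likewise for $S_j$) yields the unrestricted totals $|S_i^t|\cdot|S_i^{t^{\prime}}|$ and $|S_j^t|\cdot|S_j^{t^{\prime}}|$ without a factor of $\tfrac{1}{2}$, since the type mismatch now uniquely identifies the roles of $w_k$ and $w_r$ in the type vector $\vt$. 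Subtracting $f_{ij}(g_7,\vt)$ then gives the second branch of Eq.~\ref{eq:typed-4-star}.

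The main subtlety, and the place I would be most careful, is the interplay between the ordering convention in $\vt$ and the symmetry of the star/tail-edge orbits. In the $t=t^{\prime}$ case the two ``leaf'' positions are interchangeable, so one must count unordered pairs ($\binom{|S|}{2}$); in the $t\neq t^{\prime}$ case they are distinguishable by type, and the union $S_i^t\,\dot\cup\,S_i^{t^{\prime}}$ (and similarly for $S_j$) gives $|S_i^t|\cdot|S_i^{t^{\prime}}|$ ordered pairs naturally matching $\vt$. A secondary point worth verifying is that the two contributions (from $S_i$ and from $S_j$) do not overlap: since $S_i^t\cap S_j^t=\emptyset$ by Corollaries~\ref{cor:typed-3-star-node-centered-at-i} and the analogous statement for $S_j^t$, no 4-node subgraph is counted on both sides, so the sum is a true disjoint union and the accounting is exact.
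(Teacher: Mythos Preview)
Your proposal is correct and follows essentially the same route as the paper: the two-case split on $t=t'$ versus $t\neq t'$, the invocation of Theorem~\ref{thm:general-prin-for-typed-graphlets} for both the $S_i$ and $S_j$ contributions, the use of Property~\ref{prop:P-equals-Q} (for $P=Q$) and Property~\ref{prop:P-not-equal-Q-mutually-exclusive} (for $t\neq t'$) to evaluate the unrestricted counts, and the identification via Corollaries~\ref{cor:typed-4-stars} and~\ref{cor:typed-4-tailed-tri-tail-edge-orbits} all match the paper's proof. Your additional remarks on the ordering convention and the disjointness of the $S_i$ and $S_j$ contributions are sound and make explicit a point the paper leaves implicit.
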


\begin{proof}
Let $S_{i}^t$ and $S_{i}^{t^{\prime}}$ be the nodes that form typed 3-node stars with $(i,j) \in E$ of type $t$ and $t^{\prime}$ where node $i$ is the star-center node, respectively.
Similarly, $S_{i}^t$ and $S_{i}^{t^{\prime}}$ are the nodes that form typed 3-node stars with $(i,j) \in E$ of type $t$ and $t^{\prime}$ where node $j$ is the star-center node, respectively.

Assume $t=t^{\prime}$. From Theorem~\ref{thm:general-prin-for-typed-graphlets}, we have 
\begin{equation}\label{eq:proof-4-stars-and-tailed-triangles-tail-edge-orbit-relationship-t-equals-tp}
	N_{S_{i}^{t},S_{i}^{t},0}^{e,t=t^{\prime}} + N_{S_{j}^{t},S_{j}^{t},0}^{e,t=t^{\prime}} = \Big(N_{S_{i}^{t},S_{i}^{t}}^{e,t=t^{\prime}} + N_{S_{j}^{t},S_{j}^{t}}^{e,t=t^{\prime}}\Big) - 
	\Big(N_{S_{i}^{t},S_{i}^{t},1}^{e,t=t^{\prime}} + N_{S_{j}^{t},S_{j}^{t},1}^{e,t=t^{\prime}} \Big)
\end{equation}
Therefore, by Property~\ref{prop:P-equals-Q}, there are $N_{S_i^t\!,S_i^{t}}^{e,t=t^{\prime}} + N_{S_j^{t}\!,S_j^{t}}^{e,t=t^{\prime}} = \mychoose{|S_{i}^{t}|}{2} + \mychoose{|S_{j}^{t}|}{2}$ typed 4-node induced subgraphs that contain edge $e=(i,j)$ such that $w_k, w_r \in S_{i}^{t}$ \emph{or} $w_k, w_r \in S_{j}^{t}$ where $N_{S_i^t\!,S_i^{t}}^{e,t=t^{\prime}} = \mychoose{|S_{i}^{t}|}{2}$ and $N_{S_j^t\!,S_j^t}^{e,t=t^{\prime}} = \mychoose{|S_{j}^{t}|}{2}$.
From Corollary~\ref{cor:typed-4-stars}, the number of typed 4-stars that contain edge $e=(i,j)$ is $f_{ij}(g_{5}, \vt) = N_{S_{i}^{t},S_{i}^{t},0}^{e,t=t^{\prime}} + N_{S_{j}^{t},S_{j}^{t},0}^{e,t=t^{\prime}}$.
Similarly, from Corollary~\ref{cor:typed-4-tailed-tri-tail-edge-orbits}, the typed tailed-triangle tail-edge orbit count for edge $e$ is $f_{ij}(g_{7},\vt) = N_{S_{i}^{t},S_{i}^{t},1}^{e,t=t^{\prime}} + N_{S_{j}^{t},S_{j}^{t},1}^{e,t=t^{\prime}}$.
Therefore, by direct substitution in Eq.~\ref{eq:proof-4-stars-and-tailed-triangles-tail-edge-orbit-relationship-t-equals-tp}, we obtain Eq.~\ref{eq:typed-4-star}.

Assume $t\not=t^{\prime}$.
From Theorem~\ref{thm:general-prin-for-typed-graphlets}, we have 
\begin{equation} \label{eq:proof-4-stars-and-tailed-triangles-tail-edge-orbit-relationship-t-not-equals-tp}
N_{S_{i}^{t},S_{i}^{t^{\prime}},0}^{e,t\not=t^{\prime}} + N_{S_{j}^{t},S_{j}^{t^{\prime}},0}^{e,t\not=t^{\prime}} = \Big(N_{S_{i}^{t},S_{i}^{t^{\prime}}}^{e,t\not=t^{\prime}} + N_{S_{j}^{t},S_{j}^{t^{\prime}}}^{e,t\not=t^{\prime}}\Big) - 
	\Big(N_{S_{i}^{t},S_{i}^{t^{\prime}},1}^{e,t\not=t^{\prime}} + N_{S_{j}^{t},S_{j}^{t^{\prime}},1}^{e,t\not=t^{\prime}} \Big)
\end{equation}
Therefore, from Property~\ref{prop:P-not-equal-Q-mutually-exclusive}, there are $N_{S_i^t\!,S_i^{t^{\prime}}}^{e,t\not=t^{\prime}} + N_{S_j^{t}\!,S_j^{t^{\prime}}}^{e,t\not=t^{\prime}} = \big(|S_{i}^{t}| \cdot |S_{i}^{t^{\prime}}|\big) + \big(|S_{j}^{t}| \cdot |S_{j}^{t^{\prime}}|\big)$ typed 4-node induced subgraphs that contain edge $e=(i,j)$ such that $w_k \in S_{i}^{t}$, $w_r \in S_i^{t^{\prime}}$ \emph{or} $w_k \in S_{j}^{t}$, $w_r \in S_j^{t^{\prime}}$ where $N_{S_i^t\!,S_i^{t^{\prime}}}^{e,t\not=t^{\prime}} = |S_{i}^{t}| \cdot |S_{i}^{t^{\prime}}|$ and $N_{S_j^{t}\!,S_j^{t^{\prime}}}^{e,t\not=t^{\prime}} = |S_{j}^{t}| \cdot |S_{j}^{t^{\prime}}|$.
From Corollary~\ref{cor:typed-4-stars}, the number of typed 4-stars that contain edge $e=(i,j)$ is $f_{ij}(g_{5}, \vt) = N_{S_{i}^{t},S_{i}^{t^{\prime}},0}^{e,t\not=t^{\prime}} + N_{S_{j}^{t},S_{j}^{t^{\prime}},0}^{e,t\not=t^{\prime}}$.
Similarly, from Corollary~\ref{cor:typed-4-tailed-tri-tail-edge-orbits}, the typed tailed-triangle tail-edge orbit count for edge $e$ is $f_{ij}(g_{7},\vt) = N_{S_{i}^{t},S_{i}^{t^{\prime}},1}^{e,t\not=t^{\prime}} + N_{S_{j}^{t},S_{j}^{t^{\prime}},1}^{e,t\not=t^{\prime}}$.
Therefore, by direct substitution in Eq.~\ref{eq:proof-4-stars-and-tailed-triangles-tail-edge-orbit-relationship-t-not-equals-tp}, we obtain Eq.~\ref{eq:typed-4-star}.
\end{proof}
\noindent
The only path-based typed graphlet containing a triangle is the tailed-triangle tail-edge orbit.
Observe that this is the only orbit needed to derive the typed 4-star counts in constant time.

\subsubsection{Relationship between typed tailed-triangles (tri-edge orbit) and chordal-cycles (edge orbit)}
\label{sec:rel-tailed-tri-and-chordal-cycles}

\begin{cor}\label{cor:typed-4-tailed-tri-tri-edge-orbits}
For any edge $(i,j) \in E$ in $G$ with types $\phi_i$ and $\phi_j$, the number of typed tailed-triangle (paw) tri-edge orbits containing edge $(i,j)$ with type vector $\vt = \big[ \phi_i\; \phi_j\; t \; t^{\prime} \big]$ is
$N_{T_{ij}^{t}, S_i^t\!\vee S_j^{t},0}^{e,t=t^{\prime}}$ for $t=t^{\prime}$ and $N_{T_{ij}^{t}, S_i^{t^{\prime}}\!\vee S_j^{t^{\prime}},0}^{e,t\not=t^{\prime}} + N_{T_{ij}^{t^{\prime}}, S_i^{t}\!\vee S_j^{t},0}^{e,t\not=t^{\prime}}$ otherwise.
\end{cor}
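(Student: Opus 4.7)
The plan is to prove the corollary by a direct unpacking of the definition of $f_{ij}(g_9, \vt)$ provided in Table~\ref{table:typed-graphlet-equations} and matching the resulting expression with the general counting notation $N_{P,Q,0}^{e,\vt}$ from Eq.~\ref{eq:general-typed-graphlet-formulation}. The structural fact I would rely on is that in any tailed-triangle tri-edge orbit at $(i,j)$, exactly one of the two non-$\{i,j\}$ nodes closes a triangle with the edge (so it lies in $T_{ij}^{\cdot}$), while the other is adjacent to only $i$ or only $j$ (so it lies in $S_i^{\cdot} \cup S_j^{\cdot}$), and the two extra nodes are themselves non-adjacent (otherwise the induced subgraph would be a chordal-cycle edge orbit rather than a tailed-triangle).

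First I would treat the case $t = t^{\prime}$. Both extra nodes share the common type $t$, so the triangle-forming role is drawn from $T_{ij}^{t}$ and the tail-forming role is drawn from $S_i^{t} \cup S_j^{t}$. By definition $S_i^{t} = \Gamma_i^{t} \setminus T_{ij}^{t}$ and $S_j^{t} = \Gamma_j^{t} \setminus T_{ij}^{t}$, so $T_{ij}^{t}$ is disjoint from $S_i^{t} \cup S_j^{t}$; hence every tailed-triangle tri-edge orbit admits a unique assignment of its two non-$\{i,j\}$ nodes to the roles $(w_k, w_r)$. Combined with the non-edge condition $(w_k,w_r) \notin E$, this matches the definition of $N_{T_{ij}^{t},\, S_i^{t} \cup S_j^{t},\, 0}^{e,\, t=t^{\prime}}$ exactly.

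Next I would handle $t \neq t^{\prime}$, where the multiset of extra types is $\{t, t^{\prime}\}$. Because the triangle and tail roles are structurally distinguishable in a tailed-triangle orbit, the two extra types can be assigned to these roles in two distinct ways: either the triangle node has type $t$ and the tail node has type $t^{\prime}$, or vice versa. These two sub-cases are mutually exclusive (the triangle node's type differs between them), so by additivity the total count is the sum of the two corresponding expressions, yielding $N_{T_{ij}^{t},\, S_i^{t^{\prime}} \cup S_j^{t^{\prime}},\, 0}^{e,\, t \neq t^{\prime}} + N_{T_{ij}^{t^{\prime}},\, S_i^{t} \cup S_j^{t},\, 0}^{e,\, t \neq t^{\prime}}$, as claimed.

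The main (and only mild) obstacle is the careful handling of the asymmetric $t \neq t^{\prime}$ case: one must recognize that the type vector $\vt$ specifies only the multiset of types of the two extra nodes and not their specific assignment to structural roles. This is in contrast to symmetric graphlets such as the 4-star in Lemma~\ref{lem:typed-4star-count}, where the two extra nodes play interchangeable roles so that a single cross-product term already suffices. In the tailed-triangle tri-edge orbit, however, the triangle node and the tail node are distinguishable, and both valid type-to-role assignments must therefore be counted explicitly.
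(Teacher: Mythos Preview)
Your proposal is correct and matches the paper's treatment: the paper states this corollary without proof, treating it as an immediate consequence of the definition of $f_{ij}(g_9,\vt)$ in Table~\ref{table:typed-graphlet-equations} together with the general $N_{P,Q,[e']}^{e,\vt}$ notation from Eq.~\ref{eq:general-typed-graphlet-formulation}. Your explicit verification---identifying the triangle-forming node in $T_{ij}^{\cdot}$ and the tail node in $S_i^{\cdot}\cup S_j^{\cdot}$, using disjointness of these sets for uniqueness of role assignment, and in the $t\neq t'$ case splitting over which of the two distinguishable roles carries which type---is exactly the unpacking the paper leaves implicit.
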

\begin{cor}\label{cor:typed-4-chordal-cycle-edge-orbit}
For any edge $(i,j) \in E$ in $G$ with types $\phi_i$ and $\phi_j$, 
the typed chordal-cycle edge orbit count 
with type vector $\vt = \big[ \phi_i\; \phi_j\; t \; t^{\prime} \big]$ is 
$N_{T_{ij}^{t},S_i^{t}\! \vee S_j^{t},1}^{e,t=t^{\prime}}$ for $t=t^{\prime}$ and
$N_{T_{ij}^{t},S_i^{t^{\prime}}\! \vee S_j^{t^{\prime}}\!,1}^{e,t\not=t^{\prime}} + N_{T_{ij}^{t^{\prime}}\!,S_i^{t}\! \vee S_j^{t},1}^{e,t\not=t^{\prime}}$ otherwise.
\end{cor}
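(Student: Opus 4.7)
The plan is to prove Corollary~\ref{cor:typed-4-chordal-cycle-edge-orbit} by directly unpacking the definition of the chordal-cycle edge orbit count $f_{ij}(g_{10},\vt)$ given in Table~\ref{table:typed-graphlet-equations} and showing it matches the claimed $N$-expressions in each of the two cases. By that table, $f_{ij}(g_{10},\vt)$ counts $4$-node induced subgraphs $\{i,j,w_k,w_r\}$ on edge $(i,j)$ with one "triangle-node" $w_k \in T_{ij}$ and one "pendant-node" $w_r \in S_i \cup S_j$ connected by an edge $(w_k,w_r) \in E$. The key structural observation I will use repeatedly is that in any chordal-cycle (edge orbit) instance, the roles of $w_k$ and $w_r$ are structurally distinguishable: one is adjacent to both $i$ and $j$ (triangle), the other is adjacent to exactly one of $i$ or $j$ (pendant). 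Hence the unordered pair $\{w_k,w_r\}$ uniquely determines which node plays which role, avoiding any internal double counting.

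First I would handle $t = t^{\prime}$. With $\phi_{w_k} = \phi_{w_r} = t$, the constraints force $w_k \in T_{ij}^t$ and $w_r \in S_i^t \cup S_j^t$. Because $S_i^t = \Gamma_i^t \setminus T_{ij}^t$ and $S_j^t = \Gamma_j^t \setminus T_{ij}^t$, the sets $T_{ij}^t$ and $S_i^t \cup S_j^t$ are disjoint, so by Property~\ref{prop:P-not-equal-Q-mutually-exclusive} we may legitimately view them as the two sides $P$ and $Q$ in the general principle. Imposing $(w_k,w_r)\in E$ then gives exactly $N_{T_{ij}^{t}, S_i^{t}\vee S_j^{t},\,1}^{e,t=t^{\prime}}$, as claimed.

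Next I would treat $t \neq t^{\prime}$. Since the subgraph $\{i,j,w_k,w_r\}$ is unordered in $\{w_k,w_r\}$ but the type vector $\vt = \bigl[\phi_i\;\phi_j\;t\;t^{\prime}\bigr]$ only records the multiset of remaining types, an instance contributes iff the multiset $\{\phi_{w_k},\phi_{w_r}\} = \{t,t^{\prime}\}$. Combined with the role distinction above, there are exactly two disjoint sub-cases: (i) the triangle-node has type $t$ and the pendant has type $t^{\prime}$, contributing $N_{T_{ij}^{t}, S_i^{t^{\prime}}\vee S_j^{t^{\prime}},\,1}^{e,t\neq t^{\prime}}$; and (ii) the triangle-node has type $t^{\prime}$ and the pendant has type $t$, contributing $N_{T_{ij}^{t^{\prime}}, S_i^{t}\vee S_j^{t},\,1}^{e,t\neq t^{\prime}}$. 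Summing yields the stated expression.

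The main obstacle I expect is rigorously justifying the disjointness of the two sub-cases when $t \neq t^{\prime}$, i.e., ruling out that a single instance is counted by both $N_{T_{ij}^{t}, S_i^{t^{\prime}}\vee S_j^{t^{\prime}},\,1}^{e,t\neq t^{\prime}}$ and $N_{T_{ij}^{t^{\prime}}, S_i^{t}\vee S_j^{t},\,1}^{e,t\neq t^{\prime}}$. This reduces to showing that a given $4$-node chordal-cycle edge-orbit instance cannot simultaneously admit two distinct triangle/pendant labelings consistent with $\{t,t^{\prime}\}$: because the triangle role is determined by adjacency to both endpoints of $(i,j)$ while the pendant role is not, the labeling of $\{w_k,w_r\}$ by $\{t,t^{\prime}\}$ is forced once we know $\phi_{w_k}$ and $\phi_{w_r}$. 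Together with Property~\ref{prop:P-not-equal-Q-mutually-exclusive} (which ensures $T_{ij}^{\bullet}$ and $S_i^{\bullet}\cup S_j^{\bullet}$ are mutually exclusive even for the same type), this closes the argument.
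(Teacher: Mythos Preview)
Your proposal is correct and follows the natural route: the paper does not give an explicit proof of this corollary at all, treating it as an immediate reformulation of the entry for $f_{ij}(g_{10},\vt)$ in Table~\ref{table:typed-graphlet-equations} in the $N^{e,\vt}_{P,Q,[e']}$ notation of Section~\ref{sec:general-principle}. Your explicit unpacking (role-distinguishability of the triangle node versus the pendant node, disjointness of $T_{ij}^{\bullet}$ and $S_i^{\bullet}\cup S_j^{\bullet}$, and the two-case split for $t\neq t'$) is exactly the verification the paper leaves implicit, so there is nothing to compare.
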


\begin{lemma}
\label{lem:typed-4-tailed-triangle-triangle-edge-orbit}
For any edge $(i,j) \in E$ in $G$ with types $\phi_i$ and $\phi_j$ 
and any type vector $\vt = \big[ \phi_i\; \phi_j\; t \; t^{\prime} \big]$,
the relationship between 
the typed tailed-triangle tri-edge orbit count $f_{ij}(g_{9}, \vt)$ 
and 
the typed chordal-cycle edge orbit count $f_{ij}(g_{10},\vt)$ 
with type vector $\vt$ is
\begin{equation} \label{eq:typed-4-tailed-triangle-triangle-edge-orbit}
\!\!\!\!\!f_{ij}(g_9,\!\vt) \!=\! 
\begin{cases}
\!\big(|T_{ij}^{t}| \!\cdot\! (|S_{i}^{t}| + |S_{j}^{t}|)\big) \!- \!f_{ij}(g_{10}, \!\vt) 						& \!\text{if } t=t^{\prime} \\[5pt]
\!\big(|T_{ij}^{t}| \!\cdot\! (|S_{i}^{t^{\prime}}| + |S_{j}^{t^{\prime}}|)\big) \; + &  \!\text{otherwise} \\
\!\big(|T_{ij}^{t^{\prime}}| \!\cdot\! (|S_{i}^{t}| + |S_{j}^{t}|)\big) \!- \!f_{ij}(g_{10}, \!\vt)	 & \\[2pt]
\end{cases}
\end{equation}\noindent
where $f_{ij}(g_{10},\vt)$ is the chordal-cycle edge orbit count for edge $(i,j) \in E$ with type vector $\vt$.
\end{lemma}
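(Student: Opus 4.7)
The plan is to mirror the structure of the proofs of Lemmas~\ref{lem:typed-4path-center-orbit-count} and \ref{lem:typed-4star-count}, applying the general principle (Theorem~\ref{thm:general-prin-for-typed-graphlets}) to the pair $(P,Q)$ corresponding to the typed tailed-triangle (tri-edge) / chordal-cycle (edge) relationship. The guiding observation is that, given $e=(i,j)\in E$ and a fourth-node pair $\{w_k,w_r\}$ with $w_k\in T_{ij}^{t}$ and $w_r\in S_i^{t'}\cup S_j^{t'}$, the induced subgraph on $\{i,j,w_k,w_r\}$ is forced to be either (i) a tailed-triangle with the ``tail'' attached through the tri-edge orbit (when $(w_k,w_r)\notin E$), or (ii) a chordal-cycle containing $(i,j)$ through its edge orbit (when $(w_k,w_r)\in E$). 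Thus these two typed graphlets form a complementary pair relative to the Iverson indicator $[(w_k,w_r)\in E]$, which is exactly the situation in which Theorem~\ref{thm:general-prin-for-typed-graphlets} applies.

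First I would handle the case $t=t'$. By Corollary~\ref{cor:typed-4-tailed-tri-tri-edge-orbits} and Corollary~\ref{cor:typed-4-chordal-cycle-edge-orbit},
\[
f_{ij}(g_9,\vt)=N_{T_{ij}^{t},\,S_i^{t}\vee S_j^{t},\,0}^{e,t=t'}\quad\text{and}\quad f_{ij}(g_{10},\vt)=N_{T_{ij}^{t},\,S_i^{t}\vee S_j^{t},\,1}^{e,t=t'}.
\]
Applying Theorem~\ref{thm:general-prin-for-typed-graphlets} to $P=T_{ij}^{t}$ and $Q=S_i^{t}\cup S_j^{t}$ yields
\[
N_{T_{ij}^{t},\,S_i^{t}\vee S_j^{t},\,0}^{e,t=t'}=N_{T_{ij}^{t},\,S_i^{t}\vee S_j^{t}}^{e,t=t'}-N_{T_{ij}^{t},\,S_i^{t}\vee S_j^{t},\,1}^{e,t=t'}.
\]
It remains to evaluate the unrestricted count $N_{T_{ij}^{t},\,S_i^{t}\vee S_j^{t}}^{e,t=t'}$. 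Here I would invoke Property~\ref{prop:P-not-equal-Q-mutually-exclusive}, noting that $T_{ij}^{t}$, $S_i^{t}$, and $S_j^{t}$ are pairwise disjoint by their definitions ($S_i^{t}=\Gamma_i^{t}\setminus T_{ij}^{t}$, $S_j^{t}=\Gamma_j^{t}\setminus T_{ij}^{t}$), and so $|S_i^{t}\cup S_j^{t}|=|S_i^{t}|+|S_j^{t}|$ and every unordered pair $(w_k,w_r)$ with $w_k\in T_{ij}^{t}$, $w_r\in S_i^{t}\cup S_j^{t}$ is automatically distinct with $w_k\neq w_r$. This gives $|T_{ij}^{t}|\cdot(|S_i^{t}|+|S_j^{t}|)$, and direct substitution recovers the first branch of Eq.~\ref{eq:typed-4-tailed-triangle-triangle-edge-orbit}.

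For the case $t\neq t'$, I would split the count according to which of $w_k,w_r$ lies in $T_{ij}^{\cdot}$: either $w_k\in T_{ij}^{t},\,w_r\in S_i^{t'}\cup S_j^{t'}$ or $w_k\in T_{ij}^{t'},\,w_r\in S_i^{t}\cup S_j^{t}$. Applying the general principle to each piece and summing,
\[
N_{T_{ij}^{t},\,S_i^{t'}\vee S_j^{t'},\,0}^{e,t\neq t'}+N_{T_{ij}^{t'},\,S_i^{t}\vee S_j^{t},\,0}^{e,t\neq t'}=\Bigl(N_{T_{ij}^{t},\,S_i^{t'}\vee S_j^{t'}}^{e,t\neq t'}+N_{T_{ij}^{t'},\,S_i^{t}\vee S_j^{t}}^{e,t\neq t'}\Bigr)-\Bigl(N_{T_{ij}^{t},\,S_i^{t'}\vee S_j^{t'},\,1}^{e,t\neq t'}+N_{T_{ij}^{t'},\,S_i^{t}\vee S_j^{t},\,1}^{e,t\neq t'}\Bigr).
\]
By Property~\ref{prop:P-not-equal-Q-mutually-exclusive} (since $t\neq t'$ forces disjointness by Property~\ref{prop:t-equals-tprime-implies-P-equals-Q}), the unrestricted quantities evaluate to $|T_{ij}^{t}|\cdot(|S_i^{t'}|+|S_j^{t'}|)$ and $|T_{ij}^{t'}|\cdot(|S_i^{t}|+|S_j^{t}|)$, respectively. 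Using Corollaries~\ref{cor:typed-4-tailed-tri-tri-edge-orbits} and \ref{cor:typed-4-chordal-cycle-edge-orbit} to identify the left side with $f_{ij}(g_9,\vt)$ and the subtracted term with $f_{ij}(g_{10},\vt)$ gives the second branch of Eq.~\ref{eq:typed-4-tailed-triangle-triangle-edge-orbit}.

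The only mild subtlety, and the place I would be most careful, is justifying that the decomposition $T_{ij}^{t}\,\dot\cup\,S_i^{t}\,\dot\cup\,S_j^{t}$ is genuinely a disjoint union so that $|S_i^{t}\cup S_j^{t}|=|S_i^{t}|+|S_j^{t}|$ and no $\{w_k,w_r\}$ gets double-counted. This reduces to the set-theoretic identities $S_i^{t}\cap S_j^{t}=\emptyset$ and $T_{ij}^{t}\cap(S_i^{t}\cup S_j^{t})=\emptyset$, both immediate from the definitions $S_i^{t}=\Gamma_i^{t}\setminus T_{ij}^{t}$ and $S_j^{t}=\Gamma_j^{t}\setminus T_{ij}^{t}$ together with $T_{ij}^{t}=\Gamma_i^{t}\cap\Gamma_j^{t}$. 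Everything else is a direct application of the same template used to prove Lemmas~\ref{lem:typed-4path-center-orbit-count} and \ref{lem:typed-4star-count}.
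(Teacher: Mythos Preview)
Your proposal is correct and follows essentially the same approach as the paper's own proof: applying Theorem~\ref{thm:general-prin-for-typed-graphlets} with $P=T_{ij}^{t}$ and $Q=S_i^{t'}\cup S_j^{t'}$ (and the symmetric piece when $t\neq t'$), evaluating the unrestricted count via Property~\ref{prop:P-not-equal-Q-mutually-exclusive}, and identifying the restricted counts via Corollaries~\ref{cor:typed-4-tailed-tri-tri-edge-orbits} and~\ref{cor:typed-4-chordal-cycle-edge-orbit}. Your added remark on the pairwise disjointness of $T_{ij}^{t}$, $S_i^{t}$, $S_j^{t}$ is a welcome clarification that the paper leaves implicit.
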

\begin{proof}
Assume $t=t^{\prime}$.
From Theorem~\ref{thm:general-prin-for-typed-graphlets}, we have 
\begin{equation}\label{eq:proof-tailed-tri-and-chordal-cycle-relationship-t-equals-tp}
N_{T_{ij}^{t}, S_i^t\!\vee S_j^{t}\!,0}^{e,t=t^{\prime}} = N_{T_{ij}^{t}, S_i^t\!\vee S_j^{t}}^{e,t=t^{\prime}} - N_{T_{ij}^{t}, S_i^t\!\vee S_j^{t}\!,1}^{e,t=t^{\prime}}
\end{equation}
Let $N_{T_{ij}^{t}, S_i^t\!\vee S_j^{t}}^{e,t=t^{\prime}} = N_{T_{ij}^{t}, S_{i}^{t}}^{e,t=t^{\prime}} + N_{T_{ij}^{t}, S_{j}^{t}}^{e,t=t^{\prime}}$.
Since $N_{T_{ij}^{t}, S_i^t\!\vee S_j^{t}}^{e,t=t^{\prime}}$ is the number of typed 4-node induced subgraphs containing $e=(i,j)$ such that $w_k \in T_{ij}^{t}$ and $w_r \in S_i^t \cup S_j^t$, then $N_{T_{ij}^{t}, S_i^t\!\vee S_j^{t}}^{e,t=t^{\prime}} = |T_{ij}^{t}| \cdot (|S_i^t| + |S_j^t|)$ by Property~\ref{prop:P-not-equal-Q-mutually-exclusive}.
From Corollary~\ref{cor:typed-4-tailed-tri-tri-edge-orbits}, the number of typed tailed-triangles tri-edge orbits that contain $e=(i,j)$ is $f_{ij}(g_{9}, \vt) = N_{T_{ij}^{t},\!S_i^t\!\vee \!S_j^{t}\!,0}^{e,t=t^{\prime}}$.
From Corollary~\ref{cor:typed-4-chordal-cycle-edge-orbit}, the number of typed chordal-cycle edge orbits centered at edge $e$ is $f_{ij}(g_{10},\vt) = N_{T_{ij}^{t},\!S_i^t\!\vee \!S_j^{t}\!,1}^{e,t=t^{\prime}}$. 
Therefore, by direct substitution in Eq.~\ref{eq:proof-tailed-tri-and-chordal-cycle-relationship-t-equals-tp}, we obtain Eq.~\ref{eq:typed-4-tailed-triangle-triangle-edge-orbit}.

Assume $t\not=t^{\prime}$.
From Theorem~\ref{thm:general-prin-for-typed-graphlets}, we have 
$P=T_{ij}^{t}$ and $Q=S_{i}^{t^{\prime}} \cup S_{j}^{t^{\prime}}$ 
\emph{or} $P=T_{ij}^{t^{\prime}}$ and $Q=S_{i}^{t} \cup S_{j}^{t}$.
Therefore, 
\begin{equation} \label{eq:proof-tailed-tri-and-chordal-cycle-relationship-t-not-equals-tp}
N_{T_{ij}^{t},S_i^{t^{\prime}}\! \vee S_j^{t^{\prime}}\!\!,0}^{e,t\not=t^{\prime}} = \, N_{T_{ij}^{t},S_i^{t^{\prime}}\! \vee S_j^{t^{\prime}}}^{e,t\not=t^{\prime}} - N_{T_{ij}^{t},S_i^{t^{\prime}}\! \vee S_j^{t^{\prime}}\!\!,1}^{e,t\not=t^{\prime}}
\quad\;\;and\;\;\quad
N_{T_{ij}^{t^{\prime}}\!,S_i^{t}\! \vee S_j^{t},0}^{e,t\not=t^{\prime}} = \, N_{T_{ij}^{t^{\prime}}\!,S_i^{t}\! \vee S_j^{t}}^{e,t\not=t^{\prime}} - N_{T_{ij}^{t^{\prime}}\!,S_i^{t}\! \vee S_j^{t},1}^{e,t\not=t^{\prime}}
\end{equation}
By rewriting the above, 
\begin{equation}\label{eq:proof-tailed-tri-and-chordal-cycle-relationship-t-not-equals-tp-combined}
N_{T_{ij}^{t},S_i^{t^{\prime}}\! \vee S_j^{t^{\prime}}\!\!,0}^{e,t\not=t^{\prime}} + N_{T_{ij}^{t^{\prime}}\!,S_i^{t}\! \vee S_j^{t}\!,0}^{e,t\not=t^{\prime}} = \, 
\Big( N_{T_{ij}^{t},S_i^{t^{\prime}}\! \vee S_j^{t^{\prime}}}^{e,t\not=t^{\prime}} + N_{T_{ij}^{t^{\prime}}\!,S_i^{t}\! \vee S_j^{t}}^{e,t\not=t^{\prime}} \Big) - 
\Big( N_{T_{ij}^{t},S_i^{t^{\prime}}\! \vee S_j^{t^{\prime}}\!\!,1}^{e,t\not=t^{\prime}} + N_{T_{ij}^{t^{\prime}}\!,S_i^{t}\! \vee S_j^{t}\!,1}^{e,t\not=t^{\prime}} \Big)
\end{equation}
By Property~\ref{prop:P-not-equal-Q-mutually-exclusive}, there are $N_{T_{ij}^{t},S_i^{t^{\prime}}\! \vee S_j^{t^{\prime}}}^{e,t\not=t^{\prime}} + N_{T_{ij}^{t^{\prime}}\!,S_i^{t}\! \vee S_j^{t}}^{e,t\not=t^{\prime}} = |T_{ij}^{t}|(|S_{i}^{t^{\prime}}| + |S_{j}^{t^{\prime}}|) + |T_{ij}^{t^{\prime}}|(|S_{i}^{t}| + |S_{j}^{t}|)$ typed 4-node induced subgraphs that contain edge $e=(i,j)$ such that $w_k \in T_{ij}^{t}$ and $w_r \in S_i^{t^{\prime}}\! \cup S_j^{t^{\prime}}$ \emph{or} $w_k \in T_{ij}^{t^{\prime}}$ and $w_r \in S_i^{t}\! \cup S_j^{t}$.
From Corollary~\ref{cor:typed-4-chordal-cycle-edge-orbit}, the typed chordal-cycle edge orbit count for edge $e=(i,j)$ is $f_{ij}(g_{10}, \vt) = N_{T_{ij}^{t},S_i^{t^{\prime}}\! \vee S_j^{t^{\prime}}\!\!,1}^{e,t\not=t^{\prime}} + N_{T_{ij}^{t^{\prime}}\!,S_i^{t}\! \vee S_j^{t}\!,1}^{e,t\not=t^{\prime}}$.
Similarly, from Corollary~\ref{cor:typed-4-tailed-tri-tri-edge-orbits}, the typed tailed-triangle tri-edge orbit count for edge $e$ is $f_{ij}(g_{9},\vt) = N_{T_{ij}^{t},S_i^{t^{\prime}}\! \vee S_j^{t^{\prime}}\!\!,0}^{e,t\not=t^{\prime}} + N_{T_{ij}^{t^{\prime}}\!,S_i^{t}\! \vee S_j^{t}\!,0}^{e,t\not=t^{\prime}}$.
Therefore, by direct substitution in Eq.~\ref{eq:proof-tailed-tri-and-chordal-cycle-relationship-t-not-equals-tp-combined}, we obtain Eq.~\ref{eq:typed-4-tailed-triangle-triangle-edge-orbit}.

\end{proof}

\makeatletter
\global\let\tikz@ensure@dollar@catcode=\relax
\tikzstyle{solid}=                   [dash pattern=]
\tikzstyle{dotted}=                  [dash pattern=on \pgflinewidth off 2pt]
\tikzstyle{densely dotted}=          [dash pattern=on \pgflinewidth off 1pt]
\tikzstyle{loosely dotted}=          [dash pattern=on \pgflinewidth off 4pt]
\tikzstyle{dashed}=                  [dash pattern=on 3pt off 3pt]
\tikzstyle{densely dashed}=          [dash pattern=on 3pt off 2pt]
\tikzstyle{loosely dashed}=          [dash pattern=on 3pt off 6pt]
\tikzstyle{dashdotted}=              [dash pattern=on 3pt off 2pt on \the\pgflinewidth off 2pt]
\tikzstyle{densely dashdotted}=      [dash pattern=on 3pt off 1pt on \the\pgflinewidth off 1pt]
\tikzstyle{loosely dashdotted}=      [dash pattern=on 3pt off 4pt on \the\pgflinewidth off 4pt]
\makeatother

\tikzstyle{every node}=[font=\large,line width=1.5pt]
\begin{figure}[t!]
\centering
\begin{center}

\scalebox{1.00}{
\subfigure[Lemma~\ref{lem:typed-4path-center-orbit-count}]
{
\scalebox{0.32}{
\centering
\begin{tikzpicture}[-,>=latex,auto,node distance=2.2cm,thick,main node/.style={circle,draw=black,fill=black,draw,font=\sffamily\Huge\bfseries,text=white,
minimum width=1.03cm
}]

\node[main node] (1) [left=5pt] {$\mathbf{i}$};
\node[main node] (2) [right of=1,right=5pt] {$\mathbf{j}$};
\node[main node] (3) [above of=2, above=6pt] {$\mathbf{r}$};
\node[main node] (4) [above of=1, above=6pt] {$\mathbf{k}$};

\path[every node/.style={font=\sffamily\huge\bfseries}] 
(1) edge [line width=1.1mm, left] node [above left] {} (2) 
	(2)  edge [line width=1.1mm, right] node[below right] {} (3)
(3) edge [loosely dotted, line width=1.3mm, left] node[above] {$\mathbf{\![e^{\prime}]}$} (4)
	(1) edge [line width=1.1mm,right] node[] {} (4);
\end{tikzpicture}
}
\label{fig:lemma1}
}}
\hspace{6mm}
\scalebox{1.0}{
\subfigure[Lemma~\ref{lem:typed-4star-count}]
{
\scalebox{0.32}{
\centering
\begin{tikzpicture}[-,>=latex,auto,node distance=2.2cm,thick,main node/.style={circle,draw=black,fill=black,draw,font=\sffamily\Huge\bfseries,text=white,
minimum width=1.03cm
}]

\node[main node] (1) {$\mathbf{i}$};
\node[main node] (2) [right of=1, right=10pt] {$\mathbf{j}$};
\node[main node] (3) [above of=1, left=10pt] {$\mathbf{r}$};
\node[main node] (4) [left of=3, left=20pt, above=2pt] {$\mathbf{k}$};

\tikzstyle{LabelStyle}=[below=3pt]
\path[every node/.style={font=\sffamily\huge\bfseries}] 
(1) edge [line width=1.1mm, left] node [above left] {} (2) 
	(1)  edge [line width=1.1mm, right] node[below right] {} (3)
	(1) edge [line width=1.1mm,right] node[above right] {} (4)
	(3) edge [loosely dotted, line width=1.3mm,right] node[above] {$\huge\mathbf{\![e^{\prime}]}$} (4);
\end{tikzpicture}
}
\label{fig:lemma2}
}
}
\hspace{2mm}
\scalebox{1.0}{
\subfigure[Lemma~\ref{lem:typed-4-tailed-triangle-triangle-edge-orbit}]
{
\scalebox{0.32}{
\centering
\begin{tikzpicture}[-,>=latex,auto,node distance=2.2cm,thick,main node/.style={circle,draw=black,fill=black,draw,font=\sffamily\Huge\bfseries,text=white,
minimum width=1.03cm
}]

\node[main node] (1) {$\mathbf{i}$};
\node[main node] (3) [above right of=1] {$\mathbf{k}$};
\node[main node] (2) [below right of=3] {$\mathbf{j}$};
\node[main node] (4) [above left of=1, above=10pt] {$\mathbf{r}$};

\tikzstyle{LabelStyle}=[below=3pt]
\path[every node/.style={font=\sffamily\huge\bfseries}] 
(1) edge [line width=1.1mm, left] node [above left] {} (2) 
	(1)  edge [line width=1.1mm, right] node[below right] {} (3)
(2) edge [line width=1.1mm, left] node[below left] {} (3)
	(1) edge [line width=1.1mm,right] node[above right] {} (4)
	(3) edge [loosely dotted, line width=1.3mm,right] node[above] {$\mathbf{[e^{\prime}]}$} (4);
\end{tikzpicture}
}
\label{fig:lemma3}
}
}
\hspace{6mm}
\scalebox{1.00}{
\subfigure[Lemma~\ref{lem:typed-chordal-cycle-center-orbit-count}]
{
\scalebox{0.32}{
\centering
\begin{tikzpicture}[-,>=latex,auto,node distance=2.2cm,thick,
main node/.style={circle,draw=black,fill=black,draw,font=\sffamily\Huge\bfseries,text=white,minimum width=1.03cm},
white/.style={circle,draw=white,fill=white,draw,font=\sffamily,text=white,minimum width=1.001cm},
]
\node[main node] (1) {$\mathbf{i}$};
\node[main node] (3) [above right of=1] {$\mathbf{k}$};
\node[main node] (2) [below right of=3] {$\mathbf{j}$};
\node[main node] (4) [above of=3, above=12pt]{$\mathbf{r}$};

\tikzstyle{LabelStyle}=[below=3pt]
\path[every node/.style={font=\sffamily\huge\bfseries}] 
	(1) edge [bend left,line width=1.1mm] node[above right] {} (4)
	(2) edge [bend right,line width=1.1mm] node[above right] {} (4)
(1) edge [line width=1.1mm, left] node [above left] {} (2) 
	(1)  edge [line width=1.1mm, right] node[below right] {} (3)
(2) edge [line width=1.1mm, left] node[below left] {} (3)
	(3) edge [loosely dotted, line width=1.3mm,right] node[] {$\mathbf{[e^{\prime}]}$} (4);
\end{tikzpicture}
}
\label{fig:lemma4}
}
}
\end{center}

\vspace{-5mm}
\caption{
Combinatorial relationships of typed graphlets.
(a) Relationship between typed 4-paths and 4-cycles.
(b) Relationship between typed 4-stars and tailed-triangles (tail-edge orbit).
(c) Relationship between typed tailed-triangles (tri-edge orbit) and chordal-cycles (edge orbit).
(d) Relationship between typed 4-cliques and chordal-cycles (center orbit).
}
\label{fig:proof-correctness-examples}
\end{figure}
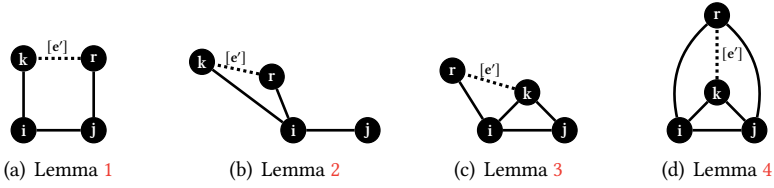

\subsubsection{Relationship between typed 4-cliques and chordal-cycles (center orbit)}

\begin{cor}\label{cor:typed-4-cliques}
For any edge $(i,j) \in E$ in $G$ with types $\phi_i$ and $\phi_j$, 
the number of typed 4-cliques containing edge $(i,j)$ with type vector $\vt = \big[ \phi_i\; \phi_j\; t \; t^{\prime} \big]$ is 
$N_{T_{ij}^{t},T_{ij}^{t},1}^{e,t=t^{\prime}}$ for $t=t^{\prime}$ and $N_{T_{ij}^{t},T_{ij}^{t^{\prime}},1}^{e,t\not=t^{\prime}}$ for $t\not= t^{\prime}$.
\end{cor}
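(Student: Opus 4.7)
The plan is to verify the claim by unfolding the definition of a typed 4-clique and matching it term-by-term against the definition of $N_{P,Q,1}^{e,\vt}$ from the general typed graphlet formulation. Unlike Lemmas~\ref{lem:typed-4path-center-orbit-count}--\ref{lem:typed-4-tailed-triangle-triangle-edge-orbit}, no inclusion--exclusion subtraction is needed here: since a 4-clique is the densest possible 4-node graphlet, it corresponds directly to the ``$(w_k,w_r)\in E$'' branch of the general principle, i.e.\ to the quantity with $[e']=1$.

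First I would observe that $\{i,j,w_k,w_r\}$ induces a typed 4-clique with type vector $\vt = [\phi_i\; \phi_j\; t\; t^{\prime}]$ iff all six pairs among the four nodes lie in $E$ and $\phi_{w_k}=t$, $\phi_{w_r}=t^{\prime}$. Since $(i,j)\in E$ is fixed, the remaining edge requirements split into two groups: the four ``spoke'' edges $(i,w_k),(j,w_k),(i,w_r),(j,w_r)\in E$, together with the ``chord'' edge $(w_k,w_r)\in E$. By Corollary~\ref{cor:typed-triangle-node}, the spoke conditions combined with the type constraints are equivalent to $w_k\in\Gamma_i^{t}\cap\Gamma_j^{t}=T_{ij}^{t}$ and $w_r\in\Gamma_i^{t^{\prime}}\cap\Gamma_j^{t^{\prime}}=T_{ij}^{t^{\prime}}$, while the chord condition is exactly the Iverson bracket $[e']=1$ appearing in the definition of $N_{P,Q,1}^{e,\vt}$. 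Hence every typed 4-clique incident to $(i,j)$ with type vector $\vt$ contributes once to $N_{T_{ij}^{t},T_{ij}^{t^{\prime}},1}^{e,\vt}$, and conversely every $\{w_k,w_r\}$ counted by this quantity yields such a 4-clique.

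Next I would split into the two cases in the statement. When $t=t^{\prime}$, both $w_k$ and $w_r$ are drawn from the single set $T_{ij}^{t}$; because the counting quantity is defined over unordered node sets $\{i,j,w_k,w_r\}$ (consistent with Property~\ref{prop:P-equals-Q}, whose $\binom{|P|}{2}$ form reflects exactly this convention), each 4-clique is counted once and $f_{ij}(g_{12},\vt)=N_{T_{ij}^{t},T_{ij}^{t},1}^{e,t=t^{\prime}}$ follows. When $t\neq t^{\prime}$, Property~\ref{prop:P-not-equal-Q-mutually-exclusive} gives $T_{ij}^{t}\cap T_{ij}^{t^{\prime}}=\emptyset$, so each 4-clique has a unique type-$t$ member and a unique type-$t^{\prime}$ member; the unordered/ordered distinction therefore collapses and $f_{ij}(g_{12},\vt)=N_{T_{ij}^{t},T_{ij}^{t^{\prime}},1}^{e,t\neq t^{\prime}}$. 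Note that, in contrast to the 4-cycle case (Corollary~\ref{cor:typed-4-cycles}), only a single term appears for $t\neq t^{\prime}$, because both non-edge-endpoints of a 4-clique play the same structural role with respect to $(i,j)$, whereas a 4-cycle has two distinguishable ``sides'' $S_i$ and $S_j$.

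The only real subtlety, and the one place I would double-check, is the $t=t^{\prime}$ case: one must confirm that the convention adopted for $N_{P,P,1}^{e,t=t^{\prime}}$ treats $\{w_k,w_r\}$ as an unordered pair (so that a single 4-clique is not counted twice once as $(w_k,w_r)$ and once as $(w_r,w_k)$). This is settled by Property~\ref{prop:P-equals-Q}, which is the $[e']$-agnostic analogue of what we need and is used throughout Section~\ref{sec:combinatorial-relationships}; with that convention in hand, the corollary follows immediately without any further symmetry-breaking argument.
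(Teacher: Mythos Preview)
Your proposal is correct and follows essentially the same reasoning the paper relies on: the paper states this corollary without proof, treating it as immediate from the definitions in Table~\ref{table:typed-graphlet-equations} (where the 4-clique row is characterized exactly by $w_k\in T_{ij}^t$, $w_r\in T_{ij}^{t'}$, $w_r\neq w_k$, and $(w_k,w_r)\in E$). Your unfolding of the spoke/chord conditions and the case split on $t=t'$ versus $t\neq t'$ is precisely the verification one would give, and your observation about why only one term is needed for $t\neq t'$ (in contrast to Corollary~\ref{cor:typed-4-cycles}) is a nice clarifying remark that the paper leaves implicit.
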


\begin{cor}\label{cor:typed-chordal-cycle-center-orbits}
For any edge $(i,j) \in E$ in $G$ 
with types $\phi_i$ and $\phi_j$, 
the number of typed chordal-cycles (center orbit) containing edge $(i,j)$ 
with type vector $\vt = \big[ \phi_i\; \phi_j\; t \; t^{\prime} \big]$ is 
$N_{T_{ij}^{t},T_{ij}^{t},0}^{e,t=t^{\prime}}$ for $t=t^{\prime}$ and $N_{T_{ij}^{t},T_{ij}^{t^{\prime}}\!,0}^{e,t\not=t^{\prime}}$ for $t\not= t^{\prime}$.
\end{cor}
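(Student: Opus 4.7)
My plan is to read the definition of the typed chordal-cycle center orbit straight off Table~\ref{table:typed-graphlet-equations} and match it to the general counting quantity $N_{P,Q,[e']}^{e,\vt}$ of Eq.~\ref{eq:general-typed-graphlet-formulation} with $P=T_{ij}^t$, $Q=T_{ij}^{t'}$, and $[e']=0$. From Table~\ref{table:typed-graphlet-equations}, $f_{ij}(g_{11},\vt)$ enumerates induced $4$-subgraphs $\{i,j,w_k,w_r\}$ with $w_k \in T_{ij}^t$, $w_r \in T_{ij}^{t'}$, $w_k \neq w_r$, and $(w_k,w_r) \not\in E$, so by construction it coincides with $N_{T_{ij}^t, T_{ij}^{t'}, 0}^{e,\vt}$. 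This is the exact analogue of Corollary~\ref{cor:typed-4-cliques}, the only change being that the adjacency indicator $[e']$ flips from $1$ (the $4$-clique) to $0$ (the center orbit).

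I would then split on whether $t=t'$. In the $t=t'$ case, both remaining vertices lie in the single typed triangle set $T_{ij}^t$, so the count is $N_{T_{ij}^t, T_{ij}^t, 0}^{e, t=t'}$; the unordered-pair interpretation formalised in Property~\ref{prop:P-equals-Q} takes care of the $w_k \neq w_r$ constraint. In the $t \neq t'$ case, Property~\ref{prop:t-equals-tprime-implies-P-equals-Q} together with Property~\ref{prop:P-not-equal-Q-mutually-exclusive} gives $T_{ij}^t \cap T_{ij}^{t'} = \emptyset$, so $w_k \neq w_r$ is automatic and each induced typed $4$-subgraph with one type-$t$ and one type-$t'$ triangle vertex is counted exactly once by $N_{T_{ij}^t, T_{ij}^{t'}, 0}^{e, t \neq t'}$.

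The only substantive point to argue carefully --- and the place where the proof departs from the pattern of Corollaries~\ref{cor:typed-4-cycles}, \ref{cor:typed-4-stars}, and \ref{cor:typed-4-tailed-tri-tri-edge-orbits} --- is why the $t\neq t'$ branch requires a single term rather than a symmetric sum of two terms like $N_{T_{ij}^t, T_{ij}^{t'},0}^{e,t\neq t'} + N_{T_{ij}^{t'}, T_{ij}^t,0}^{e,t\neq t'}$. The reason is structural: in the chordal-cycle center orbit both remaining vertices sit in $T_{ij}$, i.e.\ each is adjacent to \emph{both} $i$ and $j$, so the two roles are exchangeable and there is no analogue of the $S_i$ vs.\ $S_j$ asymmetry that forced the doubling in the earlier corollaries. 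The type pair $\{t,t'\}$ therefore already fixes the typed subgraph up to the labeling of the two symmetric $T_{ij}$ positions. Combining the two cases then yields the statement, and the structure of the argument parallels Corollary~\ref{cor:typed-4-cliques} verbatim except for the value of the adjacency indicator.
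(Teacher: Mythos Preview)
Your proposal is correct and follows essentially the same approach as the paper. The paper does not give a separate proof for this corollary; it simply remarks that the only difference from Corollary~\ref{cor:typed-4-cliques} is the flip of the adjacency indicator $(w_k,w_r)\in E$ to $(w_k,w_r)\notin E$, and that for $t\neq t'$ the constraint $w_r\neq w_k$ is automatic since $T_{ij}^{t}\cap T_{ij}^{t'}=\emptyset$ (Property~\ref{prop:P-not-equal-Q-mutually-exclusive}). Your additional discussion of why the $t\neq t'$ branch needs only a single term rather than a symmetric sum---namely the exchangeability of the two $T_{ij}$ positions, in contrast to the $S_i$/$S_j$ asymmetry in earlier corollaries---is a nice clarification that the paper leaves implicit.
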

Notice the only difference between Corollary~\ref{cor:typed-4-cliques} and~\ref{cor:typed-chordal-cycle-center-orbits} 
is that $(w_k,w_r) \in E$ must hold for counting typed 4-cliques for a given edge whereas 
for counting chordal-cycle center orbits $(w_k,w_r) \not\in E$ must hold.
Further, if $t\not=t^{\prime}$, then $w_r \not= w_k$ can be removed by Property~\ref{prop:P-not-equal-Q-mutually-exclusive} since by definition $T_{ij}^{t} \cap T_{ij}^{t^{\prime}} = \emptyset$.

\begin{lemma}
\label{lem:typed-chordal-cycle-center-orbit-count}
For any edge $(i,j) \in E$ in $G$ with types $\phi_i$ and $\phi_j$ 
and any type vector $\vt = \big[ \phi_i\; \phi_j\; t \; t^{\prime} \big]$,
the relationship between 
the typed 4-clique count $f_{ij}(g_{12}, \vt)$ 
and 
the typed chordal-cycle center orbit count $f_{ij}(g_{11},\vt)$ 
with type vector $\vt$ is
\begin{equation} \label{eq:typed-4-chordal-cycle-center-orbit}
f_{ij}(g_{11},\vt) = 
\begin{cases}
\mychoose{|T_{ij}^{t}|}{2} - f_{ij}(g_{12}, \vt) 				 			 	& \text{if } t=t^{\prime} \\[5pt]
\big(|T_{ij}^{t}| \cdot |T_{ij}^{t^{\prime}}|\big) - f_{ij}(g_{12}, \vt)	 	& \text{otherwise} \\[2pt]
\end{cases}
\end{equation}\noindent
where $f_{ij}(g_{12},\vt)$ is the typed 4-clique count for edge $(i,j) \in E$ with type vector $\vt$.
\end{lemma}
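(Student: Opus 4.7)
The plan is to mirror the structure of Lemma~\ref{lem:typed-4path-center-orbit-count} and Lemma~\ref{lem:typed-4-tailed-triangle-triangle-edge-orbit} almost verbatim, specializing the general principle of Theorem~\ref{thm:general-prin-for-typed-graphlets} to the choice $P = T_{ij}^{t}$ and $Q = T_{ij}^{t^{\prime}}$. The key observation is that both endpoints $w_k, w_r$ of the fourth node are drawn from the \emph{same family} of sets $\{T_{ij}^{t}\}_{t}$, which will make the derivation strictly simpler than the 4-path case: we will not need two symmetric terms as in Eq.~\ref{eq:typed-4-path-center-orbit}, because $T_{ij}^{t}\cap T_{ij}^{t^{\prime}} = \emptyset$ whenever $t \neq t^{\prime}$, so there is no risk of double counting unordered pairs with mixed types.

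First I would split into the two cases. For $t = t^{\prime}$, take $P = Q = T_{ij}^{t}$. By Property~\ref{prop:P-equals-Q}, the unrestricted count of unordered pairs $\{w_k,w_r\}$ with both nodes in $T_{ij}^{t}$ is $N_{T_{ij}^{t},T_{ij}^{t}}^{e,t=t^{\prime}} = \binom{|T_{ij}^{t}|}{2}$. By Corollary~\ref{cor:typed-4-cliques}, among these the ones with $(w_k,w_r)\in E$ are exactly the typed 4-cliques, so $N_{T_{ij}^{t},T_{ij}^{t},1}^{e,t=t^{\prime}} = f_{ij}(g_{12},\vt)$, and by Corollary~\ref{cor:typed-chordal-cycle-center-orbits} those with $(w_k,w_r)\notin E$ are exactly the typed chordal-cycle center orbits, so $N_{T_{ij}^{t},T_{ij}^{t},0}^{e,t=t^{\prime}} = f_{ij}(g_{11},\vt)$. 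Plugging into Theorem~\ref{thm:general-prin-for-typed-graphlets} yields the first branch of Eq.~\ref{eq:typed-4-chordal-cycle-center-orbit}.

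Next, for $t \neq t^{\prime}$, take $P = T_{ij}^{t}$ and $Q = T_{ij}^{t^{\prime}}$. Since $t \neq t^{\prime}$ implies $P \cap Q = \emptyset$ (Property~\ref{prop:P-not-equal-Q-mutually-exclusive}), every pair $(w_k,w_r)$ with $w_k\in P, w_r\in Q$ automatically satisfies $w_k\neq w_r$, and each such pair corresponds to a unique typed 4-node induced subgraph with type vector $\vt$; thus $N_{T_{ij}^{t},T_{ij}^{t^{\prime}}}^{e,t\neq t^{\prime}} = |T_{ij}^{t}|\cdot |T_{ij}^{t^{\prime}}|$. Again invoking Corollary~\ref{cor:typed-4-cliques} and Corollary~\ref{cor:typed-chordal-cycle-center-orbits} to identify $N_{\cdot,\cdot,1}$ and $N_{\cdot,\cdot,0}$ with $f_{ij}(g_{12},\vt)$ and $f_{ij}(g_{11},\vt)$ respectively, Theorem~\ref{thm:general-prin-for-typed-graphlets} delivers the second branch of Eq.~\ref{eq:typed-4-chordal-cycle-center-orbit}.

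The only subtle point, which is really the \emph{main thing to justify carefully} rather than a genuine obstacle, is why the $t\neq t^{\prime}$ case contains a single product term rather than the symmetrized sum $|T_{ij}^{t}|\cdot|T_{ij}^{t^{\prime}}| + |T_{ij}^{t^{\prime}}|\cdot|T_{ij}^{t}|$ that appears for 4-paths. I would address this by noting that in Eq.~\ref{eq:typed-4-path-center-orbit} the two sets $S_i^{t}$ and $S_j^{t^{\prime}}$ play \emph{distinct structural roles} (one attaches to $i$, one to $j$), so swapping the type labels produces a genuinely different configuration and must be added. Here both $w_k$ and $w_r$ attach symmetrically to both $i$ and $j$ through $T_{ij}$, so the unordered pair $\{w_k,w_r\}$ of types $\{t,t^{\prime}\}$ is already counted exactly once by $|T_{ij}^{t}|\cdot|T_{ij}^{t^{\prime}}|$. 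This argument closes the proof.
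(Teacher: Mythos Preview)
Your proposal is correct and follows essentially the same approach as the paper's own proof: both split into the cases $t=t^{\prime}$ and $t\neq t^{\prime}$, invoke Theorem~\ref{thm:general-prin-for-typed-graphlets} with $P=T_{ij}^{t}$ and $Q=T_{ij}^{t^{\prime}}$, apply Property~\ref{prop:P-equals-Q} or Property~\ref{prop:P-not-equal-Q-mutually-exclusive} to evaluate the unrestricted count, and identify the restricted counts via Corollaries~\ref{cor:typed-4-cliques} and~\ref{cor:typed-chordal-cycle-center-orbits}. Your additional paragraph explaining why no symmetrized sum is needed (in contrast to Lemma~\ref{lem:typed-4path-center-orbit-count}) is a helpful clarification that the paper does not make explicit.
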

\begin{proof}
Let $T_{ij}^t$ and $T_{ij}^{t^{\prime}}$ be the nodes that form typed triangles with $(i,j) \in E$ of type $t$ and $t^{\prime}$, respectively.
There are again two cases.

Assume $t=t^{\prime}$.
From Theorem~\ref{thm:general-prin-for-typed-graphlets}, we have 
\begin{equation}\label{eq:proof-chordal-cycle-center-and-4-clique-relationship-t-equals-tp}
N_{T_{ij}^{t},T_{ij}^{t},0}^{e,t=t^{\prime}} = N_{T_{ij}^{t},T_{ij}^{t}}^{e,t=t^{\prime}} - N_{T_{ij}^{t},T_{ij}^{t},1}^{e,t=t^{\prime}}
\end{equation}
Since $t = t^{\prime}$, then $P=T_{ij}^t$ and $Q=T_{ij}^t$, hence $P=Q$.
Therefore, by Property~\ref{prop:P-equals-Q}, there are $N_{T_{ij}^{t},T_{ij}^{t}}^{e,t=t^{\prime}} = \mychoose{|T_{ij}^{t}|}{2}$ typed 4-node induced subgraphs that contain edge $e=(i,j)$ such that $w_k \in T_{ij}^{t}$ and $w_r \in T_{ij}^{t}$.
From Corollary~\ref{cor:typed-4-cliques}, the number of typed 4-cliques that contain edge $e=(i,j)$ is $f_{ij}(g_{12}, \vt) = N_{T_{ij}^{t},T_{ij}^{t},1}^{e,t=t^{\prime}}$.
Similarly, from Corollary~\ref{cor:typed-chordal-cycle-center-orbits}, the typed chordal-cycle center orbit count for edge $e$ is $f_{ij}(g_{11},\vt) = N_{T_{ij}^{t},T_{ij}^{t},0}^{e,t=t^{\prime}}$.
Therefore, by direct substitution in Eq.~\ref{eq:proof-chordal-cycle-center-and-4-clique-relationship-t-equals-tp}, we obtain Eq.~\ref{eq:typed-4-chordal-cycle-center-orbit}.

Assume $t\not=t^{\prime}$.
From Theorem~\ref{thm:general-prin-for-typed-graphlets}, we have 
$P=T_{ij}^{t}$ and $Q=T_{ij}^{t^{\prime}}$, therefore
\begin{equation} \label{eq:proof-chordal-cycle-center-and-4-clique-relationship-t-not-equals-tp}
N_{T_{ij}^{t},T_{ij}^{t^{\prime}},0}^{e,t\not=t^{\prime}} = \, N_{T_{ij}^{t},T_{ij}^{t^{\prime}}}^{e,t\not=t^{\prime}} - N_{T_{ij}^{t},T_{ij}^{t^{\prime}},1}^{e,t\not=t^{\prime}}
\end{equation}
By Property~\ref{prop:P-not-equal-Q-mutually-exclusive}, there are $N_{T_{ij}^{t},T_{ij}^{t^{\prime}}}^{e,t\not=t^{\prime}} = |T_{ij}^{t}|\cdot |T_{ij}^{t^{\prime}}|$ typed 4-node induced subgraphs that contain edge $e=(i,j)$ such that $w_k \in T_{ij}^{t}$ and $w_r \in T_{ij}^{t^{\prime}}$.
From Corollary~\ref{cor:typed-4-cliques}, the number of typed 4-cliques that contain edge $e=(i,j)$ is $f_{ij}(g_{12}, \vt) = N_{T_{ij}^{t},T_{ij}^{t^{\prime}},1}^{e,t\not=t^{\prime}}$.
Similarly, from Corollary~\ref{cor:typed-chordal-cycle-center-orbits}, the typed chordal-cycle center orbit count for edge $e$ is $f_{ij}(g_{11},\vt) = N_{T_{ij}^{t},T_{ij}^{t^{\prime}},0}^{e,t\not=t^{\prime}}$.
Therefore, by direct substitution in Eq.~\ref{eq:proof-chordal-cycle-center-and-4-clique-relationship-t-not-equals-tp} we obtain Eq.~\ref{eq:typed-4-chordal-cycle-center-orbit}.

\end{proof}

{
\algblockdefx[parallel]{parfor}{endpar}[1][]{$\textbf{parallel for}$ #1 $\textbf{do}$}{$\textbf{end parallel}$}
\algrenewcommand{\alglinenumber}[1]{\fontsize{7.5}{8.5}\selectfont#1\;}
\begin{figure}[h!]
\begin{center}
\begin{algorithm}[H]
\caption{\,\small
Update Typed Graphlets.
Add typed graphlet with hash $\hash$ to $\mathcal{M}_{ij}$ if $\hash \not\in \mathcal{M}_{ij}$ and increment $\vx_{\hash}$ (frequency of that typed graphlet for a given edge).
}
\label{alg:updated-typed-motifs}
\begin{spacing}{1.15}
\small
\begin{algorithmic}[1]
\Procedure {Update}{$\vx$, $\mathcal{M}_{ij}$, $\hash = \mathbb{F}\,(g, \Phi_i, \Phi_j, \Phi_k, \Phi_r)$}
\If{$\hash \not\in \mathcal{M}_{ij}$} \label{algline:check-if-typed-motif-already-added}
$\mathcal{M}_{ij} \leftarrow \mathcal{M}_{ij} \cup \{\hash\}$ and set $\vx_{\hash} = 0$
\EndIf
\vspace{-1.1mm}
\State $\vx_{\hash} = \vx_{\hash} + 1$ \label{algline:update-typed-motif-frequency}
\State {\bf return} updated set of typed graphlets $\mathcal{M}_{ij}$ and counts $\vx$
\EndProcedure
\medskip
\end{algorithmic}
\end{spacing}
\vspace{-1.mm}
\end{algorithm}
\end{center}
\end{figure}
}

\subsection{From \emph{Typed} Orbits to Graphlets}
\noindent
Counts of the \emph{typed graphlets} for each edge $(i,j) \in E$ can be derived from the \emph{typed graphlet orbits} using the following equations:
\begin{align}
& f_{ij}(h_{3}, \vt) = f_{ij}(g_{3}, \vt) + f_{ij}(g_{4}, \vt) \\
& f_{ij}(h_{4}, \vt) = f_{ij}(g_{5}, \vt) \\
& f_{ij}(h_{5}, \vt) = f_{ij}(g_{6}, \vt) \\
& f_{ij}(h_{6}, \vt) = f_{ij}(g_{7}, \vt) + f_{ij}(g_{8}, \vt) + f_{ij}(g_{9}, \vt) \\
& f_{ij}(h_{7}, \vt) = f_{ij}(g_{10}, \vt) + f_{ij}(g_{11}, \vt) \\
& f_{ij}(h_{8}, \vt) = f_{ij}(g_{12}, \vt) 
\end{align}\noindent
where $h_{}$ is the graphlet without considering the orbit (Table~\ref{table:typed-graphlet-equations}).

\subsection{\emph{Typed} Graphlet Hash Functions} \label{sec:typed-motif-hash-function}
\noindent
Given a general heterogeneous graph with $L$ unique types such that $L<10$, then a simple and efficient typed graphlet hash function $\mathbb{F}$ is defined as follows:
\begin{equation}\label{eq:simple-typed-motif-hash-function}
\mathbb{F}(g,\vt) = g10^4 + t_1 10^3 + t_2 10^2 + t_3 10^1 + t_4
\end{equation}\noindent
where $g$ encodes the $k$-node graphlet orbit (\eg, 4-path center)
and $t_1$, $t_2$, $t_3$, $t_4$ encode the type of the nodes in $H \in \mathcal{H}$ with type vector $\vt = \big[ t_1 \; t_2 \; t_3 \; t_4 \big]$.
Since the maximum hash value resulting from Eq.~\ref{eq:simple-typed-motif-hash-function} is small (and fixed for any arbitrarily large graph $G$), we can leverage a perfect hash table to allow for fast $o(1)$ constant time lookups to determine if a typed graphlet was previously found or not as well as updating the typed graphlet count in $o(1)$ constant time.
For $k$-node graphlets where $k<4$, we simply set the last $4-k$ types to $0$.
Note the simple typed graphlet hash function defined above can be extended trivially to handle graphs with $L \geq 10$ types:
\begin{equation} \label{eq:simple-typed-motif-hash-function-10-or-more}
\mathbb{F}(g,\vt) = g10^8 + t_1 10^6 + t_2 10^4 + t_3 10^2 + t_4
\end{equation}
In general, any non-cryptographic hash function $\mathbb{F}$ can be used (see~\citet{chi2017hashing} for some other possibilities).
Thus, the approach is independent of $\mathbb{F}$ and can always leverage the best known hash function.
The only requirement of the hash function is that it is invertible $\mathbb{F}^{-1}$.

Thus far we have not made any assumption on the ordering of types in $\vt$.
As such, the hash function $\mathbb{F}$ discussed above can be used directly in the framework for counting typed graphlets such that the type structure and position are preserved (See Section~\ref{sec:position-aware-typed-graphlets} for further discussion on position-aware typed graphlets).
However, since we are interested in counting all typed graphlets $\wrt$ Definition~\ref{def:typed-graphlet-instance}, 
then we map all such orderings of the types in $\vt$ to the same hash value using a precomputed hash table.
This allows us to obtain the unique hash value in $o(1)$ constant time for any ordering of the types in $\vt$.
In our implementation, we compute $s = t_1 10^3 + t_2 10^2 + t_3 10^1 + t_4$ and then use $s$ as an index into the precomputed hash table to obtain the unique hash value $c$ in $o(1)$ constant time.

\begin{figure}[h!]
\centering
\includegraphics[width=0.4\linewidth]{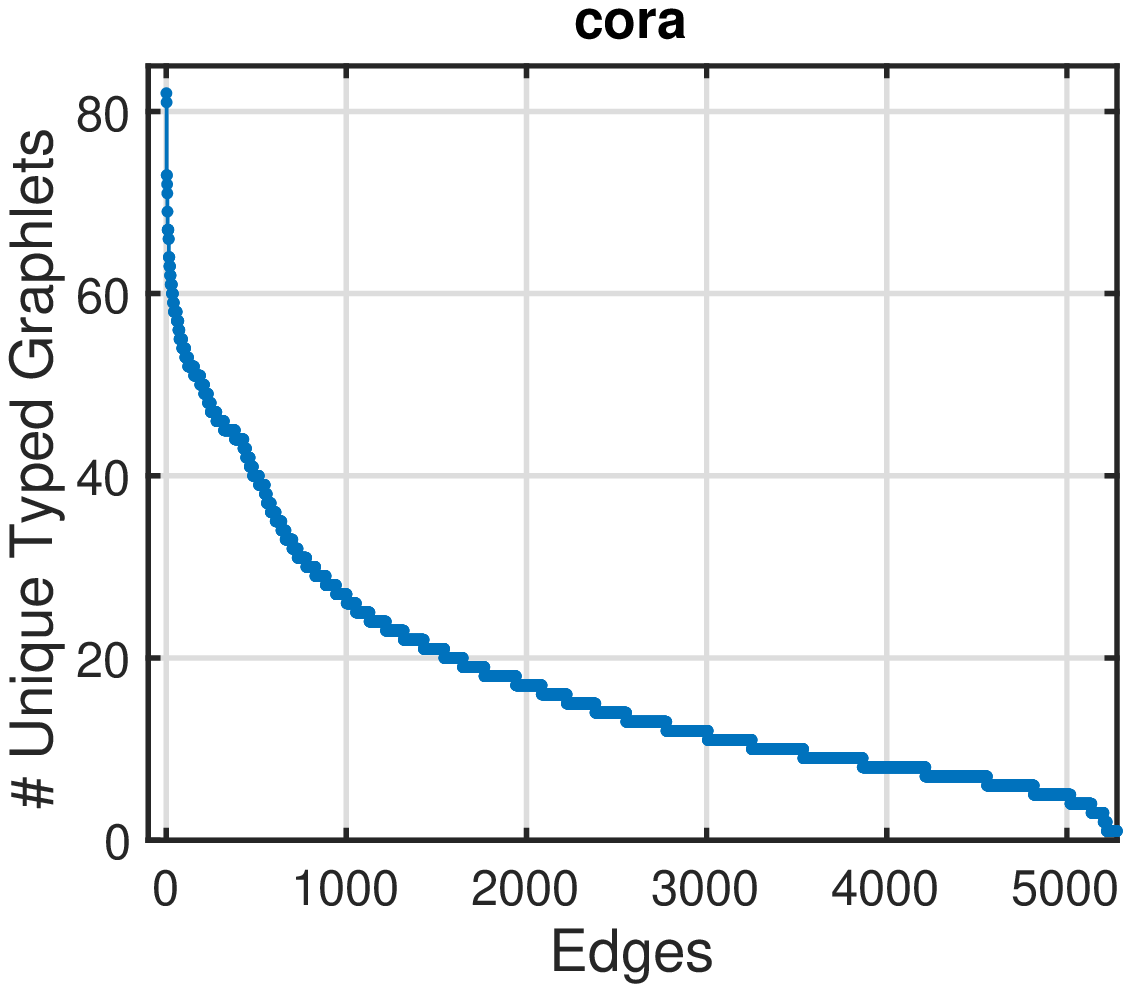}
\caption{
Distribution of unique typed graphlets that occur on the edges.
This experiment considers all typed graphlets of $\{3,4\}$-nodes.
Among the 1428 possible unique typed graphlets that could arise in $G$, there are only 876 unique typed graphlets that actually occur (at least once at an edge in $G$).
Even more striking, the maximum unique typed graphlets that occur on any edge in $G$ (cora) is only 82.
Overall, the mean number of unique typed graphlets over all edges in $G$ is 17,  \ie, only about 1.1\% of the possible typed graphlets.
These results indicate the significance of only a few typed graphlets as the vast majority of the typed graphlet counts for any arbitrary edge is zero.
Thus, the space required by the approach is nearly-optimal.
}
\label{fig:unique-typed-motif-edges-cora}
\end{figure}

\subsection{Sparse Typed Graphlet Format} \label{sec:sparse-typed-motif-format}
This section describes a space-efficient representation for typed graphlets based on a key observation.
\begin{Property}\label{prop:small-frac-typed-motifs}
Let $T$ denote the number of \emph{unique typed graphlets} that appear in an arbitrary graph $G$ with $L$ types.
Assuming the graph $G$ has a skewed degree distribution,
then most edges in $G$ appear in only a small fraction of the $T$ actual typed graphlets that can occur.
\end{Property}\noindent
This property is shown empirically in Figure~\ref{fig:unique-typed-motif-edges-cora} and 
implies that using a $M \times T$ matrix to store the typed graphlet counts is far from optimal in terms of the space required due to most of the $T$ typed graphlet counts being zero for any given edge. 
Based on this observation, we ensure the proposed approach uses near-optimal space by storing only the typed graphlets with nonzero counts for each edge $(i,j) \in E$ in the graph.
Typed graphlet counts are stored in a sparse format since it would be impractical in large graphs to store all typed graphlets as there can easily be hundreds of thousands depending on the number of types in the input graph. 

For each edge, we store only the nonzero typed graphlet counts along with the unique ids associated with them.
The unique ids allow us to map the nonzero counts to the actual typed graphlets.
We also developed a space-efficient format for storing the resulting typed graphlet counts to disk.
Instead of using the typed graphlet hash as the unique id, we remap the typed graphlets to smaller consecutive ids (starting from $1$) to reduce the space requirement even further.
Finally, we store a typed graphlet lookup table that maps a given graphlet id to its description and is useful for looking up the meaning of the typed graphlets discovered.

\subsection{Parallelization} \label{sec:parallel}
\noindent
We now describe a parallelization strategy for the proposed typed graphlet counting approach.
While our implementation uses shared memory, the parallelization is described generally such that it can be used with a distributed-memory architecture as well.
As such, our discussion is on the general scheme.
The parallel constructs we use are a worker task-queue and a global broadcast channel. 
Here, we assume that each worker has a copy of the graph and distribute edges to workers to find the typed graphlet counts that node $i$ and $j$ participate.
At this point, we view the main while loop as a task
generator and farm the current edge out to a worker to find the typed graphlet counts that co-occur between node $i$ and node $j$.
The approach is lock free since each worker uses the same graphlet hash function to obtain a unique hash value for every typed graphlet.
Thus, each worker can simply maintain the typed graphlets identified and their counts for every edge assigned to it.
In our own shared memory implementation, we avoid some of the communications by using global arrays and avoiding locked updates to them by using a unique edge id.
Counting typed graphlets on the edges as opposed to the nodes also has computational advantages with respect to parallelization and in particular load balancing.
Let $x_i$ and $x_{ij}$ denote the node and edge count of an arbitrary graphlet $H$.
Since $|E| \gg |V|$ and $\sum_{i \in V} x_{i} = \sum_{(i,j) \in E} x_{ij}$, then $\frac{1}{|V|}\sum_{i \in V} x_{i} < \frac{1}{|E|}\sum_{(i,j) \in E} x_{ij}$.
Hence, more work per vertex is required than per edge.
Therefore, counting typed graphlets on the edges is guaranteed to have better load balancing than node-centric algorithms.

\subsection{Discussion} 
\label{sec:framework-discussion}
This work formalized the notion of typed graphlet and provided a time- and space-efficient framework for counting all $\{2,3,4\}$-node typed graphlets.
Counting typed graphlets of a larger size is outside the scope of this paper and left for future work.
However, the ideas introduced in this paper can be used to extend and derive equations for typed graphlets of 5-nodes and larger.
In particular, Theorem~\ref{thm:general-prin-for-typed-graphlets} states the general principle of counting typed graphlets which is based on inclusion-exclusion, and therefore is straightforward to apply to typed graphlets of larger sizes.
This would follow directly from recent work~\cite{dave2017clog,PinarWWW17} that extended the ideas of \citet{pgd,pgd-kais} to 5-node untyped graphlets.
For instance, just as we did in this work, the node sets used to derive the different 5-node untyped graphlet counts in~\cite{dave2017clog,PinarWWW17} are further partitioned into subsets where each subset represents nodes of the same type,
\eg, just as a node $w \in T_{ij}^t$ is a node of type $t$ that forms a triangle with $i$ and $j$.
Afterwards, we can derive typed equations just as we did in this work to handle the different cases, \ie, when all types are the same vs. when they are different, and so on.
Nevertheless, counting typed graphlets of 5 nodes and larger is outside the scope of this work and left for future work.

The approach is also straightforward to adapt for directed typed graphlets.
In particular, we simply replace $\Gamma_i^{t}$ with $\Gamma_i^{t,+}$ and $\Gamma_i^{t,-}$ for typed out-neighbors and typed in-neighbors, respectively.
Thus, we also have $T_{ij}^{t,+}$, $T_{ij}^{t,-}$, $S_j^{t,+}$, $S_j^{t,-}$, $S_i^{t,+}$, and $S_i^{t,-}$.
Now it is just a matter of enumerating all combinations of these sets with the out/in-neighbor sets as well.
That is, we essentially have two additional versions of Algorithm~\ref{alg:typed-motifs-exact} and Algorithm~\ref{alg:typed-path-based-motifs-exact}-\ref{alg:typed-triangle-based-motifs-exact} for each in and out set (\wrt to the main for loop).
The other trivial modification is to ensure each directed typed graphlet is assigned a unique id (this is the same modification required for typed orbits).
The time and space complexity remains the same since all we did is split the set of neighbors (and the other sets) into two smaller sets by partitioning the nodes in $\Gamma_i^{t}$ into $\Gamma_i^{t,+}$ and $\Gamma_i^{t,-}$.
Similarly, for $T_{ij}^{t}$, $S_j^{t}$, and $S_i^{t}$.

\section{Position-Aware Typed Graphlets} \label{sec:position-aware-typed-graphlets}
\subsection{Formulation} \label{sec:position-aware-typed-graphlets-formulation}
We can consider the presence of topologically identical ``appearances" of a typed graphlet in a graph such that the type structure is preserved as well.
More formally, we define a \emph{position-aware typed graphlet} that ensures node (edge) types coincide via the isomorphism:
\begin{Definition}[\scshape Position-aware Typed Graphlet Instance]\label{def:position-aware-typed-graphlet-instance}
An instance of a position-aware typed graphlet $H=(V',E',\phi',\xi')$ of graph $G$ is a typed graphlet $F=(V'',E'',\phi'',\xi'')$ of $G$ such that
\begin{compactenum}
\item $(V'',E'')$ is isomorphic to $(V',E')$, 
\item $\mathcal{T}_{V''}=\mathcal{T}_{V'}$ and $\mathcal{T}_{E''}=\mathcal{T}_{E'}$, that is, the multisets of node and edge types are correspondingly equal.
\item $\phi''=\phi'\circ \, p$ and $\xi''=\xi'\circ\, q$ where $q = p\times p$, that is, the node and edge types coincide via the graph isomorphism $p$.
\end{compactenum}
\end{Definition}
This formulation can be used to count position-aware typed graphlets that \emph{preserve type structure}.
Such typed graphlets are called \emph{position-aware typed graphlets} to distinguish them from typed graphlets formalized in Definition~\ref{def:typed-graphlet-instance}.
See Figure~\ref{fig:position-aware-typed-graphlets} for an example of position-aware typed graphlets and Table~\ref{table:typed-graphlets-combin-properties} summarizes the combinatorial and enumerative properties.
For a single $K$-node induced subgraph, the number of \emph{position-aware typed graphlets} with $L$ types is $L^K$.

\definecolor{typeOneColor}{RGB}{8,81,156} 
\definecolor{typeTwoColor}{RGB}{222,45,38} 
\definecolor{typeThreeColor}{RGB}{49,163,84} 
\definecolor{typeFourColor}{RGB}{117,107,177} 

\makeatletter
\global\let\tikz@ensure@dollar@catcode=\relax
\makeatother
\tikzstyle{every node}=[font=\large,line width=1.5pt]
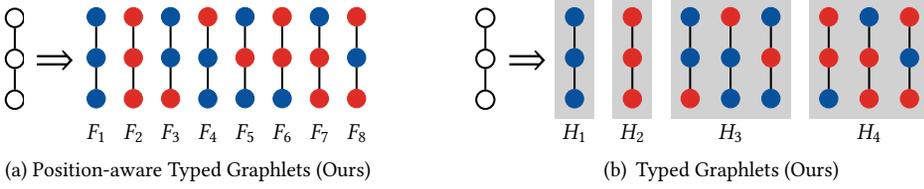
\begin{figure}[h!]

\tikzstyle{background-page}=[rectangle,
fill=gray!30,
inner sep=0.5cm,
rounded corners=0mm]  
\tikzstyle{background-white}=[rectangle,
fill=white,
inner ysep=0.5cm,
rounded corners=0mm]  

\tikzstyle{background-page}=[rectangle,
fill=gray!30,
fill=white,
outer ysep=0.1cm,
outer xsep=0.0cm,
inner xsep=0.2cm,
inner ysep=0.5cm,
rounded corners=0mm]  
\tikzstyle{background-white}=[rectangle,
fill=white,
inner ysep=0.5cm,
rounded corners=0mm]

\scalebox{1.0}{
\subfigure[\!\vspace{-8mm}Position-aware Typed Graphlets (Ours)\!]{
\scalebox{0.96}{

\scalebox{0.28}{
\begin{tikzpicture}[-,>=latex,auto,node distance=2.0cm,thick,
main node/.style={circle,draw=black,fill=white,draw,font=\sffamily\Huge\bfseries,text=black,minimum width=0.9cm, line width=1mm},
]

\node[main node] (1) {}; 
\node[main node] (2) [below of=1] {}; 
\node[main node] (3) [below of=2] {}; 

\tikzstyle{LabelStyle}=[below=3pt]
\path[every node/.style={font=\sffamily}] 
(1) edge [line width=1.0mm, left] node [above left] {} (2) 
(2) edge [line width=1.0mm, left] node[below left] {} (3);

\begin{pgfonlayer}{background}
		\node [background-white, 
fit=(1) (2) (3),
label=below:\fontsize{32}{32}\selectfont 
\textcolor{white}{$F_1$}
] {};
\end{pgfonlayer} 

\end{tikzpicture}
}
\hspace{-2.7mm}
\scalebox{0.28}{
\begin{tikzpicture}[-,>=latex,auto,node distance=2.0cm,thick,
main node/.style={circle,draw=black,fill=white,draw,font=\sffamily\Huge\bfseries,text=black,minimum width=0.9cm},
white node/.style={draw=white,draw,font=\sffamily\Huge\bfseries,text=black,minimum width=0.9cm}
]
\node[white node] (1) {};
\node[white node] (2) [below of=1] {};
\node[white node] (3) [below of=2] {};
\node[white node] (4) [right of=2, left=30pt, below=10pt, above=0.05pt] {\vspace{2mm}\fontsize{56}{56}\selectfont $\Rightarrow$};

\begin{pgfonlayer}{background}
		\node [background-white, 
fit=(1) (2) (3) (4),
label=below:\fontsize{32}{32}\selectfont 
\textcolor{white}{$F_1$}
] {};
\end{pgfonlayer} 

\end{tikzpicture}
}
\hspace{-2mm}
\scalebox{0.28}{
\begin{tikzpicture}[-,>=latex,auto,node distance=2.0cm,thick,
typeOne node/.style={circle,draw=typeOneColor,fill=typeOneColor,draw,font=\sffamily\Huge\bfseries,text=white,minimum width=0.9cm},
typeTwo node/.style={circle,draw=typeTwoColor,fill=typeTwoColor,draw,font=\sffamily\Huge\bfseries,text=white,minimum width=0.9cm},
typeThree node/.style={circle,draw=typeThreeColor,fill=typeThreeColor,draw,font=\sffamily\Huge\bfseries,text=white,minimum width=0.9cm},
]

\node[typeOne node] (1) {}; 
\node[typeOne node] (2) [below of=1] {}; 
\node[typeOne node] (3) [below of=2] {}; 

\tikzstyle{LabelStyle}=[below=3pt]
\path[every node/.style={font=\sffamily}] 
(1) edge [line width=1.0mm, left] node [above left] {} (2) 
(2) edge [line width=1.0mm, left] node[below left] {} (3);

\begin{pgfonlayer}{background}
\node [background-page, 
fit=(1) (2) (3),
label=below:\fontsize{32}{32}\selectfont $F_1$
] {};
\end{pgfonlayer} 
\end{tikzpicture}
}
\label{fig:typed-3-path-homo-typeOne-3colors}
\scalebox{0.28}{
\begin{tikzpicture}[-,>=latex,auto,node distance=2.0cm,thick,
typeOne node/.style={circle,draw=typeOneColor,fill=typeOneColor,draw,font=\sffamily\Huge\bfseries,text=white,minimum width=0.9cm},
typeTwo node/.style={circle,draw=typeTwoColor,fill=typeTwoColor,draw,font=\sffamily\Huge\bfseries,text=white,minimum width=0.9cm},
typeThree node/.style={circle,draw=typeThreeColor,fill=typeThreeColor,draw,font=\sffamily\Huge\bfseries,text=white,minimum width=0.9cm},
]

\node[typeTwo node] (1) {}; 
\node[typeTwo node] (2) [below of=1] {}; 
\node[typeTwo node] (3) [below of=2] {}; 

\tikzstyle{LabelStyle}=[below=3pt]
\path[every node/.style={font=\sffamily}] 
(1) edge [line width=1.0mm, left] node [above left] {} (2) 
(2) edge [line width=1.0mm, left] node[below left] {} (3);

\begin{pgfonlayer}{background}
\node [background-page, 
fit=(1) (2) (3),
label=below:\fontsize{32}{32}\selectfont $F_2$
] {};
\end{pgfonlayer} 
\end{tikzpicture}
}
\scalebox{0.28}{
\begin{tikzpicture}[-,>=latex,auto,node distance=2.0cm,thick,
typeOne node/.style={circle,draw=typeOneColor,fill=typeOneColor,draw,font=\sffamily\Huge\bfseries,text=white,minimum width=0.9cm},
typeTwo node/.style={circle,draw=typeTwoColor,fill=typeTwoColor,draw,font=\sffamily\Huge\bfseries,text=white,minimum width=0.9cm},
typeThree node/.style={circle,draw=typeThreeColor,fill=typeThreeColor,draw,font=\sffamily\Huge\bfseries,text=white,minimum width=0.9cm},
]

\node[typeOne node] (1) {}; 
\node[typeOne node] (2) [below of=1] {}; 
\node[typeTwo node] (3) [below of=2] {}; 

\tikzstyle{LabelStyle}=[below=3pt]
\path[every node/.style={font=\sffamily}] 
(1) edge [line width=1.0mm, left] node [above left] {} (2) 
(2) edge [line width=1.0mm, left] node[below left] {} (3);

\begin{pgfonlayer}{background}
\node [background-page, 
fit=(1) (2) (3),
label=below:\fontsize{32}{32}\selectfont $F_3$
] {};
\end{pgfonlayer} 
\end{tikzpicture}
}
\scalebox{0.28}{
\begin{tikzpicture}[-,>=latex,auto,node distance=2.0cm,thick,
typeOne node/.style={circle,draw=typeOneColor,fill=typeOneColor,draw,font=\sffamily\Huge\bfseries,text=white,minimum width=0.9cm},
typeTwo node/.style={circle,draw=typeTwoColor,fill=typeTwoColor,draw,font=\sffamily\Huge\bfseries,text=white,minimum width=0.9cm},
typeThree node/.style={circle,draw=typeThreeColor,fill=typeThreeColor,draw,font=\sffamily\Huge\bfseries,text=white,minimum width=0.9cm},
]

\node[typeTwo node] (1) {}; 
\node[typeOne node] (2) [below of=1] {}; 
\node[typeOne node] (3) [below of=2] {}; 

\tikzstyle{LabelStyle}=[below=3pt]
\path[every node/.style={font=\sffamily}] 
(1) edge [line width=1.0mm, left] node [above left] {} (2) 
(2) edge [line width=1.0mm, left] node[below left] {} (3);

\begin{pgfonlayer}{background}
\node [background-page, 
fit=(1) (2) (3),
label=below:\fontsize{32}{32}\selectfont $F_4$
] {};
\end{pgfonlayer} 
\end{tikzpicture}
}
\scalebox{0.28}{
\begin{tikzpicture}[-,>=latex,auto,node distance=2.0cm,thick,
typeOne node/.style={circle,draw=typeOneColor,fill=typeOneColor,draw,font=\sffamily\Huge\bfseries,text=white,minimum width=0.9cm},
typeTwo node/.style={circle,draw=typeTwoColor,fill=typeTwoColor,draw,font=\sffamily\Huge\bfseries,text=white,minimum width=0.9cm},
typeThree node/.style={circle,draw=typeThreeColor,fill=typeThreeColor,draw,font=\sffamily\Huge\bfseries,text=white,minimum width=0.9cm},
]

\node[typeOne node] (4) {}; 
\node[typeTwo node] (5) [below of=4] {}; 
\node[typeOne node] (6) [below of=5] {}; 

\tikzstyle{LabelStyle}=[below=3pt]
\path[every node/.style={font=\sffamily}] 

(4) edge [line width=1.0mm, left] node [above left] {} (5) 
(5) edge [line width=1.0mm, left] node[below left] {} (6);

\begin{pgfonlayer}{background}
\node [background-page, 
fit=(4) (5) (6),
label=below:\fontsize{32}{32}\selectfont $F_5$
] {};
\end{pgfonlayer} 
\end{tikzpicture}
}
\scalebox{0.28}{
\begin{tikzpicture}[-,>=latex,auto,node distance=2.0cm,thick,
typeOne node/.style={circle,draw=typeOneColor,fill=typeOneColor,draw,font=\sffamily\Huge\bfseries,text=white,minimum width=0.9cm},
typeTwo node/.style={circle,draw=typeTwoColor,fill=typeTwoColor,draw,font=\sffamily\Huge\bfseries,text=white,minimum width=0.9cm},
typeThree node/.style={circle,draw=typeThreeColor,fill=typeThreeColor,draw,font=\sffamily\Huge\bfseries,text=white,minimum width=0.9cm},
]

\node[typeTwo node] (7) {}; 
\node[typeTwo node] (8) [below of=7] {}; 
\node[typeOne node] (9) [below of=8] {}; 

\tikzstyle{LabelStyle}=[below=3pt]
\path[every node/.style={font=\sffamily}] 
(7) edge [line width=1.0mm, left] node [above left] {} (8) 
(8) edge [line width=1.0mm, left] node[below left] {} (9);

\begin{pgfonlayer}{background}
\node [background-page, 
fit=(7) (8) (9),
label=below:\fontsize{32}{32}\selectfont $F_6$
] {};
\end{pgfonlayer} 
\end{tikzpicture}
}
\scalebox{0.28}{
\begin{tikzpicture}[-,>=latex,auto,node distance=2.0cm,thick,
typeOne node/.style={circle,draw=typeOneColor,fill=typeOneColor,draw,font=\sffamily\Huge\bfseries,text=white,minimum width=0.9cm},
typeTwo node/.style={circle,draw=typeTwoColor,fill=typeTwoColor,draw,font=\sffamily\Huge\bfseries,text=white,minimum width=0.9cm},
typeThree node/.style={circle,draw=typeThreeColor,fill=typeThreeColor,draw,font=\sffamily\Huge\bfseries,text=white,minimum width=0.9cm},
]

\node[typeOne node] (7) {}; 
\node[typeTwo node] (8) [below of=7] {}; 
\node[typeTwo node] (9) [below of=8] {}; 

\tikzstyle{LabelStyle}=[below=3pt]
\path[every node/.style={font=\sffamily}] 
(7) edge [line width=1.0mm, left] node [above left] {} (8) 
(8) edge [line width=1.0mm, left] node[below left] {} (9);

\begin{pgfonlayer}{background}
\node [background-page, 
fit=(7) (8) (9),
label=below:\fontsize{32}{32}\selectfont $F_7$
] {};
\end{pgfonlayer} 
\end{tikzpicture}
}
\scalebox{0.28}{
\begin{tikzpicture}[-,>=latex,auto,node distance=2.0cm,thick,
typeOne node/.style={circle,draw=typeOneColor,fill=typeOneColor,draw,font=\sffamily\Huge\bfseries,text=white,minimum width=0.9cm},
typeTwo node/.style={circle,draw=typeTwoColor,fill=typeTwoColor,draw,font=\sffamily\Huge\bfseries,text=white,minimum width=0.9cm},
typeThree node/.style={circle,draw=typeThreeColor,fill=typeThreeColor,draw,font=\sffamily\Huge\bfseries,text=white,minimum width=0.9cm},
]

\node[typeTwo node] (10) {}; 
\node[typeOne node] (11) [below of=10] {}; 
\node[typeTwo node] (12) [below of=11] {}; 

\tikzstyle{LabelStyle}=[below=3pt]
\path[every node/.style={font=\sffamily}] 

(10) edge [line width=1.0mm, left] node [above left] {} (11) 
(11) edge [line width=1.0mm, left] node[below left] {} (12);

\begin{pgfonlayer}{background}
\node [background-page, 
fit=(10) (11) (12),
label=below:\fontsize{32}{32}\selectfont $F_8$
] {};
\end{pgfonlayer} 
\end{tikzpicture}
}
\label{fig:position-aware-typed-graphlet}
}
}

\tikzstyle{background-page}=[rectangle,
fill=gray!30,
outer ysep=0.1cm,
inner sep=0.5cm,
rounded corners=0mm]  
\tikzstyle{background-white}=[rectangle,
fill=white,
inner ysep=0.5cm,
rounded corners=0mm]

\hspace{6.0mm}
\hfill
\subfigure[\vspace{-2mm}Typed Graphlets (Ours)]{
\scalebox{0.96}{

\scalebox{0.28}{
\begin{tikzpicture}[-,>=latex,auto,node distance=2.0cm,thick,
main node/.style={circle,draw=black,fill=white,draw,font=\sffamily\Huge\bfseries,text=black,minimum width=0.9cm, line width=1mm},
]

\node[main node] (1) {}; 
\node[main node] (2) [below of=1] {}; 
\node[main node] (3) [below of=2] {}; 

\tikzstyle{LabelStyle}=[below=3pt]
\path[every node/.style={font=\sffamily}] 
(1) edge [line width=1.0mm, left] node [above left] {} (2) 
(2) edge [line width=1.0mm, left] node[below left] {} (3);

\begin{pgfonlayer}{background}
		\node [background-white, 
fit=(1) (2) (3),
label=below:\fontsize{32}{32}\selectfont 
\textcolor{white}{$H_1$}
] {};
\end{pgfonlayer} 

\end{tikzpicture}
}
\hspace{-2.7mm}
\scalebox{0.28}{
\begin{tikzpicture}[-,>=latex,auto,node distance=2.0cm,thick,fill=none,
main node/.style={circle,draw=black,fill=none,draw,font=\sffamily\Huge\bfseries,text=black,minimum width=0.9cm},
white node/.style={draw=white,fill=none,draw,font=\sffamily\Huge\bfseries,text=black,minimum width=0.9cm}
]
\node[white node] (1) {};
\node[white node] (2) [below of=1] {};
\node[white node] (3) [below of=2] {};
\node[white node] (4) [right of=2, left=30pt, below=10pt, above=0.05pt] {\vspace{2mm}\fontsize{56}{56}\selectfont $\Rightarrow$};

\begin{pgfonlayer}{background}
		\node [background-white, 
fit=(1) (2) (3) (4),
label=below:\fontsize{32}{32}\selectfont 
\textcolor{white}{$H_1$}
] {};
\end{pgfonlayer} 

\end{tikzpicture}
}
\hspace{-2mm}
\scalebox{0.28}{
\begin{tikzpicture}[-,>=latex,auto,node distance=2.0cm,thick,
typeOne node/.style={circle,draw=typeOneColor,fill=typeOneColor,draw,font=\sffamily\Huge\bfseries,text=white,minimum width=0.9cm},
typeTwo node/.style={circle,draw=typeTwoColor,fill=typeTwoColor,draw,font=\sffamily\Huge\bfseries,text=white,minimum width=0.9cm},
typeThree node/.style={circle,draw=typeThreeColor,fill=typeThreeColor,draw,font=\sffamily\Huge\bfseries,text=white,minimum width=0.9cm},
]

\node[typeOne node] (1) {}; 
\node[typeOne node] (2) [below of=1] {}; 
\node[typeOne node] (3) [below of=2] {}; 

\tikzstyle{LabelStyle}=[below=3pt]
\path[every node/.style={font=\sffamily}] 
(1) edge [line width=1.0mm, left] node [above left] {} (2) 
(2) edge [line width=1.0mm, left] node[below left] {} (3);

\begin{pgfonlayer}{background}
\node [background-page, 
fit=(1) (2) (3),
label=below:\fontsize{32}{32}\selectfont $H_1$
] {};
\end{pgfonlayer} 
\end{tikzpicture}
}
\label{fig:typed-3-path-homo-typeOne-3colors}
\hspace{0.3mm}
\scalebox{0.28}{
\begin{tikzpicture}[-,>=latex,auto,node distance=2.0cm,thick,
typeOne node/.style={circle,draw=typeOneColor,fill=typeOneColor,draw,font=\sffamily\Huge\bfseries,text=white,minimum width=0.9cm},
typeTwo node/.style={circle,draw=typeTwoColor,fill=typeTwoColor,draw,font=\sffamily\Huge\bfseries,text=white,minimum width=0.9cm},
typeThree node/.style={circle,draw=typeThreeColor,fill=typeThreeColor,draw,font=\sffamily\Huge\bfseries,text=white,minimum width=0.9cm},
]

\node[typeTwo node] (1) {}; 
\node[typeTwo node] (2) [below of=1] {}; 
\node[typeTwo node] (3) [below of=2] {}; 

\tikzstyle{LabelStyle}=[below=3pt]
\path[every node/.style={font=\sffamily}] 
(1) edge [line width=1.0mm, left] node [above left] {} (2) 
(2) edge [line width=1.0mm, left] node[below left] {} (3);

\begin{pgfonlayer}{background}
\node [background-page, 
fit=(1) (2) (3),
label=below:\fontsize{32}{32}\selectfont $H_2$
] {};
\end{pgfonlayer} 
\end{tikzpicture}
}
\hspace{0.3mm}
\scalebox{0.28}{
\begin{tikzpicture}[-,>=latex,auto,node distance=2.0cm,thick,
typeOne node/.style={circle,draw=typeOneColor,fill=typeOneColor,draw,font=\sffamily\Huge\bfseries,text=white,minimum width=0.9cm},
typeTwo node/.style={circle,draw=typeTwoColor,fill=typeTwoColor,draw,font=\sffamily\Huge\bfseries,text=white,minimum width=0.9cm},
typeThree node/.style={circle,draw=typeThreeColor,fill=typeThreeColor,draw,font=\sffamily\Huge\bfseries,text=white,minimum width=0.9cm},
]

\node[typeOne node] (1) {}; 
\node[typeOne node] (2) [below of=1] {}; 
\node[typeTwo node] (3) [below of=2] {}; 

\node[typeTwo node] (4) [right of=1] {}; 
\node[typeOne node] (5) [below of=4] {}; 
\node[typeOne node] (6) [below of=5] {}; 

\node[typeOne node] (7) [right of=4] {}; 
\node[typeTwo node] (8) [below of=7] {}; 
\node[typeOne node] (9) [below of=8] {}; 

\tikzstyle{LabelStyle}=[below=3pt]
\path[every node/.style={font=\sffamily}] 
(1) edge [line width=1.0mm, left] node [above left] {} (2) 
(2) edge [line width=1.0mm, left] node[below left] {} (3)

(4) edge [line width=1.0mm, left] node [above left] {} (5) 
(5) edge [line width=1.0mm, left] node[below left] {} (6)

(7) edge [line width=1.0mm, left] node [above left] {} (8) 
(8) edge [line width=1.0mm, left] node[below left] {} (9);

\begin{pgfonlayer}{background}
\node [background-page, 
fit=(1) (2) (3) (4) (5) (6) (7) (8) (9),
label=below:\fontsize{32}{32}\selectfont $H_3$
] {};
\end{pgfonlayer} 
\end{tikzpicture}
}
\hspace{0.3mm}
\scalebox{0.28}{
\begin{tikzpicture}[-,>=latex,auto,node distance=2.0cm,thick,
typeOne node/.style={circle,draw=typeOneColor,fill=typeOneColor,draw,font=\sffamily\Huge\bfseries,text=white,minimum width=0.9cm},
typeTwo node/.style={circle,draw=typeTwoColor,fill=typeTwoColor,draw,font=\sffamily\Huge\bfseries,text=white,minimum width=0.9cm},
typeThree node/.style={circle,draw=typeThreeColor,fill=typeThreeColor,draw,font=\sffamily\Huge\bfseries,text=white,minimum width=0.9cm},
]

\node[typeTwo node] (4) {}; 
\node[typeTwo node] (5) [below of=4] {}; 
\node[typeOne node] (6) [below of=5] {}; 

\node[typeOne node] (7) [right of=4] {}; 
\node[typeTwo node] (8) [below of=7] {}; 
\node[typeTwo node] (9) [below of=8] {}; 

\node[typeTwo node] (10) [right of=7] {}; 
\node[typeOne node] (11) [below of=10] {}; 
\node[typeTwo node] (12) [below of=11] {}; 

\tikzstyle{LabelStyle}=[below=3pt]
\path[every node/.style={font=\sffamily}] 

(4) edge [line width=1.0mm, left] node [above left] {} (5) 
(5) edge [line width=1.0mm, left] node[below left] {} (6)

(7) edge [line width=1.0mm, left] node [above left] {} (8) 
(8) edge [line width=1.0mm, left] node[below left] {} (9)

(10) edge [line width=1.0mm, left] node [above left] {} (11) 
(11) edge [line width=1.0mm, left] node[below left] {} (12);

\begin{pgfonlayer}{background}
\node [background-page, 
fit=(4) (5) (6) (7) (8) (9) (10) (11) (12),
label=below:\fontsize{32}{32}\selectfont $H_4$
] {};
\end{pgfonlayer} 
\end{tikzpicture}
}

\hspace{5.0mm}
\label{fig:typed-graphlet-example}
}
}

} 

\vspace{-2mm}
\caption{
\textbf{Position-aware Typed Graphlets (Def.~\ref{def:position-aware-typed-graphlet-instance}) and Typed Graphlets (Def.~\ref{def:typed-graphlet-instance}).}
This intuitive example shows the difference between position-aware typed graphlets \emph{and} typed graphlets that do not impose the constraint that types coincide via the isomorphism.
}
\label{fig:position-aware-typed-graphlets}
\end{figure}

\subsection{Algorithm} \label{sec:position-aware-typed-graphlets-algorithm}
To count position-aware typed graphlets, we only need a slight modification to the previous algorithm.
Notice the algorithmic difference between Definition~\ref{def:typed-graphlet-instance} and position-aware typed graphlets defined in Definition~\ref{def:position-aware-typed-graphlet-instance} is that to count typed graphlets, we only need to consider the types involved in the graphlet, and not the nodes/edges that correspond to those types (\ie, types coincide via the isomorphism).
In other words, typed graphlets formally defined in Definition~\ref{def:typed-graphlet-instance} can be viewed as ignoring the ``order'' of the types with respect to their assignment to nodes in the induced subgraph.
Recall that Section~\ref{sec:typed-motif-hash-function} used a function to map all possible orderings of a given multiset of $L$ types to a single hash value (using a precomputed hash table).
This allowed us to merge all such ``position-aware typed graphlets'' that have the same multiset of types (Figure~\ref{fig:position-aware-typed-graphlet}) to the appropriate typed graphlet (Figure~\ref{fig:typed-graphlet-example}).
For instance, the counts of the position-aware typed graphlets $\{F_3, F_4, F_5\}$ in Figure~\ref{fig:position-aware-typed-graphlet} are all mapped to the same typed graphlet $H_3$ shown in Figure~\ref{fig:typed-graphlet-example}.
Therefore, the framework (Algorithm~\ref{alg:typed-motifs-exact}) actually counts position-aware typed graphlets and uses a mapping function to obtain a single hash value for all such type orderings of some multiset of types, which allows the position-aware typed graphlet counts of a typed graphlet to be merged into a single count.
To count position-aware typed graphlets, we simply remove the lookup table that maps the appropriate position-aware typed graphlets to the corresponding typed graphlet.
Interestingly, this equates to slightly less work required for computing position-aware typed graphlets.
As such, the analysis in Section~\ref{sec:complexity-analysis} and elsewhere clearly holds for both position-aware typed graphlets and typed graphlets.

\begin{Property}
Given an untyped induced subgraph $H$, let $T_H$ denote the number of typed graphlets of $H$ in $G$, and let $P_H$ denote the number of position-aware typed graphlets of $H$ in $G$.
Then $P_H \geq T_H$.
\end{Property}
\begin{proof}
If $P_H = T_H$, then for the position-aware typed graphlets that map to a specific typed graphlet 
(\eg, $\{F_3,F_4,F_5\}$ maps to $H_3$ in Figure~\ref{fig:position-aware-typed-graphlets}), only one position-aware typed graphlet must have nonzero count in $G$, and this must hold for all typed graphlets in $G$.
Otherwise, if there exists a typed graphlet with more than one position-aware typed graphlet with nonzero count, then $P_H > T_H$.
\end{proof}

\section{Global Typed Graphlet Counts} \label{sec:global-typed-graphlet-counting}
\noindent
While Section~\ref{sec:framework} focused on counting typed graphlets locally for each edge $(i,j) \in E$ in $G$, one may also be interested in the total counts of each typed graphlet in $G$.
More formally,
\begin{Problem}[Global Typed Graphlets] \label{prob:global-typed-graphlet-counting}
Given a graph $G$ with $L$ types, the global typed graphlet counting problem is to find the set of all typed graphlets that occur in $G$ along with their corresponding frequencies.
This work focuses on computing all $\{2,3,4\}$-node typed graphlet counts for $G$.
\end{Problem}\noindent
A general equation for solving the above problem for any arbitrary \emph{typed graphlet} $H$ is given below.
Let $H$ denote an arbitrary typed graphlet and $\vx$ be an $M$-dimensional vector of counts of $H$ for every edge $(i,j) \in E$, then the frequency of $H$ in $G$ is:
\begin{equation}\label{eq:global-typed-graphlet-counts}
C_H \,= \, \frac{1}{|E(H)|}\; \vx^{\!\top}\ve 
\end{equation}\noindent
where $|E(H)|$ is the number of edges in the typed graphlet $H$ and $\ve = [ \, 1\; \cdots \; 1 \,]$ is an $M$-dimensional vector of all 1's.

\section{Theoretical Analysis} \label{sec:complexity-analysis}
\noindent
First, we show the relationship between the count of an untyped graphlet $H$ in $G$ and the count of all typed graphlets in $G$ with induced subgraph $H$.
\begin{Proposition} \label{prop:untyped-graphlet-equal-sum-of-all-typed-graphlets}
Let $\vx$ denote the vector of counts for any untyped graphlet $H \in \mathcal{H}$ (\eg, 4-cycle).
Further, let $\mX$ denote a $M \times T_{H}$ matrix of typed graphlet counts for graphlet $H$ where $T_{H}$ denotes the number of typed graphlets that arise from $L$ types.
Then the following holds:
\begin{equation}
C = \sum_{i=1}^{M} x_i = \sum_{i=1}^{M} \sum_{j=1}^{T_{H}} X_{ij}
\end{equation}
\end{Proposition}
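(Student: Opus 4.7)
The plan is to show that, edge by edge, the count $x_i$ of the untyped graphlet $H$ through edge $i$ decomposes as the sum of the counts $X_{ij}$ of the $T_H$ distinct typed graphlets whose underlying untyped structure is $H$. Once this local identity is established, summing over all $M$ edges gives the second equality immediately, and the first equality $C = \sum_i x_i$ is simply the definition of $C$ adopted in the proposition.

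First, I would fix an arbitrary edge $i=(u,v)\in E$ and consider the collection $\mathcal{I}_i(H)$ of instances of the untyped graphlet $H$ that contain edge $i$; by definition $x_i = |\mathcal{I}_i(H)|$. Each instance $F \in \mathcal{I}_i(H)$ is an induced subgraph of $G$ on some vertex set $V_F \supseteq \{u,v\}$ that is isomorphic to $H$ as an unlabeled graph. Because $G$ is endowed with the node-type function $\phi$, the restriction $\phi|_{V_F}$ assigns $F$ a well-defined multiset of node types, and hence a well-defined type vector $\vt_F$ modulo the automorphisms of $H$. By Definition~\ref{def:typed-graphlet-instance}, two instances are instances of the \emph{same} typed graphlet iff they are isomorphic as untyped graphs and have equal multisets of node and edge types; thus $\vt_F$ is exactly the equivalence class invariant that determines which typed graphlet $F$ is an instance of.

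Next, the $T_H$ distinct typed graphlets with underlying untyped structure $H$ correspond to the $T_H$ possible type-vector equivalence classes (cf.\ Eq.~\ref{eq:num-typed-graphlets-for-motif} and Table~\ref{table:typed-graphlets-example}), so the map $F \mapsto \vt_F$ partitions $\mathcal{I}_i(H)$ into $T_H$ disjoint blocks $\mathcal{I}_i^{(j)}(H)$ for $j=1,\ldots,T_H$. By the definition of $\mX$, the $(i,j)$-entry $X_{ij}$ is precisely $|\mathcal{I}_i^{(j)}(H)|$. Therefore
\begin{equation*}
x_i \;=\; |\mathcal{I}_i(H)| \;=\; \sum_{j=1}^{T_H} |\mathcal{I}_i^{(j)}(H)| \;=\; \sum_{j=1}^{T_H} X_{ij}.
\end{equation*}
Summing this identity over all $i=1,\ldots,M$ yields $\sum_i x_i = \sum_i \sum_j X_{ij}$, and combining with $C = \sum_i x_i$ completes the argument.

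The only subtle point—where I would spend the most care—is verifying that the map $F \mapsto \vt_F$ really is well-defined and exhaustive: different labelings of the same induced subgraph $F$ must yield the \emph{same} equivalence class of type vectors (so each $F$ is counted in exactly one block), and every typed graphlet on $H$ that actually appears through edge $i$ must correspond to some block (so no instance is dropped). Both follow from the fact that Definition~\ref{def:typed-graphlet-instance} quotients out graph isomorphism while remembering only the multiset of types, which is exactly the invariant computed by $\vt_F$. No numerical or combinatorial calculation is required beyond this bookkeeping.
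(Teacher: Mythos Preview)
Your proposal is correct and follows the same partition argument the paper uses: the paper's own justification is the single sentence ``these counts are partitioned among the set of typed graphlets that arise from the untyped graphlet when types are considered,'' which is exactly the edge-by-edge decomposition you spell out. If anything, your version is more careful than the paper's, since you explicitly verify that the map $F \mapsto \vt_F$ is well-defined and exhaustive, whereas the paper treats this as self-evident.
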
\noindent
Consider the counts of an untyped graphlet for a single edge.
The above demonstrates that these counts are partitioned among the set of typed graphlets that arise from the untyped graphlet when types are considered.
Let $\vp \in \RR^{T_{H}}$ denote a typed graphlet probability distribution ($\vp^T \ve = 1$), the entropy 
(average information content) of $\vp$ is
$\mathbb{H}(\vp) = -\sum_{i} \; p_i \log p_i$.
Hence, $\mathbb{H}(\vp)$ quantifies the amount of information in the relative frequencies of the typed graphlets (of a given graphlet $H \in \mathcal{H}$).
In the case of untyped graphlets, the $C = \sum_{i=1}^{M} x_i$ untyped graphlets are assumed to belong to a single homogeneous graphlet where all nodes are of the same type.
This matches exactly the information we have if types are not considered.
\begin{Proposition} \label{prop:info-gain}
Assume $\vp \in \RR^{T_{H}}$ is an arbitrary typed graphlet probability distribution such that $p_i<1$, $\forall i$ 
and $\vq$ is the untyped graphlet distribution where $q_i=1$ and $q_j=0, \forall j\not=i$, then $\mathbb{H}(\vp) > \mathbb{H}(\vq)$.
\end{Proposition}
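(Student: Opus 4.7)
The plan is to compute $\mathbb{H}(\vq)$ directly, show it equals $0$, and then argue that the constraint $p_i < 1$ for all $i$ forces $\mathbb{H}(\vp)$ to be strictly positive. Since both entropies are non-negative (as each summand $-p_i \log p_i$ is non-negative for $p_i \in [0,1]$ with the usual convention $0 \log 0 = 0$), this suffices.

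First, I would evaluate $\mathbb{H}(\vq)$. By assumption, $q_i = 1$ for some fixed $i$ and $q_j = 0$ for all $j \neq i$. Substituting into the entropy definition and applying the convention $0 \log 0 = 0$ gives
\begin{equation*}
\mathbb{H}(\vq) = -\sum_{k=1}^{T_H} q_k \log q_k = -1 \cdot \log 1 - \sum_{j \neq i} 0 \cdot \log 0 = 0.
\end{equation*}

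Next I would show $\mathbb{H}(\vp) > 0$. Since $\vp$ is a probability distribution, $\sum_{k} p_k = 1$, and by hypothesis $p_k < 1$ for every $k$. This forces at least two components of $\vp$ to be strictly positive: if only one index had $p_k > 0$, then by the sum constraint that single component would equal $1$, contradicting $p_k < 1$. Let $j$ be any index with $0 < p_j < 1$. Then $-p_j \log p_j > 0$ since $\log p_j < 0$. All other summands satisfy $-p_k \log p_k \geq 0$, so
\begin{equation*}
\mathbb{H}(\vp) = -\sum_{k=1}^{T_H} p_k \log p_k \; \geq \; -p_j \log p_j \; > \; 0 \; = \; \mathbb{H}(\vq).
\end{equation*}

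This argument is essentially elementary; the main (minor) subtlety is to notice that the pointwise constraint $p_i < 1$ combined with the simplex constraint $\sum_i p_i = 1$ rules out the degenerate case and guarantees the existence of an index $j$ with $0 < p_j < 1$, which is what makes the strict inequality work. No appeal to concavity of entropy or Jensen's inequality is needed; the result follows from the non-negativity of individual entropy terms together with strict positivity of at least one term.
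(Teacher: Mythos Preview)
Your proof is correct. The paper itself does not actually give a proof of this proposition; it merely remarks that ``the proof is straightforward,'' so your argument is precisely the kind of elementary verification the authors had in mind: compute $\mathbb{H}(\vq)=0$ from the degenerate distribution, then use the constraint $p_i<1$ together with $\sum_i p_i=1$ to force some $0<p_j<1$ and hence $\mathbb{H}(\vp)>0$.
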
\noindent
This implies that typed graphlets contain more information than untyped graphlets.
The proof is straightforward.

\subsection{Time Complexity} \label{sec:time-complexity}
\noindent
We first introduce two properties that are useful to understand the complexity.
\begin{Property} \label{prop:1}
\begin{equation}
d_i + d_j = 2|T_{ij}| + |S_i| + |S_{j}|
\end{equation}\noindent
where $d_i = |\Gamma_i|$,\; $d_j = |\Gamma_j|$,\; $T_{ij} = \Gamma_i \cap \Gamma_j$,\; $S_i = \Gamma_i \setminus T_{ij}$,\; and $S_j = \Gamma_j \setminus T_{ij}$.
\end{Property}

\begin{Property} \label{prop:2}
The space required to store $T_{ij}$, $S_{i}$, and $S_{j}$ is less than $d_i+d_j$ iff $|T_{ij}|>0$.
\end{Property}
This is straightforward to see since $|S_{i}|+|S_{j}|=|S_{i} \cup S_{j}|$ always holds. 
However, if $|T_{ij}|=0$, then $|S_{i}|+|S_{j}|=d_i+d_j$.
Hence, triangles represent the smallest clique, and as shown in~\cite{rossi2018compressing-graphs-cliques} can be used to compress the graph.
As the density of the graph increases, more triangles are formed, and therefore less space is used.
Notice that the worst case is also unlikely to occur because of this fact.
For instance, suppose $d_i = \Delta$, $d_j = \Delta$, and $\Delta = n$ (worst case), then $|T_{ij}|=d_i=d_j$, and $|S_{i}|=0$, $|S_{j}|=0$.
Furthermore, if $|S_{i}|=n$, then $|S_{j}|=0$ and $|T_{ij}|=0$ must hold.
Obviously, if node $i$ is connected to all $n$ nodes, then any node $k \in \Gamma_{j}$ must form a triangle with $i$ ($k \in T_{ij}$).
For any node with maximum degree $\Delta$, there is very low probability that $|T_{ij}|=0$, which implies $|T_{ij}|>0$, $|S_{i}|<\Delta$ and $|S_{j}|<\Delta$. 

\subsubsection{Typed 3-Node Graphlets}
\noindent
\vspace{-2mm}
\begin{thm}\label{lem:time-complexity-3-node-graphlets}
The worst-case time complexity for counting all 3-node typed graphlets for a given edge $(i,j) \in E$ is:
\begin{equation}
\mathcal{O}(2|\Gamma_i| + |\Gamma_j|) = \mathcal{O}(\Delta)
\end{equation}\noindent
where $|\Gamma_i|$ and $|\Gamma_j|$ denote the number of nodes connected to node $i$ and $j$, respectively.
Further, $\Delta$ is the maximum degree in $G$.
\end{thm}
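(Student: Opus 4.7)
The plan is to exploit the hash table $\Psi$ described in Table~\ref{table:notation} (which records, for each node, its relationship to the current edge endpoints $i$ and $j$ in $o(1)$ time) so that the typed sets $T_{ij}^{t}$, $S_{i}^{t}$, $S_{j}^{t}$ for every type $t \in \{1,\ldots,L\}$ can be obtained by a constant number of scans over $\Gamma_i$ and $\Gamma_j$. Since each node has a single, precomputed type $\phi_k$, bucketing by type during these scans adds no asymptotic overhead beyond the scans themselves.

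Concretely, I would proceed in three steps. First, walk through $\Gamma_i$ once and for each neighbor $k$ mark $\Psi[k]\leftarrow \lambda_1$ to indicate ``adjacent to $i$''; this costs $\mathcal{O}(|\Gamma_i|)$. Second, walk through $\Gamma_j$: for each neighbor $k$, inspect $\Psi[k]$ in $o(1)$; if $\Psi[k]=\lambda_1$ then $k\in T_{ij}^{\phi_k}$, so increment $|T_{ij}^{\phi_k}|$ and reset $\Psi[k]\leftarrow \lambda_2$, otherwise $k\in S_{j}^{\phi_k}$ and we increment $|S_{j}^{\phi_k}|$; this costs $\mathcal{O}(|\Gamma_j|)$. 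Third, rescan $\Gamma_i$: any $k$ with $\Psi[k]=\lambda_1$ still set was not seen in $\Gamma_j$, so by Corollary~\ref{cor:typed-3-star-node-centered-at-i} it belongs to $S_{i}^{\phi_k}$ and we increment the appropriate counter; simultaneously reset $\Psi[k]\leftarrow 0$ so the hash table is clean for the next edge. This costs a further $\mathcal{O}(|\Gamma_i|)$. By Corollary~\ref{cor:typed-triangle-node} and Corollary~\ref{cor:typed-3-star-for-edge}, the maintained vectors $\{|T_{ij}^{t}|\}_{t}$, $\{|S_{i}^{t}|\}_{t}$, $\{|S_{j}^{t}|\}_{t}$ are precisely the counts of all $3$-node typed graphlets involving $(i,j)$, and the nonzero entries can then be stored into $\mathcal{X}_{ij}$ on Line~\ref{alg:store-nonzero-counts-3-node-typed-graphlets} of Algorithm~\ref{alg:typed-motifs-exact}.

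Summing the three phases gives total work $\mathcal{O}(2|\Gamma_i|+|\Gamma_j|)$. Since $|\Gamma_i|\le \Delta$ and $|\Gamma_j|\le \Delta$ by definition of the maximum degree, this simplifies to $\mathcal{O}(\Delta)$, proving the claim. Note that the per-type bookkeeping does \emph{not} introduce a factor of $L$: although there are $L$ typed sets, each visited neighbor contributes exactly one increment (to the counter indexed by its own type), so the per-type and aggregate running times coincide.

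The main obstacle is a subtle accounting issue rather than a conceptual one: one must argue that (i) $\Psi$ really does support constant-time lookup/update, and (ii) $\Psi$ can be reset to its initial state within the same $\mathcal{O}(|\Gamma_i|+|\Gamma_j|)$ budget, so that processing the next edge in Algorithm~\ref{alg:typed-motifs-exact} does not inherit stale marks. Both are handled by the construction above: $\Psi$ is a direct-address table of size $|V|$ initialized once, and every write made while processing $(i,j)$ is undone during the second pass over $\Gamma_i$ (for $\lambda_1$ marks) or during a final single pass over $\Gamma_j$ that clears $\lambda_2$ marks on $T_{ij}$, which only adds a constant-factor $\mathcal{O}(|\Gamma_j|)$ term and is absorbed into the stated bound.
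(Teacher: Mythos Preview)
Your proposal is correct and follows essentially the same hash-and-scan strategy as the paper. The only minor difference is in how $|S_i^t|$ is obtained: you rescan $\Gamma_i$ to find the unmarked neighbors, whereas the paper simply uses the arithmetic identity $|S_i^t| = |\Gamma_i^t| - |T_{ij}^t|$ (since $\Gamma_i^t = S_i^t \cup T_{ij}^t$ disjointly), avoiding the second pass; both variants sit comfortably inside the stated $\mathcal{O}(2|\Gamma_i|+|\Gamma_j|)$ bound, and your extra care about resetting $\Psi$ is a welcome detail the paper leaves implicit.
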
 
\begin{proof}
It takes at most $\mathcal{O}(|\Gamma_i|+|\Gamma_j|)$ time to compute typed triangles (\ie, $T_{ij}^t$, for all $t=1,\ldots,L$) by hashing neighbors of $i$ in $\mathcal{O}(|\Gamma_i|)$ time, 
and then checking if each node $w \in \Gamma_j$ is hashed or not, taking $\mathcal{O}(|\Gamma_j|)$ time.
Similarly, if $w \in \Gamma_j$ is not hashed, then $S_j^t\leftarrow S_j^t \cup \{w\}$ where $t = \phi_w$.
Now all that remains is computing $S_i^t$, for all $t$.
Notice $|S_i^t| = |\Gamma_i^t| - |T_{ij}^t|$, for all $t=1,\ldots,L$.
\end{proof}

\subsubsection{Typed 4-Node Graphlets}
\noindent
We first provide the time complexity of deriving path-based and triangle-based graphlet orbits in Lemma~\ref{lem:time-complexity-path-based}-\ref{lem:time-complexity-triangle-based}, and then give the total time complexity of all 3 and 4-node typed graphlets in Theorem~\ref{thm:time-complexity-4-node-graphlets} based on these results.
Note that Lemma~\ref{lem:time-complexity-path-based}-\ref{lem:time-complexity-triangle-based} includes the time required to derive all typed 3-node typed graphlets.

\begin{lemma} \label{lem:time-complexity-path-based}
For a single edge $(i,j) \in E$, the worst-case time complexity for deriving all \emph{typed path-based graphlet orbits} is:
\begin{align} \label{eq:time-complexity-path-based}
\mathcal{O}\Big(\Delta \big(|S_i| + |S_j| \big) \Big) 
\end{align}\noindent
\end{lemma}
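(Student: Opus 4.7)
The plan is to walk through Algorithm~\ref{alg:typed-path-based-motifs-exact} and charge the cost of every primitive operation, using Theorem~\ref{lem:time-complexity-3-node-graphlets} to account for preprocessing and the hash table $\Psi$ (Table~\ref{table:notation}) together with the perfect hash $\mathbb{F}$ (Section~\ref{sec:typed-motif-hash-function}) to argue that the inner work is $o(1)$ per pair considered.

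First, I would note that the sets $T_{ij}^t$, $S_i^t$, $S_j^t$ for $t=1,\ldots,L$ are computed as a prerequisite (Lines~2--6 of Algorithm~\ref{alg:typed-motifs-exact}) in $\mathcal{O}(\Delta)$ time by Theorem~\ref{lem:time-complexity-3-node-graphlets}. In the same pass, the hash table $\Psi$ can be populated so that for any node $w_r$ we can determine in $o(1)$ time whether $w_r \in \Gamma_i$, $w_r \in \Gamma_j$, $w_r \in S_i$, $w_r \in S_j$, or $w_r \in T_{ij}$. This preprocessing cost is subsumed by the bound we are proving.

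Next, I would bound the work inside Algorithm~\ref{alg:typed-path-based-motifs-exact}. The outer loops iterate over every $w_k \in S_i$ and every $w_k \in S_j$, giving $|S_i| + |S_j|$ outer iterations. For each fixed $w_k$, the inner loop scans $\Gamma_{w_k}\setminus\{i,j\}$, which contains at most $|\Gamma_{w_k}| \le \Delta$ nodes. For each $w_r$ in this inner loop, the algorithm performs: (i) a constant number of membership queries against $\Psi$ to classify $w_r$ as belonging to $I$, $S_i$, $S_j$, etc.; (ii) a single evaluation of the typed graphlet hash $\mathbb{F}(g,\phi_i,\phi_j,\phi_{w_k},\phi_{w_r})$, which is $o(1)$ by Eq.~\ref{eq:simple-typed-motif-hash-function}; and (iii) a call to \textsc{Update} (Algorithm~\ref{alg:updated-typed-motifs}), which does a constant number of lookups/updates on the perfect hash table indexed by $\mathbb{F}$. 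Therefore each inner iteration costs $o(1)$, and summing over all $w_k \in S_i \cup S_j$ and all $w_r \in \Gamma_{w_k}$ gives
\[
\sum_{w_k \in S_i \cup S_j} \mathcal{O}(|\Gamma_{w_k}|) \;\le\; \mathcal{O}\!\big(\Delta\,(|S_i|+|S_j|)\big).
\]

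Finally, adding the $\mathcal{O}(\Delta)$ preprocessing cost from Theorem~\ref{lem:time-complexity-3-node-graphlets} is absorbed into $\mathcal{O}(\Delta(|S_i|+|S_j|))$ whenever $|S_i|+|S_j|\ge 1$; in the degenerate case $|S_i|=|S_j|=0$ the path-based algorithm does no work at all and the bound holds trivially. I expect the only subtle step to be the justification that the combined membership test against $\Gamma_i \cup \Gamma_j$, $S_i$, $S_j$, and $T_{ij}$ in the \textbf{if/elseif} cascade is uniformly $o(1)$, which follows from encoding the classification of each vertex as a single entry in $\Psi$ during the 3-node preprocessing, so that one lookup suffices per $w_r$.
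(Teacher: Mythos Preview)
Your proof is correct and follows essentially the same approach as the paper. The paper itself does not provide a full proof of this lemma; it only appends the one-line observation $|S_i|\Delta \ge \sum_{k\in S_i} d_k$ and $|S_j|\Delta \ge \sum_{k\in S_j} d_k$, which is precisely the bound you derive in your displayed inequality, and your detailed accounting of the $o(1)$ cost per inner iteration (via $\Psi$, $\mathbb{F}$, and \textsc{Update}) is the natural elaboration of that remark.
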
\noindent
Note $|S_{i}|\Delta \geq \sum_{k \in S_{i}} d_k$ and $|S_{j}|\Delta \geq \sum_{k \in S_{j}} d_k$.

\begin{lemma} \label{lem:time-complexity-triangle-based}
For a single edge $(i,j) \in E$, the worst-case time complexity for deriving all \emph{typed triangle-based graphlet orbits} is:
\begin{equation} \label{eq:time-complexity-triangle-based}
\mathcal{O}\big(\Delta |T_{ij}| \big) 
\end{equation}\noindent
\end{lemma}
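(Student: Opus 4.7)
The plan is to bound the work of Algorithm~\ref{alg:typed-triangle-based-motifs-exact} by a direct double-loop argument, after first accounting for the preprocessing that produces the sets $T_{ij}$, $S_i$, $S_j$ and the lookup table $\Psi$. First I would invoke Theorem~\ref{lem:time-complexity-3-node-graphlets} to assert that obtaining $T_{ij}^t$, $S_i^t$, $S_j^t$ for every type $t\in\{1,\ldots,L\}$, together with populating the auxiliary hash table $\Psi$ that marks each neighbor of $i$ or $j$ with its role ($\lambda_1,\lambda_2,\lambda_3$), costs only $\mathcal{O}(|\Gamma_i|+|\Gamma_j|)=\mathcal{O}(\Delta)$ time. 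Thus the 3-node portion is absorbed into the final bound.

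Next I would analyze the main body of Algorithm~\ref{alg:typed-triangle-based-motifs-exact}. The outer loop on Line~\ref{algline:T-triangles} ranges over $w_k\in T_{ij}$, contributing exactly $|T_{ij}|$ iterations. For each $w_k$ the inner loop enumerates $w_r\in\Gamma_{w_k}\setminus\{i,j\}$, which is bounded by $|\Gamma_{w_k}|\le \Delta$. So the total number of $(w_k,w_r)$ pairs considered is at most $\Delta\,|T_{ij}|$.

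The remaining step is to argue that the work done per pair $(w_k,w_r)$ is $\mathcal{O}(1)$. The three branches on Lines~\ref{algline:typed-4-clique}--\ref{algline:typed-4-tailed-tri-center-orbit} are decided purely by membership tests of $w_r$ in $T_{ij}$, $S_i\cup S_j$, and $\Gamma_i\cup\Gamma_j$, all of which are answered in constant time by one lookup in $\Psi$ (together with an integer comparison $w_r\le w_k$ for the clique case to break symmetry). Inside each branch, the \textsc{Update} routine of Algorithm~\ref{alg:updated-typed-motifs} computes the typed-graphlet hash $\hash=\mathbb{F}(g,\phi_i,\phi_j,\phi_{w_k},\phi_{w_r})$ in $\mathcal{O}(1)$ via Eq.~\ref{eq:simple-typed-motif-hash-function}, performs a constant-time lookup into the perfect hash table to decide whether $\hash\in\mathcal{M}_{ij}$, and increments the count $\vx_{\hash}$; all of these are $\mathcal{O}(1)$ as discussed in Section~\ref{sec:typed-motif-hash-function} and Section~\ref{sec:sparse-typed-motif-format}.

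Combining the two bounds gives $\mathcal{O}(\Delta)+\mathcal{O}(\Delta|T_{ij}|)=\mathcal{O}(\Delta|T_{ij}|)$, which is the desired complexity. The main obstacle in the write-up is being careful about the claim that every membership and hash-table operation is truly $\mathcal{O}(1)$: this relies on (i) the precomputed role hash $\Psi$ being populated during the 3-node counting phase so the triangle-based phase can query it in constant time, and (ii) the boundedness of the hash codomain of $\mathbb{F}$, which permits a perfect hash table and thus constant-time insertion and update. Once these two constant-time claims are justified, the rest of the argument is a direct counting of loop iterations.
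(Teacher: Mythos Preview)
Your proposal is correct and follows essentially the same approach as the paper. The paper does not give a standalone proof of Lemma~\ref{lem:time-complexity-triangle-based}; it simply notes after the statement that $|T_{ij}|\Delta \ge |T_{ij}|\Delta_T \ge \sum_{k\in T_{ij}} d_k$ and, in the proof of Theorem~\ref{thm:time-complexity-4-node-graphlets}, asserts that Algorithm~\ref{alg:typed-triangle-based-motifs-exact} takes $\mathcal{O}(\Delta|T_{ij}|)$ in the worst case, which is exactly your double-loop count with constant-time per-pair work. Your write-up is in fact more careful than the paper's own treatment, explicitly justifying the $\mathcal{O}(1)$ membership tests via $\Psi$ and the constant-time \textsc{Update} via the bounded-codomain hash $\mathbb{F}$; the only minor caveat is that the absorption $\mathcal{O}(\Delta)+\mathcal{O}(\Delta|T_{ij}|)=\mathcal{O}(\Delta|T_{ij}|)$ tacitly assumes $|T_{ij}|\ge 1$, which is harmless since otherwise Algorithm~\ref{alg:typed-triangle-based-motifs-exact} is vacuous.
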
\noindent
Notice $|T_{ij}|\Delta \geq |T_{ij}|\Delta_T \geq \sum_{k \in T_{ij}} d_k$ where $\Delta$ is the maximum degree of a node in $G$ and $\Delta_T$ is the maximum degree of a node in $T_{ij}$.
Thus, $|T_{ij}|\Delta$ only occurs iff $\forall k \in T_{ij}$, $d_k = \Delta$ where $\Delta = $ maximum degree of a node in $G$.
In sparse real-world graphs, $T_{ij}$ is likely to be smaller than $S_i$ and $S_j$ as triangles are typically more rare than 3-node paths.
Conversely, $T_{ij}$ is also more likely to contain high degree nodes, as nodes with larger degrees are obviously more likely to form triangles than those with small degrees.

From Lemma~\ref{lem:time-complexity-path-based}-\ref{lem:time-complexity-triangle-based}, we have the following:
\begin{thm} \label{thm:time-complexity-4-node-graphlets}
For a single edge $(i,j) \in E$, the worst-case time complexity for deriving all 3 and 4-node typed graphlet orbits is:
\begin{equation} \label{eq:time-complexity-fast-alg-overall}
\mathcal{O}\big(\Delta \big(|S_i| + |S_j| + |T_{ij}|\big)\big) 
\end{equation}\noindent
\end{thm}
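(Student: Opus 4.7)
The plan is to obtain this theorem by additive composition of the three time complexities already established for the constituent substeps of Algorithm~\ref{alg:typed-motifs-exact}, followed by a short argument that the combinatorial-relationship step contributes only a dominated lower-order term.

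First, I would invoke Theorem~\ref{lem:time-complexity-3-node-graphlets}: constructing the typed sets $T_{ij}^{t}$, $S_i^{t}$, $S_j^{t}$ for all $t=1,\ldots,L$ (Lines 2--6 of Algorithm~\ref{alg:typed-motifs-exact}) costs $\mathcal{O}(|\Gamma_i| + |\Gamma_j|) = \mathcal{O}(\Delta)$, and this already yields every 3-node typed graphlet count for the edge $(i,j)$. Next, I would apply Lemma~\ref{lem:time-complexity-path-based} for Algorithm~\ref{alg:typed-path-based-motifs-exact} (Line~\ref{algline:main-alg-call-typed-path-based-motifs-exact}), which enumerates the typed \emph{path-based} 4-node graphlet orbits ($g_3, g_6, g_7$) in $\mathcal{O}(\Delta(|S_i| + |S_j|))$ time, and Lemma~\ref{lem:time-complexity-triangle-based} for Algorithm~\ref{alg:typed-triangle-based-motifs-exact} (Line~\ref{algline:main-alg-call-typed-triangle-based-motifs-exact}), which enumerates the typed \emph{triangle-based} orbits ($g_8, g_{10}, g_{12}$) in $\mathcal{O}(\Delta |T_{ij}|)$ time.

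The remaining orbits ($g_4, g_5, g_9, g_{11}$) are derived purely from the quantities already computed, via the constant-time combinatorial identities of Lemmas~\ref{lem:typed-4path-center-orbit-count}--\ref{lem:typed-chordal-cycle-center-orbit-count} (Eq.~\ref{eq:typed-4-path-center-orbit}--\ref{eq:typed-4-chordal-cycle-center-orbit}). For each ordered pair $(t,t^{\prime})$ with $t \leq t^{\prime}$, each identity takes $o(1)$ work, so the for-loop at Line~\ref{algline:main-alg-for-type-pair}--\ref{algline:main-alg-derive-remaining-typed-graphlet-orbits-constant-time} costs $\mathcal{O}(L^2)$ in the naive accounting. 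However, any typed graphlet orbit with nonzero count must contain at least one node from $S_i \cup S_j \cup T_{ij}$, so the only type pairs $(t,t^{\prime})$ that can contribute must both be represented in these sets; the number of such pairs is at most $\mathcal{O}((|S_i|+|S_j|+|T_{ij}|)^2)$, which is bounded by $\mathcal{O}(\Delta(|S_i|+|S_j|+|T_{ij}|))$ since each of $|S_i|,|S_j|,|T_{ij}| \leq \Delta$. Writing nonzero typed-graphlet counts into $\mathcal{X}_{ij}$ (Line~\ref{algline:add-sparse-typed-motif-id-and-count-pairs}) is linear in the number of nonzero entries, which is again bounded by the same quantity.

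Summing the contributions gives $\mathcal{O}(\Delta) + \mathcal{O}(\Delta(|S_i|+|S_j|)) + \mathcal{O}(\Delta|T_{ij}|) + \mathcal{O}(\Delta(|S_i|+|S_j|+|T_{ij}|))$, which simplifies to $\mathcal{O}(\Delta(|S_i|+|S_j|+|T_{ij}|))$ since the $\mathcal{O}(\Delta)$ 3-node term is absorbed whenever $|S_i|+|S_j|+|T_{ij}|\geq 1$. I expect the main technical subtlety to be the second part of the third paragraph: justifying that the $\mathcal{O}(L^2)$ enumeration over type pairs does not dominate. This requires carefully arguing that the only type pairs producing non-trivial work are those realized inside $S_i \cup S_j \cup T_{ij}$, which uses the sparse-typed-graphlet representation from Section~\ref{sec:sparse-typed-motif-format} (Property~\ref{prop:small-frac-typed-motifs}) together with Property~\ref{prop:1} to bound the effective enumeration by the stated expression.
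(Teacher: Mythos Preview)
Your decomposition is exactly the paper's: invoke Theorem~\ref{lem:time-complexity-3-node-graphlets} for the $\mathcal{O}(\Delta)$ cost of the 3-node stage, Lemma~\ref{lem:time-complexity-path-based} for the $\mathcal{O}(\Delta(|S_i|+|S_j|))$ path-based enumeration, Lemma~\ref{lem:time-complexity-triangle-based} for the $\mathcal{O}(\Delta|T_{ij}|)$ triangle-based enumeration, and then sum. The paper's proof also adds a sentence deriving the aggregate bound $\mathcal{O}(M\Delta(\bar{S}+\bar{T}))$ over all edges, which you omit, but that is not part of the per-edge statement.

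The one place you diverge is the treatment of the Line~\ref{algline:main-alg-for-type-pair}--\ref{algline:main-alg-derive-remaining-typed-graphlet-orbits-constant-time} loop. The paper disposes of it in one line: there are at most $L(L-1)/2+L$ type pairs, each handled in $o(1)$, and ``since $L$ is a small constant, $L(L-1)/2+L$ is negligible.'' You instead attempt a bound independent of $L$, arguing that only type pairs realized in $S_i\cup S_j\cup T_{ij}$ do nontrivial work, giving $\mathcal{O}((|S_i|+|S_j|+|T_{ij}|)^2)\le \mathcal{O}(\Delta(|S_i|+|S_j|+|T_{ij}|))$. That is a more general claim, but note it does not match Algorithm~\ref{alg:typed-motifs-exact} as written, which iterates over \emph{all} $t\le t'$ in $\{1,\ldots,L\}$ regardless of whether those types appear; to realize your bound one would first need to enumerate the represented types and restrict the loop accordingly. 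For the theorem as stated in the paper, the ``$L$ constant'' dismissal is both simpler and sufficient, so your extra machinery (and the appeal to Property~\ref{prop:small-frac-typed-motifs}) is not needed here.
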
\noindent

\begin{proof}
The time complexity of each step is provided below.
Hashing all neighbors of node $i$ takes $\mathcal{O}(|\Gamma_i|)$.
Recall from Lemma~\ref{lem:time-complexity-3-node-graphlets} that counting all 3-node typed graphlets takes $\mathcal{O}(2|\Gamma_i| + |\Gamma_j|) = \mathcal{O}(\Delta)$ time for an edge $(i,j) \in E$.
This includes the time required to derive the number of typed 3-node stars and typed triangles for all types $t=1,\ldots,L$.
This information is needed to derive the remaining typed graphlet orbit counts in constant time.
Next, Algorithm~\ref{alg:typed-path-based-motifs-exact} is used to derive a few path-based typed graphlet orbit counts taking $\mathcal{O}(\Delta (|S_i| + |S_j|))$ time in the worst-case.
Similarly, Algorithm~\ref{alg:typed-triangle-based-motifs-exact} is used to derive a few triangle-based typed graphlet orbit counts taking in the worst-case $\mathcal{O}(\Delta |T_{ij}|)$ time.
As an aside, updating the count of a typed graphlet count is $o(1)$ (Algorithm~\ref{alg:updated-typed-motifs}).

Now, we derive the remaining typed graphlet orbit counts in constant time (Line~\ref{algline:main-alg-for-type-pair}-\ref{algline:main-alg-derive-remaining-typed-graphlet-orbits-constant-time}).
Since each type pair leads to different typed graphlets, we must iterate over at most $L(L-1)/2+L$ type pairs.
For each pair of types selected, we derive the typed graphlet orbit counts in $o(1)$ constant time via Eq.~\ref{eq:typed-4-path-center-orbit}-\ref{eq:typed-4-chordal-cycle-center-orbit} (See Line~\ref{algline:main-alg-for-type-pair}-\ref{algline:main-alg-derive-remaining-typed-graphlet-orbits-constant-time}).
Furthermore, the term involving $L$ is for the worst-case when there is at least one node in all $L$ sets (\ie, at least one node of every type $L$).
Nevertheless, since $L$ is a small constant, $L(L-1)/2+L$ is negligible.
Therefore, for a single edge, the worst-case time complexity is $\mathcal{O}(\Delta(|S_i|+|S_j|+|T_{ij}|))$.

Let $\bar{T}$ and $\bar{S}$ denote the average number of triangle and 3-node stars incident to an edge in $G$.
More formally, $\bar{T} = \frac{1}{M}\sum_{(ij) \in E} |T_{ij}|$ and $\bar{S} = \frac{1}{M} \sum_{(ij) \in E} |S_i|+|S_j|$.
The total worst-case time complexity for all $M$ edges is $\mathcal{O}(M\Delta(\bar{S}+\bar{T}))$.
Note that obviously $\bar{S}M = \sum_{(ij) \in E} |S_i|+|S_j|$ and $\bar{T}M = \sum_{(ij) \in E} |T_{ij}|$.
\end{proof}

\begin{cor}\label{lem:time-complexity}
The worst-case time complexity of counting typed graphlets using Algorithm~\ref{alg:typed-motifs-exact} matches the worst-case time complexity of the best known untyped graphlet counting algorithm.
\end{cor}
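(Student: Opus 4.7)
The plan is to combine Theorem~\ref{thm:time-complexity-4-node-graphlets} with the known complexity of the best untyped graphlet counting algorithm (PGD~\cite{pgd}), and then argue that the extra bookkeeping incurred by handling types contributes only a constant factor. Specifically, PGD achieves worst-case time complexity $\mathcal{O}(M\Delta(\bar{S}+\bar{T}))$ per graph, where $\bar{S}$ and $\bar{T}$ are the average number of 3-node stars and triangles incident to an edge; by Theorem~\ref{thm:time-complexity-4-node-graphlets}, Algorithm~\ref{alg:typed-motifs-exact} achieves the same bound when summed over all $M$ edges.

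First, I would isolate the places in Algorithm~\ref{alg:typed-motifs-exact} where type-specific work is performed that does not appear in PGD, and verify that each such operation is constant-time. There are three such places: (i) partitioning $\Gamma_i$, $\Gamma_j$, $T_{ij}$, $S_i$, $S_j$ by type, which is free because the type of each visited neighbor is read once and bucketed into its typed set; (ii) computing the typed graphlet hash $\mathbb{F}(g,\phi_i,\phi_j,\phi_{w_k},\phi_{w_r})$ and updating $\vx_{\hash}$ through the perfect hash table from Section~\ref{sec:typed-motif-hash-function}, which takes $o(1)$ per update by construction; and (iii) applying Eq.~\ref{eq:typed-4-path-center-orbit}--\ref{eq:typed-4-chordal-cycle-center-orbit} in the final loop over type pairs $(t,t^{\prime})$ with $t\le t^{\prime}$ (Line~\ref{algline:main-alg-for-type-pair}--\ref{algline:main-alg-derive-remaining-typed-graphlet-orbits-constant-time}), which requires $\mathcal{O}(L^2)$ time per edge but with $L$ treated as a fixed constant of the input schema.

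Next, I would assemble these observations. The enumeration steps in Algorithms~\ref{alg:typed-path-based-motifs-exact} and~\ref{alg:typed-triangle-based-motifs-exact} traverse exactly the same neighbor lists as PGD, so their total cost per edge is $\mathcal{O}(\Delta(|S_i|+|S_j|+|T_{ij}|))$ by Lemmas~\ref{lem:time-complexity-path-based}--\ref{lem:time-complexity-triangle-based}; the typed constant-time derivations add $\mathcal{O}(L^2)$ per edge. Summing over the $M$ edges of $G$ gives
\[
\mathcal{O}\bigl(M\,\Delta(\bar{S}+\bar{T}) + M L^2\bigr) \;=\; \mathcal{O}\bigl(M\Delta(\bar{S}+\bar{T})\bigr),
\]
where the second equality uses the fact that $L$ is a small constant of the schema (independent of $N$, $M$, and $\Delta$) and that $\Delta(\bar{S}+\bar{T}) \ge 1$ whenever the edge participates in any 3-node graphlet. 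Since this matches the worst-case bound for PGD, the two complexities coincide.

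The main obstacle, and the place I would spend the most care, is justifying that the $L^2$ factor really is absorbable. Two subtleties arise: one must argue that $L$ should be regarded as a constant of the heterogeneous schema rather than scaling with $N$, and one must ensure that the $\mathcal{O}(L^2)$ contribution is not dominant on sparse edges where $\Delta(|S_i|+|S_j|+|T_{ij}|)$ is small. The first point is a modeling assumption consistent with the heterogeneous network literature; the second is handled by observing that whenever an edge contributes to the sum (i.e., participates in any 3- or 4-node typed graphlet), only the types actually appearing in $\Gamma_i\cup\Gamma_j$ need to be iterated, so in practice the constant-time derivation loop visits at most $\min(L,|\Gamma_i|+|\Gamma_j|)^2$ type pairs and is therefore dominated by the enumeration cost.
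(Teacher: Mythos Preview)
Your proposal is correct and takes essentially the same approach as the paper: invoke Theorem~\ref{thm:time-complexity-4-node-graphlets} to obtain the per-edge bound $\mathcal{O}(\Delta(|S_i|+|S_j|+|T_{ij}|))$ and observe that this is precisely the worst-case complexity of PGD. The paper's proof is a single sentence doing exactly this; the additional detail you provide (the breakdown of type-specific bookkeeping and the absorption of the $L^2$ term) is already handled inside the proof of Theorem~\ref{thm:time-complexity-4-node-graphlets} itself, so it is redundant here rather than a different route.
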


\begin{proof}
From Theorem~\ref{thm:time-complexity-4-node-graphlets} we have that $\mathcal{O}\big(\Delta \big(|S_i| + |S_j| + |T_{ij}|\big)\big)$, which is exactly the time complexity of the best known untyped graphlet counting algorithm (PGD~\cite{pgd,pgd-kais}).
\end{proof}

\subsection{Space Complexity} \label{sec:space-complexity}
\noindent
Since our approach generalizes to graphs with an arbitrary number of types $L$, the specific set of typed graphlets is unknown.
As demonstrated in Table~\ref{table:typed-graphlets-example}, it is impractical to store the counts of all possible $k$-node typed graphlets for any graph of reasonable size as typically done in traditional methods for untyped graphlets~\cite{rage,pgd}.
The space complexity required to store the counts of all possible typed graphlets is at least:
\begin{equation} \label{eq:space-complexity-of-other-methods}
\mathcal{O}(MT_{\max})
\end{equation}\noindent
where $M=|E|$ is the number of edges in $G$ and $T_{\max}$ is the number of different possible typed graphlets with $L$ types.
Thus, $MT_{\max}$ is the total space to store $M$ vectors of length $T_{\max}$, \ie, one $T_{\max}$-dimensional vector per edge.
To understand the space requirements 
and how it is impractical for any moderately sized graph, 
suppose we have a graph with $M=10,000,000$ edges and $L=7$ types.
Counting all 3- and 4-node typed graphlet orbits for every edge would require $90.72$ GB of space to store the large $MT_{\max}$ matrix (assuming 4 bytes per count/entry).
This is obviously impractical for any graph of even moderate size.
In contrast, Algorithm~\ref{alg:typed-motifs-exact} is orders of magnitude more space-efficient.
\begin{lemma}\label{lem:space-complexity}
The space complexity of typed graphlets (Algorithm~\ref{alg:typed-motifs-exact}) is $\mathcal{O}(M\bar{T})$.
\end{lemma}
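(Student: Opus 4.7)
The plan is to decompose the total storage required by Algorithm~\ref{alg:typed-motifs-exact} into (i) the per-edge output storage of nonzero typed graphlet counts and (ii) the auxiliary workspace used during enumeration, then argue that (i) dominates and equals $M\bar{T}$ by the definition of $\bar{T}$. Concretely, I will interpret $\bar{T}$ as the average number of \emph{distinct} typed graphlets with nonzero count per edge, that is
\begin{equation}
\bar{T} \,=\, \frac{1}{M} \sum_{(i,j) \in E} |\mathcal{X}_{ij}|,
\end{equation}
which is the quantity motivated empirically in Figure~\ref{fig:unique-typed-motif-edges-cora} and used throughout Section~\ref{sec:sparse-typed-motif-format}.

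First I would account for the output. For each edge $(i,j)\in E$, the sparse typed graphlet format stores $\mathcal{X}_{ij}$, a set of $(\hash,\vx_{\hash})$ pairs only for typed graphlets with nonzero count; this takes $\mathcal{O}(|\mathcal{X}_{ij}|)$ space. Summing over edges gives $\sum_{(i,j)\in E} \mathcal{O}(|\mathcal{X}_{ij}|) = \mathcal{O}(M\bar{T})$ by the definition above. I would also note (by Property~\ref{prop:small-frac-typed-motifs}) that this is near-optimal, since storing a dense $M \times T_{\max}$ matrix as in Eq.~\ref{eq:space-complexity-of-other-methods} would require $\mathcal{O}(MT_{\max})$ and $\bar{T} \ll T_{\max}$ in practice.

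Next I would bound the auxiliary workspace. For each processed edge, Algorithm~\ref{alg:typed-motifs-exact} maintains (a) the typed sets $T_{ij}^t, S_i^t, S_j^t$ for $t=1,\ldots,L$, which together satisfy $\sum_t (|T_{ij}^t|+|S_i^t|+|S_j^t|) \le |\Gamma_i|+|\Gamma_j| = \mathcal{O}(\Delta)$ by Property~\ref{prop:1}--\ref{prop:2}; (b) the hash table $\Psi$ of size $\mathcal{O}(\Delta)$ used to test neighbor membership in constant time; and (c) the local count vector $\vx$ and set $\mathcal{M}_{ij}$, which by construction never exceed $|\mathcal{X}_{ij}|$ entries. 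Since these structures are reused across edges in the sequential case, and replicated at most once per worker in the parallel case (Section~\ref{sec:parallel}), the total auxiliary space is $\mathcal{O}(\Delta)$ (or $\mathcal{O}(p\Delta)$ with $p$ workers), which is dominated by the output term whenever $\bar{T} = \Omega(\Delta/M)$; this holds trivially since we must at least store the edges themselves.

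Combining (i) and (ii), total space is $\mathcal{O}(M\bar{T}) + \mathcal{O}(\Delta) = \mathcal{O}(M\bar{T})$, which is the claim. The main obstacle I foresee is purely expository rather than technical: $\bar{T}$ is introduced only informally in the excerpt, so the proof must pin down the definition used (average nonzero typed graphlets per edge) and then carefully argue that the hash table $\Psi$, the typed neighborhood sets, and the intermediate lookup table for the hash function $\mathbb{F}$ (Section~\ref{sec:typed-motif-hash-function}) are all bounded by $\mathcal{O}(\Delta)$ or by constants independent of $G$, so that none of them inflates the asymptotic bound beyond $M\bar{T}$.
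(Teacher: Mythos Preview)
Your proposal is correct and follows essentially the same approach as the paper: define $\bar{T} = \frac{1}{M}\sum_{(i,j)\in E}|\mathcal{X}_{ij}|$, sum the per-edge output storage to get $\mathcal{O}(M\bar{T})$, and then argue that the auxiliary structures ($T_{ij}^t$, $S_i^t$, $S_j^t$, $\Psi$, $\vx$) are reusable and small in comparison. The only minor discrepancy is that the paper bounds $\Psi$ by $\mathcal{O}(|V|)$ rather than your $\mathcal{O}(\Delta)$, and it simply asserts $\vx$ has size independent of $|V|+|E|$ rather than bounding it by $|\mathcal{X}_{ij}|$; your version is slightly more careful about justifying why these terms are absorbed into $\mathcal{O}(M\bar{T})$, but the argument is the same.
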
\noindent
\begin{proof}
For an edge $(i,j) \in E$, it takes $|\mathcal{X}_{ij}|$ space to store the counts of the nonzero typed graphlets.
Let $\bar{T} = \frac{1}{M} \sum_{(ij) \in E} |\mathcal{X}_{ij}|$ denote the average number of typed graphlets with nonzero counts per edge.
Therefore, the total space required to store the nonzero typed graphlet counts for all $M=|E|$ edges is only $\mathcal{O}(M\bar{T})$.
The space of all other data structures used in Algorithm~\ref{alg:typed-motifs-exact} is small in comparison, \eg, $\Psi$ takes at most $\mathcal{O}(|V|)$ space, whereas $T_{ij}$, $S_{i}$, and $S_{j}$ take $\mathcal{O}(\Delta)$ space in the worst-case (by Property~\ref{prop:relationship-between-typed-sets-and-untyped-sets}) and can be reused for every edge.
In addition, the size of $\vx$ is independent of the graph size ($|V|+|E|$) and can also be reused.
\end{proof}

From Lemma~\ref{lem:space-complexity}, it is straightforward to see that
\begin{equation} \label{eq:total-actual-space}
\mathcal{O}(M\bar{T}) \; \ll \; \mathcal{O}(MT_{\max})
\end{equation}\noindent
The space required by the proposed approach (Algorithm~\ref{alg:typed-motifs-exact}) is nearly-optimal and orders of magnitude lower than methods used for colored graphlets such as GC~\cite{gu2018heterAlignment}, which by definition solve a much simpler problem since there are strictly fewer colored graphlet counts to store.
This is also shown empirically in Table~\ref{table:space-results}.

\begin{table*}[h!]
\centering
\caption{
Runtime results for counting typed graphlets (ours) compared to state-of-the-art methods for colored graphlets (which is a different but simpler problem).
Since these methods are unable to handle large or even medium-sized graphs as shown below, we include a number of very small graphs (\eg, cora, citeseer, webkb) for comparison; and count all $\{2,3,4\}$-node typed graphlets (ours) and colored graphlets.
Note $\Delta=$ max node degree; $|\mathcal{T}_V|=$ number of node types; $|\mathcal{T}_E|=$ number of edge types.
}
\vspace{-3mm}
\label{table:runtime-perf}
\renewcommand{\arraystretch}{1.10} 
\renewcommand{\arraystretch}{1.05} 
\small
\setlength{\tabcolsep}{5.0pt} 
\begin{tabularx}{1.0\linewidth}{@{}
r H lllH cc HH
lXX HXH
H HHH H
@{}
}
\toprule
&&&&&&&
&&&
\multicolumn{5}{c}{\sc seconds} 
\\
\cmidrule(l{0pt}r{5pt}){11-15}

& 
& 
& 
& 
& 
&
& 
& &&
& 
& 
& 
& 
\textbf{Typed} 
\\

& 
& 
$|V|$ & $|E|$ & $\Delta$ & 
&
$|\mathcal{T}_V|$ & 
$|\mathcal{T}_E|$ 
& &&
\textbf{GC} & 
\textbf{ESU} & 
\textbf{G-Tries} & 
& 
\textbf{Graphlets} & 
\\
\midrule

\textsf{citeseer}
& 
& 3.3k & 4.5k & 99 & 
& 6 & 21
&      && 
46.27 & 
5937.75 &
144.08 & 
& 
\textbf{0.022} 
\\

\textsf{cora}
& 
& 2.7k & 5.3k & 168 & 
& 7 & 28
&      && 
467.20 & 
10051.07 & 
351.40 & 
& 
\textbf{0.032}  
\\

\textsf{fb-relationship} 
& 
& 7.3k & 44.9k & 106 & 
& 6 & 20
&      && 
1374.60 & 
54,837.69 & 
3789.17 & 
& 
\textbf{0.701} 
\\

\textsf{web-polblogs}
& 
& 1.2k & 16.7k & 351 & 
& 2 & 1
&      && 
28,986.70 & 
26,577.10 & 
1,563.04 & 
& 
\textbf{1.055} 
\\

\textsf{ca-DBLP}
& 
& 2.9k & 11.3k & 69 &
& 3 & 3
&      && 
149.20 & 
1,188.11 & 
18.90 & 
& 
\textbf{0.100} 
\\

\textsf{inf-openflights}
& 
& 2.9k & 15.7k & 242 & 
& 2 & 2
&      && 
9262.20 & 
18,839.36 & 
458.01 & 
& 
\textbf{0.578} 
\\

\textsf{soc-wiki-elec}
& 
& 7.1k & 100.8k & 1.1k & 
& 2 & 2
&      && 
ETL & 
ETL & 
26,468.85 & 
& 
\textbf{5.316} 
\\

\textsf{webkb}
& 
& 262 & 459 & 122 &  
& 5 & 14
&      && 
85.82 & 
7,158.10 & 
187.22 & 
& 
\textbf{0.006} 
\\

\textsf{terrorRel}
& 
& 881 & 8.6k & 36 & 
& 2 & 3
&      && 
192.6 & 
3130.7 & 
241.1 & 
& 
\textbf{0.039} 
\\

\textsf{pol-retweet} 
& 
& 18.5k & 48.1k & 786 & 
& 2 & 3
&      && 
ETL & 
ETL & 
ETL & 
&
\textbf{0.296} 
\\

\textsf{web-spam}
&
& 9.1k & 465k & 3.9k & 
& 3 & 6
&      && 
ETL & 
ETL & 
ETL & 
&
\textbf{210.97} 
\\

\midrule
\textsf{movielens}
& 
& 28.1k & 170.4k & 3.6k & 
& 3 & 3
&      && 
ETL & 
ETL & 
ETL & 
& 
\textbf{5.23}  
\\

\textsf{citeulike} 
& 
& 907.8k & 1.4M & 11.2k & 
& 3 & 2 
&      && 
ETL & 
ETL & 
ETL & 
& 
\textbf{126.53} 
\\

\textsf{yahoo-msg} 
& 
& 100.1k & 739.8k & 9.4k & 
& 2 & 2
&      && 
ETL & 
ETL & 
ETL & 
&
\textbf{35.22} 
\\

\textsf{dbpedia} 
& 
& 495.9k & 921.7k & 24.8k & 
& 4 & 3 
&      && 
ETL & 
ETL & 
ETL & 
& 
\textbf{56.02} 
\\

\textsf{digg} 
& 
& 217.3k & 477.3k & 219 & 
& 2 & 2 
&      && 
ETL & 
ETL & 
ETL & 
& 
\textbf{5.592} 
\\

\textsf{bibsonomy} 
& 
& 638.8k & 1.2M & 211 & 
& 3 & 3 
&      && 
ETL & 
ETL & 
ETL & 
& 
\textbf{3.631} 
\\

\textsf{epinions}
& 
& 658.1k & 2.6M & 775 &
& 2 & 2
&      && 
ETL & 
ETL & 
ETL & & 
\textbf{85.27} 
\\

\textsf{flickr} 
& 
& 2.3M & 6.8M & 216 & 
& 2 & 2 
&      && 
ETL & 
ETL & 
ETL & 
& 
\textbf{120.79} 
\\

\textsf{orkut}
& 
& 6M & 37.4M & 166 & 
& 2 & 2
&      && 
ETL & 
ETL & 
ETL & 
& 
\textbf{1241.01} 
\\

\midrule

\textsf{ER (10K,0.001)}
& 
& 10k & 50.1k & 26 & 
& 5 & 15
&      && 
183.32 & 
5,399.14 & 
241.27 & 
& 
\textbf{0.48} 
\\

\textsf{CL (1.8)}
& 
& 9.2k & 44.2k & 218 & 
& 5 & 15
&      && 
31,668 & 
45,399.14 & 
5,241.27 & 
& 
\textbf{1.46} 
\\

\textsf{KPGM (log 12,14)}
& 
& 3.3k & 43.2k & 1.3k & 
& 5 & 15
&       && 
ETL & 
ETL & 
63,843.86 & 
& 
\textbf{8.94} 
\\

\textsf{SW (10K,6,0.3)}
& 
& 10k & 30k & 12 & 
& 5 & 15
&    && 
21.48 & 
5,062.67 & 
206.92 & 
& 
\textbf{0.24} 
\\

\bottomrule
\multicolumn{14}{l}{\footnotesize\scriptsize $^{*}$ ETL = Exceeded Time Limit (24 hours / 86,400 seconds)} \\
\end{tabularx}
\end{table*}

\begin{table*}[t!]
\centering
\caption{
Runtime speedup results.
Note ``$\infty$'' indicates the baseline method (GC, ESU, or G-Tries) did not terminate within 24 hours and thus the precise speedup is unknown.
$|\mathcal{T}_V|=$ number of node types; $|\mathcal{T}_E|=$ number of edge types.
}
\vspace{-3mm}
\label{table:runtime-speedup-perf}
\renewcommand{\arraystretch}{1.10} 
\renewcommand{\arraystretch}{1.05} 
\small
\setlength{\tabcolsep}{5.5pt} 
\begin{tabularx}{0.98\linewidth}{@{}
r H 
ccH H
cc HH
HHH HH r
XXX H
}
\toprule
&&&&&&&
&&&
\multicolumn{5}{H}{} &&
\multicolumn{3}{c}{\sc speedup (typed graphlets vs.)}
\\
\cmidrule(l{5pt}r{5pt}){16-19}

& 
& 
$|V|$ & $|E|$ &  & 
&
$|\mathcal{T}_V|$ & 
$|\mathcal{T}_E|$ 
& && 
&&&
& 
\;\; &\;\;&
\textbf{GC} & 
\textbf{ESU} & 
\textbf{G-Tries} & 
\\
\midrule

\textsf{citeseer}
& 
& 3.3k & 4.5k & & 
& 6 & 21
&      && 
&&& &&&
2103x & 
269897x  & 
6549x & 
\\

\textsf{cora}
& 
& 2.7k & 5.3k & & 
& 7 & 28
&      && 
&&& &&&
14600x & 
314095x  & 
10981x & 
\\

\textsf{fb-relationship} 
& 
& 7.3k & 44.9k & & 
& 6 & 20
&      && 
&&& &&&
1960x & 
78227x & 
5405x & 
\\

\textsf{web-polblogs}
& 
& 1.2k & 16.7k & & 
& 2 & 1
&      && 
&&& &&&
27475x & 
25191x & 
1481x & 
\\

\textsf{ca-DBLP}
& 
& 2.9k & 11.3k & & 
& 3 & 3
&      && 
&&& &&& 
1492x & 
11881x & 
189x & 
\\

\textsf{inf-openflights}
& 
& 2.9k & 15.7k & & 
& 2 & 2
&      && 
&&& &&&
16024x & 
32594x & 
792x & 
\\

\textsf{soc-wiki-elec}
& 
& 7.1k & 100.8k & & 
& 2 & 2
&      && 
&&& &&&
$\infty$ & 
$\infty$ & 
45793x & 
\\

\textsf{webkb}
& 
& 262 & 459 & &  
& 5 & 14
&      && 
&&& &&&
14303x & 
1193016x & 
31203x & 
\\

\textsf{terrorRel}
& 
& 881 & 8.6k & & 
& 2 & 3
&      && 
&&& &&&
4938x & 
80274x & 
6182x & 
\\

\textsf{pol-retweet} 
& 
& 18.5k & 48.1k & & 
& 2 & 3
&      && 
&&& &&&
$\infty$ & 
$\infty$ & 
$\infty$ & 
\\

\textsf{web-spam}
&
& 9.1k & 465k & & 
& 3 & 6
&      && 
&&& &&&
$\infty$ & 
$\infty$ & 
$\infty$ & 
\\

\midrule
\textsf{movielens}
& 
& 28.1k & 170.4k & &  
& 3 & 3
&      && 
&&& &&&
$\infty$ & 
$\infty$ & 
$\infty$ & 
\\

\textsf{citeulike} 
& 
& 907.8k & 1.4M & &  
& 3 & 2 
&      && 
&&& &&&
$\infty$ & 
$\infty$ & 
$\infty$ & 
\\

\textsf{yahoo-msg} 
& 
& 100.1k & 739.8k & & 
& 2 & 2
&      && 
&&& &&&
$\infty$ & 
$\infty$ & 
$\infty$ & 
\\

\textsf{dbpedia} 
& 
& 495.9k & 921.7k & & 
& 4 & 3 
&      && 
&&& &&&
$\infty$ & 
$\infty$ & 
$\infty$ & 
\\

\textsf{digg} 
& 
& 217.3k & 477.3k & &  
& 2 & 2 
&      && 
&&& &&&
$\infty$ & 
$\infty$ & 
$\infty$ & 
\\

\textsf{bibsonomy} 
& 
& 638.8k & 1.2M & & 
& 3 & 3 
&      && 
&&& &&&
$\infty$ & 
$\infty$ & 
$\infty$ & 
\\

\textsf{epinions}
& 
& 658.1k & 2.6M & & 
& 2 & 2
&      && 
&&& &&&
$\infty$ & 
$\infty$ & 
$\infty$ & 
\\

\textsf{flickr} 
& 
& 2.3M & 6.8M & & 
& 2 & 2 
&      && 
&&& &&&
$\infty$ & 
$\infty$ & 
$\infty$ & 
\\

\textsf{orkut}
& 
& 6M & 37.4M & & 
& 2 & 2
&      && 
&&& &&&
$\infty$ & 
$\infty$ & 
$\infty$ & 
\\

\midrule

\textsf{ER (10K,0.001)}
& 
& 10k & 50.1k & & 
& 5 & 15
&      && 
&&& &&&
381x & 
11248x & 
502x & 
\\

\textsf{CL (1.8)}
& 
& 9.2k & 44.2k & & 
& 5 & 15
&      && 
&&& &&&
21690x & 
31095x & 
3589x & 
\\

\textsf{KPGM (log 12,14)}
& 
& 3.3k & 43.2k & & 
& 5 & 15
&       && 
&&& &&&
$\infty$ & 
$\infty$ & 
7141x & 
\\

\textsf{SW (10K,6,0.3)}
& 
& 10k & 30k & & 
& 5 & 15
&    && 
&&& &&&
89x & 
21094x & 
862x & 
\\

\bottomrule
\\
\end{tabularx}
\end{table*}

\section{Experiments} \label{sec:exp}
\noindent
The experiments are designed to investigate the runtime performance (Section~\ref{sec:exp-comparison}), space-efficiency (Section~\ref{sec:exp-space-efficiency}), parallelization (Section~\ref{sec:exp-parallel-scaling}), and scalability (Section~\ref{sec:exp-scalability}-\ref{sec:exp-syn-graph-exp}) of the proposed approach.
Results for position-aware typed graphlets are provided in Section~\ref{sec:position-aware-typed-graphlets-results}.
We also demonstrate the utility of \emph{typed graphlets} for two important use cases: 
(i) exploratory analysis/mining (Section~\ref{sec:exploratory-analysis}) and for 
(ii) improving a real-world predictive modeling application (Section~\ref{sec:exp-link-pred}).
To demonstrate the \emph{effectiveness} of the approach, we use a variety of heterogeneous and attributed network data from different application domains.
All data can be accessed at NetworkRepository~\cite{nr}.

\newcommand{\GraphletFigScale}{0.05}
\begin{table}[h!]
\centering
\renewcommand{\arraystretch}{0.95} 
\caption{
Comparing the number of unique \emph{typed graphlets} that occur for each induced subgraph (\eg, there are 40 typed triangles with different type structures in citeseer).
}
\label{table:unique-typed-motif-occur}
\vspace{-3mm}
\small
\begin{tabularx}{1.0\linewidth}{l HrH H cX XX XXXXXX HHH HHHHH@{}}
\toprule
\textbf{Network data}  &   
&  $|E|$  &  && $|\mathcal{T}_{V}|$ & $|\mathcal{T}_{E}|$ &
\includegraphics[scale=0.8]{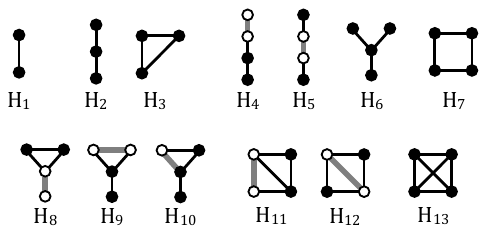} &
\includegraphics[scale=0.8]{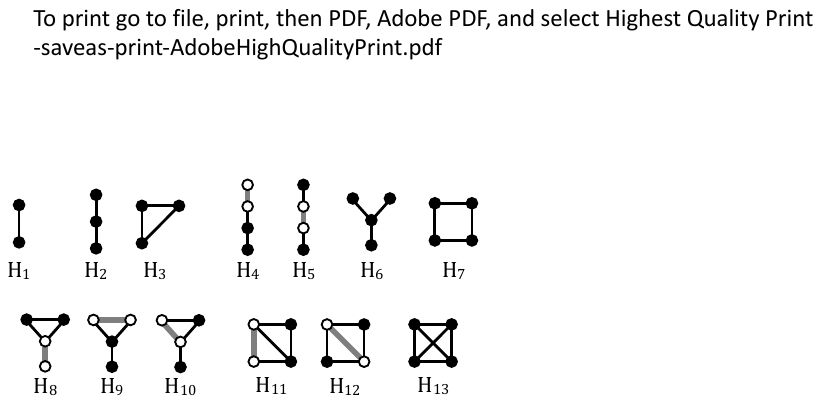} &
\includegraphics[scale=0.15]{fig5.pdf} &
\includegraphics[scale=0.8]{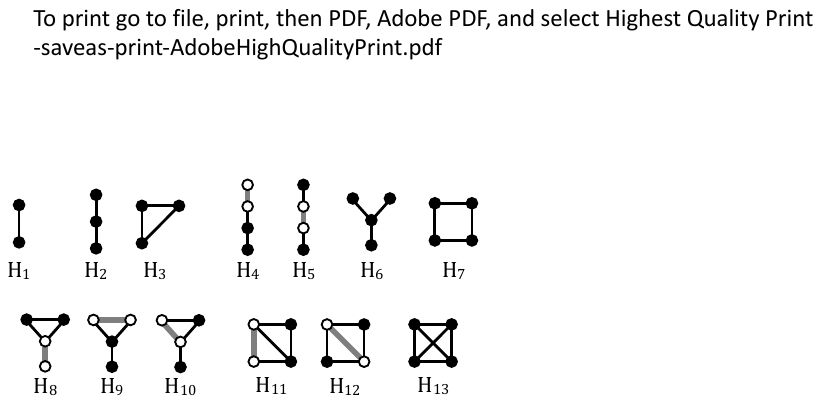} &
\includegraphics[scale=0.8]{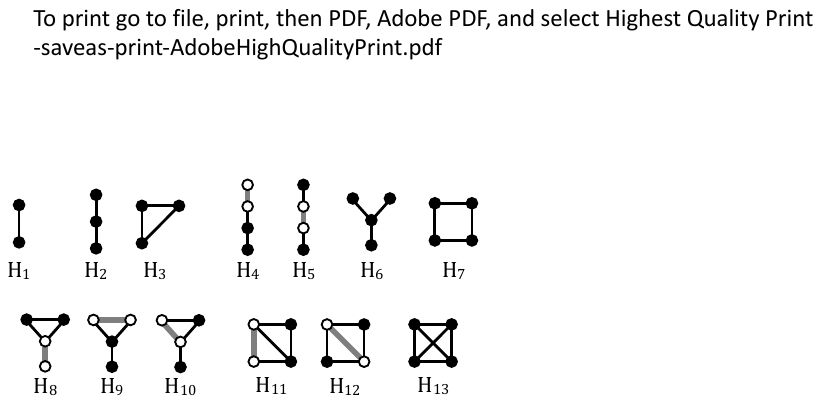} &
\includegraphics[scale=0.15]{fig8.pdf} &
\includegraphics[scale=0.14]{fig9.pdf} &
\includegraphics[scale=0.8]{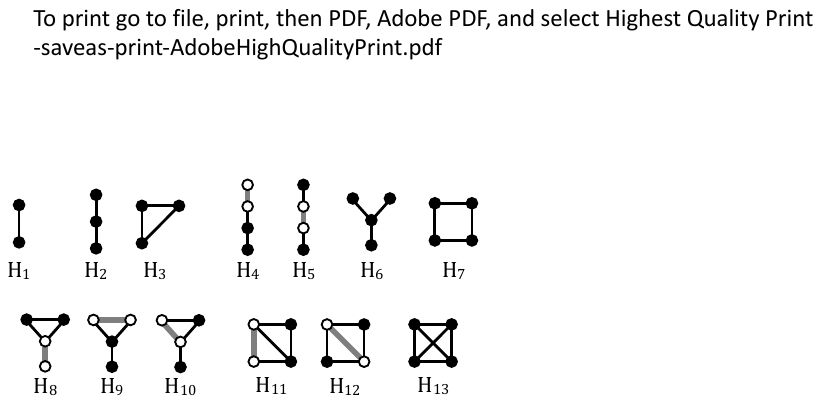} &&
\\
\midrule
\sffamily
\textsf{citeseer} & & 4.5k & & & 6 & 21 & 
56 & 40 & 124 & 119 & 66 & 98 & 56 & 19 & \\

\textsf{cora} & & 5.3k & & & 7 & 28 &
82 & 49 & 202 & 190 & 76 & 157 & 73 & 19 & \\

\textsf{fb-relationship} & & 44.9k & & & 6 & 20 &
50 & 47 & 112 & 109 & 85 & 106 & 89 & 77 & \\

\textsf{web-polblogs} & & 16.7k & & & 2 & 1 &
4 & 4 & 5 & 5 & 5 & 5 & 5 & 5 &  \\ 

\textsf{ca-DBLP} & & 11.3k & & & 3 & 3 & 
10 & 10 & 15 & 15 & 15 & 15 & 15 & 15 & \\

\textsf{inf-openflights} & & 15.7k & & & 2 & 2 &
4 & 4 & 5 & 5 & 5 & 5 & 5 & 5 & \\

\textsf{soc-wiki-elec} & & 100.8k & & & 2 & 2 &
4 & 4 & 5 & 5 & 5 & 5 & 5 & 5 & \\

\textsf{webkb} & & 459 & & & 5 & 14 &
31 & 21 & 59 & 59 & 23 & 51 & 32 & 8 & \\

\textsf{terrorRel} & & 8.6k & & & 2 & 3 &
4 & 4 & 5 & 0 & 4 & 5 & 5 & 5 & \\

\textsf{pol-retweet} &  & 48.1k & & & 2 & 3 & 
4 & 4 & 5 & 5 & 5 & 5 & 5 & 4 & \\

\textsf{web-spam} &  & 465k & & & 3 & 6 &
10 & 10 & 15 & 15 & 15 & 15 & 15 & 15 & \\

\midrule
\textsf{movielens} &  & 170.4k & & & 3 & 3 &
7 & 1 & 6 & 9 & 6 & 3 & 3 & 0 &  \\ 

\textsf{citeulike} & & 1.4M & & & 3 & 2 &
5 & 0 & 3 & 6 & 3 & 0 & 0 & 0 &  \\

\textsf{yahoo-msg} & & 739.8k & & & 2 & 2 &
3 & 2 & 3 & 4 & 3 & 3 & 3 & 2 &  \\

\textsf{dbpedia} & & 921.7k & & & 4 & 3 &
8 & 0 & 6 & 10 & 5 & 0 & 0 & 0 &  \\

\textsf{digg} & & 477.3k & & & 2 & 2 &
4 & 3 & 4 & 5 & 4 & 4 & 4 & 2 &  \\

\textsf{bibsonomy} & & 1.2M & & & 3 & 3 &
7 & 1 & 6 & 9 & 6 & 3 & 3 & 0 & \\

\textsf{epinions} & & 2.6M & & & 2 & 2 &
3 & 2 & 3 & 4 & 3 & 3 & 3 & 2 &  \\

\textsf{flickr} & & 6.8M & & & 2 & 2 &
3 & 2 & 3 & 4 & 3 & 3 & 3 & 2 &  \\

\textsf{orkut} & & 37.4M & & & 2 & 2 &
4 & 3 & 4 & 4 & 3 & 4 & 3 & 2 &  \\

\midrule
\textsf{ER (10K,0.001)} & & 50.1k & & & 5 & 15 &
35 & 30 & 70 & 70 & 69 & 66 & 1 & 0 & \\

\textsf{CL (1.8)} & & 44.2k & & & 5 & 15 &
35 & 35 & 70 & 70 & 70 & 70 & 70 & 68 & \\

\textsf{KPGM (log 12,14)} & & 43.2k & & & 5 & 15 &
35 & 35 & 70 & 70 & 70 & 70 & 70 & 70 & \\

\textsf{SW (10K,6,0.3)} & & 30k & & & 5 & 15 &
35 & 35 & 70 & 70 & 70 & 70 & 70 & 69 & \\

\bottomrule
\end{tabularx}
\end{table}

\subsection{Runtime Comparison} \label{sec:exp-comparison}
\noindent
Since this is the first work to propose and investigate typed graphlets, there are no existing methods for direct comparison.
Nevertheless, we compare the proposed framework to a few recent methods that focus on counting colored graphlets, which is a fundamentally simpler problem.
We first demonstrate how fast the proposed framework is for deriving typed graphlets by comparing the runtime (in seconds) of our approach against three methods for colored graphlets (a similar but simpler problem), namely,  
ESU (using fanmod)~\cite{fanmod}, G-Tries~\cite{ribeiro2014discovering}, and GC~\cite{gu2018heterAlignment}.
Note that these methods do not solve the typed graphlet problem (formally defined in Section~\ref{sec:typed-network-motifs}).
See Figure~\ref{fig:typed-graphlet-vs-colored-graphlet-key-differences} for an intuitive example of the key differences between colored graphlets and our proposed formalization called typed graphlets.
Since these methods are inherently serial (and difficult to parallelize), we use a serial version of the proposed approach for comparison.
We also note that the three methods count colored graphlets for every node whereas the proposed approach derives typed graphlets for every edge.
See Section~\ref{sec:related-work} for other key differences.
Nevertheless, these methods are used for comparison since they are the closest to our own work and solve conceptually simpler problems than the one described in this paper.\footnote{As an aside, we do not compare to methods for counting \emph{untyped graphlets} since these obviously solve a fundamentally different problem and thus are outside the scope of this work.
Furthermore, we also do not focus on graphs with edge types or counting typed graphlets with 5-nodes (or larger) as these are outside the scope of this paper and left for future research.}

For comparison, we use a wide variety of real-world graphs from different domains.
In Table~\ref{table:runtime-perf}, we report the time (in seconds) required by each method.
We also report the speedup obtained from our approach over the other methods in Table~\ref{table:runtime-speedup-perf}.
To be able to compare with the existing methods, we included a variety of very small graphs for which the existing methods could solve in a reasonable amount of time.
Note ETL indicates that a method did not terminate within 24 hours.
Strikingly, the existing methods are unable to handle medium to large graphs with hundreds of thousands or more nodes and edges as shown in Table~\ref{table:runtime-perf}.
Even small graphs can take hours to finish using existing methods (Table~\ref{table:runtime-perf}).
For instance, the small citeseer graph with only 3.3k nodes and 4.5k edges takes 46.27 seconds using the best existing method whereas ours finishes in a tiny fraction of a second, notably, $\nicefrac{2}{100}$ seconds.
This is about 2,100 times faster than the next best method.
Similarly, on the small cora graph with 2.7K nodes and 5.3K edges, GC takes 467 seconds whereas G-Tries takes 351 seconds.
However, our approach finishes counting all typed graphlets with $\{2,3,4\}$-nodes in only 0.03 seconds.
This is 10,000 times faster than the next best method.
Unlike existing methods, our approach is shown to be significantly faster and able to handle large-scale graphs.
The significant speedups obtained by our approach are largely due to the combinatorial relationships between the typed graphlets that we introduce in this work and leverage for deriving many of the typed graphlets in $o(1)$ time.
On flickr, our approach takes about 2 minutes to count the occurrences of all typed graphlets for all 6.8 million edges.
Across all graphs, the proposed method achieves significant speedups over the existing methods as shown in Table~\ref{table:runtime-speedup-perf}.
These results demonstrate the effectiveness of our approach for \emph{counting typed graphlets} in large real-world networks.
As such, the proposed approach brings new opportunities to leverage typed graphlets for real-world applications on much larger networks. 

The results in Table~\ref{table:runtime-perf}-\ref{table:runtime-speedup-perf} demonstrate the effectiveness of the proposed approach on a wide variety of heterogeneous and attributed network data from different application domains.
As an aside, the first 11 graphs in Table~\ref{table:runtime-perf}-\ref{table:runtime-speedup-perf} (\ie, citeseer, cora, fb-relationship, web-polblogs, ca-DBLP, inf-openflights, soci-wiki-elec, webkb, terrorRel, pol-retweet, and web-spam) are all attributed networks as the ``type'' corresponds to different attributes (\eg, political views, paper/research topic, gender, protein function, among others). 
This is in contrast to the other 9 real-world networks in Table~\ref{table:runtime-perf}-\ref{table:runtime-speedup-perf} (\ie, movielens, citeulike, yahoo-msg, dbpedia, digg, bibsonomy, epinions, flickr, orkut) where the types correspond to node or edge types such as users, movies, papers, ratings, among others.

Typed graphlet statistics are shown in Table~\ref{table:unique-typed-motif-occur}.
This includes the number of typed graphlets with nonzero counts for each induced subgraph.
For instance, in cora, there are 49 typed triangle graphlets with nonzero counts out of the 84 possible typed triangle graphlets that could actually occur with $L=7$ types.
From these results, we make an important observation.
In real-world graphs we observe that certain typed graphlets do not occur at all in the graph.
We define such typed graphlets that do not occur in $G$ as \emph{forbidden typed graphlets} as their appearance in the future would indicate something strong.
For instance, perhaps an anomaly or malicious activity.
Other interesting insights and applications of typed graphlets are discussed and explored further in Section~\ref{sec:exploratory-analysis} and Section~\ref{sec:exp-link-pred}.

We also generated synthetic graphs from 4 different graph models including:
Erd\H{o}s-R\'enyi (ER)~\cite{erdos1960evolution},
Chung-Lu (CL)~\cite{chung2002connected},
Kronecker Product Graph Model (KPGM)~\cite{leskovec2010kronecker}, and
Watts-Strogatz Small-World (SW) graph model~\cite{watts1998collective}.
Since these models generate graphs without types, we assign them uniformly at random such that $\frac{N}{L}$ nodes are assigned to every type. 
Unless otherwise mentioned, we set $L=5$.
Results are provided at the bottom of Table~\ref{table:runtime-perf}.
Just as before, we observe significant speedups across all graphs and methods as shown in Table~\ref{table:runtime-speedup-perf}.
Other experiments using synthetic graphs are discussed in Section~\ref{sec:exp-syn-graph-exp}.

\subsection{Space Efficiency Comparison} \label{sec:exp-space-efficiency}
\noindent
We theoretically showed the space complexity of our approach in Section~\ref{sec:space-complexity}.
In this section, we empirically investigate the space-efficiency of our approach compared to ESU (using fanmod)~\cite{fanmod}, G-Tries~\cite{ribeiro2014discovering}, and GC~\cite{gu2018heterAlignment}.
Table~\ref{table:space-results} reports the space used by each method for a variety of real-world graphs.
Strikingly, the proposed approach uses between 42x and 776x less space than existing methods as shown in Table~\ref{table:space-results}.
These results indicate that our approach is space-efficient and practical for large networks.
As an aside, typed graphlet statistics are also shown in Table~\ref{table:unique-typed-motif-occur}.

\begin{table}[h!]
\centering
\setlength{\tabcolsep}{4pt}
\renewcommand{\arraystretch}{1.2} 
\caption{
Comparing the \emph{space} used by the proposed typed graphlet approach.
Since there are no existing methods for typed graphlet, we compare against colored graphlet methods that solve a simpler problem.
}
\vspace{-2mm}
\label{table:space-results}
\small
\begin{tabularx}{1.00\linewidth}{lH @{} XX XX}
\toprule
& &
\multicolumn{1}{X}{\rotatebox{0}{\textsf{citeseer}}} & 
\multicolumn{1}{X}{\rotatebox{0}{\textsf{cora}}} & 
\multicolumn{1}{X}{\rotatebox{0}{\textsf{movielens}}} & 
\multicolumn{1}{X}{\rotatebox{0}{\textsf{web-spam}}}
\\ \midrule

\textbf{GC}
& &
30.1MB & 50.4MB & 
ETL & 
ETL 
\\

\textbf{ESU}
& & 
13.4MB & 46.2MB & 
ETL &
ETL 
\\

\textbf{G-Tries}
& & 
161.9MB & 448.6MB & 
ETL & 
ETL 
\\

\midrule
\textbf{Typed graphlets}
& &
\textbf{316KB} & \textbf{578KB} &
\textbf{22.5MB} &
\textbf{128.9MB} 
\\

\textbf{Position-aware typed graphlets}
& &
\textbf{417KB} & \textbf{806KB} &
\textbf{32.2MB} &
\textbf{192.1MB} 
\\

\bottomrule
\multicolumn{6}{l}{\footnotesize
$^{*}$ ETL = Exceeded Time Limit (24 hours / 86,400 seconds)} 
\\
\end{tabularx}
\end{table}

\begin{figure}[t!]
\centering
\includegraphics[width=0.40\linewidth]{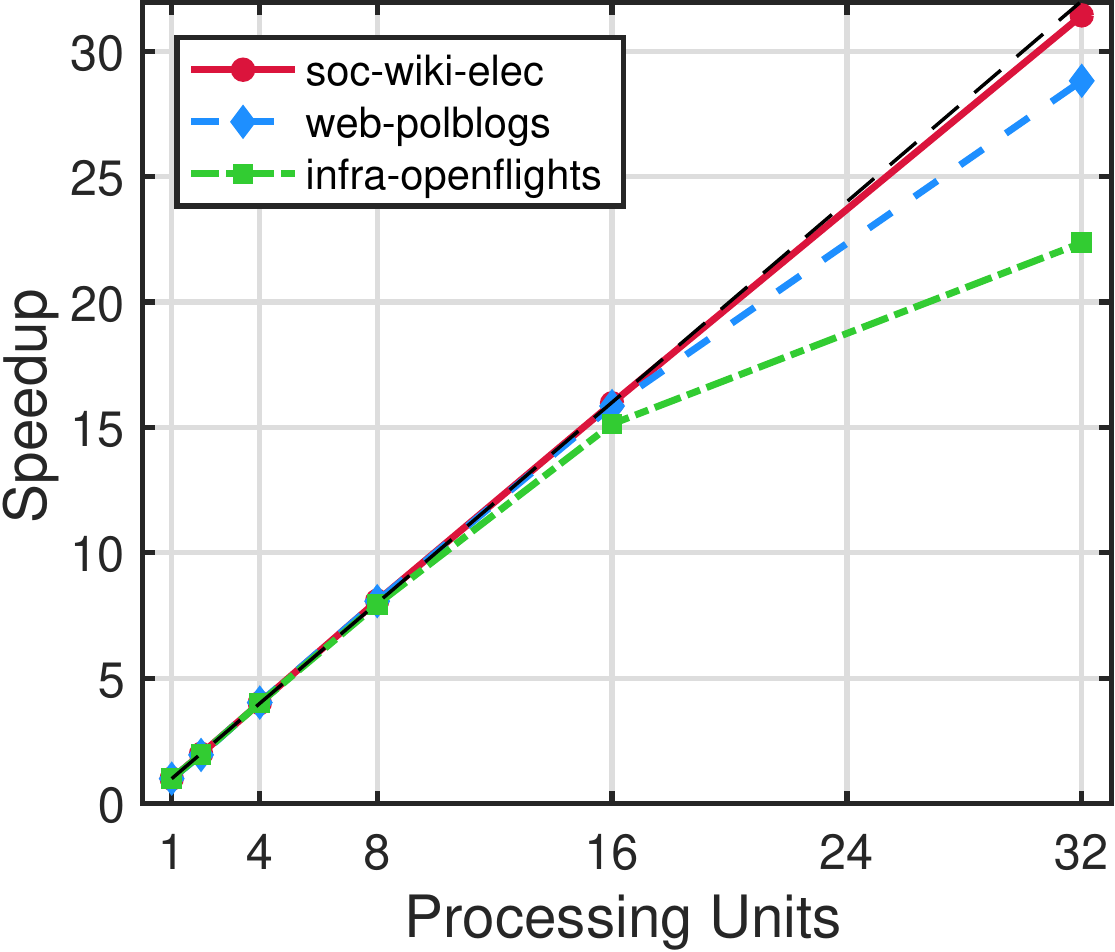}
\vspace{-2mm}
\caption{Parallel speedup of the proposed approach.
Notably, the approach exhibits nearly linear parallel scaling as the number of cores increases.
}
\label{fig:parallel-scaling}
\end{figure}

\subsection{Parallel Speedup} \label{sec:exp-parallel-scaling}
\noindent
This section evaluates the parallel scaling of the proposed approach.
As an aside, this work describes the first parallel approach for typed graphlet counting.
In these experiments, we used a two processor, Intel Xeon E5-2686 v4 system with 256 GB of memory.
None of the experiments came close to using all the memory.
Parallel speedup is simply $S_p = \frac{T_1}{T_p}$ where $T_1$ is the execution time of the sequential algorithm, and $T_p$ is the execution time of the parallel algorithm with $p$ processing units (cores).
In Figure~\ref{fig:parallel-scaling}, we observe nearly linear speedup as we increase the number of cores.
These results indicate the effectiveness of the parallel algorithm for counting typed graphlets in general heterogeneous graphs.

\subsection{Scalability} \label{sec:exp-scalability}
\noindent
To evaluate the scalability of the proposed framework as the size of the graph grows (\ie, number of nodes and edges increase),
we generate Erd\"{o}s-R\'{e}nyi graphs of increasing size (from 100 to 1 million nodes) such that each graph has an average degree of 10.
In Figure~\ref{fig:exp-runtime-ER}, we observe that our approach scales linearly as the number of nodes and edges grow large.
As an aside, our approach takes less than 2 minutes to derive all typed $\{2,3,4\}$-node graphlets for a large graph with 1 million nodes and 10 million edges. 
Note that existing methods are not shown in Figure~\ref{fig:exp-runtime-ER} since they are unable to handle medium to large-sized graphs as shown previously in Table~\ref{table:runtime-perf}.

\begin{figure}[h!]
\centering
\includegraphics[width=0.4\linewidth]{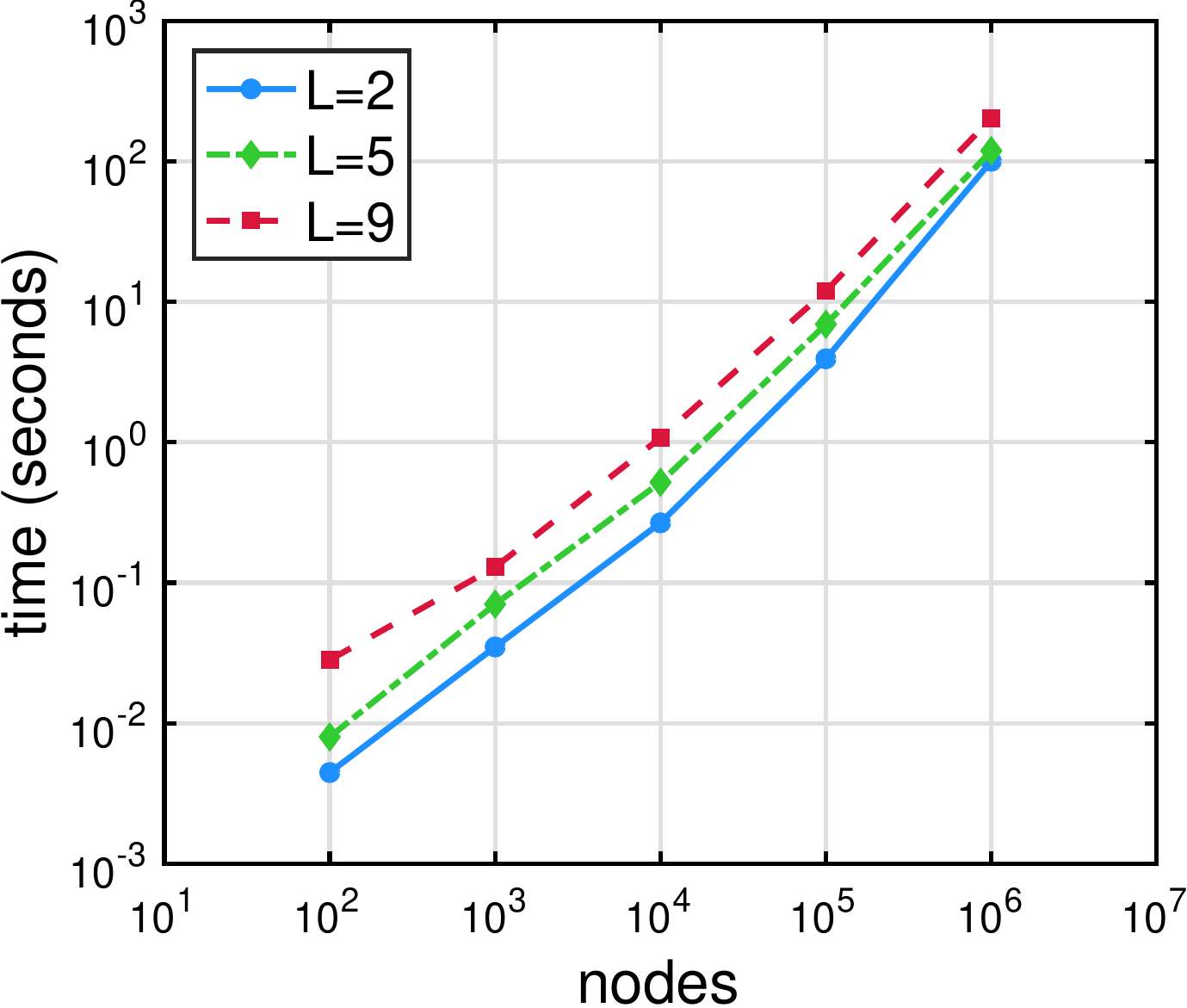}
\vspace{-2mm}
\caption{
Scalability of the proposed framework as the size of the graph increases.
Erd\protect\"{o}s-R\protect\'{e}nyi graphs with an average degree of 10 are used.
}
\label{fig:exp-runtime-ER}
\end{figure}

\subsection{Synthetic Graph Experiments} \label{sec:exp-syn-graph-exp}
In these experiments, we generate synthetic graphs.
For each graph, we vary the number of node types $L$ from 2 to 9, and measure the runtime performance as the number of node types increases as well as the impact in terms of space as $L$ increases.
Given $L \in \{2,\ldots,9\}$ node types, we assign types to nodes uniformly at random such that $\frac{N}{L}$ nodes are assigned to every type.

\begin{table}[h!]
\centering
\renewcommand{\arraystretch}{1.2} 
\setlength{\tabcolsep}{3.5pt}
\caption{Comparing the number of unique typed graphlets that occur for each induced subgraph as we vary the number of types $L$ using a KPGM graph with 3.3k nodes, 43.2k edges, average degree = 26, and max degree = 1.3K.
Speedup is shown in parenthesis.
}
\label{table:unique-typed-motif-occur-syn-KPGM}
\vspace{-3mm}
\small
\begin{tabularx}{0.72\linewidth}{Hc cc cccccc c H 
cc H 
HHHHHH@{}
}
\toprule

&& && &&&&&&&& \multicolumn{2}{c}{
\vspace{-3mm}
\textbf{time} (sec.)\quad\quad\;\;\;\;
}
\\

& $L$  &
\includegraphics[scale=0.8]{fig3.pdf} &
\includegraphics[scale=0.8]{fig4.pdf} &
\includegraphics[scale=0.15]{fig5.pdf} &
\includegraphics[scale=0.8]{fig6.pdf} &
\includegraphics[scale=0.8]{fig7.pdf} &
\includegraphics[scale=0.15]{fig8.pdf} &
\includegraphics[scale=0.14]{fig9.pdf} &
\includegraphics[scale=0.8]{fig10.pdf} &&&
\textbf{serial} & 
\textbf{parallel} - 4 cores & 
\\
\midrule

& \textbf{2} & 4 & 4 & 5 & 5 & 5 & 5 & 5 & 5 &&&
8.11 & 2.08 \;(\emph{3.89x}) \\

& \textbf{5} & 35 & 35 & 70 & 70 & 70 & 70 & 70 & 70 &&&
8.94 & 2.26 \;(\emph{3.95x})  \\

& \textbf{9} & 165 & 165 & 495 & 495 & 495 & 495 & 495 & 495 &&&
10.37 & 2.62 \;(\emph{3.95x})  \\

\bottomrule
\end{tabularx}
\end{table}

\subsubsection{Impact on Performance}
We first investigate the runtime performance of our approach as the number of types $L$ increases from 2 to 9.
We use both a serial and parallel implementation of our method for comparison.
Results are shown in Table~\ref{table:unique-typed-motif-occur-syn-KPGM}.
Notably, the parallel speedup of the parallel algorithm is constant regardless of $L$.
Therefore, it is not impacted by the increase in $L$.
Furthermore, the runtime of both the serial and parallel algorithm increases slightly as $L$ increases. 
Notice the additional work depends on the number of unique typed graphlets (the sum of columns 2-9 in Table~\ref{table:unique-typed-motif-occur-syn-KPGM}) and not directly on $L$ itself. 
The total amount of unique typed graphlets substantially increases as $L$ increases from $2$ to $9$ as shown in Table~\ref{table:unique-typed-motif-occur-syn-KPGM}.
This is primarily due to the random assignment of types to nodes.
However, in sparse real-world graphs the total unique typed graphlets is typically much smaller as shown in Table~\ref{table:unique-typed-motif-occur}.

\begin{figure}[h!]
\centering
\hspace{-2mm}
\subfigure[Runtime performance as the $\#$ of types increases]{
\includegraphics[width=0.4\linewidth]{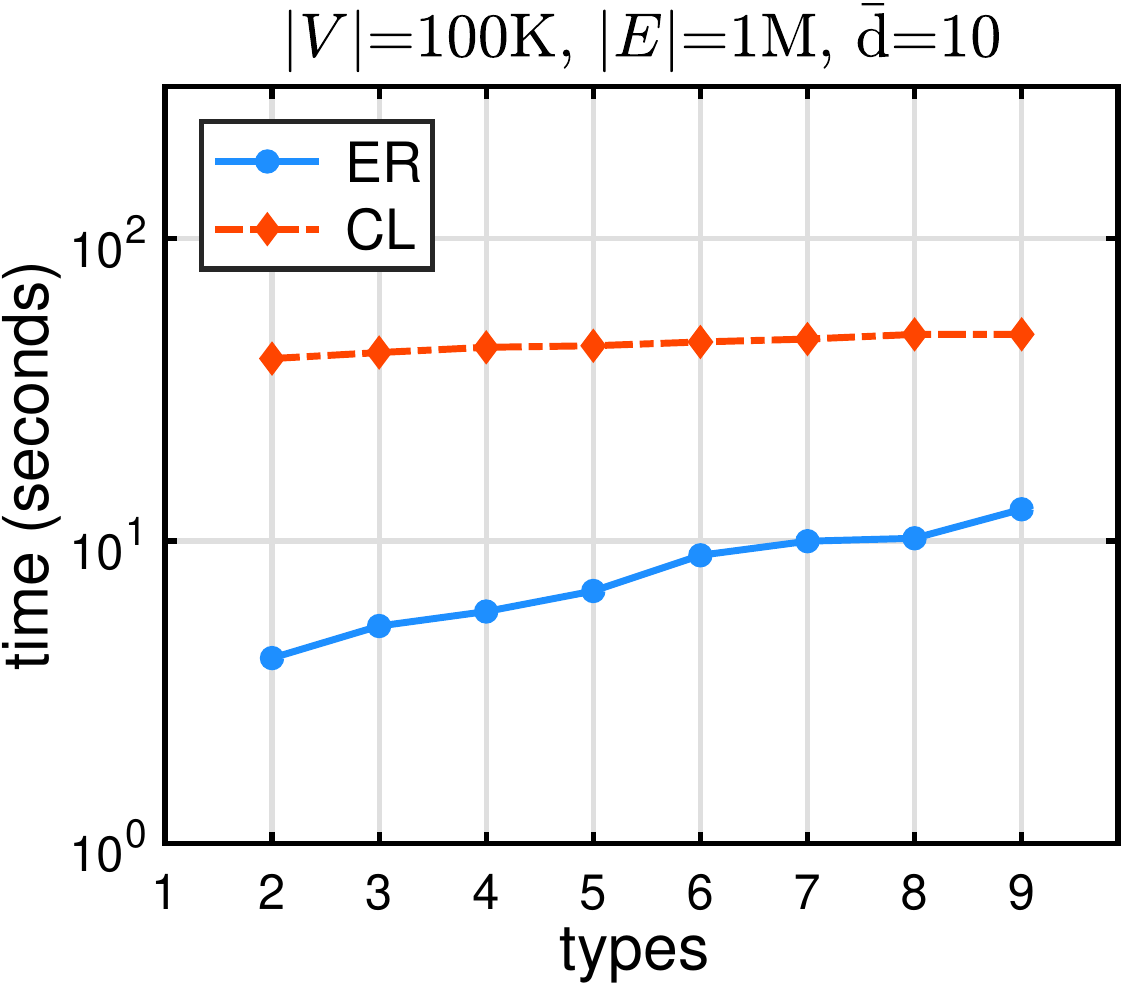}
\label{fig:runtime-ER-CL-vs-varyingNumTypes}
}
\hspace{8mm}
\subfigure[Space (MB) as the $\#$ of types increases]{
\includegraphics[width=0.4\linewidth]{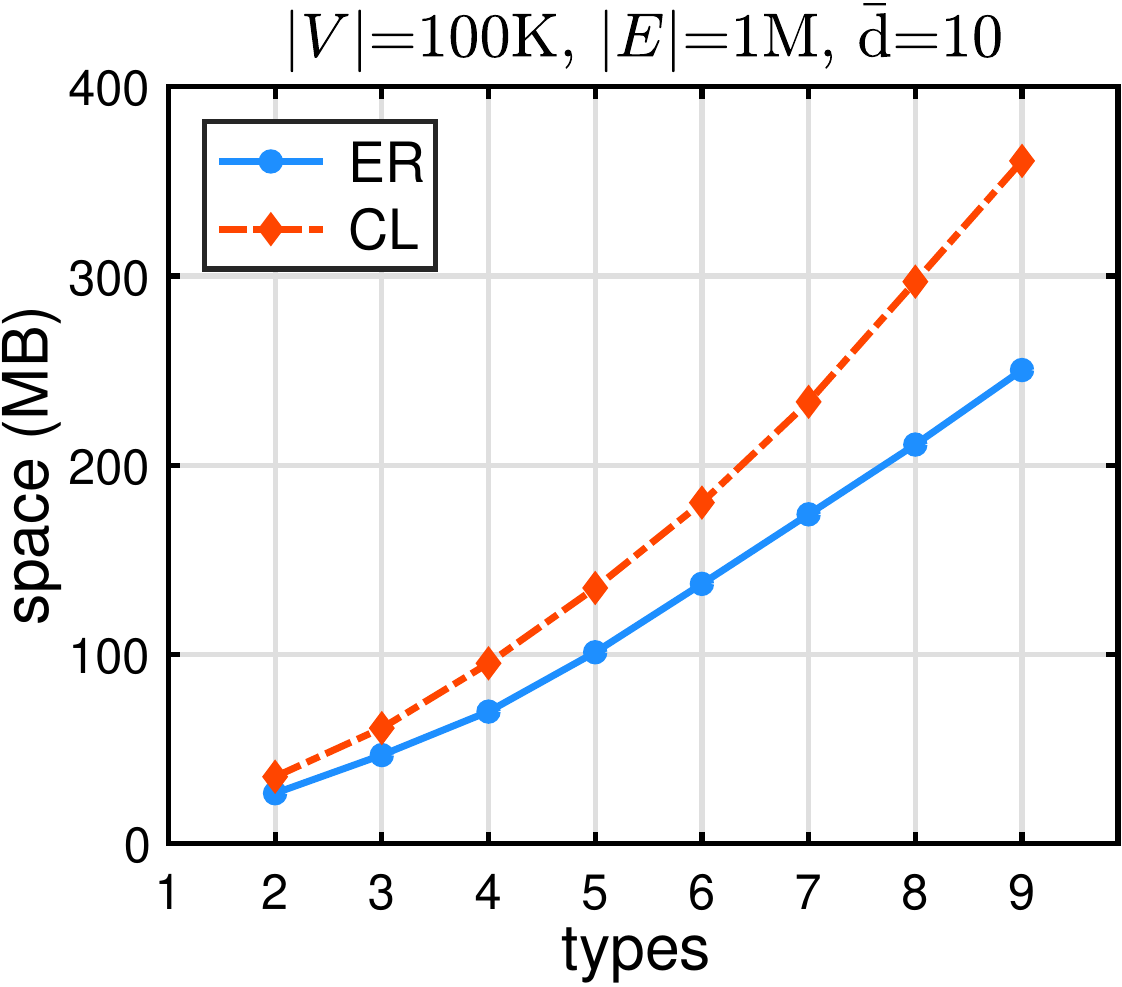}
\label{fig:space-ER-CL-vs-varyingNumTypes}
}

\caption{Comparing runtime performance \emph{and} space as the number of types increases.
\textbf{Left}: In the case of CL graphs with a skewed degree distribution, the runtime is nearly constant as the number of types increases.
\textbf{Right}: Both ER and CL have similar space requirements.
Since node types are randomly assigned to nodes, this represents a type of worst-case since every edge in $G$ is likely to have significantly more distinct typed graphlets compared to sparse real-world graphs (as shown in Table~\ref{table:unique-typed-motif-occur} and Section~\ref{sec:exploratory-analysis}).
Note $\bar{\mathrm{d}} = \frac{1}{n} \sum d_i$.
}
\label{fig:runtime-space-ER-CL-vs-varyingNumTypes}
\end{figure}

To further understand how the structure of the graph impacts runtime, we generate an ER and Chung-Lu (CL) graph with 100K nodes, 1M edges, and average degree 10.
We vary the number of types $L$ and assign types to nodes uniformly at random as discussed previously.
Notice that both the ER and CL graph are generated such that they each have 100K nodes, 1 million edges, with average degree 10.
However, both graphs are structurally very different.
For instance, the degrees among the nodes in the ER graph are more uniform whereas the degree of the nodes in CL are skewed such that a few nodes have very large degree while the others have relatively small degree.
We observe in Figure~\ref{fig:runtime-ER-CL-vs-varyingNumTypes} that for CL graphs with a skewed degree distribution, the runtime of the approach as the number of types increases is essentially constant.
This result is important as most real-world graphs also have a skewed degree distribution (social networks, web graphs, information networks, etc.)~\cite{Girvan2002,Faloutsos1999}.
However, even in the case where the degrees are more uniform across the nodes, our approach still performs well as shown in Figure~\ref{fig:runtime-ER-CL-vs-varyingNumTypes}.

\subsubsection{Impact on Space}
Figure~\ref{fig:space-ER-CL-vs-varyingNumTypes} shows the memory (space) required by our approach as the number of types increases from $L\in \{2,\ldots,9\}$.
Both ER and CL graphs are shown to have similar space requirements. 
This is likely due to the random assignment of types to nodes.
This assignment represents a type of worst case since every edge is likely to have significantly more distinct typed graphlets compared to sparse real-world graphs.
This difference can be seen in Table~\ref{table:unique-typed-motif-occur}.

\begin{table}[b!]
\centering
\renewcommand{\arraystretch}{0.95} 
\caption{
Comparing the number of unique \emph{position-aware typed graphlets} that occur for each induced subgraph.
}
\label{table:unique-position-aware-typed-motif-occur}
\vspace{-3mm}
\small
\begin{tabularx}{1.0\linewidth}{l HrH H cX XX XXXXXX HHH HHHHH@{}}
\toprule
\textbf{Network data}  &   
&  $|E|$  & && $|\mathcal{T}_{V}|$ & $|\mathcal{T}_{E}|$ &
\includegraphics[scale=0.8]{fig3.pdf} &
\includegraphics[scale=0.8]{fig4.pdf} &
\includegraphics[scale=0.15]{fig5.pdf} &
\includegraphics[scale=0.8]{fig6.pdf} &
\includegraphics[scale=0.8]{fig7.pdf} &
\includegraphics[scale=0.15]{fig8.pdf} &
\includegraphics[scale=0.14]{fig9.pdf} &
\includegraphics[scale=0.8]{fig10.pdf} &&
\\
\midrule
\sffamily

\textsf{citeseer} & & 4.5k && & 6 & 21 & 
212 & 114 & 1090 & 885 & 278 & 663 & 243 & 49 & \\

\textsf{cora} & & 5.3k && & 7 & 28 & 
299 & 132 & 1839 & 1626 & 289 & 1129 & 334 & 52 & \\

\textsf{fb-relationship} & & 44.9k && & 6 & 20 & 
199 & 180 & 1148 & 1105 & 761 & 1113 & 895 & 703 & \\

\textsf{web-polblogs} & & 16.7k && & 2 & 1 & 
8 & 8 & 16 & 16 & 16 & 16 & 16 & 16 & \\

\textsf{ca-DBLP} & & 11.3k && & 3 & 3 & 
27 & 27 & 81 & 81 & 81 & 81 & 81 & 81 & \\

\textsf{infra-openflights} & & 15.7k && & 2 & 2 & 
8 & 8 & 16 & 16 & 16 & 16 & 16 & 16 & \\

\textsf{soc-wiki-elec} & & 100.8k && & 2 & 2 & 
8 & 8 & 16 & 16 & 16 & 16 & 16 & 16 & \\

\textsf{webkb} & & 459 && & 5 & 14 & 
97 & 63 & 444 & 396 & 98 & 382 & 209 & 44 & \\

\textsf{terrorRel} & & 8.6k && & 2 & 3 & 
8 & 8 & 16 & 0 & 10 & 16 & 16 & 16 &  \\

\textsf{pol-retweet} & & 48.1k && & 2 & 3 & 
8 & 8 & 16 & 16 & 16 & 16 & 16 & 8 & \\

\textsf{web-spam} & & 465k && & 3 & 6 & 
27 & 27 & 81 & 81 & 81 & 81 & 81 & 81 & \\

\midrule
\textsf{movielens} & & 170.4k && & 3 & 3 & 
18 & 6 & 42 & 42 & 18 & 30 & 18 & 0 &  \\

\textsf{citeulike} & & 1.4M && & 3 & 1 & 
12 & 0 & 16 & 20 & 8 & 0 & 0 & 0 &  \\

\textsf{yahoo-msg} & & 739.8k && & 2 & 2 & 
4 & 3 & 7 & 8 & 5 & 7 & 6 & 3 & \\

\textsf{dbpedia} & & 921.7k && & 4 & 3 & 
20 & 0 & 36 & 36 & 14 & 0 & 0 & 0 & \\

\textsf{digg} & & 477.3k && & 2 & 2 & 
6 & 4 & 10 & 12 & 7 & 8 & 5 & 3 &  \\

\textsf{bibsonomy} & & 1.2M && & 3 & 3 & 
18 & 6 & 42 & 42 & 18 & 30 & 18 & 0 & \\

\textsf{epinions} & & 2.6M && & 2 & 2 & 
6 & 4 & 10 & 12 & 7 & 10 & 8 & 4 &  \\

\textsf{flickr} & & 6.8M && & 2 & 2 & 
6 & 4 & 10 & 12 & 7 & 10 & 8 & 4 &  \\

\textsf{orkut} & & 37.4M && & 2 & 2 & 
8 & 6 & 15 & 15 & 7 & 14 & 10 & 5 & \\

\midrule

\textsf{ER (10K,0.001)} & & 50.1k && & 5 & 15 & 
125 & 117 & 625 & 625 & 623 & 623 & 3 & 0 & \\

\textsf{CL (1.8)} & & 44.2k && & 5 & 15 & 
125 & 125 & 625 & 625 & 625 & 625 & 625 & 625 & \\

\textsf{KPGM (log 12,14)} & & 43.2k && & 5 & 15 & 
125 & 125 & 625 & 625 & 625 & 625 & 625 & 625 & \\

\textsf{SW (10K,6,0.3)} & & 30k && & 5 & 15 & 
125 & 125 & 625 & 625 & 625 & 625 & 625 & 623 & \\

\bottomrule
\end{tabularx}
\end{table}

\subsection{Position-Aware Typed Graphlet Results} \label{sec:position-aware-typed-graphlets-results}
In these experiments, we investigate position-aware typed graphlets introduced formally in Section~\ref{sec:position-aware-typed-graphlets}.
The difference between typed graphlets (Def.~\ref{def:typed-graphlet-instance}) and \emph{position-aware typed graphlets} (Def.~\ref{def:position-aware-typed-graphlet-instance}) is shown in Figure~\ref{fig:position-aware-typed-graphlets} using an intuitive example.
In Table~\ref{table:unique-position-aware-typed-motif-occur}, we show the number of unique position-aware typed graphlets that occur for each induced subgraph using both synthetic and real-world graphs from a wide range of domains.
Note the first eleven graphs in Table~\ref{table:unique-position-aware-typed-motif-occur} are attributed graphs whereas the next nine graphs are heterogeneous networks.

\begin{table*}[h!]
\centering
\caption{
Runtime results (in seconds) for counting position-aware typed graphlets (Def.~\ref{def:position-aware-typed-graphlet-instance}) compared to typed graphlets (Def.~\ref{def:typed-graphlet-instance}).
Best result is bold.
Note $\Delta=$ max node degree; $|\mathcal{T}_V|=$ number of node types.
}
\vspace{-3mm}
\label{table:runtime-perf-position-aware}
\renewcommand{\arraystretch}{1.10} 
\renewcommand{\arraystretch}{1.05} 
\small
\setlength{\tabcolsep}{6.5pt} 
\begin{tabularx}{1.0\linewidth}{@{}
r H lllH cH HH
HHH H
ll@{}
H
H 
HHH H
@{}
}
\toprule

& 
& 
$|V|$ & $|E|$ & $\Delta$ & 
&
\;$|\mathcal{T}_V|$\;\;  & 
& &&
&&& 
& 
\textbf{Typed graphlets} &
\textbf{Position-aware (Def.~\ref{def:position-aware-typed-graphlet-instance})} & 
\\
\midrule

\textsf{citeseer}
& 
& 3.3k & 4.5k & 99 & 
& 6 & 
&      && 
&    &&  
& 
\text{0.022} & \textbf{0.020} &
\\

\textsf{cora}
& 
& 2.7k & 5.3k & 168 & 
& 7 & 
&      && 
&    && 
& 
\text{0.032} & \textbf{0.031} & 
\\

\textsf{fb-relationship} 
& 
& 7.3k & 44.9k & 106 & 
& 6 & 
&      && 
&    && 
& 
\textbf{0.701} & \text{0.832} & 
\\

\textsf{web-polblogs}
& 
& 1.2k & 16.7k & 351 & 
& 2 & 
&      && 
&    && 
& 
\text{1.055} & \textbf{1.042} &
\\

\textsf{ca-DBLP}
& 
& 2.9k & 11.3k & 69 &
& 3 & 
&      && 
&    && 
& 
\textbf{0.100} & \text{0.115} & 
\\

\textsf{inf-openflights}
& 
& 2.9k & 15.7k & 242 & 
& 2 & 
&      && 
&    && 
& 
\text{0.578} & \textbf{0.562} &
\\

\textsf{soc-wiki-elec}
& 
& 7.1k & 100.8k & 1.1k & 
& 2 & 
&      && 
&    && 
& 
\text{5.316} & \textbf{4.939} & 
\\

\textsf{webkb}
& 
& 262 & 459 & 122 &  
& 5 & 
&      && 
&    && 
& 
\text{0.006} & \textbf{0.005} & 
\\

\textsf{terrorRel}
& 
& 881 & 8.6k & 36 & 
& 2 & 
&      && 
&    && 
& 
\textbf{0.039} & \text{0.048} & 
\\

\textsf{pol-retweet} 
& 
& 18.5k & 48.1k & 786 & 
& 2 & 
&      && 
&    && 
&
\text{0.296} & \textbf{0.289} &
\\

\textsf{web-spam}
&
& 9.1k & 465k & 3.9k & 
& 3 & 
&      && 
&    && 
&
\text{210.97} & \textbf{207.36} & 
\\

\midrule
\textsf{movielens}
& 
& 28.1k & 170.4k & 3.6k & 
& 3 & 
&      && 
&    && 
& 
\text{5.23} & \textbf{5.12} & 
\\

\textsf{citeulike} 
& 
& 907.8k & 1.4M & 11.2k & 
& 3 &  
&      && 
&    && 
& 
\text{126.53} & \textbf{125.66} &
\\

\textsf{yahoo-msg} 
& 
& 100.1k & 739.8k & 9.4k & 
& 2 & 
&      && 
&    && 
&
\text{35.22} & \textbf{35.08} & 
\\

\textsf{dbpedia} 
& 
& 495.9k & 921.7k & 24.8k & 
& 4 & 
&      && 
&    && 
& 
\text{56.02} & \textbf{53.36} &
\\

\textsf{digg} 
& 
& 217.3k & 477.3k & 219 & 
& 2 & 
&      && 
&    && 
& 
\text{5.592} & \textbf{5.578} & 
\\

\textsf{bibsonomy} 
& 
& 638.8k & 1.2M & 211 & 
& 3 &  
&      && 
&    && 
& 
\text{3.631} & \textbf{3.607} & 
\\

\textsf{epinions}
& 
& 658.1k & 2.6M & 775 &
& 2 & 
&      && 
&    && 
& 
\text{85.27} & \textbf{85.05} &
\\

\textsf{flickr} 
& 
& 2.3M & 6.8M & 216 & 
& 2 & 
&      && 
&    && 
& 
\text{120.79} & \textbf{112.45} &
\\

\textsf{orkut}
& 
& 6M & 37.4M & 166 & 
& 2 & 
&      && 
&    && 
& 
\text{1241.01} & \textbf{1236.21} & 
\\

\midrule

\textsf{ER (10K,0.001)}
& 
& 10k & 50.1k & 26 & 
& 5 & 
&      && 
&    && 
& 
\textbf{0.48} & \text{0.59} & 
\\

\textsf{CL (1.8)}
& 
& 9.2k & 44.2k & 218 & 
& 5 & 
&      && 
&    && 
& 
\textbf{1.46} & \text{1.65} & 
\\

\textsf{KPGM (log 12,14)}
& 
& 3.3k & 43.2k & 1.3k & 
& 5 & 
&       && 
&    && 
& 
\text{8.94} & \textbf{8.56} & 
\\

\textsf{SW (10K,6,0.3)}
& 
& 10k & 30k & 12 & 
& 5 & 
&    && 
&    && 
& 
\textbf{0.24} & \text{0.29} & 
\\

\bottomrule
\end{tabularx}
\end{table*}

\begin{figure}[b!]
\centering

\includegraphics[width=0.45\linewidth]{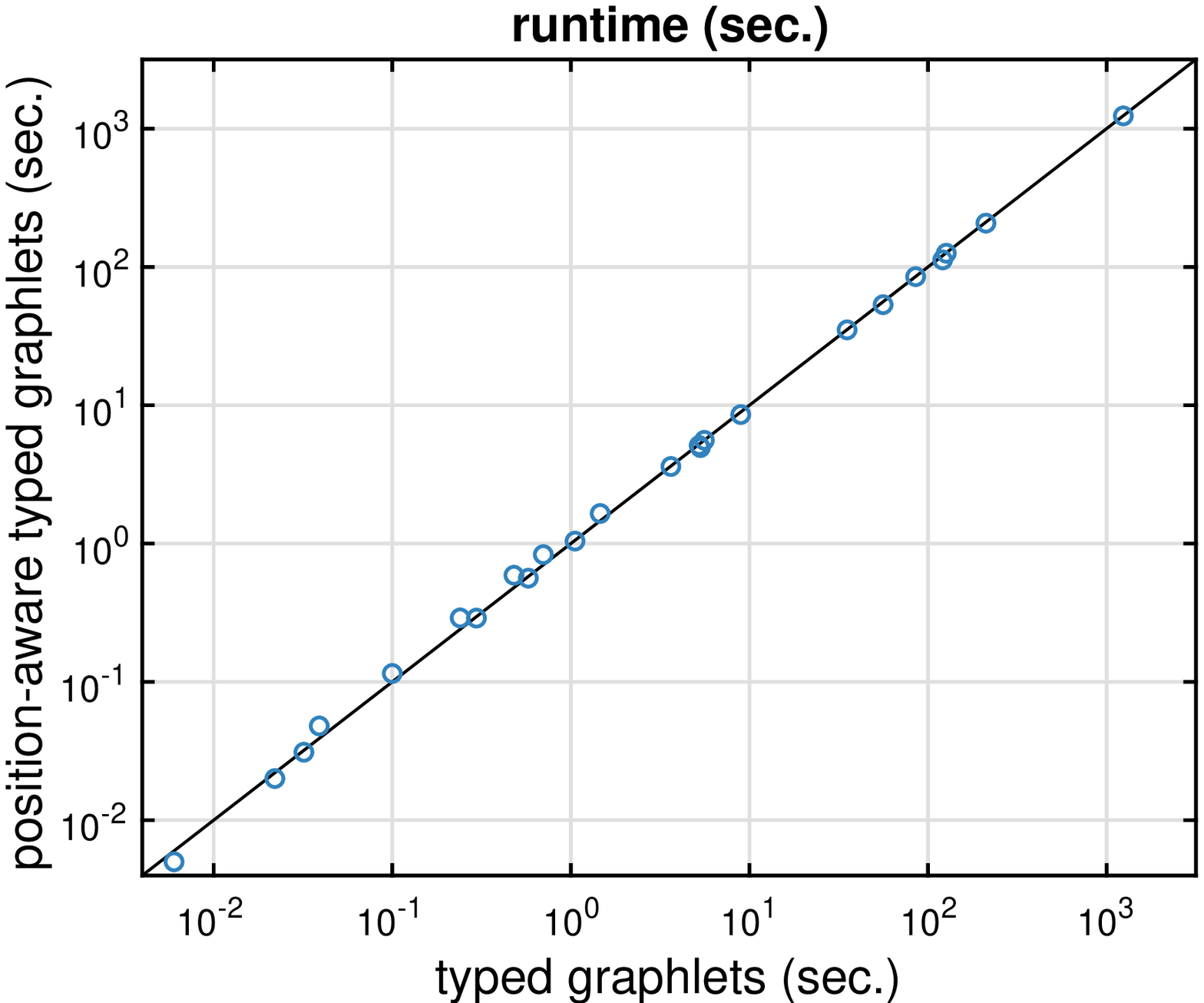}
\vspace{-2mm}
\caption{
Runtime (in seconds) for typed graphlets and position-aware typed graphlets. 
See text for discussion.
}
\label{fig:runtime-typed-graphlet-vs-position-aware}
\end{figure}

Runtime results for position-aware typed graphlets is shown in Table~\ref{table:runtime-perf-position-aware}.
We compare the runtime of position-aware typed graphlets to typed graphlets.
Notably, in all cases, the runtime of position-aware typed graphlets and typed graphlets is very close as shown in Table~\ref{table:runtime-perf-position-aware}.
Recall from Section~\ref{sec:position-aware-typed-graphlets} that this is what is expected since the 
algorithmic difference is trivial.
To better understand the differences in runtime, we show the runtime in seconds of typed graphlets (x-axis) vs. position-aware typed graphlets (y-axis) for all the graphs in Table~\ref{table:runtime-perf-position-aware}.
Note the diagonal line in Figure~\ref{fig:runtime-typed-graphlet-vs-position-aware} represents the expected runtime assuming that typed graphlets and position-aware typed graphlets have the same runtime performance.
In Figure~\ref{fig:runtime-typed-graphlet-vs-position-aware}, most of the graphs lie on the line, while a few have minor deviations in either direction.
Position-aware typed graphlets are typically faster to compute for most of the real-world graphs as shown in Table~\ref{table:runtime-perf-position-aware}.
However, in terms of the four synthetic graphs in Table~\ref{table:runtime-perf-position-aware} (last four graphs), we find that typed graphlets is faster for three of the four with the KPGM graph being the exception.
We also note that compared to the other three methods for counting the simpler notion of colored graphlets, both typed graphlets and position-aware typed graphlets remain significantly faster.

In Table~\ref{table:space-results}, we also report the space used by position-aware typed graphlets.
Note that by definition position-aware typed graphlets use at least as much space as typed graphlets, and typically use more space than typed graphlets as shown in Table~\ref{table:space-results}.
Despite that typed graphlets and position-aware typed graphlets are more complex than colored graphlets (and theoretically should use less space), the proposed framework for both uses significantly less space than these other methods.

\begin{figure}[h!]
\centering
\subfigure{\includegraphics[width=0.40\linewidth]{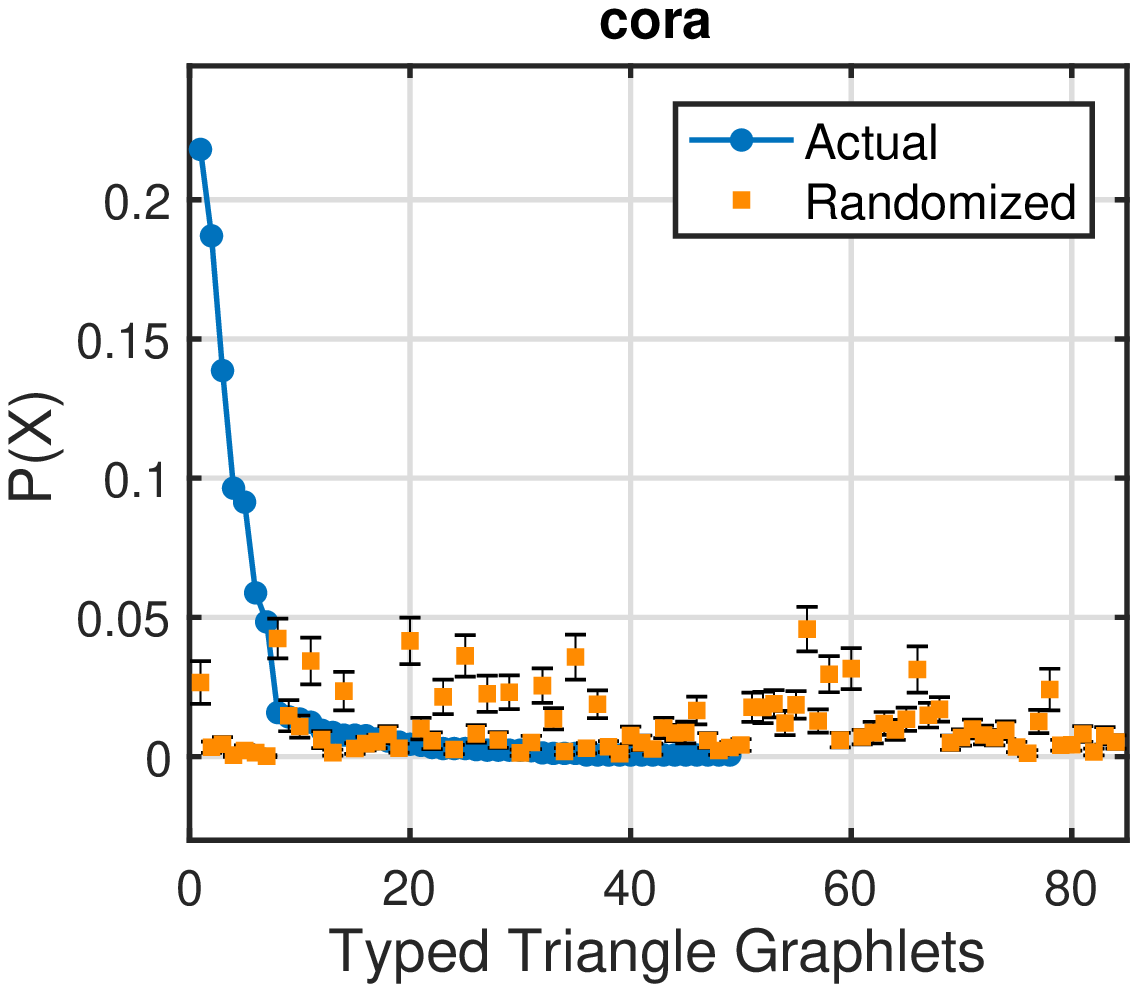}}
\hspace{8mm}
\subfigure{\includegraphics[width=0.405\linewidth]{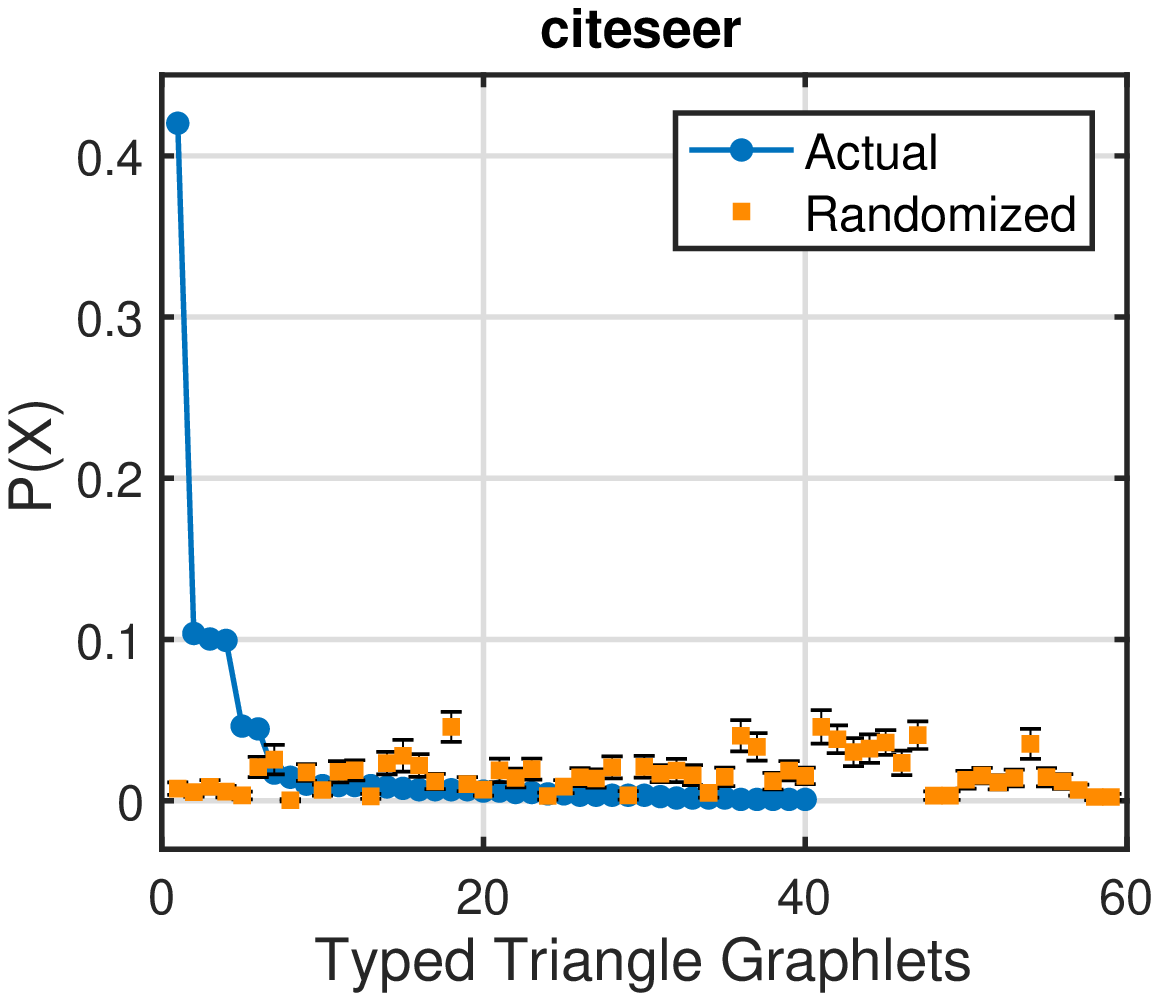}}

\vspace{-3mm}
\caption{
Comparing the actual typed triangle distribution to the randomized typed triangle distribution.
We compute 100 random permutations of the node types and run the approach on each permutation then average the resulting counts to obtain the mean randomized typed triangle distribution.
There are three key findings.
First, we observe a significant difference between the actual and randomized typed triangle distributions.
Second, many of the typed triangles that occur when the types are randomized, do not occur in the actual typed triangle distribution.
Third, we find the typed triangle distribution to be skewed 
as a few typed triangles occur very frequently while the vast majority have very few occurrences.
}
\label{fig:typed-tri-prob-dist-cora-citeseer}
\end{figure}

\begin{figure}[h!]
\centering
\includegraphics[width=0.9\linewidth]{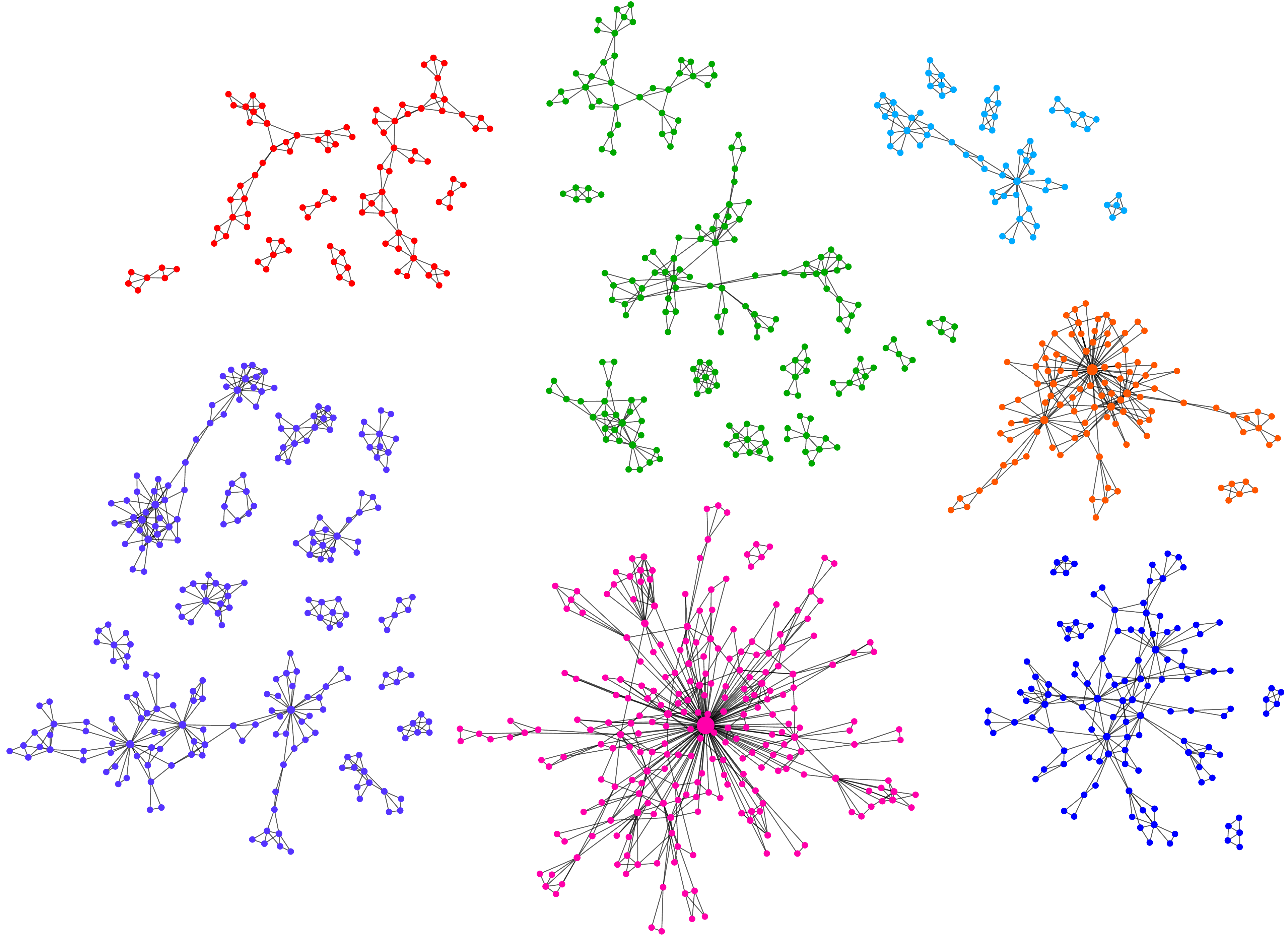}
\vspace{-2mm}
\caption{
A common prediction task in the cora citation network is to predict the research area (type) of a paper (node).
We visualize the network resulting from the edges with nonzero typed triangle counts and find something striking.
Typed triangles shatter the graph into many different components that are tightly connected and overwhelmingly homogeneous with respect to research area (type/label) of the nodes.
This can be used to filter noisy links from the graph to improve classification performance.
Node color encodes the research area of the papers.
}
\label{fig:cora-typed-motifs-homogeneous-types}
\end{figure}

\subsection{Exploratory Analysis} \label{sec:exploratory-analysis}
\noindent
This section demonstrates the use of heterogeneous graphlets for mining and exploratory analysis.

\subsubsection{Political retweets}
The political retweets data consists of 18,470 Twitter users classified into 2 types that encode the users political leanings (\ie, left, right). The graph has 61,157 links representing retweets.
There are 24,815 triangles in the political retweet network.
Triangles in this graph indicate that users retweeted by an individual also retweet each other (\ie, triangle = three users that have all mutually retweeted each other).
Triangles may represent users with similar interests.
However, triangles alone do not reveal any additional information about the users.
Another interesting question is as follows:
are users with a particular political leaning more likely to form retweet triangles with users of the same political leaning or vice-versa?
Unfortunately, untyped triangles alone cannot be used to answer such questions.
To answer such questions, typed graphlets are used by encoding the political leanings of a Twitter user as the type.\footnote{Typed graphlets can be used with any attribute.}
Interestingly, the 24,815 (untyped) triangles are distributed as follows:
\begin{center}
\begin{tabular}{l r@{} cccc @{}l r}
&&
\includegraphics[width=5mm]{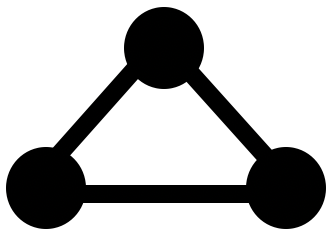} &
\includegraphics[width=5mm]{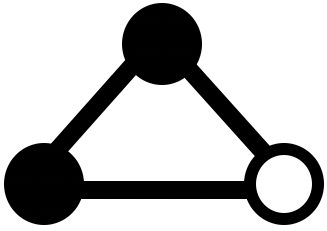} &
\includegraphics[width=5mm]{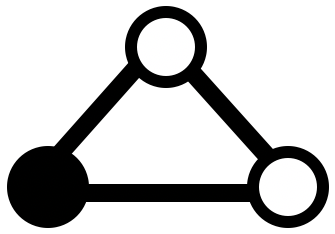} &
\includegraphics[width=5mm]{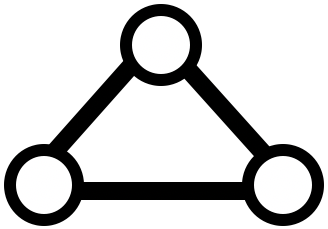} \\

\quad\quad\quad\quad\quad&
$\vp = \mathbf{\big[}\,$ &
0.608 & 0.003 & 0.001 & 0.388 &
$\,\big]$ &
$\quad\quad\quad\quad $
\\
\end{tabular}
\smallskip
\end{center}\noindent
Notably, we observe that $60.86\%$ and $38.79\%$ of the 24,815 triangles are formed among users with the same political leanings.
This implies that three users with the same political leanings are more likely to retweet each other than with users of different political leanings.
These results indicate the presence of homophily~\cite{mcpherson2001homophily} as users tend to retweet similar others.
Furthermore, these homogeneous typed triangles 
(\protect\includegraphics[width=4mm]{fig19.png}, 
\protect\includegraphics[width=4mm]{fig22.png}) 
account for $99.65\%$ of the 24,815 triangles.
Intuitively, this implies that the network consists of two tightly-knit communities of users of the same political leanings. The two communities are sparsely connected.
Typed triangles obviously contain significantly more information than untyped triangles.
This includes not only information about the local properties but also about the global structure of the network as shown above.
Obviously, untyped graphlets are unable to provide such insights as they do not encode the types, attribute values, or class labels associated with a graphlet.
They only reveal the structural information independent of any important external information associated with the node.

We also investigated typed 4-clique graphlets.
Strikingly, only 4 of the 5 typed 4-clique graphlets that arise from $2$ types actually occur in the graph.
In particular, the typed 4-clique graphlet with 2 right users and 2 left users does not even appear in the graph.
This typed graphlet might indicate collusion between individuals from different political parties or some other extremely rare anomalous activity.
The other typed 4-cliques that are extremely rare are the typed 4-clique graphlet with 3 right (left) users and a single left (right) user.

\begin{figure*}[h!]
\centering
\subfigure{\includegraphics[width=0.32\linewidth]{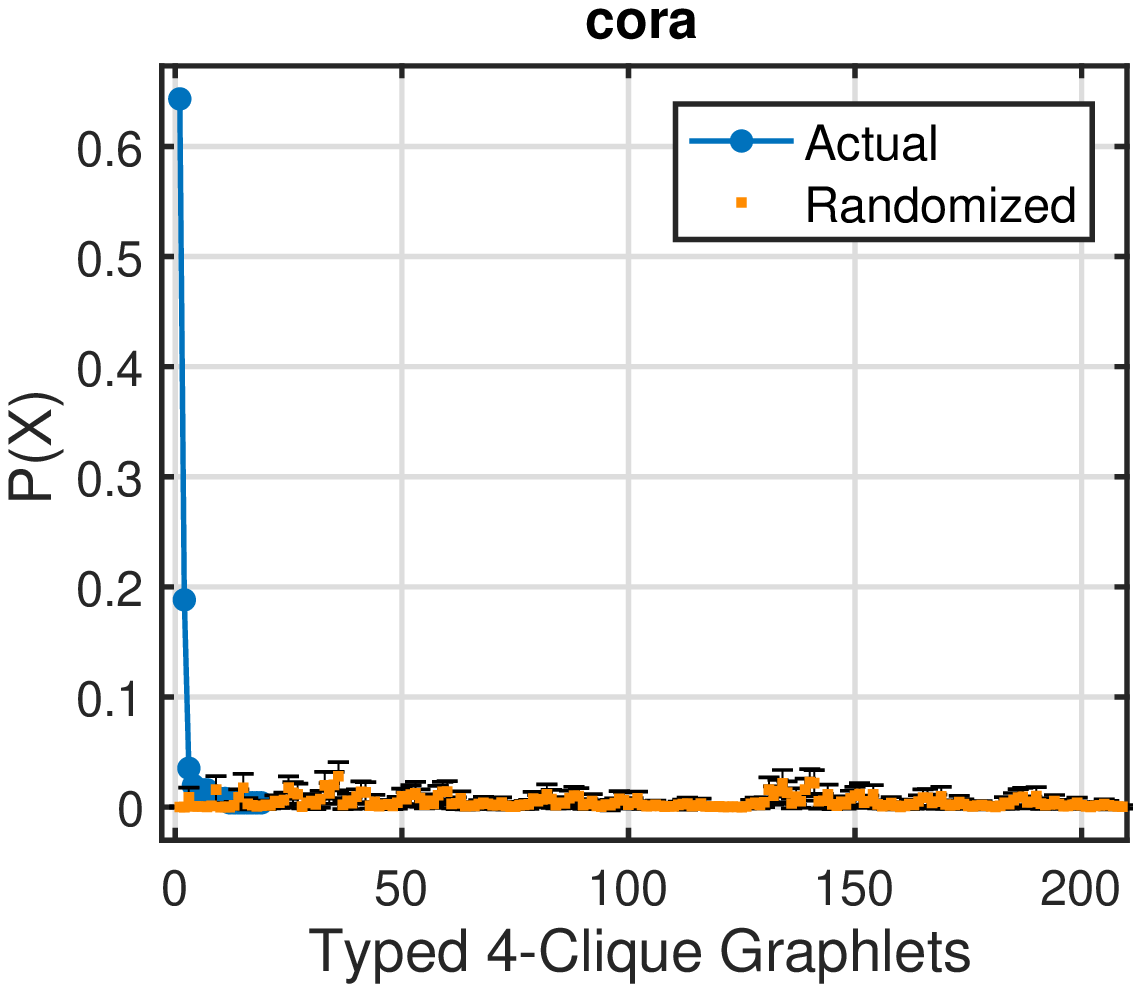}}
\hfill
\subfigure{\includegraphics[width=0.32\linewidth]{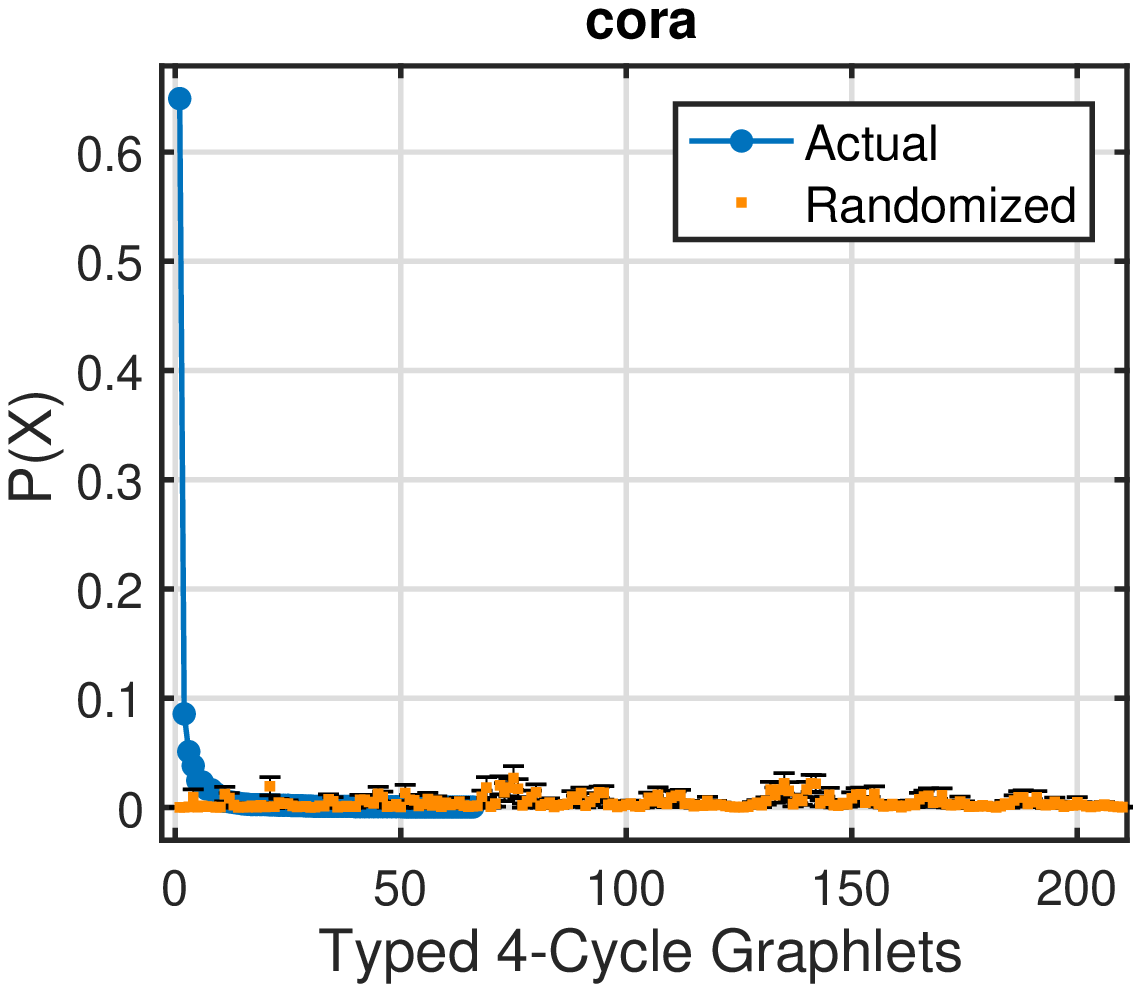}}
\hfill
\subfigure{\includegraphics[width=0.32\linewidth]{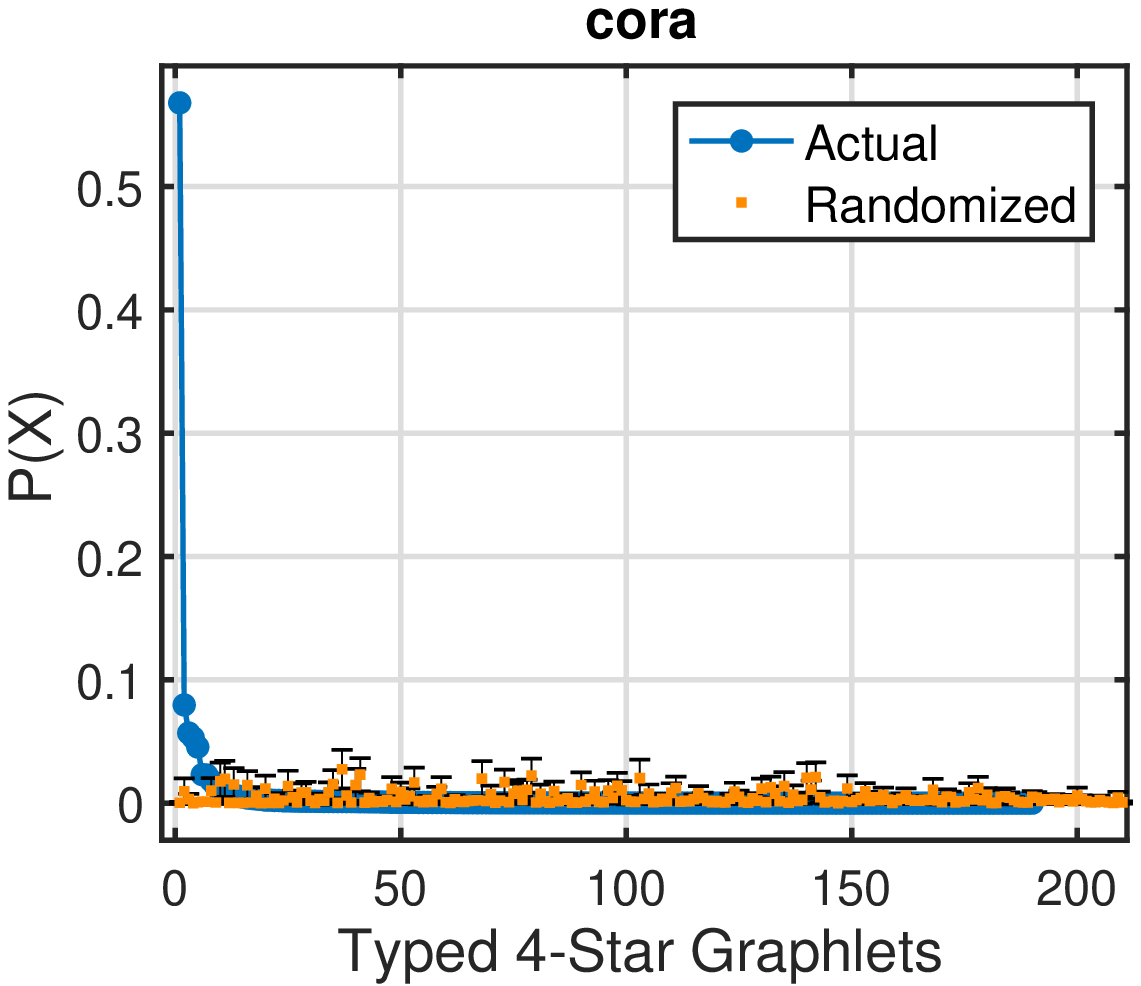}}

\subfigure{\includegraphics[width=0.32\linewidth]{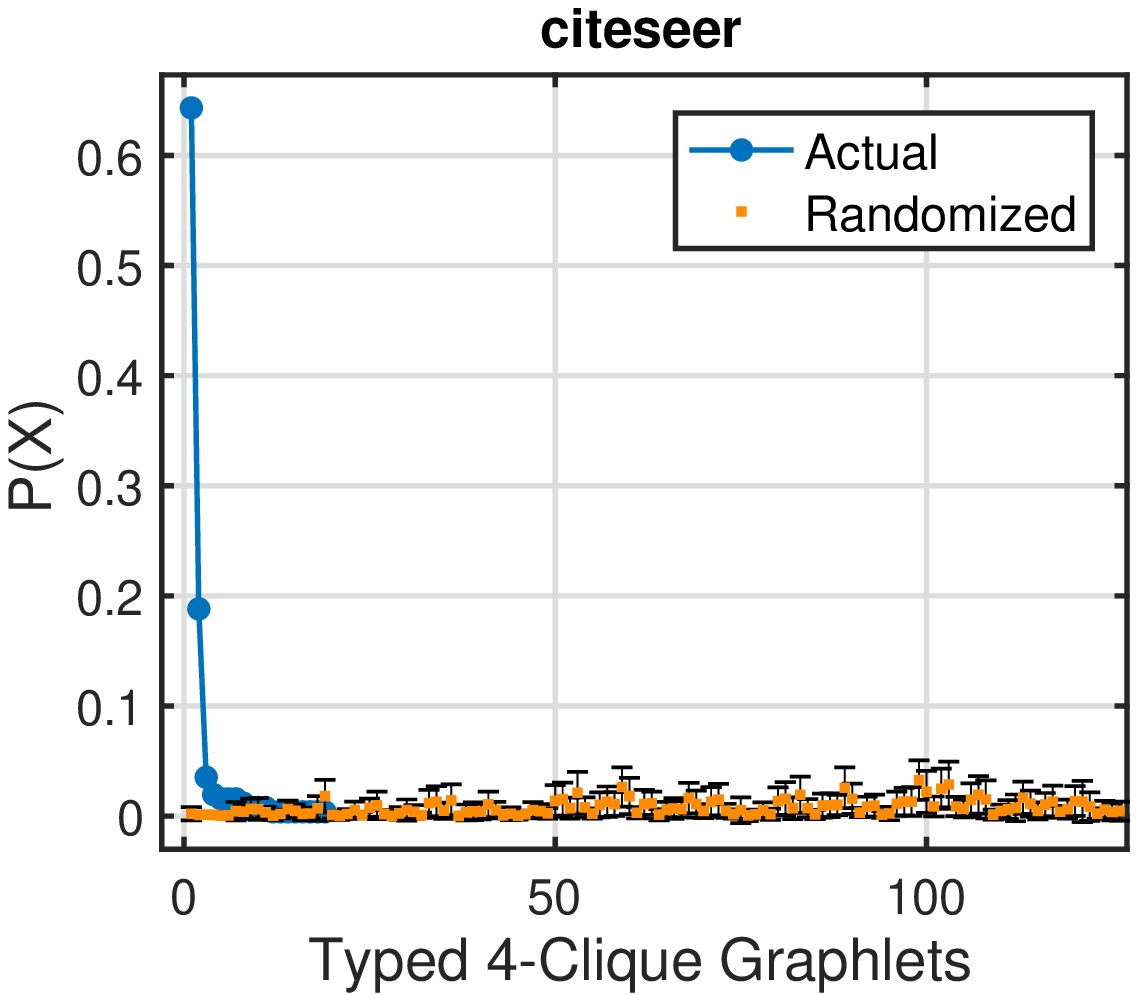}}
\hfill
\subfigure{\includegraphics[width=0.32\linewidth]{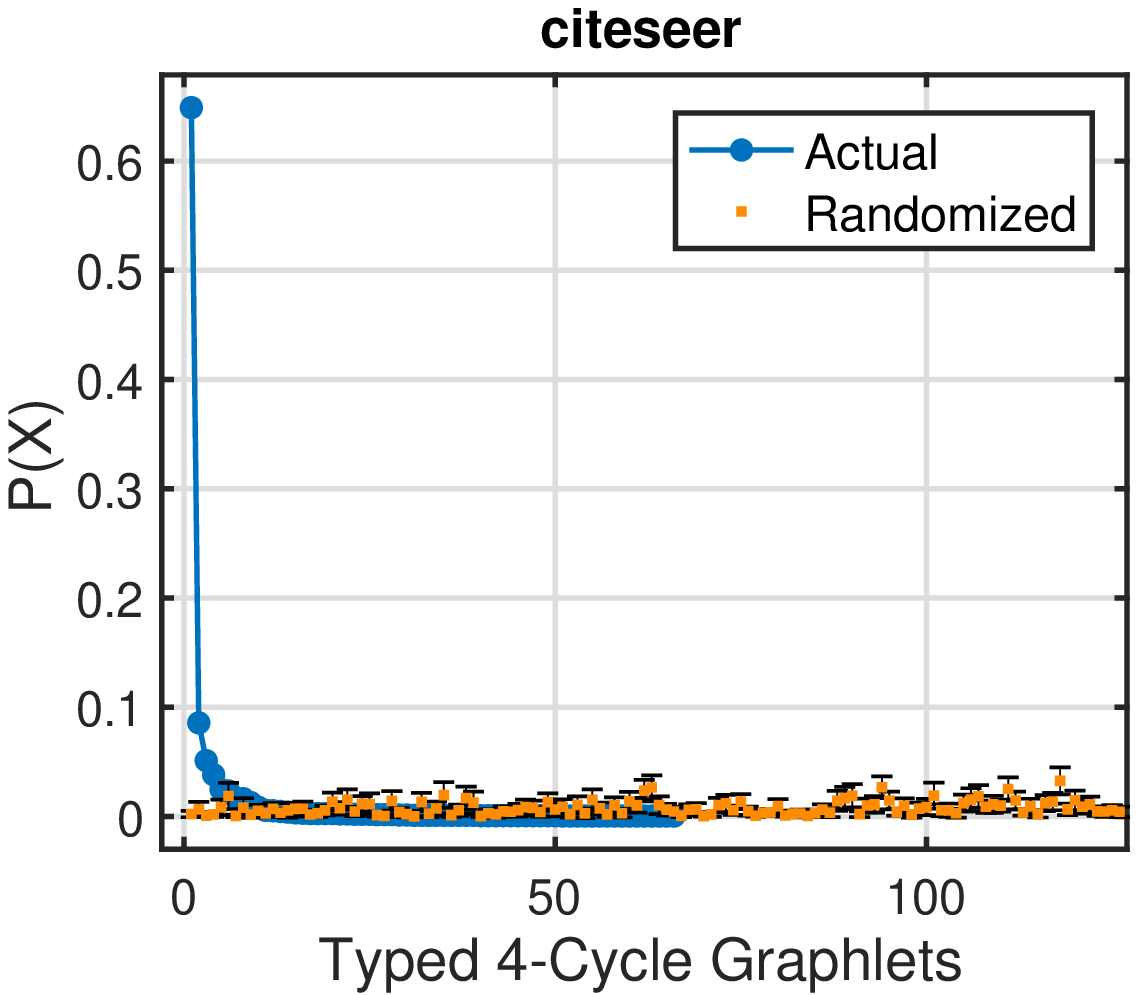}}
\hfill
\subfigure{\includegraphics[width=0.32\linewidth]{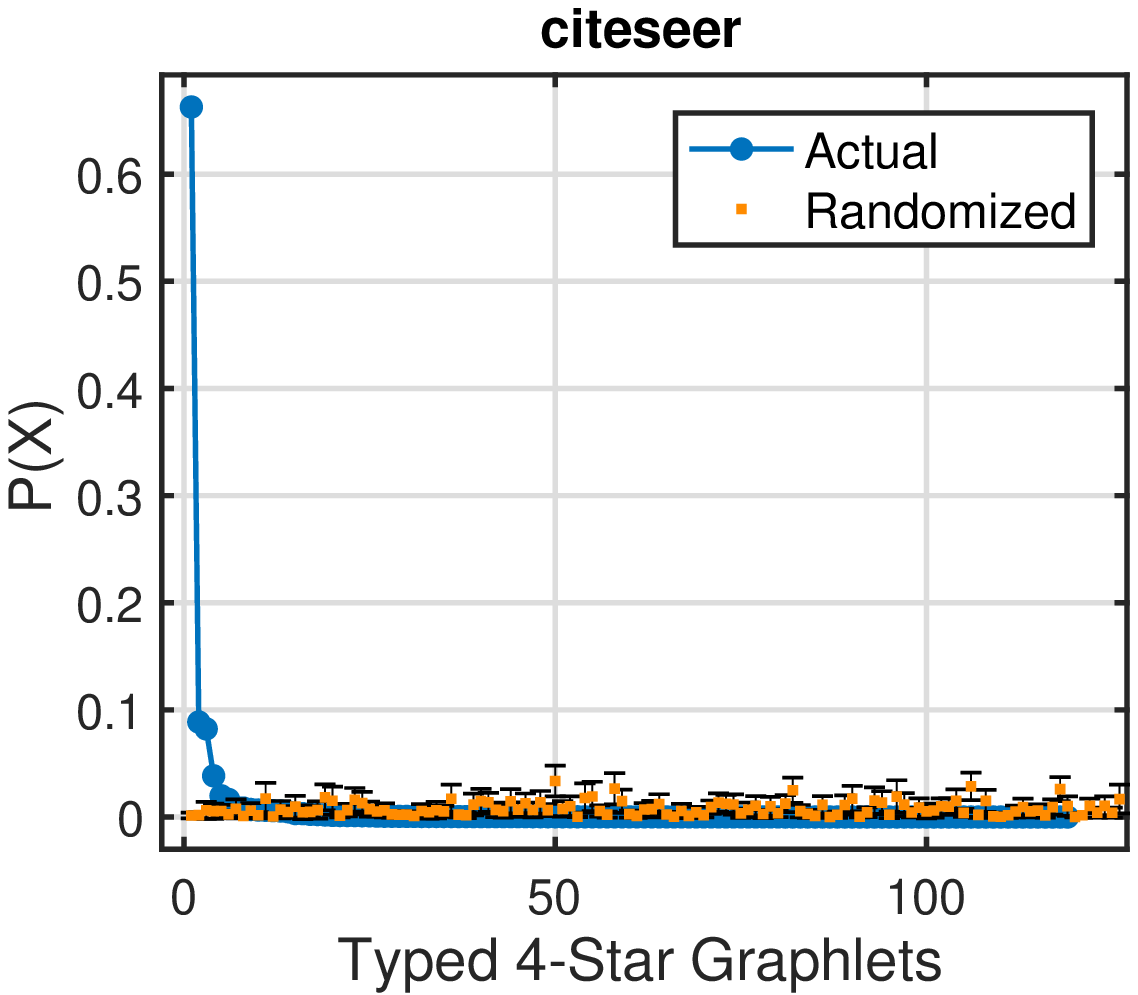}}

\vspace{-3mm}
\caption{
Comparing the actual \emph{typed} 4-clique, 4-cycle and 4-star graphlet distributions to the randomized typed distributions.
We compute 100 random permutations of the node types and run the approach on each permutation then average the resulting counts to obtain the mean randomized typed graphlet distribution.
There are three key findings.
First, we observe a significant difference between the actual and randomized typed graphlet distributions.
Second, many of the typed graphlets that occur when the types are randomized, do not occur in the actual typed graphlet distribution.
Third, we find the typed graphlet distribution to be extremely skewed as a few typed graphlets occur very frequently while the vast majority have very few occurrences (and many typed graphlets are even forbidden, in the sense that they do not occur at all in the graph).
}
\label{fig:typed-4-node-prob-dist-cora-citeseer}
\end{figure*}

\subsubsection{Cora citation network} \label{sec:exploratory-analysis-cora}
The Cora citation network consists of 2708 scientific publications classified into one of seven types (class labels) that indicate the paper topic.
The citation network consists of 5429 links.
Using the proposed heterogeneous graphlets, we find 129 typed 3-node graphlets among the 168 possible typed 3-node graphlets that could occur.
Notably, we observe the most frequent typed triangle graphlets are of a single type.
Indeed, the first 7 typed triangle graphlets with largest frequency in Figure~\ref{fig:typed-tri-prob-dist-cora-citeseer} are of a single type and account for 83.86\% of all typed triangle graphlets.
This finding indicates \emph{strong homophily} among nodes with similar types (Figure~\ref{fig:cora-typed-motifs-homogeneous-types}). 
Unlike untyped graphlets, typed graphlets simultaneously capture the labeling and structural properties that lie at the heart of homophily~\cite{mcpherson2001homophily,LaFond2010}.
Therefore, typed graphlets provide a principled foundation for studying homophily in social networks.
In Figure~\ref{fig:typed-tri-prob-dist-cora-citeseer}, we observe a large gap that clearly separates the 7 single-typed triangle graphlets from the other typed triangle graphlets with heterogeneous types.
Furthermore, only 49 out of the 84 possible typed triangle graphlets (Table~\ref{table:typed-graphlets-example}) actually occur in $G$.

In Figure~\ref{fig:typed-4-node-prob-dist-cora-citeseer}, we also investigate a variety of typed 4-node graphlet distributions (from most dense to least dense).
Strikingly, only 19 of the 210 possible typed 4-clique graphlets actually occur in $G$ when node types are randomly shuffled.
In the case of 4-node typed cycles, we observe 66 of the actual 210 possible typed 4-cycle graphlets appear when node types are randomly shuffled.

\subsubsection{Citeseer citation network}
The citeseer citation network consists of papers with citation links between them. 
Each paper is associated with one of six types representing the paper topic (\eg, ML).
The 5 most frequently occurring typed triangle graphlets are those with a single type (Figure~\ref{fig:typed-tri-prob-dist-cora-citeseer}).
Overall, these typed triangle graphlets account for $77.02\%$ of all typed triangle graphlets that occur in $G$.
This finding indicates \emph{strong homophily} among nodes.
Furthermore, since there are six types corresponding to paper topics, there are 56 potential typed triangle graphlets.
However, among the 56 possible typed triangle graphlets of six types, only 40 actually appear in $G$ as shown in Figure~\ref{fig:typed-tri-prob-dist-cora-citeseer}.
The others are forbidden typed triangle graphlets.
In addition, Figure~\ref{fig:typed-4-node-prob-dist-cora-citeseer} compares the actual typed 4-clique, 4-cycle, and 4-star distributions to the randomized distributions.
Notably, while 126 different typed 4-cliques appear when node types are randomly shuffled, only 19 distinct typed 4-cliques (with different type configurations) actually appear in $G$.
This finding is consistent with the cora citation network discussed in Section~\ref{sec:exploratory-analysis-cora}.

\subsection{Use Case: Link Prediction} \label{sec:exp-link-pred}
This section quantitatively demonstrates the effectiveness of typed graphlets for link prediction.
Given a partially observed graph $G$, 
the link prediction task is to predict the missing edges.
This general problem has applications in recommendation systems, \eg, recommending movies to users (movielens) or suggesting potential friends (yahoo), among other important applications.

{
\algblockdefx[parallel]{parfor}{endpar}[1][]{$\textbf{parallel for}$ #1 $\textbf{do}$}{$\textbf{end parallel}$}
\algrenewcommand{\alglinenumber}[1]{\!\!\fontsize{8}{8.5}\selectfont#1\;}
\begin{figure}[h!]
\begin{center}
\begin{algorithm}[H]
\caption{\; 
Higher-Order \emph{Typed Graphlet} Node Embeddings
}
\label{alg:higher-order-typed-graphlet-embeddings}
\begin{spacing}{1.35}
\small
\begin{algorithmic}[1]
\smallskip
\Require a graph $G$, typed graphlet $H$, embedding dimension $D$
\vspace{-0.1mm}
\Ensure Higher-order node embedding matrix $\mZ \in \RR^{N \times D}$ for $H$
\smallskip

\State $\!\big(\mW_{G^H}\big)_{ij} \!\leftarrow \# \text{ instances of H containing } i \text{ and } j, \; \forall (i,j) \in E$

\State $\mD_{G^H} \leftarrow \!\text{typed-graphlet degree matrix} \big(\mD_{G^H}\big)_{ii} \!= \!\sum_{j} \!\big(\mW_{G^H}\big)_{ij}$

\State $\vu_1, \vu_2, \ldots, \vu_D \leftarrow $ eigenvectors of $D$ smallest eigenvalues of $\mL_{G^{H}} = \eye - \mD_{G^H}^{-1/2}\mW_{G^H}\mD_{G^H}^{-1/2}$

\State $Z_{ij} \leftarrow U_{ij} \Big/ \sqrt{\sum_{j=1}^{D} U_{ij}^2}$

\State {\bf return} $\mZ = \big[\, \vz_1\;\,\, \vz_2\;\, \cdots \;\, \vz_n \,\big]^T \!\in \RR^{N \times D}$
\smallskip
\end{algorithmic}
\end{spacing}
\vspace{-1.mm}
\end{algorithm}
\end{center}
\vspace{-2.mm}
\end{figure}
}

\subsubsection{Higher-Order Typed Graphlet Embedding}
\label{sec:embeddings}
Algorithm~\ref{alg:higher-order-typed-graphlet-embeddings} summarizes the method for deriving \emph{higher-order typed graphlet node embeddings}.
In particular, given a typed-graphlet $H$ of interest,
Algorithm~\ref{alg:higher-order-typed-graphlet-embeddings} outputs a matrix $\mZ$ of node embeddings.
For graphs with many connected components, Algorithm~\ref{alg:higher-order-typed-graphlet-embeddings} is called for each connected component of the typed graphlet graph $G^{H}$ and the resulting embeddings are stored in the appropriate locations in the overall embedding matrix $\mZ$.

\subsubsection{Experimental Setup}
We evaluate the higher-order typed graphlet node embedding approach that explicitly leverages typed graphlets (Algorithm~\ref{alg:higher-order-typed-graphlet-embeddings}) against the following methods: DeepWalk (DW)~\cite{deepwalk}, LINE~\cite{line}, GraRep~\cite{grarep}, spectral embedding (untyped edge graphlet)~\cite{Long06spec},
and spectral embedding using untyped-graphlets.
All methods output ($D$=128)-dimensional node embeddings.
For DeepWalk (DW)~\cite{deepwalk}, we perform 10 random walks per node of length 80 as mentioned in~\cite{node2vec}.
For LINE~\cite{line}, we use 2nd-order proximity and perform 60 million samples.
For GraRep (GR)~\cite{grarep}, we use $(k=2)$-steps.
In contrast, the spectral embedding methods do not have any hyperparameters besides $D$ which is fixed for all methods.

\begin{table}[t!]
\centering
\setlength{\tabcolsep}{4pt}
\renewcommand{\arraystretch}{1.1} 
\caption{
Link prediction edge types and semantics.
The edge type predicted by the models is bold.
}
\label{table:link-pred-network-data}
\vspace{-3mm}
\small
\begin{tabularx}{0.7\linewidth}{l c c l @{}}
\toprule
\textbf{Graph}  &   $|\mathcal{T}_V|$ & $|\mathcal{T}_E|$ & \textbf{Heterogeneous Edge Types} 
\\
\midrule

\textsf{movielens} & 3 & 3 & \textbf{user-by-movie}, user-by-tag,  
\\ 

&&& tag-by-movie 
\\

\textsf{dbpedia} &  4 & 3 & \textbf{person-by-work} (produced work),  
\\

&&& person-has-occupation, 
\\

&&& work-by-genre (work-associated-genre) 
\\

\textsf{yahoo-msg} & 2 & 2 & \textbf{user-by-user} (communicated with), 
\\ 

&&& user-by-location (communication location) 
\\

\bottomrule
\end{tabularx}
\end{table}

\begin{table}[b!]
\centering
\renewcommand{\arraystretch}{1.1}
\setlength{\tabcolsep}{8pt}
\caption{
Link prediction results. 
These results demonstrate the effectiveness of typed graphlets for prediction.
}
\label{table:link-pred-results}
\vspace{-3mm}
\small
\begin{tabularx}{1.0\linewidth}{@{}ll ccccc c@{}H HHHH@{}}
\toprule

&& &      &     &   &  \textbf{Untyped} &  \textbf{Typed} &  \\ 

&& \textbf{DeepWalk}   &   \textbf{LINE}   &   \textbf{GraRep}   &  \textbf{Spectral} &  
\textbf{Graphlets} &  \textbf{Graphlets} & \\ 
\midrule

\multirow{4}{*}{\rotatebox{0}{\textbf{\sf movielens}}}  
&   $\mathbf{F_1}$          &     0.8544   &   0.8638       &   0.8550        &   0.8774   &   0.8728   &   \textbf{0.9409}   &   \\
&   \textbf{Prec.}          &     0.9136   &   0.8785       &   0.9235        &   0.9409   &   0.9454   &   \textbf{0.9747}   &   \\
&   \textbf{Recall}         &     0.7844   &   0.8444       &   0.7760        &   0.8066   &   0.7930   &   \textbf{0.9055}   &   \\
&   \textbf{AUC}            &     0.9406   &   0.9313       &   0.9310        &   0.9515   &   0.9564   &   \textbf{0.9900}   &   \\
\midrule

\multirow{4}{*}{\rotatebox{00}{\textbf{\sf dbpedia}}} 
&   $\mathbf{F_1}$          &     0.8414   &   0.7242   &   0.7136       &   0.8366   &   0.8768   &   \textbf{0.9640}   &   \\
&   \textbf{Prec.}          &     0.8215   &   0.7754   &   0.7060       &   0.7703   &   0.8209   &   \textbf{0.9555}   &   \\
&   \textbf{Recall}         &     0.8726   &   0.6375   &   0.7323       &   0.9669   &   0.9665   &   \textbf{0.9733}   &   \\
&   \textbf{AUC}            &     0.8852   &   0.8122   &   0.7375       &   0.9222   &   0.9414   &   \textbf{0.9894}   &   \\
\midrule

\multirow{4}{*}{\rotatebox{0}{\textbf{\bf \sf yahoo}}}
&   $\mathbf{F_1}$          &     0.6927   &   0.6269   &   0.6949   &   0.9140    &   0.8410  &   \textbf{0.9303}   &    \\
&   \textbf{Prec.}          &     0.7391   &   0.6360  &   0.7263   &   0.9346   &   0.8226   &   \textbf{0.9432}    &    \\
&   \textbf{Recall}         &     0.5956   &   0.5933   &   0.6300   &   0.8904   &   0.8699   &   \textbf{0.9158}   &    \\
&   \textbf{AUC}            &     0.7715   &   0.6745   &   0.7551   &   0.9709   &   0.9272   &   \textbf{0.9827}   &    \\

\bottomrule
\end{tabularx}
\vspace{2mm}
\end{table}

\subsubsection{Results}
We generate a labeled dataset of positive and negative edges.
Positive edge examples are obtained by removing $50\%$ of edges uniformly at random, whereas negative examples are generated by randomly sampling an equal number of node pairs $(i,j) \not\in E$. 
For each method, we learn node embeddings using the remaining graph.
Given embedding vectors $\vz_i$ and $\vz_j$ for node $i$ and $j$, 
we derive a $D$-dimensional edge embedding vector
$\vz_{ij} = \sigma(\vz_i, \vz_j)$
where $\sigma$ is defined as one of the following \emph{edge embedding functions}:
\begin{align}\label{eq:edge-embedding-ops}\nonumber
\sigma \,\in\, 
\Bigg\lbrace
\frac{\vz_i + \vz_j}{2},\;
\vz_i \odot \vz_j,\;
\abs{\vz_i - \vz_j},\;
(\vz_i - \vz_j)^{\circ 2},\;
\max(\vz_i, \vz_j),\;
\vz_i + \vz_j
\Bigg\rbrace
\end{align}\noindent
Note $\vz_i \odot \vz_j$ is the element-wise product, 
$\vz^{\circ 2}$ is the Hadamard power, 
and $\max(\vz_i, \vz_j)$ is the element-wise max.
Using the edge embeddings, we then learn a logistic regression model to predict if an edge in the test set exists in $E$ or not.
Experiments are repeated for 10 random seed initializations and the average performance is reported.
All methods are evaluated against four different evaluation metrics including $F_1$, Precision, Recall, and AUC.

Table~\ref{table:link-pred-network-data} summarizes the heterogeneous network data and the type/label of the edge predicted by the models.
The results are provided in Table~\ref{table:link-pred-results}.
We report the best result among the different edge embedding functions and untyped/typed graphlets.
In Table~\ref{table:link-pred-results}, the typed graphlet approach is shown to outperform all other methods across \emph{all} four evaluation metrics.
In all cases, the approach that leverages typed graphlets 
outperforms the other methods (Table~\ref{table:link-pred-results}) with an overall mean gain (improvement) in $F_1$ of 18.7\% (and up to 48.4\% improvement) across all graph data.
In terms of AUC, the typed graphlet approach achieves a mean gain of 14.4\% (and up to 45.7\% improvement) over all methods.
Furthermore, we posit that an approach similar to the one proposed in~\cite{Rossi2018a} could be used 
to achieve even better predictive performance by leveraging multiple typed graphlets simultaneously.

\section{Conclusion} \label{sec:conc}
\noindent
In this work, we introduced the notion of typed graphlet that generalizes the notion of graphlet to heterogeneous networks.
We proposed a fast, parallel, and space-efficient framework for counting typed graphlets.
The proposed typed graphlet algorithms count only a few typed graphlets and derives the others in $o(1)$ constant time using new non-trivial combinatorial relationships that involve counts of lower-order typed graphlets.
Thus, the proposed approach avoids explicit enumeration of any nodes involved in those typed graphlets.
For every edge, we count a few typed graphlets and obtain the exact counts of the remaining ones in $o(1)$ constant time.
Theoretically, the worst-case time complexity of the proposed approach is shown to match the best untyped graphlet algorithm.
Since this is the first investigation into typed graphlets, there are no existing methods for comparison.
However, we compared our approach to colored graphlet counting methods that solve a strictly simpler problem.
Empirically, our approach is shown to outperform the state-of-the-art in terms of runtime, space-efficiency, and scalability as it is able to handle large networks.
While these methods take hours on small graphs with thousands of edges, our typed graphlet counting approach takes only seconds on networks with millions of edges.
Finally, the proposed approach is able to handle \emph{large} general heterogeneous networks while lending itself to an efficient and highly scalable parallel implementation.
The proposed approach gives rise to new opportunities and applications for typed graphlets.
Future work should use the ideas introduced in this paper to extend and derive equations for typed graphlets of 5 nodes and larger.
This is similar to how recent work~\cite{dave2017clog,PinarWWW17} extended the ideas introduced by~\citet{pgd,pgd-kais} to 5-node untyped graphlets.

\balance
\bibliographystyle{ACM-Reference-Format}
\bibliography{paper}

\end{document}